\documentclass{article}

\usepackage[shortlabels]{enumitem}
\usepackage{amssymb}
\usepackage{amsmath}
\usepackage{amsthm}
\usepackage{amsopn}
\usepackage{graphicx}
\usepackage{bbm}
\usepackage{mathdots}
\usepackage[export]{adjustbox}
\usepackage{hyperref}
\usepackage{float}
\usepackage{nicefrac}
\usepackage{stmaryrd}

\hypersetup{
    colorlinks = true,
    citecolor=black,
    filecolor=black,
    linkcolor=blue,
    urlcolor=black,
    linktoc=all,     
}

\graphicspath{{images/}}

\newtheorem{theorem}{Theorem}[section]
\newtheorem{definition}[theorem]{Definition}
\newtheorem{lemma}[theorem]{Lemma}
\newtheorem{remark}[theorem]{Remark}
\newtheorem{example}[theorem]{Example}
\newtheorem{corollary}[theorem]{Corollary}
\newtheorem{proposition}[theorem]{Proposition}

\newcommand{\CC}{\mathbb C}
\newcommand{\RR}{\mathbb R}
\newcommand{\NN}{\mathbb N}
\newcommand{\ZZ}{\mathbb Z}
\newcommand{\mca}{\mathcal A}
\newcommand{\mcb}{\mathcal B}
\newcommand{\mcc}{\mathcal C}

\newcommand{\mch}{\mathcal H}
\newcommand{\mci}{\mathcal I}
\newcommand{\mcj}{\mathcal J}
\newcommand{\mck}{\mathcal K}

\newcommand{\mfg}{\mathfrak g}
\newcommand{\mco}{\mathcal O}
\newcommand{\mcu}{\mathcal U}
\newcommand{\mcr}{\mathcal R}
\newcommand{\mcm}{\mathcal M}
\newcommand{\mcn}{\mathcal N}
\newcommand{\mcx}{\mathcal X}
\newcommand{\TT}{\mathbb T}

\newcommand{\mcl}{\mathcal L}

\newcommand{\mft}{\mathfrak t}
\newcommand{\mfm}{\mathfrak m}
\newcommand{\sto}{\xrightarrow{\ \text{s}\ }}

\DeclareMathOperator{\dist}{dist}

\DeclareMathOperator{\supp}{supp}
\DeclareMathOperator{\std}{std}

\DeclareMathOperator{\Tr}{Tr}
\DeclareMathOperator{\tr}{Tr}

\DeclareMathOperator{\MS}{MS}

\DeclareMathOperator{\End}{End}

\DeclareMathOperator{\Sp}{Span}
\DeclareMathOperator{\op}{op}
\DeclareMathOperator{\Id}{Id}

\DeclareMathOperator{\essran}{ess\,ran}
\DeclareMathOperator{\essinf}{ess\,inf}
\DeclareMathOperator{\esssup}{ess\,sup}
\DeclareMathOperator{\esssupp}{ess\,supp}
\DeclareMathOperator{\inv}{inv}
\DeclareMathOperator{\ess}{ess}
\author{Ood Shabtai}
\title{Localization of quantum systems at Liouville tori}
\date{}
\begin{document}
\maketitle
\begin{abstract} 
We consider a collection of pairwise commuting quantum observables in the setting of Berezin--Toeplitz quantization of a closed K\"{a}hler manifold and assume that the Arnold--Liouville theorem applies to their principal symbols. We use joint eigensections of these observables to define isometric embeddings of the quantum spaces into $L^2(\Lambda_{a_0})$, where $\Lambda_{a_0}$ is a fixed Liouville torus. These embeddings allow a broad class of quantum observables, including some defined by discontinuous functions, to be realized as sequences of operators on $L^2(\Lambda_{a_0})$ that converge strongly to multiplication operators. We discuss the spectral implications of this convergence and give applications to contractions of Lie algebra representations and to pairs of spectral projections of quantum observables.
\end{abstract}
\tableofcontents
\section{Introduction}
Let $(M, \omega)$ be a closed connected K\"{a}hler manifold of real dimension $2n$, equipped with a \textit{prequantum line bundle}\footnote{A prequantum line bundle on $(M, \omega)$ is a holomorphic Hermitian line bundle such that the curvature of its Chern connection equals $-i\omega$.} $L \to M$. Let $\mch_k$ be the space of holomorphic sections of $L^{\otimes k} \to M$. Note that $\mch_k$ is a finite-dimensional complex vector space, equipped with a natural inner product\footnote{The inner product on $\mch_k$ is obtained by integration of the fiberwise Hermitian product of $L^{\otimes k}$ with respect to the Liouville measure $\mu = \frac{|\omega^{\wedge n}|}{n!}$.}. 
The \textit{Berezin-Toeplitz operator} $T_k(f) \in \End(\mch_k)$ associated with a smooth function\footnote{We denote $C^\infty(M) = C^\infty(M, \RR)$ and $C(M) = C(M, \RR)$ (though at times we will also consider complex-valued functions).} $f \in C^\infty(M)$ is given by
\begin{equation}\label{toeplitz_def} T_k(f) = \Pi_k \mcm_{f},\end{equation}
where $\mcm_{f}$ is the operator of multiplication by $f$ and $\Pi_k : L^2(M, L^{\otimes k}) \to \mch_k$ is the orthogonal ("Bergman") projection\footnote{The Hilbert space $L^2(M, L^{\otimes k})$ is the completion of $C^\infty(M, L^{\otimes k})$ with respect to the inner product obtained via integration of the fiberwise Hermitian product of $L^{\otimes k}$.}. The operator $T_k(f)$ is characterized by the equality $$\langle T_k(f) s, s' \rangle = \langle f s, s' \rangle\ \text{for all } s, s' \in \mch_k.$$More broadly, it is useful to define Berezin-Toeplitz operators as operators of the form\footnote{Here $\mco(k^{-\infty})$ refers to a sequence $R_k \in \End(\mch_k)$ which satisfies that for any $N > 0$ there exists $C_N > 0$ such that for all sufficiently large $k$, $\Vert R_k \Vert_{\op} < C_N k^{-N}$.} $$T_k(f(\cdot, k)) + \mco(k^{-\infty}) : \mch_k \to \mch_k,$$ where $f(\cdot, k)\in C^\infty(M)$ is a sequence which admits an asymptotic expansion $$f(\cdot, k) \sim \sum_{j \ge 0} f_j k^{-j}$$
in the $C^\infty$ topology. The function $f_0 \in C^\infty(M)$ is called the \textit{principal symbol} of the operator, and it holds that \begin{equation}\label{norm_corr}\Vert T_k(f(\cdot, k))\Vert_{\op} = \max_M |f_0| + \mco(k^{-1}).\end{equation}

The mapping $T_k : C^\infty(M) \to \End(\mch_k)$ is the \textit{Berezin-Toeplitz quantization} (see for example \cite{bms, bpu, charles, mm, mm1, lefloch}), in the context of which $\mch_k$ is viewed as the space of pure states of a quantum mechanical system, the operator $T_k(f)$ is viewed as the quantum counterpart of the classical observable $f \in C^\infty(M)$, and $k \to \infty$ is the semiclassical limit. A fundamental property of Berezin-Toeplitz quantization is the \textit{correspondence principle} \begin{equation}\label{corr_princ}\big{\Vert} [ T_k(f), T_k(g)] + i k^{-1}T_k(\{f, g\})\big{\Vert}_{\op} = \mco(k^{-2}),\ f,g \in C^\infty(M),\end{equation}
where $[A,B] = AB - BA$ is the commutator of linear operators, $\Vert A \Vert_{\op}$ is the operator norm, and $\{f, g\}$ is the \textit{Poisson bracket} of $f,g$. The correspondence principle provides a relation between quantum and classical mechanics which is essential to the notion of quantization.

In the present paper, we consider an $n$-tuple of self-adjoint Berezin-Toeplitz operators $$T_k^1, ..., T_k^n \in \End(\mch_k),$$ and we assume that for every sufficiently large $k$, these operators are pairwise commuting. In particular, this requires\footnote{In light of the correspondence principle (\ref{corr_princ}), combined with the \textit{norm correspondence} (\ref{norm_corr}).} their principal symbols $f^1_{0}, ..., f^n_{0}$ to be pairwise Poisson commuting. Let $$f_0 = (f^1_{0}, ..., f^n_{0}) : M \to \RR^n$$
be the joint principal symbol of $T^1_{k}, ..., T^n_{k}$, and assume that $$a_0 = (a^1_{0}, ..., a^n_{0}) \in  f_0(M)$$ is a regular value of $f_0$ such that the level set $$\Lambda_{a_0} =  f_0^{-1}(a_0)$$ is connected. Then the Arnold-Liouville theorem (\cite{zung}) applies, and $\Lambda_{a_0}$ is termed a \textit{Liouville torus} (as it is diffeomorphic to the torus $\TT^n = \RR^n/\ZZ^n$).
The implications for the joint spectrum
$\sigma_k \subset \mathbb R^n$ of $T^1_{k},..., T^n_{k}$ are significant. Namely, the elements of $\sigma_k$ near $a_0$ are described by so-called \textit{Bohr-Sommerfeld conditions} (\cite{charles_bohr}, Theorem 3.1, \cite{charles_symb}, Theorem 2.1)\footnote{See also \cite{ac, vn, cdv} for earlier results in the setting of pseudodifferential quantization.}: they belong, up to corrections of order $\mco(k^{-\infty})$, to a \textit{deformed lattice} which admits a precise definition using geometric data. Consequently, inside neighborhoods of $a_0$ whose diameters tend to $0$, the joint spectrum is well approximated via a fixed \textit{genuine lattice}, rescaled by a factor of $\frac 1 k$. Moreover, the corresponding joint eigensections concentrate about $\Lambda_{a_0}$, and may be identified (through a linear isometry) with standard exponential functions\footnote{Here, $m \cdot \theta = m^1 \theta^1 + ... + m^n \theta^n \mod \ZZ.$} \begin{equation}\label{exp_fns}e_m(\theta) = e^{2\pi i m \cdot \theta},\ \theta \in \TT^n,\  m \in \ZZ^n.\end{equation} %
This identification provides a non-trivial link between classical function theory on $\TT^n$ and the Berezin-Toeplitz quantization of $M$. The purpose of the present work is to elaborate on this link and some of its potential applications.%

We conclude the introduction by noting that the results of the present paper substantially rely on the foundational works \cite{charles, charles_bohr} of L. Charles, in which the setting that we consider here was studied in detail. We refer the reader to \cite{charles, charles_bohr} for further information concerning Bohr-Sommerfeld conditions in Berezin-Toeplitz quantization (and the tools used in their study).
\section{Main results}\label{main_res_sect}
In what follows, we consider isometric embeddings of $\mch_k$ into $L^2(\TT^n)$, such that joint eigensections of $T^1_{k},...,T^n_{k}$ corresponding to joint eigenvalues lying in certain shrinking neighborhoods of $a_0$ are mapped to exponential functions (\ref{exp_fns}).
To this end, we first approximate the joint spectrum $\sigma_k$ in these neighborhoods via a fixed lattice which is rescaled by $\frac 1 k$.

Recall that by the Arnold-Liouville theorem, there exists a neighborhood $U$ of $\Lambda_{a_0}\subset M$, a neighborhood $D$ of $0 \in \RR^n$, and a symplectomorphism
\begin{equation}\label{al_sympl} \Psi : \left(U, \omega|_U\right) \to (D\times \TT^n,\ \omega_{\std}),\ \Psi(\Lambda_{a_0}) = \{0\} \times 
\TT^n, \end{equation}
where $\omega_{\std}$ is the standard symplectic form\footnote{That is, $\omega_{\std} = \sum_{j=1}^n dy^j \wedge d\theta^j$, where $(y^1, ..., y^n) \in D$, $(\theta^1, ..., \theta^n) \in \TT^n$.} on $D \times \TT^n$. Moreover, in the coordinates induced by $\Psi$, the Hamiltonian vector field of $f^j_0|_U$ is given by
\begin{equation*} \nu_j \cdot \partial_{\theta} = \nu_j^1 \partial_{\theta^1} + ... + \nu_j^n \partial_{\theta^n},\   \nu_j \in C^\infty(D, \RR^n),\ j =1,...,n.\end{equation*}
The approximation of $\sigma_k$ makes use of the lattice $\nu(0) \ZZ^n$, where \begin{equation}\label{freq_matr} \nu(0) = \left(\begin{array}{ccc} \text{---} & \nu_1(0) & \text{---} \\ & \vdots & \\ \text{---} & \nu_n(0) & \text{---} \end{array}\right).\end{equation}
\begin{proposition}\label{spec_properties} The joint spectrum $\sigma_k$ of $T^1_{k}$, ..., $T^n_{k}$ satisfies the following.
\begin{enumerate}
\item{Let $a_{k,0} \in \sigma_k$ satisfy that $\dist(a_0, \sigma_k) = |a_0 - a_{k,0}|$. Then $$a_{k,0} =  a_0 +  \mco(k^{-1}).$$}
\item{Let $p_k \ge 0$ be a non-decreasing sequence, $p_k = o(k^{\frac 1 2})$ and $\lim_{k \to \infty} p_k = \infty$. Denote
\begin{equation*} \begin{aligned} &\mca_{k} = a_{k, 0} +\frac \pi k \nu(0) \mcc_{k},\\ &\mcc_{k} = \big{\{}(y^1, ..., y^n) \in \RR^n \ | \ |y^j| \le 2\lfloor p_k \rfloor + 1,\ j=1,...,n\big{\}}.\end{aligned}\end{equation*}
Then for every sufficiently large $k$, $$\begin{aligned} &\sigma_{k,0} = \sigma_k \cap \mca_{k} = \{ a_{k,  m} \ | \ m \in \mci_{k}\},\\ &\mci_{k} = \big{\{}(m^1,...,m^n) \in \ZZ^n \ | \ |m^j| \le p_k,\ j=1,...,n\big{\}},\end{aligned}$$ where $a_{k,m} \in \sigma_{k,0}$ is a joint eigenvalue of multiplicity $1$ and satisfies $$\Big{|}a_{k, m} -  a_{k,0} - \frac{2\pi} k \nu(0)  m\Big{|} \le C\frac{|m|^2} {k^2}$$ for some $C >0$ (which is independent of $m\in \mci_k$).}
\item{There exists $c > 0$ such that for all large enough $k$, $\dist(\sigma_k \setminus \sigma_{k, 0}, \sigma_{k,0}) \ge \frac c k$.}
\end{enumerate}\end{proposition}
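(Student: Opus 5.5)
We plan to derive all three items from the Bohr--Sommerfeld conditions of Charles (\cite{charles_bohr}, Theorem 3.1; \cite{charles_symb}, Theorem 2.1). These say the following. There is a neighborhood $V$ of $a_0$ and a smooth map $g(\cdot,k) : V \to \RR^n$ with an asymptotic expansion $g(\cdot,k) \sim \sum_{\ell\ge0} g_\ell k^{-\ell}$, where $g_0$ is the action coordinate map (a diffeomorphism onto its image with $g_0(a_0)=0$, up to a constant). Then $\sigma_k \cap V$ coincides, up to $\mco(k^{-\infty})$, with the set of points $a \in V$ satisfying
\[
g(a,k) \in \frac{2\pi}{k}\big(\ZZ^n + \tfrac{\epsilon}{4}\big) + \frac{c_1}{k}
\]
(Maslov and subprincipal corrections included). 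Moreover, each such point carries exactly one joint eigenvalue, of multiplicity $1$, for $k$ large. In the coordinates $\Psi$, the action variables are the $y^j$, and $f_0\circ\Psi^{-1}$ depends only on $y$. We will write $f_0 = F(y)$ with $F(0)=a_0$; then $\partial_y F(0)$ is the frequency matrix, i.e.\ $dF(0) = \nu(0)$ (the Hamiltonian vector field of $F^j(y)$ is $\sum_l \partial_{y^l}F^j \,\partial_{\theta^l}$). Hence the deformed lattice is $\{F_k(\tfrac{2\pi}{k}(m+\delta_k))\}$, where $F_k$ is a $k$-dependent smooth perturbation of $F$, $F_k = F + \mco(k^{-1})$ in $C^\infty$, and $\delta_k$ is a bounded offset.

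\textbf{Item 1.} The lattice points $\tfrac{2\pi}{k}(\ZZ^n+\delta_k)$ are $\tfrac{C}{k}$-dense near $0$. Since $F_k$ is uniformly Lipschitz, the joint spectrum has a point within $\mco(k^{-1})$ of $F_k(0) = a_0 + \mco(k^{-1})$. This gives $a_{k,0} = a_0 + \mco(k^{-1})$. We will also note that $a_{k,0} = F_k(\tfrac{2\pi}{k}(m_0+\delta_k))$ for some $m_0$ with $|m_0+\delta_k| = \mco(1)$, and we relabel so that $m_0 = 0$.

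\textbf{Item 2.} With $z_0 = \tfrac{2\pi}{k}(m_0+\delta_k) = \mco(k^{-1})$, we Taylor expand:
\[
a_{k,m} := F_k\big(z_0 + \tfrac{2\pi}{k}m\big) = a_{k,0} + \tfrac{2\pi}{k}\, dF_k(z_0)\, m + \mco\big(|m|^2 k^{-2}\big).
\]
Since $dF_k(z_0) = \nu(0) + \mco(k^{-1})$, the linear error is $\mco(|m|k^{-2}) \le \mco(|m|^2k^{-2})$ for $m\neq0$, which gives the claimed estimate with a uniform $C$. Next we show $\sigma_k\cap\mca_k = \{a_{k,m} : m\in\mci_k\}$. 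Apply the diffeomorphism $\Phi_k : a \mapsto \tfrac{k}{2\pi}\,\nu(0)^{-1}(a - a_{k,0})$, which sends $a_{k,m}$ to $m + \mco(|m|^2/k)$. Because $p_k = o(k^{1/2})$, the error is $o(1)$ uniformly for $|m| \lesssim p_k$. Moreover, $\Phi_k(\mca_k)$ is the cube $\tfrac12\mcc_k$, of half-width $\lfloor p_k\rfloor + \tfrac12$. So lattice points with $|m^j|\le p_k$ land inside, at distance at least about $\tfrac12 - o(1)$ from the boundary, while those with some $|m^j| \ge \lfloor p_k\rfloor+1$ land outside by the same margin. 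The key point is that the perturbation $o(1)$ is much smaller than $\tfrac12$.

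For lattice indices with $|m|\gg p_k$ but still in $V$, we use injectivity and the bi-Lipschitz property of $F_k$: their images stay at distance $\gtrsim 1/k$ outside. Joint eigenvalues outside $V$ are excluded by item 1 together with the smallness of $\mca_k$. The $\mco(k^{-\infty})$ discrepancy is negligible. Multiplicity one comes from the theorem.

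\textbf{Item 3.} Points of $\sigma_k\setminus\sigma_{k,0}$ near $\mca_k$ are $a_{k,m}$ with some $|m^j|\ge \lfloor p_k\rfloor+1$. By the bi-Lipschitz property of $F_k$, $|a_{k,m}-a_{k,m'}| \ge \tfrac{c'}{k}|m-m'| \ge \tfrac{c'}{k}$. Points far away are trivially separated.

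The main obstacle is the precise import of the Bohr--Sommerfeld theorem. This means justifying $dF(0)=\nu(0)$ from the conventions in $\Psi$, the uniformity of the Taylor error and of the $o(1)$ margin over $|m|\le p_k$ (this is where $p_k=o(k^{1/2})$ is used), and the global statement that near $a_0$ there is no spectrum outside the deformed lattice.
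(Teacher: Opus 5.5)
Your proposal is correct and follows essentially the same route as the paper. Both arguments import Charles' Bohr--Sommerfeld theorem (Theorem \ref{charles_bohr_thm}) and locally invert the quantization map; your $F_k$ is, up to rescaling by $2\pi$ and a shift, the paper's inverse $r(\cdot,k)$ of $\rho(\cdot,k)$. Both then Taylor expand around $a_{k,0}$, using $p_k=o(k^{1/2})$ to make the quadratic error $o(1)$ against the margin $\tfrac12$, and get multiplicity one and the $c/k$ separation from the Lipschitz property together with the uniqueness clause of the theorem.

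The differences are minor. The paper justifies the linear coefficient $2\pi\nu(0)$ by a Stokes computation of the holonomy (Lemma \ref{hol_deriv_lemma}). That computation is the real content behind your assertion that the principal Bohr--Sommerfeld map is the action coordinate $y=\varphi_0^{-1}(a)$; the fact $\varphi_0'=\nu$ that you flagged as the obstacle is then immediate. The paper also proves the set equality and item 3 by contradiction along subsequences, where you use direct margin and bi-Lipschitz estimates.
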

We note that Proposition \ref{spec_properties} essentially amounts to a first order Taylor approximation of the description of $\sigma_k$ near $a_0$ provided in \cite{charles_bohr}, Theorem 3.1\footnote{cf. Theorem \ref{charles_bohr_thm}, which is sufficient for our needs in the present work.}.
From now on, we fix $p_k$ as in Proposition \ref{spec_properties}.

\subsection{Strong convergence of quantum operators}
Recall that a sequence of operators $A_k\in \End(\mch)$ on a Hilbert space $\mch$ is said to converge strongly to $A \in \End(\mch)$ if it holds that for all $v \in \mch$, $$\lim_{k \to \infty} A_k v = A v.$$
This will be denoted by $A_k \sto A$. The central result of the present paper (Theorem \ref{str_conv_thm}, combined with Corollary \ref{str_conv_cor}) is that certain types of quantum operators, suitably identified with finite rank operators on $L^2(\TT^n)$, converge strongly to multiplication operators. The theorem is formulated so as to apply to quantum operators defined via discontinuous functions\footnote{Note that formula (\ref{toeplitz_def}) makes sense for any $f \in L^\infty(M)$, and likewise, if $Q_k \in \End(\mch_k)$ is self-adjoint then $\kappa(Q_k) \in \End(\mch_k)$ is well-defined for any function $\kappa : \RR \to \CC$.}. This formulation was chosen partly to reflect the robustness of the phenomenon (which pertains to $L^2$ convergence, where smoothness is not as important as in other contexts), and partly due to the specific application described in Sect. \ref{pairs_sect}. We note that strong convergence has some spectral implications (see Corollary \ref{str_conv_spec_imp}). Throughout, $L^2(\TT^n)$ is defined using the normalized Lebesgue measure.
\begin{theorem}\label{str_conv_thm} There exist orthonormal sets $\mcb_k = \{s_{k,m} \ | \ m \in \mci_k\}$, where $s_{k,m}$ is a joint eigensection of $T_k^1, ..., T_k^n$ corresponding to $a_{k,m} \in \sigma_{k,0}$, such that the following holds. Let $$U_k : \mch_k \to \mch_k^{\TT^n} \subset L^2(\TT^n)$$ be any unitary transformation such that $$U_k s_{k,m} = e_m,\ m \in \mci_k.$$
Given $Q \in \End(\mch_k)$, let $[Q]^{\TT^n} \in \End(L^2(\TT^n))$ denote the operator specified by $$[Q]^{\TT^n}\big{|}_{\mch_k^{\TT^n}} = U_k Q U_k^{-1},\ [Q]^{\TT^n}\big{|}_{(\mch_k^{\TT^n})^\perp} = 0.$$
Below, we identify $\Lambda_{a_0}\subset M$ with $\TT^n$.
\begin{enumerate}
\item{Let $A \subset M$ be an open set such that $\partial A \cap \Lambda_{a_0}$ has measure zero in $\Lambda_{a_0}$. Let $h \in C(M)$ be a continuous function, and write $h_A = h \mathbbm 1_A$. Then $$[T_k(h_A)]^{\TT^n} \sto \mcm_{h_{A, 0}},\ h_{A,0} = h_A\big{|}_{\Lambda_{a_0}}.$$}
\item{Let $g(\cdot, k) \in C^\infty(M)$ be a sequence of functions admitting the expansion $$g(\cdot, k) \sim \sum_{j =0}^\infty g_j k^{-j}$$
in the $C^\infty$ topology. Let $b_0 \in g_0(M)$ be a regular value of $g_0$ such that $$g_0^{-1}(b_0) \pitchfork \Lambda_{a_0} \ne \emptyset.$$
Denote $\Pi_{k,g,b_0} = \mathbbm 1_{(b_0, \infty)}\big{(}T_k(g(\cdot,k))\big{)}$ and $A_0 = \{ g_0 > b_0\} \cap \Lambda_{a_0}$. Then $$[\Pi_{k,g,b_0}]^{\TT^n} \sto \mcm_{\mathbbm 1_{A_0}}.$$}
\end{enumerate}\end{theorem}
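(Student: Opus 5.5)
Since all operators $[Q]^{\TT^n}$ in question are uniformly bounded (norm at most $\sup|h|$ in part 1, at most $1$ in part 2), strong convergence only needs to be checked on a total set. So it suffices to show, for every fixed $m\in\ZZ^n$, that $[Q_k]^{\TT^n} e_m \to \mcm_{\phi} e_m$ in $L^2(\TT^n)$. For fixed $m$ we have $m\in\mci_k$ for $k$ large, so this means
$$\langle Q_k s_{k,m}, s_{k,m'}\rangle \to \widehat{\phi e_m}(m') = \hat\phi(m'-m)\quad \text{for each } m',$$
together with a tightness bound: $\sum_{m'} |\langle Q_k s_{k,m},s_{k,m'}\rangle|^2 \le \Vert Q_k s_{k,m}\Vert^2$ must converge to $\Vert \phi e_m\Vert^2 = \Vert\phi\Vert^2_{L^2}$. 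Pointwise convergence of Fourier coefficients plus convergence of the norms gives $L^2$ convergence. The mass orthogonal to $\mathrm{span}\,\mcb_k$ is controlled by the same norm identity, since $\Vert Q_ks_{k,m}\Vert^2 \ge \sum_{m'\in\mci_k}|\cdot|^2$.

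The first task is to choose $\mcb_k$. Using the microlocal description of the joint eigensections from \cite{charles_bohr} (Lagrangian/WKB states supported near $\Lambda_{a}$ for $a$ near $a_0$), I would fix phases so that $s_{k,m}$ is, to leading order, the Lagrangian state on $\Lambda_{a_{k,m}}$ with half-form symbol normalized and with the relative phase $e^{2\pi i m\cdot\theta}$ in the action-angle coordinates $\Psi$. Since $|a_{k,m}-a_0| = \mco(p_k/k)\to 0$, all these tori collapse onto $\Lambda_{a_0}$. The key asymptotic lemma to establish is then: for $f\in C^\infty(M)$ and fixed $m,m'$,
$$\langle f s_{k,m}, s_{k,m'}\rangle \to \int_{\TT^n} f|_{\Lambda_{a_0}}\, e^{2\pi i (m-m')\cdot\theta}\,d\theta.$$
This follows from stationary phase or a Gaussian concentration transverse to the torus. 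In effect, $s_{k,m}\overline{s_{k,m'}}$ converges weakly as a measure on $M$ to $e_{m-m'}\,d\theta$ on $\Lambda_{a_0}$. Establishing this normalization is the main obstacle, because it requires extracting uniform phases from Charles' results; the spacing statement of Proposition \ref{spec_properties} alone is not enough.

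Part 1 follows by weak convergence of measures. Since $\langle T_k(h_A)s,s'\rangle = \langle h_A s,s'\rangle$, the measures $|s_{k,m}|^2\mu$ and $s_{k,m}\overline{s_{k,m'}}\mu$ converge weakly to $e_{m-m'}d\theta$ on $\Lambda_{a_0}$. The set of discontinuities of $h_A$ on $\Lambda_{a_0}$ lies in $\partial A\cap\Lambda_{a_0}$, which is null, so the portmanteau theorem (applied to real and imaginary parts via total variation bounds $\le 1$) gives the coefficient limits. For the norms, $\Vert T_k(h_A)s_{k,m}\Vert^2 \le \langle |h_A|^2 s_{k,m},s_{k,m}\rangle \to \Vert h_{A,0}\Vert^2$. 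Combined with Fatou on the coefficients, this gives equality and hence convergence.

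For part 2, the projection $P_k=\Pi_{k,g,b_0}$ satisfies $\Vert P_k s_{k,m}\Vert^2 = \langle P_k s_{k,m},s_{k,m}\rangle$, so it suffices to show $\langle P_k s_{k,m},s_{k,m'}\rangle \to \widehat{\mathbbm 1_{A_0}}(m'-m)$; the norm then converges automatically to $|A_0| = \Vert \mathbbm 1_{A_0}e_m\Vert^2$. I would compare $P_k$ with $T_k(\mathbbm 1_{\{g_0>b_0\}})$, which is covered by part 1 since transversality makes $\partial\{g_0>b_0\}\cap\Lambda_{a_0}$ a null hypersurface. To bridge the two, for $\epsilon>0$ choose smooth $\chi_\pm$ with $\mathbbm 1_{(b_0+\epsilon,\infty)}\le\chi_+\le\mathbbm 1_{(b_0,\infty)}\le\chi_-\le \mathbbm 1_{(b_0-\epsilon,\infty)}$. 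Then $\chi_\pm(T_k(g)) = T_k(\chi_\pm\circ g_0)+\mco(k^{-1})$ by functional calculus for Toeplitz operators, and operator monotonicity gives $\chi_+(T_k g)\le P_k\le\chi_-(T_k g)$. Hence the diagonal and polarized matrix coefficients of $P_k$ on the finite span of $s_{k,m},s_{k,m'}$ are squeezed between the limits for $\chi_\pm\circ g_0$, which differ by $O(|\{b_0-\epsilon<g_0<b_0+\epsilon\}\cap\Lambda_{a_0}|)\to 0$ as $\epsilon\to0$ by transversality. Off-diagonal coefficients are handled by polarization, applying the sandwich to $s_{k,m}+i^j s_{k,m'}$.
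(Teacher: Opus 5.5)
Your reduction (uniform bounds, convergence of $\langle Q_k s_{k,m}, s_{k,m'}\rangle$ for fixed $m,m'$, plus convergence of $\Vert Q_k s_{k,m}\Vert$) is exactly the paper's Lemma \ref{str_conv_explanation1}. However, the heart of the theorem is left open. You flag the construction of $\mcb_k$ as the main obstacle but do not resolve it, and the ``key asymptotic lemma'' $\langle f s_{k,m}, s_{k,m'}\rangle \to \int_{\TT^n} f|_{\Lambda_{a_0}} e_{m-m'}\,d\theta$ is asserted rather than proved.

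This is not a technicality. Replacing $s_{k,m}$ by $e^{i\phi_{k,m}} s_{k,m}$ changes nothing on the diagonal, but multiplies the off-diagonal coefficients by $e^{i(\phi_{k,m}-\phi_{k,m'})}$, which need not converge. So the content of ``there exist orthonormal sets $\mcb_k$'' is a coherent choice of phases, and stationary phase or transverse Gaussian concentration cannot produce one.

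Your proposed normalization, a Lagrangian state on $\Lambda_{a_{k,m}}$ carrying a ``relative phase'' $e^{2\pi i m\cdot\theta}$, is not well defined as it stands. \cite{charles_bohr}, Proposition 3.8, gives $s_{k,m} = c_{k,m}\hat s_{k,m} + \mco(k^{-\infty})$, with $\hat s_{k,m}$ attached to the sequence $(a_{k,m},k)$ and $\arg c_{k,m}$ arbitrary. The inner product formula of Proposition \ref{inner_prod_lagr} only compares Lagrangian sections sharing the same sequence $(a_p,k_p)$, so it never reaches the cross terms $m\neq m'$.

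The missing idea, which is the paper's, is to work on the single torus $\Lambda_{a_{k,0}}$. Let $u_0$ be the symbol of the Lagrangian section approximating the eigensection at $a_{k,0}$, and let $(\tilde s_{k,m},a_{k,0},k)$ be the Lagrangian section with symbol $e_m u_0$. The first-order symbol calculus $T^j v = a^j v - i\hbar(\mcl_{X_j}v + F^j v) + \mco(\hbar^2)$, together with $-i\hbar\,\mcl_{X_j} e_m = 2\pi\hbar\,(m\cdot\nu_j)\,e_m$, gives $T^j_k\tilde s_{k,m} = a^j_{k,m}\tilde s_{k,m} + \mco(k^{-2})$. Since $a_{k,m}$ is simple and $c/k$-separated from the rest of $\sigma_k$, an $\mco(k^{-2})$ quasimode lies within $\mco(k^{-1})$ of the eigenline. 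You are right that the spacing alone is not enough; an $\mco(k^{-1})$ quasimode would be useless against a $c/k$ gap. Hence $s_{k,m} = \sqrt{|\det\nu(0)|}\,\tilde s_{k,m} + \mco(k^{-1})$, which fixes the phases. Proposition \ref{inner_prod_lagr} then yields your lemma, with the limiting density uniform in $\theta$ because $\eta_a = |\det\nu(\varphi_0^{-1}(a))|^{-1}|d\theta|$.

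Granting this, the rest of your argument is a valid and lighter alternative to the paper's. The Bessel/Fatou lower bound together with $\Vert T_k(h_A)s\Vert\le\Vert h_A s\Vert$ replaces the paper's separate norm approximation. The sandwich $\chi_+(T_k(g(\cdot,k)))\le\Pi_{k,g,b_0}\le\chi_-(T_k(g(\cdot,k)))$ with polarization avoids the microsupport estimates (Corollaries \ref{ms_cor}, \ref{basic_MS_lemma} and Lemma \ref{L_infty_limsup}).

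One correction in Part 1: total variation bounds do not license the portmanteau theorem for the real and imaginary parts of $s_{k,m}\overline{s_{k,m'}}\,\mu$. For example, $\delta_{1/k}-\delta_{-1/k}$ tested against $\mathbbm 1_{(0,\infty)}$ gives $1$ for every $k$, although the limit measure is $0$. Instead, polarize into the positive measures $|s_{k,m}+i^j s_{k,m'}|^2\mu$, as you already do in Part 2.
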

Together with some basic facts about strong convergence, Berezin-Toeplitz operators, and the mapping $[\ \cdot\ ]^{\TT^n} : \End(\mch_k) \to \End(L^2(\TT^n))$, Theorem \ref{str_conv_thm} is in fact applicable to a relatively large class of operators. More specifically,
\begin{corollary}\label{str_conv_cor} Recall that if $A_k \sto A$, $B_k \sto B$ then $A_k B_k \sto AB$. Also note that the assignment $$\End(\mch_k)\ni Q \mapsto [Q]^{\TT^n}\in \End(L^2(\TT^n))$$ is linear and multiplicative (i.e., $[Q_1 Q_2]^{\TT^n} = [Q_1]^{\TT^n} [Q_2]^{\TT^n}$). Finally, note that if $g(\cdot,k) \in C^\infty(M)$ is real-valued and admits an asymptotic expansion as above, and $\kappa \in C(\RR, \CC)$ is continuous then\footnote{The map $T_k : C^\infty(M)\to \End(\mch_k)$ extends to a $\CC$-linear map $T_k : L^\infty(M) \to \End(\mch_k)$.} $$\Vert \kappa(T_k(g(\cdot, k))) - T_k(\kappa \circ g_0)\Vert_{\op}= o(1).$$ Hence, for any sequence $Q_k \in \End(\mch_k)$ constructed via these operations, we also conclude that $[Q_k]^{\TT^n}$ converges strongly to a multiplication operator. \end{corollary}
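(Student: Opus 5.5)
The statement to be justified is Corollary \ref{str_conv_cor}: strong convergence is stable under products, the map $Q \mapsto [Q]^{\TT^n}$ is linear and multiplicative, and continuous functional calculus of Berezin--Toeplitz operators agrees with Toeplitz quantization of the composed symbol up to $o(1)$. Combined with Theorem \ref{str_conv_thm}, these facts give strong convergence for every operator built from the basic ones.

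\textbf{Products.} Assume $A_k \sto A$ and $B_k \sto B$. By the uniform boundedness principle, $\sup_k \Vert A_k\Vert_{\op} < \infty$; in our applications the bound is also explicit, e.g. $\Vert [Q]^{\TT^n}\Vert_{\op} = \Vert Q\Vert_{\op}$. For each $v$ we write
$$A_kB_kv - ABv = A_k(B_kv - Bv) + (A_k - A)Bv.$$
The first term is bounded by $\sup_k\Vert A_k\Vert \cdot \Vert B_kv-Bv\Vert \to 0$, and the second tends to $0$ by strong convergence of $A_k$ applied to the fixed vector $Bv$.

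\textbf{Linearity and multiplicativity of $[\cdot]^{\TT^n}$.} Relative to the decomposition $L^2(\TT^n) = \mch_k^{\TT^n}\oplus(\mch_k^{\TT^n})^\perp$, the operator $[Q]^{\TT^n}$ is $U_kQU_k^{-1}\oplus 0$. Conjugation is linear and multiplicative, since $U_kQ_1U_k^{-1}U_kQ_2U_k^{-1} = U_kQ_1Q_2U_k^{-1}$, and the zero block is preserved under sums and products. Hence the map is multiplicative, though it is not unital.

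\textbf{Functional calculus.} I would first reduce to polynomials. Since $\Vert T_k(g(\cdot,k))\Vert_{\op} \le \sup|g(\cdot,k)| \le R$ uniformly in $k$, all spectra lie in $[-R,R]$. Given $\varepsilon>0$, choose by Weierstrass a polynomial $P$ with $|P-\kappa|<\varepsilon$ on $[-R,R]$, which also contains $g_0(M)$. Then:
\begin{itemize}
\item $\Vert \kappa(T_k(g)) - P(T_k(g))\Vert_{\op} \le \varepsilon$ by the spectral theorem;
\item $\Vert T_k(\kappa\circ g_0) - T_k(P\circ g_0)\Vert_{\op} \le \varepsilon$, since $\Vert T_k(f)\Vert_{\op} \le \sup|f|$ for bounded $f$.
\end{itemize}
For monomials we have $T_k(g)^j = T_k(g_0)^j + \mco(k^{-1})$ and $T_k(g_0)^j = T_k(g_0^j) + \mco(k^{-1})$. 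The second estimate is the standard product rule $T_k(f)T_k(h) = T_k(fh) + \mco(k^{-1})$ for smooth symbols, applied inductively. Together these give $\limsup_k \Vert\kappa(T_k(g)) - T_k(\kappa\circ g_0)\Vert_{\op} \le 2\varepsilon$, and letting $\varepsilon\to0$ yields $o(1)$.

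\textbf{Conclusion.} Since $[\cdot]^{\TT^n}$ is norm-nonincreasing, an $o(1)$ norm perturbation does not affect strong limits. Also, $T_k(\kappa\circ g_0)$, with $\kappa\circ g_0$ continuous, is covered by Theorem \ref{str_conv_thm}(1) with $A = M$ (so $\partial A=\emptyset$) after splitting into real and imaginary parts. Applying products, sums, functional calculus and spectral projections to the base cases of Theorem \ref{str_conv_thm} therefore preserves strong convergence to multiplication operators.

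The only genuinely non-formal step is the estimate $T_k(f)T_k(h) = T_k(fh)+\mco(k^{-1})$. I take it as standard from the cited Berezin--Toeplitz theory, used in the same way as the norm correspondence (\ref{norm_corr}).
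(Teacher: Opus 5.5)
Your proposal is correct, and its overall structure matches the paper's. The paper gives no separate proof: the three facts listed in the statement (stability of strong convergence under products, linearity, multiplicativity and norm-preservation of $Q\mapsto[Q]^{\TT^n}$, and the functional-calculus estimate) are the whole justification, combined with Theorem \ref{str_conv_thm}. Your splitting of $\kappa\circ g_0$ into real and imaginary parts and applying part (1) with $A=M$ fills in exactly what is left implicit there.

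The one place where you take a different route is the bound $\Vert \kappa(T_k(g(\cdot,k))) - T_k(\kappa\circ g_0)\Vert_{\op} = o(1)$, which the paper asserts without proof.
\begin{itemize}
\item \textbf{Paper's route.} Elsewhere the paper relies on \cite{charles}, Proposition 12, which gives $\mco(k^{-1})$ for smooth $\kappa$ (see Remark \ref{matr_coef_rk} and Sect.~\ref{proj_proof_sect}). Continuous $\kappa$ would then follow by uniform approximation.
\item \textbf{Your route.} You approximate $\kappa$ by polynomials via Weierstrass and use only the quasi-multiplicativity $T_k(f)T_k(h)=T_k(fh)+\mco(k^{-1})$, together with $\Vert g(\cdot,k)-g_0\Vert_\infty=\mco(k^{-1})$.
\end{itemize}
Your argument is more elementary, since it needs no Toeplitz functional calculus, and it suffices for the $o(1)$ claim. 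The Charles route additionally gives the rate $\mco(k^{-1})$ for smooth $\kappa$. The paper uses that rate when it states convergence rates for matrix coefficients.
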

Next, we recall that strong convergence implies weak convergence, that is, if $Q_k \in \End(\mch_k)$ is a sequence such that $[Q_k]^{\TT^n} \sto \mcm_{h_0}$ for some $h_0 \in L^\infty(\TT^n)$ then for any $u,v \in L^2(\TT^n)$, $$\lim_{k \to \infty} \langle [Q_k]^{\TT^n} u, v \rangle_{L^2(\TT^n)} = \langle h_0 u, v \rangle_{L^2(\TT^n)}.$$
Notably, if $u,v$ are trigonometric polynomials, then for any sufficiently large $k$ there exist $u_k, v_k \in \mch_k$ such that $$u = U_k u_k,\ v = U_k v_k,$$
and hence $$\lim_{k \to \infty} \langle Q_k u_k, v_k \rangle_{\mch_k} = \langle h_0 u, v \rangle_{L^2(\TT^n)}.$$
Thus, matrix coefficients of $Q_k$ converge to Fourier coefficients of $h_0$, as follows.
\begin{corollary}\label{matr_coef_cor} Assume $Q_k \in \End(\mch_k)$, $h_0 \in L^\infty(\TT^n)$ and $[Q_k]^{\TT^n} \sto \mcm_{h_0}$. For $m \in \ZZ^n$, the $m$-th Fourier coefficient of $h_0$ is given by $$\widehat{h_0}(m) = \langle h_0, e_m\rangle_{L^2(\TT^n)}.$$ If $l,m \in \ZZ^n$ are fixed, then $$\lim_{k \to \infty} \langle Q_k s_{k,l}, s_{k,m} \rangle_{\mch_k} = \widehat{h_0}(m-l).$$ \end{corollary}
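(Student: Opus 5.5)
The plan is to express the matrix coefficient through the unitary $U_k$ and then apply weak convergence, which follows from $[Q_k]^{\TT^n} \sto \mcm_{h_0}$. Fix $l,m \in \ZZ^n$. By Proposition \ref{spec_properties}, $p_k \to \infty$, so for all sufficiently large $k$ we have $l,m \in \mci_k$. Hence $s_{k,l}, s_{k,m} \in \mcb_k$ are defined and, by the choice of $U_k$ in Theorem \ref{str_conv_thm}, $U_k s_{k,l} = e_l$ and $U_k s_{k,m} = e_m$. In particular $e_l, e_m \in \mch_k^{\TT^n}$.

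Next, use that $U_k : \mch_k \to \mch_k^{\TT^n}$ is unitary and that $[Q_k]^{\TT^n}$ restricts to $U_k Q_k U_k^{-1}$ on $\mch_k^{\TT^n}$. This gives
\begin{equation*}
\langle Q_k s_{k,l}, s_{k,m} \rangle_{\mch_k} = \langle U_k Q_k U_k^{-1} e_l, e_m \rangle_{L^2(\TT^n)} = \langle [Q_k]^{\TT^n} e_l, e_m \rangle_{L^2(\TT^n)}.
\end{equation*}
By strong convergence, $[Q_k]^{\TT^n} e_l \to h_0 e_l$ in $L^2(\TT^n)$. Continuity of the inner product then yields
\begin{equation*}
\lim_{k \to \infty} \langle Q_k s_{k,l}, s_{k,m} \rangle_{\mch_k} = \langle h_0 e_l, e_m \rangle_{L^2(\TT^n)}.
\end{equation*}

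It remains to identify this limit. Since the measure on $\TT^n$ is normalized Lebesgue measure, and $\overline{e_m}\, e_l = e_{l-m} = \overline{e_{m-l}}$, we get
\begin{equation*}
\langle h_0 e_l, e_m \rangle = \int_{\TT^n} h_0\, \overline{e_{m-l}}\, d\theta = \langle h_0, e_{m-l} \rangle = \widehat{h_0}(m-l).
\end{equation*}

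None of these steps is a real obstacle. The only points needing care are that $l,m$ eventually lie in $\mci_k$, which holds because $p_k \to \infty$, and the sign convention in the Fourier coefficient, which the computation above fixes as $m-l$.
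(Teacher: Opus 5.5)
Your proposal is correct and follows essentially the paper's own argument. Once $l,m\in\mci_k$, you pull the matrix coefficient back through $U_k$ to $\langle [Q_k]^{\TT^n} e_l, e_m\rangle_{L^2(\TT^n)}$, use that strong convergence implies weak convergence, and identify $\langle h_0 e_l, e_m\rangle = \widehat{h_0}(m-l)$.
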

For operators defined via smooth functions, the convergence rate is $\mco(k^{-1})$.
\begin{remark}\label{matr_coef_rk} If $Q_k = T_k(h)$ for $h \in C^\infty(M)$, then (using the above notations) $$\langle Q_k s_{k,l}, s_{k,m} \rangle = \widehat{h_0}(m-l) + \mco(k^{-1}),\ h_0 = h|_{\Lambda_{a_0}}.$$ Similarly for $\kappa(T_k(h))$, $\kappa \in C^\infty(\RR, \CC)$, since $\kappa(T_k(h)) = T_k(\kappa \circ h) + \mco(k^{-1})$. \end{remark}
We note that in fact, as far as the proof of Theorem \ref{str_conv_thm} is concerned, the convergence of matrix coefficients to Fourier coefficients is established first, and Theorem \ref{str_conv_thm} is then deduced as a consequence (see Lemma \ref{str_conv_explanation1}).

The results above only address the leading order behaviour of the operators involved. However, in the case of the operators $$[T^j_{k}]^{\TT^n},\ j = 1,...,n,$$ the sub-leading behaviour is essentially described in Proposition \ref{spec_properties}. Namely, these operators behave approximately as first order differential operators with constant coefficients. More specifically,
\begin{remark}\label{str_conv_rk} Writing $a_{k,0} = (a^1_{k,0}, ..., a^n_{k,0})$, we note that the sequence $$k\left([T^j_{k}]^{\TT^n} - a^j_{k,0}\Id_{L^2(\TT^n)}\right)$$
converges to the operator $-i \nu_j(0) \cdot \partial_\theta,\ \partial_\theta = (\partial_{\theta^1}, ..., \partial_{\theta^n})$ in the sense of strong resolvent convergence\footnote{The notion of strong convergence is not applicable in the context of unbounded operators. If $A_k$, $A$ are (possibly unbounded) self adjoint operators on a Hilbert space $\mch$ such that $R_i(A_k) \sto R_i(A)$, where $R_i(A) = (A - i \Id_{\mch})^{-1}$ is the resolvent and similarly for $A_k$, then we say that $A_k$ converges to $A$ in the sense of strong resolvent convergence.} (see Remark \ref{str_res_conv_exp}).\end{remark}
We note that strong convergence implies strong resolvent convergence, which has the following spectral implications (\cite{oliveira}). We denote the spectrum of a linear operator $A$ by $\sigma(A)$.
\begin{corollary}\label{str_conv_spec_imp} If $A_k$ is a sequence of (possibly unbounded) self-adjoint operators on a complex, separable Hilbert space $\mch$ which converges, in the sense of strong resolvent convergence, to a (possibly unbounded) self-adjoint operator $A$, then for every $\lambda \in \sigma(A)$ there exists a sequence $\lambda_k \in \sigma(A_k)$ such that $$\lim_{k \to \infty} \lambda_k = \lambda.$$
In the present context, if $Q_k =Q_k^* \in \End(\mch_k)$ is a sequence as in Theorem \ref{str_conv_thm} (or Corollary \ref{str_conv_cor}) such that $[Q_k]^{\TT^n} \sto \mcm_{h_0}$ for some $h_0 \in L^\infty(\TT^n)$, then\footnote{Here, $\essran(h_0)$ denotes the essential range of $h_0$.}
\begin{equation}\label{spec_imp}\lim_{k \to \infty}\max_{\lambda \in \essran(h_0)} \dist(\lambda, \sigma(Q_k))= 0.\end{equation}
\end{corollary}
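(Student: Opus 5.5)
The first part of the corollary is a general statement. It is cited from \cite{oliveira}, but I would re-derive it directly, and then specialize to $A_k = [Q_k]^{\TT^n}$, $A = \mcm_{h_0}$.

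For the general statement, fix $\lambda \in \sigma(A)$ and $\varepsilon > 0$. I use the spectral characterization of the resolvent. For self-adjoint $B$ and real $\mu$,
$$\Vert (B - \mu - i\varepsilon)^{-1}\Vert_{\op} = \dist(\mu + i\varepsilon, \sigma(B))^{-1}.$$
In particular, $\Vert (A-\lambda-i\varepsilon)^{-1}\Vert_{\op} = \varepsilon^{-1}$, since $\lambda \in \sigma(A)$.

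Strong resolvent convergence at the single point $i$ implies strong convergence of $(A_k - z)^{-1}$ for every non-real $z$. This is standard: write $(A_k-z)^{-1}$ as a norm-convergent series in $R_i(A_k)$, or approximate $\kappa(t) = (t-z)^{-1}$ uniformly on $\RR$ by polynomials in $(t-i)^{-1}$ and $(t+i)^{-1}$ via Stone--Weierstrass. Note that $R_{-i}(A_k) = R_i(A_k)^*$ converges weakly, and strongly after a short argument using $\Vert R_i(A_k)v\Vert^2 = \frac{1}{2i}\langle (R_i(A_k) - R_i(A_k)^*)v, v\rangle$.

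Choose a unit vector $v$ with $\Vert (A-\lambda-i\varepsilon)^{-1}v\Vert > (2\varepsilon)^{-1}$. Strong convergence at $v$ gives
$$\liminf_k \Vert (A_k - \lambda - i\varepsilon)^{-1}\Vert_{\op} \ge (2\varepsilon)^{-1}.$$
Hence $\dist(\lambda + i\varepsilon, \sigma(A_k)) \le 2\varepsilon$ for large $k$, so $\dist(\lambda, \sigma(A_k)) \le 2\varepsilon$. A diagonal choice over $\varepsilon \to 0$ yields $\lambda_k \to \lambda$.

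For the specialization, if $[Q_k]^{\TT^n} \sto \mcm_{h_0}$ with the operators self-adjoint and bounded, then $R_i$ converges strongly. This follows from the identity
$$R_i(A_k) - R_i(A) = R_i(A_k)(A - A_k)R_i(A),$$
together with $\Vert R_i(A_k)\Vert \le 1$. Moreover $\sigma(\mcm_{h_0}) = \essran(h_0)$. Also $\sigma([Q_k]^{\TT^n}) = \sigma(Q_k) \cup \{0\}$, which accounts for the zero block on $(\mch_k^{\TT^n})^\perp$. This extra point $0$ is the one subtlety, and I treat it as follows.

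Observe that $[Q_k]^{\TT^n} + c\,\Id$ corresponds to $[Q_k + c]^{\TT^n}$ only on $\mch_k^{\TT^n}$. So instead I consider the modified operators $\tilde A_k = [Q_k]^{\TT^n} + \lambda_*(\Id - P_k)$, where $P_k$ is the orthogonal projection onto $\mch_k^{\TT^n}$ and $\lambda_* \in \essran(h_0)$ is a fixed point. Since the $e_m$, $m \in \mci_k$, exhaust every fixed $e_m$ eventually, $P_k \sto \Id$, and therefore $\tilde A_k \sto \mcm_{h_0}$. Now $\sigma(\tilde A_k) \subset \sigma(Q_k) \cup \{\lambda_*\}$. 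Applying the general statement still leaves $\lambda_*$ possibly matched only by the spurious point, so I would run the argument with two different choices $\lambda_*, \lambda_*'$ (or handle the case where $\essran(h_0)$ is a single point separately, using $\Vert \cdot \Vert$ bounds). Alternatively, I would avoid this altogether by noting that the approximating vector $v$ can be taken to be a trigonometric polynomial. Then the resolvent estimate localizes in $\mch_k^{\TT^n}$, and directly gives $\dist(\lambda, \sigma(Q_k)) \le 2\varepsilon$. This last route is what I would actually use.

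It remains to upgrade the pointwise statement to the uniform statement (\ref{spec_imp}). The main obstacle is uniformity in $\lambda$ over the compact set $\essran(h_0)$. Given $\varepsilon$, cover $\essran(h_0)$ by finitely many $\varepsilon$-balls centered at points $\lambda_1, \dots, \lambda_N \in \essran(h_0)$. For $k$ large, each $\dist(\lambda_j, \sigma(Q_k)) \le 2\varepsilon$. Hence the maximum over $\lambda \in \essran(h_0)$ of $\dist(\lambda, \sigma(Q_k))$ is at most $3\varepsilon$, which gives (\ref{spec_imp}).
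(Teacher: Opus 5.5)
Your proof is correct and follows essentially the paper's route. The general part is the standard spectral inclusion under strong resolvent convergence, which the paper simply cites from Oliveira and you re-derive from $\Vert (B-z)^{-1}\Vert_{\op} = \dist(z,\sigma(B))^{-1}$. The spurious point $0 \in \sigma([Q_k]^{\TT^n})$ is handled, as in the remark following the corollary, by trigonometric-polynomial test vectors that lie in $\mch_k^{\TT^n}$ for large $k$, and uniformity over $\essran(h_0)$ comes from compactness.

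The only differences are minor. You apply the localization to every $\lambda$ through resolvents, whereas the paper applies it only to $\lambda=0$ through an approximate eigenvector; for bounded operators that direct estimate $\Vert([Q_k]^{\TT^n}-\lambda)u\Vert \to \Vert(h_0-\lambda)u\Vert$ is shorter. Also, the shift trick you abandoned works immediately if you take $\lambda_* \notin \essran(h_0)$, e.g.\ $\lambda_* = \Vert h_0\Vert_\infty + 1$.
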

\begin{remark} In the previous corollary, note that $\sigma([Q_k]^{\TT^n}) = \sigma(Q_k) \cup\{0\}$. Thus, the case $\lambda = 0 \in \essran(h_0)$ needs special attention. For instance, given $\varepsilon >0$, fix a normalized trigonometric polynomial $u\in L^2(\TT^n)$ with $\Vert h_0 u \Vert < \frac \varepsilon 2$. Then for all sufficiently large $k$, we obtain unit sections $s_k\in \mch_k$ such that $\Vert Q_k s_k \Vert < \varepsilon$. It follows that $\dist(0, \sigma(Q_k)) < \varepsilon$, therefore $\dist(0, \sigma(Q_k))  \to 0$. The limit (\ref{spec_imp}) readily follows using the compactness of $\essran(h_0)$. \end{remark}
We conclude the subsection by noting that the formulation of Theorem \ref{str_conv_thm} may be considered as somewhat misleading, in the following sense. While the embedding $U_k : \mch_k \to \mch_k^{\TT^n}$ is defined on the entire quantum space $\mch_k$, the strong convergence described in the theorem is determined by the action of the relevant operators on joint spectral subspaces associated with joint eigenvalues $a_{k,m}\in \sigma_k$ for $m \in \ZZ^n$ arbitrarily large, but fixed\footnote{Accordingly, the theorem does not depend on the choice of sequence $p_k$ used to define $\mci_k$ in Proposition \ref{spec_properties}.}. If we define $$\mch_{k,a_0}^{\TT^n} = U_k \mch_{k,a_0},\ \mch_{k,a_0} = \Sp \mcb_k,$$
where $\mcb_k$ is as specified in Theorem \ref{str_conv_thm}, then $\dim \mch_{k,a_0} = o(\dim \mch_k),$ while the subspaces $$(\mch_{k,a_0}^\perp)^{\TT^n} = U_k \mch_{k,a_0}^\perp \subset L^2(\TT^n),$$
"escape to infinity" as $k \to \infty$. The strong convergence described in the theorem is determined by the action of the operators on $\mch_{k,a_0}^{\TT^n}$ (informally speaking, $\mch_{k,a_0}$ is a sort of "local quantum space" about $\Lambda_{a_0}$).

We justify the formulation of Theorem \ref{str_conv_thm} mostly on technical grounds, since it seems to be somewhat simpler than the alternatives, and it readily allows to exploit the standard asymptotic properties of Berezin-Toeplitz quantization. Additionally, informally speaking, the formulation seems roughly consistent with a valid perspective, namely, the way the quantization of $M$ appears when viewed "from $\Lambda_{a_0}$". If instead we compress the operators to $\mch_{k,a_0}$, then in some respects, subsets of $M$ which are separated from $\Lambda_{a_0}$ are "negligible" already for finite $k$ (rather than in the limit $k \to \infty$). In this context, we note the following simple examples (which are proven in Sect. \ref{lagr_sect}).
\begin{example}\label{loc_example1} Let $h : M \to \RR$ be a bounded measurable function. For $\varepsilon >0$, denote $$\Lambda_{a_0, \varepsilon} = \{x \in M \ | \ \dist(x, \Lambda_{a_0}) \le \varepsilon\}.$$
Let $\lambda \in \RR$. Assume that there exists $\varepsilon > 0$ such that $\lambda \not \in [\inf_{\Lambda_{a_0, \varepsilon}} h, \sup_{\Lambda_{a_0, \varepsilon}} h]$. Let $$\Pi_{k, a_0} : \mch_k \to \mch_{k,a_0}$$ be the orthogonal projection on $\mch_{k,a_0}$. Then for all sufficiently large $k$, $$\lambda \not \in \sigma(T_{k,a_0}(h)),$$ where $T_{k,a_0}(h) = \Pi_{k,a_0} T_k(h) \Pi_{k,a_0}$ is viewed as an operator on $\mch_{k,a_0}$. \end{example}
\begin{example}\label{loc_example2} Let $h\in C(M)$. Consider the compressed operators $$T_{k,a_0}(h) = \Pi_{k,a_0} T_k(h) \Pi_{k,a_0} \in \End(\mch_{k,a_0}).$$ Then the Hausdorff distance $$\delta_{k,a_0} = \dist_{\text{H}}\big{(}\sigma\big{(}T_{k,a_0}(h)\big{)}, h(\Lambda_{a_0})\big{)}$$
between $\sigma\big{(}T_{k,a_0}(h)\big{)}$ and $h(\Lambda_{a_0})$ satisfies\footnote{This is not the case for $T_k(h)$, assuming that $h\in C(M)$ and $h(M) \setminus h(\Lambda_{a_0}) \ne \emptyset$.} $\lim_{k \to \infty} \delta_{k,a_0} = 0$. \end{example}
\subsection{Pairs of spectral projections of quantum observables}\label{pairs_sect}
While strong convergence is much weaker than convergence with respect to the operator norm (which cannot be expected in the context of Theorem \ref{str_conv_thm}), it can still have some notable implications. In the present subsection, we describe such implications for pairs of spectral projections associated with $T_k^j$, $j=1,...,n$, and $T_k(g(\cdot, k))$, $g(\cdot, k) \in C^\infty(M)$. Spectral projections of Berezin-Toeplitz operators have attracted some attention in recent years (\cite{zz1, zz2, shabtai3, ioos}). The results we present here (Theorem \ref{main_proj_thm}, Corollary \ref{max_cor}) extend earlier findings (\cite{shabtai1, shabtai2}) on pairs of spectral projections of spin operators.
\begin{theorem}\label{main_proj_thm}
Let $g$, $b_0$, and $\Pi_{k, g, b_0}$ be as in Theorem \ref{str_conv_thm}. For $c \in \RR^n \setminus \{0\}$, let $X_{c \cdot f_0}$ denote the Hamiltonian vector field of $c \cdot f_0$. Assume that the flow of $X_{c \cdot f_0}\big{|}_{\Lambda_{a_0}}$ is periodic, and that $\{c \cdot f_0, g_0\}(x) \ne 0$
for some $x \in \{g_0 = b_0\} \cap \Lambda_{a_0}$. Write $c = (c^1, ..., c^n)$, and define $$\Pi_k^c = \mathbbm 1_{(c \cdot a_0, \infty)}(S_k^c),\ S_k^c = c^1 T_k^1 + ... + c^n T_k^n.$$  Let $N^c_{k}(I)$ denote the number of eigenvalues of the operator $$\Pi^c_{k} \Pi_{k,g,b_0} \Pi^c_{k}\in \End(\Pi_k^c \mch_k)$$ in the interval $I\subset [0,1]$, counted with multiplicity. Then for any $0 \le t < t' \le 1$, $$\lim_{k \to \infty} N^c_{k}([t,t']) = \infty.$$ \end{theorem}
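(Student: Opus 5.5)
The plan is to transfer the problem to $L^2(\TT^n)$ via $[\,\cdot\,]^{\TT^n}$ and use Theorem \ref{str_conv_thm}. There, $[\Pi_{k,g,b_0}]^{\TT^n}\sto \mcm_{\mathbbm 1_{A_0}}$. The projection $\Pi_k^c$ is a function of the commuting operators $T_k^j$, so it is diagonal in the basis $\mcb_k$: $s_{k,m}\in \Pi_k^c\mch_k$ exactly when $c\cdot a_{k,m}>c\cdot a_0$. By Proposition \ref{spec_properties},
$$c\cdot a_{k,m} = c\cdot a_{k,0}+\tfrac{2\pi}{k}\, c\cdot\nu(0)m+\mco(|m|^2k^{-2}),$$
with $c\cdot a_{k,0}-c\cdot a_0=\mco(k^{-1})$. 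This requires care. I would first perturb $a_0$ to $c\cdot a_0 + \delta/k$ for generic $\delta$, or argue that the $\mco(k^{-1})$ offset is a fixed shift along a subsequence. The conclusion should be that for fixed $m$ with $(c^T\nu(0))\cdot m$ positive and large enough, $s_{k,m}\in\Pi^c_k\mch_k$ eventually, and $s_{k,m}\notin \Pi^c_k\mch_k$ if it is negative enough. Hence $[\Pi^c_k]^{\TT^n}\sto P$, where $P$ is (up to finitely many boundary modes, handled by passing to subsequences) the Fourier projection onto $\{m : w\cdot m>0\}$ with $w=\nu(0)^Tc$. The vector $w$ is the frequency of $X_{c\cdot f_0}$ on $\Lambda_{a_0}$, and periodicity means $w$ is a multiple of an integer vector. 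So $P$ is a "half-space Hardy projection" in the direction $w$, essentially the Riesz projection along the periodic flow direction. By Corollary \ref{str_conv_cor}, $[\Pi^c_k\Pi_{k,g,b_0}\Pi^c_k]^{\TT^n}\sto P\mcm_{\mathbbm 1_{A_0}}P$.

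Next I would study the limit operator $T=P\mcm_{\mathbbm 1_{A_0}}P$ on $PL^2$. Choose coordinates on $\TT^n$ with $\theta^1$ along the closed orbits of the flow, so $P$ is the projection onto positive frequencies in $\theta^1$, tensored over the remaining Fourier modes. This is a family of Toeplitz operators on the Hardy space of the circle, with symbol $\mathbbm 1_{A_0}$ restricted to orbits. The hypothesis $\{c\cdot f_0,g_0\}(x)\neq0$ means the orbit through $x$ crosses $\{g_0=b_0\}$ transversally. Hence, on a positive-measure set of orbits, the indicator restricted to the orbit is a nontrivial arc indicator with jump discontinuities. A Toeplitz operator with a symbol that is the indicator of an arc has spectrum $[0,1]$ and purely absolutely continuous spectrum there (Widom, Hartman–Wintner, and the explicit diagonalization by Rosenblum). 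I would therefore show that $\sigma_{\ess}(T)\supset[0,1]$, or more precisely that for every $t<t'$ the spectral projection $\mathbbm 1_{(t,t')}(T)$ has infinite rank.

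Finally comes counting. Strong resolvent convergence alone only yields spectrum nearby, not multiplicity, so I would argue variationally. Take $N$ orthonormal approximate eigenvectors $u_1,\dots,u_N$ of $T$ with spectral support in $(t+\epsilon,t'-\epsilon)$. This is possible because the spectral subspace is infinite dimensional, and they can be taken to be trigonometric polynomials after approximation. For large $k$, pull them back via $U_k^{-1}$ and apply $\Pi_k^c$. By strong convergence, the compression of $\Pi^c_k\Pi_{k,g,b_0}\Pi^c_k$ to their span converges in norm to the compression of $T$. The issue is that lower and upper bounds on the spectrum do not directly give interior counts. To handle this, I would apply the min-max principle to $(Q_k-\tfrac{t+t'}2)^2$ with $Q_k=\Pi^c_k\Pi_{k,g,b_0}\Pi^c_k$. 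Its compression to the $N$-dimensional space converges to something with norm less than $((t'-t)/2)^2$, which forces at least $N$ eigenvalues of $Q_k$ in $[t,t']$. Since $N$ is arbitrary, $N^c_k([t,t'])\to\infty$.

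The main obstacle is the second step: establishing that the spectral subspaces of the Toeplitz-type operator $P\mcm_{\mathbbm 1_{A_0}}P$ for every subinterval of $[0,1]$ are infinite dimensional, using only transversality at one point. Fibering over the orbits and invoking the arc-indicator Toeplitz theory on a positive-measure family of transversal orbits is what I would try first. A secondary delicate point is the boundary behaviour of $P$, namely the modes with $w\cdot m$ near $0$ combined with the $\mco(k^{-1})$ offset of $a_{k,0}$. This is handled by noting that changing $P$ by a finite-rank or lower-dimensional-lattice piece does not affect the infinite multiplicities.
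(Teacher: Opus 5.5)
Your architecture is the paper's: transfer to $L^2(\TT^n)$, reduce $\Pi_k^c$ to a Fourier projection in the periodic direction, fiber over the closed orbits, and apply Hartman--Wintner on a positive-measure set of transversal orbits. However, the step you call a secondary delicate point is where the real difficulty lies, and your justification for it does not work.

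\textbf{The gap.} Normalize so that $c\cdot\nu(0)m=m^n$. Then
$k(c\cdot a_{k,m}-c\cdot a_0)=k(c\cdot a_{k,0}-c\cdot a_0)+2\pi m^n+\mco(|m|^2k^{-1})$.
Nothing prevents $k(c\cdot a_{k,0}-c\cdot a_0)$ from accumulating at a point $-2\pi m_0\in 2\pi\ZZ$. You also cannot perturb the threshold to a generic one, since $c\cdot a_0$ is fixed in the statement. Along such a subsequence the whole slice $\{m^n=m_0\}\cong\ZZ^{n-1}$ is ambiguous: whether $s_{k,m}\in\Pi^c_k\mch_k$ is decided by the sign of the remainder, which can depend on $m'$ and $k$ in an uncontrolled way. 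A diagonal subsequence still gives $[\Pi^c_k]^{\TT^n}\sto P_S$ with $\{m^n>m_0\}\subseteq S\subseteq\{m^n\ge m_0\}$. But $P_S$ then differs from both half-space projections by infinite-rank, hence non-compact, projections. Moreover, since $S$ need not have product form, $P_S$ is a Fourier multiplier in $m'$ that does not decompose over $\theta'$. So neither Weyl's theorem nor your fibered Hartman--Wintner argument tells you anything about $\sigma(P_S\mcm_{\mathbbm 1_{A_0}}P_S)$. The claim that a lower-dimensional-lattice piece does not affect infinite multiplicities is exactly what has to be proved; for $n>1$ it is the main technical content of the argument. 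If the accumulation point is off $2\pi\ZZ$, your argument is complete, because the limit is then an exact shifted half-space projection.

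\textbf{The missing idea.} Do not identify a limit at all; instead use approximate eigenvectors that do not see the ambiguous modes. As in the paper, conjugate by an exponential so that $[\Pi^c_k]^{\TT^n}$ is eventually $0$ on $\{m^n<0\}$ and the identity on $\{m^n\ge N\}$. Then take a unit vector $u\in H^2_{n,N}(\TT^n)$ satisfying both
$\|\Pi^n_N\mcm_{\mathbbm 1_{A_0}}u-\lambda u\|<\varepsilon$ and $\|(\Pi^n-\Pi^n_N)\mcm_{\mathbbm 1_{A_0}}u\|<\varepsilon$.
Such $u$ exists for the following reason. Let $E_N=(\Pi^n-\Pi^n_N)\mcm_{\mathbbm 1_{A_0}}\Pi^n_N$. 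The operator $(\Pi^n_N\mcm_{\mathbbm 1_{A_0}}\Pi^n_N-\lambda)^2+E_N^*E_N$ is decomposable over $\theta'$, and in each fiber $E_{N,\theta'}$ has rank at most $N$. Hence $0$ lies in the essential spectrum of the fibers over a positive-measure set of transversal orbits, which is Lemma \ref{Rlambda}. For this $u$, whatever $[\Pi^c_k]^{\TT^n}$ does on the slab $\{0\le m^n<N\}$ changes $[\Pi^c_k]^{\TT^n}[\Pi_{k,g,b_0}]^{\TT^n}[\Pi^c_k]^{\TT^n}u$ by only $\mco(\varepsilon)$. This gives $\dist(\lambda,\sigma)<3\varepsilon$ for large $k$ with no convergence of $[\Pi^c_k]^{\TT^n}$ needed (Proposition \ref{main_proj_prop}).

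Once this is in place, your min--max counting is valid but more than needed. Since every point of $[t,t']$ is a limit of eigenvalues, choosing $N$ disjoint open subintervals of $(t,t')$ already gives $N^c_k([t,t'])\ge N$ for large $k$.
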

\begin{example}\label{proj_ex} In the setting of Theorem \ref{main_proj_thm}, by transversality, there exists $j \in \{1,...,n\}$ such that $\lim_{k \to \infty} N_k^{c_j}([t, t']) = \infty$ for any $0 \le t < t' \le 1$, where $c_j$ is the $j$-th row of $\nu(0)^{-1}$. \end{example}
\begin{remark} In the setting of Theorem \ref{main_proj_thm}, if $\mu(\{c \cdot f_0 = c \cdot a_0\}) = 0$ then $N_k^c([t, 1-t]) = o(\dim \mch_k)$ for all $0 < t < \frac 1 2$ (see Proposition \ref{two_proj_prop}). A much sharper estimate of $\mco(\log k)$ was obtained in \cite{shabtai2} for certain pairs of spectral projections of spin operators.\end{remark}

Generally, if $P_1$, $P_2$ are orthogonal projections on a complex separable Hilbert space (possibly infinite dimensional), then the spectrum of $P_1 P_2 P_1$ largely determines the complex unital algebra generated by $P_1$, $P_2$. Accordingly, Theorem \ref{main_proj_thm} is equivalent\footnote{\cite{shabtai_thesis}, Claim 3.4.} to a certain "maximality property", as follows.

Let $\mca$ be the complex unital algebra generated by noncommuting variables $p_1,p_2$ which satisfy $p_1^2 = p_1$, $p_2^2 = p_2$. Let $0 \ne \phi \in \mca$. Then it is known (\cite{shabtai2}) that there exists a universal, tight upper bound $B_\phi > 0$ (which depends only on $\phi$) for $\Vert \phi(P_1, P_2)\Vert_{\op}$, where $P_1$, $P_2$ are arbitrary orthogonal projections (on some separable complex Hilbert space). 
\begin{corollary}[Asymptotic maximality for polynomials in two projections]\label{max_cor} Let $c \in \RR^n\setminus \{0\}$ be as in Theorem \ref{main_proj_thm}, so that for any $0 \le t < t' \le 1 $, it holds that $\lim_{k \to \infty} N^c_{k}([t, t']) = \infty.$ Then for any $0 \ne \phi \in \mca$, $$\lim_{k \to \infty} \Vert \phi(\Pi^c_{k}, \Pi_{k,g,b_0})\Vert_{\op} = B_\phi.$$  \end{corollary}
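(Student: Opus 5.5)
The plan is to reduce the corollary to the known structure theory of pairs of projections (Halmos' two-subspace theorem). The deviation from the universal bound $B_\phi$ is then controlled entirely by how densely the spectrum of $P_1P_2P_1$ fills $[0,1]$. The equivalence cited from \cite{shabtai_thesis}, Claim 3.4, is exactly this principle; I would reprove the needed direction as follows.

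\textbf{Step 1: reduce to $2\times 2$ blocks.} For two orthogonal projections $P_1,P_2$ on a finite-dimensional space $\mch_k$, the space decomposes orthogonally into:
\begin{itemize}
\item one-dimensional joint invariant pieces, on which $(P_1,P_2)$ acts as a pair of scalars in $\{0,1\}^2$;
\item two-dimensional pieces $V_\theta$, $\theta\in(0,\frac\pi2)$, on which
\[
P_1=\begin{pmatrix}1&0\\0&0\end{pmatrix},\qquad P_2=\begin{pmatrix}\cos^2\theta&\cos\theta\sin\theta\\ \cos\theta\sin\theta&\sin^2\theta\end{pmatrix}.
\]
\end{itemize}
The eigenvalues of $P_1P_2P_1|_{P_1\mch_k}$ are $\cos^2\theta$ for each generic block, together with $0$ and $1$ from the one-dimensional pieces. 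Hence
\[
\Vert\phi(P_1,P_2)\Vert_{\op}=\max\Big(\max_{\text{1-dim pieces}}|\phi(\epsilon_1,\epsilon_2)|,\ \max_\theta \Vert\phi(P_1^\theta,P_2^\theta)\Vert\Big).
\]
Define
\[
F_\phi(\theta)=\Vert \phi(P_1^\theta,P_2^\theta)\Vert,
\]
which is continuous on $[0,\frac\pi2]$. At the endpoints its values are $\max$ of $|\phi|$ over commuting pairs, namely the pairs $(1,1),(0,0)$ at $\theta=0$ and $(1,0),(0,1)$ at $\theta=\frac\pi2$. The universal bound is then $B_\phi=\sup_{\theta\in[0,\pi/2]}F_\phi(\theta)$, since every pair of projections is a direct sum of such blocks, and direct integrals in the infinite-dimensional case do not exceed the sup. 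This identification of $B_\phi$ I would take from \cite{shabtai2}.

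\textbf{Step 2: upper and lower bounds.} The upper bound $\Vert\phi(\Pi^c_k,\Pi_{k,g,b_0})\Vert_{\op}\le B_\phi$ holds by definition. For the lower bound, fix $\varepsilon>0$ and choose $\theta_0$ with $F_\phi(\theta_0)>B_\phi-\varepsilon$. By continuity there is an interval $J=[t,t']\ni\cos^2\theta_0$ with $t<t'$ such that $F_\phi>B_\phi-2\varepsilon$ on the corresponding $\theta$-interval. The hypothesis $N^c_k([t,t'])\to\infty$ gives, for large $k$, at least one eigenvalue of $\Pi^c_k\Pi_{k,g,b_0}\Pi^c_k$ in $J$.

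\textbf{Step 3: locate a generic block and conclude.} The delicate point is that an eigenvalue at $0$ or $1$ may come from a one-dimensional piece rather than a block. I would handle this by shrinking $J$ into $(0,1)$: if $\theta_0$ is an endpoint, replace it by a nearby interior angle, which continuity of $F_\phi$ permits. Then an eigenvalue $\cos^2\theta\in J\subset(0,1)$ of $P_1P_2P_1|_{P_1\mch_k}$ with eigenvector $v$ generates the two-dimensional invariant subspace $\Sp\{v,P_2v\}$, on which the pair is unitarily equivalent to $(P_1^\theta,P_2^\theta)$. Therefore
\[
\Vert\phi(\Pi_k^c,\Pi_{k,g,b_0})\Vert_{\op}\ge F_\phi(\theta)>B_\phi-2\varepsilon.
\]
Letting $\varepsilon\to0$ gives the limit.

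I expect the main obstacle to be only bookkeeping: verifying that $B_\phi$ equals the supremum of $F_\phi$ over closed $[0,\frac\pi2]$, including the commuting endpoint cases. Only one eigenvalue per interval is actually needed here, so the divergence $N^c_k\to\infty$ is more than sufficient.
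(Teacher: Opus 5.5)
Your proposal is correct and follows essentially the paper's route. The paper cites from two-projection theory a continuous $\beta_\phi$ on $[0,1]$ with $B_\phi=\max_{[0,1]}\beta_\phi$ and $\Vert\phi(P_1,P_2)\Vert_{\op}\ge\max_{\sigma(P_1P_2P_1)}\beta_\phi$, then uses that $\sigma(\Pi^c_k\Pi_{k,g,b_0}\Pi^c_k)$ becomes dense in $[0,1]$; your $F_\phi(\theta)$ is exactly $\beta_\phi(\cos^2\theta)$. The only difference is that you prove these cited facts yourself, via the Halmos block decomposition and the reducing subspace $\Sp\{v,P_2v\}$, and you restrict to eigenvalues in $(0,1)$, which cleanly avoids eigenvalues $0,1$ coming from one-dimensional pieces.
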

\begin{remark} The universal upper bound $B_\phi$ admits a concise formula in terms of $\phi$ (\cite{shabtai2}, Corollary 6). For example, if $\phi(p_1, p_2) = p_1 p_2 - p_2 p_1$ is the commutator polynomial, then $B_\phi = \frac 1 2$. Thus, in the setting of Corollary \ref{max_cor}, $$\lim_{k \to \infty} \Vert [\Pi_k^c, \Pi_{k,g,b_0} ]\Vert_{\op} = \frac 1 2.$$ The computation of this limit was one of the main motivations for the present paper; the fact that it is nonzero demonstrates that the correspondence principle (\ref{corr_princ}) does not extend to "discontinuous observables". \end{remark}
The proof of Theorem \ref{main_proj_thm} is reduced, partly via Theorem \ref{str_conv_thm}, to the study of pairs of spectral projections on $L^2(\TT^n)$. Indeed, by Theorem \ref{str_conv_thm}, $$[\Pi_{k,g,b_0}]^{\TT^n} \sto \mcm_{\mathbbm 1_{A_0}},\ A_0 = \{g_0 > b_0\} \cap \Lambda_{a_0}.$$
By contrast, strictly speaking, the sequences $[\Pi^c_{k}]^{\TT^n}$ do not converge in general. However, the potential lack of convergence may only occur due to "bounded fluctuations". This may be illustrated as follows. Consider the orthogonal projections \begin{equation}\label{j_szego}\Pi^j : L^2(\TT^n) \to H^2_j(\TT^n)\end{equation}
on the "partial Hardy spaces"
$$\begin{aligned} &H_j^2(\TT^n) = \{g \in L^2(\TT^n) \ | \ \hat g(m) = 0 \text{ for all } m \in \ZZ^n_{j, -}\},\\ &\ZZ^n_{j, -} = \{(m^1, ..., m^n) \in \ZZ^n \ | \ m^j < 0\}.\end{aligned}$$
\begin{corollary} For $j \in \{1,...,n\}$ and $c_j$ as in Example \ref{proj_ex}, define $$\Pi_{k,\varepsilon}^j = \mathbbm 1_{(-\varepsilon, \infty)}\Big{(}k\big{(} S^{c_j}_{k}- c_j \cdot a_{k,0} \Id_{\mch_k}\big{)}\Big{)}.$$ Then in light of Proposition \ref{spec_properties}, for any $\varepsilon > 0$ small enough,  $[\Pi_{k,\varepsilon}^j]^{\TT^n} \sto \Pi^j.$
Thus, if $g(\cdot, k) \in C^\infty(M)$ and $A_0 \subset \TT^n$ are as in Theorem \ref{str_conv_thm} then $$[\Pi_{k,\varepsilon}^j]^{\TT^n} [\Pi_{k,g, b_0}]^{\TT^n} [\Pi_{k, \varepsilon}^j]^{\TT^n} \sto \Pi^j \mcm_{\mathbbm 1_{A_0}} \Pi^j,$$
and if $\sigma(\Pi^j \mcm_{\mathbbm 1_{A_0}} \Pi^j) = [0,1]$, then $\sigma(\Pi_{k,\varepsilon}^j \Pi_{k, g, b_0} \Pi_{k,\varepsilon}^j)$ becomes dense in $[0,1]$.
Similarly, if $h_A$ and $h_{A,0}$ are as in Theorem \ref{str_conv_thm} then $$[\Pi_{k,\varepsilon}^j]^{\TT^n} [T_k(h_A)]^{\TT^n} [\Pi_{k, \varepsilon}^j]^{\TT^n} \sto \Pi^j \mcm_{h_{A, 0}} \Pi^j.$$
The operators $\Pi_{k,\varepsilon}^j T_k(h_A) \Pi_{k,\varepsilon}^j$ are examples of \textit{quantum Zeno Hamiltonians}\footnote{i.e., quantum observables which are compressed to a subspace of the full quantum space.} (\cite{fp}). The strong convergence relates their spectra with that of $\Pi^j \mcm_{h_{A,0}} \Pi^j$. The spectrum of operators of the form $\Pi^j \mcm_{h_0} \Pi^j$, with $h_0 \in L^\infty(\TT^n)$ real valued, is described in (\ref{hartman_wintner}) for $n=1$, and in Lemma \ref{direct_int_spec} for $n>1$. \end{corollary}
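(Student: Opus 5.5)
The plan is to check strong convergence on the exponentials $e_m$ with $m$ fixed, using the eigenvalue asymptotics of Proposition \ref{spec_properties}, and then to extend to all of $L^2(\TT^n)$ by uniform boundedness. The product statements then follow from Corollary \ref{str_conv_cor}, and the spectral statement from Corollary \ref{str_conv_spec_imp}.

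\emph{Step 1: eigenvalues of the rescaled operator.} Fix $m \in \ZZ^n$. For all large $k$ we have $m \in \mci_k$, since $p_k \to \infty$. Then $s_{k,m}$ is a joint eigensection, so
$$k\big(S_k^{c_j} - c_j\cdot a_{k,0}\Id_{\mch_k}\big)s_{k,m} = k\, c_j\cdot(a_{k,m}-a_{k,0})\, s_{k,m}.$$
By Proposition \ref{spec_properties}, $a_{k,m} - a_{k,0} = \frac{2\pi}{k}\nu(0)m + \mco(|m|^2k^{-2})$, with a constant independent of $m$. Since $c_j$ is the $j$-th row of $\nu(0)^{-1}$, we have $c_j\cdot \nu(0)m = m^j$. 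The eigenvalue therefore equals $2\pi m^j + \mco(|m|^2 k^{-1})$, which tends to $2\pi m^j$.

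\emph{Step 2: action of the spectral projection on fixed exponentials.} Take $0<\varepsilon<2\pi$. For all large $k$ (depending on $m$), the eigenvalue from Step 1 lies in $(-\varepsilon,\infty)$ if and only if $m^j \ge 0$. Since $\Pi^j_{k,\varepsilon}$ is a function of $S^{c_j}_k$, it acts on the eigensection $s_{k,m}$ by that scalar indicator. Hence
$$[\Pi^j_{k,\varepsilon}]^{\TT^n}e_m = U_k\Pi^j_{k,\varepsilon}s_{k,m} = \mathbbm 1_{m^j\ge 0}\, e_m = \Pi^j e_m$$
for $k$ large. Convergence thus holds, and is eventually exact, on every trigonometric polynomial.

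\emph{Step 3: density argument.} All operators involved have norm at most $1$, and trigonometric polynomials are dense in $L^2(\TT^n)$. A standard $\varepsilon/3$ argument then gives $[\Pi^j_{k,\varepsilon}]^{\TT^n} \sto \Pi^j$.

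\emph{Step 4: products.} By multiplicativity of $[\,\cdot\,]^{\TT^n}$, stability of strong convergence under products, and Theorem \ref{str_conv_thm} (parts 1 and 2), we get the two convergences $\sto \Pi^j\mcm_{\mathbbm 1_{A_0}}\Pi^j$ and $\sto \Pi^j \mcm_{h_{A,0}}\Pi^j$.

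\emph{Step 5: spectral density.} The operators are self-adjoint, so Corollary \ref{str_conv_spec_imp} applies. Every $\lambda\in[0,1]$ is a limit of points $\lambda_k \in \sigma\big([\Pi^j_{k,\varepsilon}\Pi_{k,g,b_0}\Pi^j_{k,\varepsilon}]^{\TT^n}\big)$. This set equals $\sigma(\Pi_{k,\varepsilon}^j\Pi_{k,g,b_0}\Pi_{k,\varepsilon}^j)\cup\{0\}$. The extra point $0$ is handled exactly as in the remark following Corollary \ref{str_conv_spec_imp}, using a trigonometric polynomial that nearly annihilates the limit operator. Compactness of $[0,1]$ upgrades pointwise approximation to $\max_{\lambda\in[0,1]}\dist(\lambda,\sigma_k)\to 0$, which is the claimed density.

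The only delicate point is Step 2. Convergence is needed only for fixed $m$, not uniformly on $\mci_k$, so the $|m|^2/k$ error is harmless. One must also choose $\varepsilon$ below the gap $2\pi$ between consecutive values of $2\pi m^j$, so that the indicator eventually stabilizes; this is the meaning of "$\varepsilon$ small enough".
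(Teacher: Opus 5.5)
Your proposal is correct and follows essentially the route the paper intends: the eigenvalue asymptotics of Proposition \ref{spec_properties} give $[\Pi^j_{k,\varepsilon}]^{\TT^n}e_m=\Pi^j e_m$ for each fixed $m$ and all large $k$ (the same argument as in Lemma \ref{n=1conv} and the proof of Proposition \ref{main_proj_prop}). The product and spectral statements then follow from Corollary \ref{str_conv_cor}, Theorem \ref{str_conv_thm} and Corollary \ref{str_conv_spec_imp}, and your reading of ``$\varepsilon$ small enough'' as $0<\varepsilon<2\pi$ is also right.
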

We conclude the discussion on pairs of spectral projections by noting the following general spectral behaviour (proven in Sect. \ref{proj_proof_sect}).
\begin{proposition}\label{two_proj_prop} Let $h_1(\cdot, k), h_2(\cdot, k) \in C^\infty(M)$ be sequences with expansions $h_j \sim \sum_{l \ge 0} h_{j,l} k^{-l},\ j = 1,2,$ in the $C^\infty$ topology. Assume that for $\alpha_j \in h_{j,0}(M)$ it holds that $\mu(\{h_{j,0} = \alpha_j\}) = 0$, and define $$\Pi_{k,j} = \mathbbm 1_{(\alpha_j, \infty)}(T_k(h_j(\cdot, k))),\ j = 1,2.$$
Let $N_k(I)$ denote the number of eigenvalues (counting multiplicities) of the operator $\Pi_{k,1} \Pi_{k,2} \Pi_{k,1} \in \End(\Pi_{k,1} \mch_k)$ in the interval $I\subset [0,1]$. Then for every $0 < t < \frac 1 2$,  $$\begin{aligned} &N_k([0,t]) = \Big{(}\frac k {2\pi} \Big{)}^n \mu(\{h_{1,0}> \alpha_1\} \cap \{h_{2,0} < \alpha_2\}) + o(k^n),\\ &N_k((t, 1-t)) = o(k^n),\\ &N_k([1-t, 1]) = \Big{(}\frac k {2\pi} \Big{)}^n \mu(\{h_{1,0} > \alpha_1\} \cap \{h_{2,0} > \alpha_2\}) + o(k^n).\end{aligned}$$   \end{proposition}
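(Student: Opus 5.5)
The plan is to compare $\Pi_{k,1}\Pi_{k,2}\Pi_{k,1}$ with the Toeplitz operators of the characteristic functions $\mathbbm 1_{B_j}$, $B_j = \{h_{j,0} > \alpha_j\}$, and then use trace asymptotics. Three inputs are needed, each to hold up to $o(k^n)$ in trace norm (or in a counting sense).

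\textbf{Step 1: the projections are close to Toeplitz operators.} We first show
\[
\Vert \Pi_{k,j} - T_k(\mathbbm 1_{B_j})\Vert_{\mathrm{HS}}^2 = o(k^n).
\]
Fix $\varepsilon>0$ and choose continuous $\kappa_\varepsilon:\RR\to[0,1]$ equal to $\mathbbm 1_{(\alpha_j,\infty)}$ outside $(\alpha_j-\varepsilon,\alpha_j+\varepsilon)$.

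By the functional calculus fact recalled in Corollary \ref{str_conv_cor}, $\kappa_\varepsilon(T_k(h_j)) = T_k(\kappa_\varepsilon\circ h_{j,0}) + o(1)$ in operator norm.

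The difference $\Pi_{k,j} - \kappa_\varepsilon(T_k(h_j))$ is bounded by $1$ and supported on the spectral subspace of $T_k(h_j)$ for $(\alpha_j-\varepsilon,\alpha_j+\varepsilon)$. By Szegő-type eigenvalue counting for Toeplitz operators, the dimension of that subspace is $(k/2\pi)^n\mu(\{|h_{j,0}-\alpha_j|<\varepsilon\}) + o(k^n)$. Since $\mu(\{h_{j,0}=\alpha_j\})=0$, this is small relative to $k^n$ as $\varepsilon\to0$.

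Similarly $\Vert T_k(\kappa_\varepsilon\circ h_{j,0}-\mathbbm 1_{B_j})\Vert_{\mathrm{tr}} \le (k/2\pi)^n\Vert \kappa_\varepsilon\circ h_{j,0}-\mathbbm 1_{B_j}\Vert_{L^1}(1+o(1))$, which is also small. Here we use $\Tr T_k(f) = (k/2\pi)^n\int f\,d\mu + o(k^n)$ and positivity: $|T_k(f)|\le T_k(|f|)$ in trace norm via splitting $f$ into its positive and negative parts.

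\textbf{Step 2: trace asymptotics.} For bounded measurable $f_1,\dots,f_r$ we need
\[
\Tr\bigl(T_k(f_1)\cdots T_k(f_r)\bigr) = \Bigl(\frac k{2\pi}\Bigr)^n\int f_1\cdots f_r\,d\mu + o(k^n).
\]
This holds for smooth $f_i$ by the product formula $T_k(f)T_k(g)=T_k(fg)+\mco(k^{-1})$. It extends to bounded measurable $f_i$ by $L^1$-approximation, since $\Vert T_k(f)\Vert_{\op}\le\Vert f\Vert_\infty$ and $\Vert T_k(f)\Vert_{\mathrm{tr}}\lesssim k^n\Vert f\Vert_{L^1}$.

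\textbf{Step 3: moments and counting.} Let $A_k=\Pi_{k,1}\Pi_{k,2}\Pi_{k,1}$. By Steps 1–2 and the Hölder bounds $\Vert XY\Vert_{\mathrm{tr}}\le\Vert X\Vert_{\mathrm{HS}}\Vert Y\Vert_{\mathrm{HS}}$, we replace each projection by the corresponding Toeplitz operator at a cost of $o(k^n)$. This gives, for every polynomial $P$ with $P(0)=0$,
\[
\Tr P(A_k) = \Bigl(\frac k{2\pi}\Bigr)^n\int P(\mathbbm 1_{B_1}\mathbbm 1_{B_2})\,d\mu + o(k^n).
\]
We then view $A_k$ on $\Pi_{k,1}\mch_k$, whose dimension is $(k/2\pi)^n\mu(B_1)+o(k^n)$ by Step 1 and trace asymptotics. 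It follows that the normalized eigenvalue counting measure of $A_k$ converges weakly to the limit measure
\[
\mu(B_1\cap B_2)\,\delta_1 + \mu(B_1\setminus B_2)\,\delta_0 .
\]
Because the limit is a sum of point masses at $0$ and $1$, it has no mass in $(t,1-t)$. Choosing continuous test functions that are $\ge \mathbbm 1_{[0,t]}$, and similar ones for the other intervals, gives the three asymptotic formulas. The $\mu$-null set $\{h_{2,0}=\alpha_2\}$ accounts for replacing $B_1\setminus B_2$ with $\{h_{1,0}>\alpha_1\}\cap\{h_{2,0}<\alpha_2\}$.

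\textbf{Main obstacle.} I expect Step 1 to be the hardest step, specifically the Szegő-type counting estimate for the spectral window of width $2\varepsilon$. The cleanest route is to bound the count from above by $\Tr\psi(T_k(h_j))$ for a continuous bump $\psi\ge\mathbbm 1_{(\alpha_j-\varepsilon,\alpha_j+\varepsilon)}$. Then $\Tr\psi(T_k(h_j)) = \Tr T_k(\psi\circ h_{j,0}) + o(k^n)$, using Weierstrass approximation of $\psi$ by polynomials together with Step 2 for smooth symbols. This reduces the count to $(k/2\pi)^n\int\psi\circ h_{j,0}\,d\mu + o(k^n)$, which is small as $\varepsilon\to 0$.
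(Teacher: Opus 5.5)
Your argument is correct, but it takes a different route from the paper's.

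\textbf{What the paper does.} It never compares $\Pi_{k,j}$ with $T_k(\mathbbm 1_{B_j})$ and never computes higher moments. It uses only $\Tr(R_k)$, $\Tr(R_k^2)$ and $\dim \Pi_{k,1}\mch_k$, where $R_k = \Pi_{k,1}\Pi_{k,2}\Pi_{k,1}$.
\begin{itemize}
\item The key identity is $\Vert[\Pi_{k,1},\Pi_{k,2}]\Vert_2^2 = 2\Tr(R_k - R_k^2)$.
\item The commutator is controlled by splitting $\Pi_{k,j} = A_{k,j} + B_{k,j}$. Here $A_{k,j} = \tau(T_k(h_j(\cdot,k)))$ is a smooth functional calculus, so $\Vert[A_{k,1},A_{k,2}]\Vert_{\op} = \mco(k^{-1})$ by the correspondence principle. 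The remainder satisfies $\Vert B_{k,j}\Vert_1 \le \Tr \chi(T_k(h_j(\cdot,k)))$, which is small by the same window count as your Step 1.
\item Elementary inequalities then finish the proof, e.g.\ $t(1-t)\,N_k((t,1-t)) \le \Tr(R_k - R_k^2)$. These are combined with $\Tr R_k \approx \Tr(A_{k,1}A_{k,2})$ and the dimension count.
\end{itemize}
In your language, the paper proves your moment formula only for $P(x)=x$ and $P(x)=x-x^2$. That already forces the limiting spectral measure onto $\{0,1\}$ and fixes its two weights.

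\textbf{What each route buys.} The paper only ever needs trace asymptotics for smooth symbols, so it avoids your $L^1$-density extension in Step 2 and the method-of-moments argument. Your route costs a Szeg\H{o}-type trace formula for products of $L^\infty$-symbol Toeplitz operators. In exchange it gives two reusable intermediate results:
\begin{itemize}
\item the estimate $\Vert \Pi_{k,j} - T_k(\mathbbm 1_{B_j})\Vert_2 = o(k^{n/2})$;
\item the limit of $\Tr P(A_k)$ for every polynomial $P$.
\end{itemize}
These would handle longer words in several such projections equally easily.

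\textbf{One small point.} In Step 1, the two-sided asymptotic for the dimension of the spectral window can fail when $\mu(\{|h_{j,0}-\alpha_j| = \varepsilon\})>0$. As you note at the end, you only need the upper bound via a continuous bump $\psi$, and that bound is always valid.
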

\begin{remark}[cf. Corollary \ref{ms_cor}, and \cite{shabtai3}, Corollary 1.2] In the above notation, note that if $\{h_{1,0} \ge \alpha_1\} \cap \{h_{2,0} \ge \alpha_2\} = \emptyset$, then $\Vert \Pi_{k,1} \Pi_{k,2} \Vert_{\op} = \mco(k^{-\infty})$. \end{remark}
\subsection{Contraction of projective quasi-representations of Lie algebras}
Contraction of Lie algebras, introduced in the works of I.E. Segal (\cite{segal}) and E. İnönü and E. Wigner (\cite{iw}), is a way of relating two (non-isomorphic) Lie algebras through a certain type of limiting process. When a Lie algebra $\mfg_\infty$ is obtained from a Lie algebra $\mfg$ via contraction, it could be expected that a corresponding relation might hold at the level of representation theory. In the context of Berezin-Toeplitz quantization, which typically does not produce genuine representations, it is possible to consider contractions of sequences of "representations up to manageable errors".
The results of Sect. \ref{main_res_sect} imply that the Berezin-Toeplitz quantization of $M$ gives rise to certain such contractions.
We refer the reader to \cite{aa, cahen} for earlier works on contractions of representations by means of Berezin-Toeplitz and geometric quantization.
We begin by recalling the definition of sequential contraction of Lie algebras.
\begin{definition}[\cite{bbms}]\label{contr_def} Let $\mfg = (V, [\cdot, \cdot])$ be a Lie algebra with underlying vector space $V$ and Lie bracket $[\cdot,\cdot]$. Assume that $t_k : V \to V$ is a sequence of invertible linear maps such that for every $X,Y \in V$, the limit $$[X,Y]_\infty := \lim_{k \to \infty} t_k^{-1}([t_k(X), t_k(Y)])$$
exists and defines a Lie bracket on $V$. Then the Lie algebra $\mfg_\infty = (V, [\cdot, \cdot]_\infty)$ is called the \textit{(sequential) contraction} of $\mfg$ by $t_k$. \end{definition}

Next, inspired by the definitions specified in \cite{bbms, bs}, we formulate a notion of contraction of a sequence of linear maps $\xi_k : \mathfrak g \to \End(W)$ to a Lie algebra representation $\xi_\infty : \mathfrak g_\infty \to \End(W)$, where $W$ is an inner product space. We do not require the maps $\xi_k$ to be Lie algebra representations.
\begin{definition}\label{contr_maps_def} Assume that $\mfg_\infty$ is the sequential contraction of $\mfg$ via the invertible maps $t_k$. Let $\xi_\infty : \mfg_\infty \to \End(W)$ be a representation of $\mfg_\infty$ on an inner product space $W$, and let $\xi_k : \mfg \to \End(W)$ be a sequence of linear maps. We say that $\xi_k$ contracts to $\xi_\infty$ if the following holds.
\begin{enumerate}
\item{For every $X \in V$ and $w \in W$, $$\lim_{k \to \infty} \big{\Vert} \big{(}(\xi_k \circ t_k)(X)  - \xi_\infty(X)\big{)}w \big{ \Vert} = 0.$$}
\item{For every $X, Y \in V$ and $w \in W$, $$\lim_{k \to \infty} \big{\Vert}\big{(} \big{[}(\xi_k \circ t_k)(X), (\xi_k \circ t_k)(Y)\big{]} - \xi_\infty([X,Y]_\infty)\big{)}w \big{\Vert} = 0.$$}
\end{enumerate}\end{definition}
Although the definition above allows for general sequences of linear maps $\xi_k : \mfg \to \End(W)$, in practice we will only encounter sequences consisting of linear maps which are Lie algebra representations up to "manageable" errors.
\begin{definition}\label{spqr} Let $\mfg$ be a real Lie algebra, and $W$ a complex inner product space. A sequence of linear maps $$\xi_k : \mfg \to \End(W)$$
is called a semiclassical projective quasi-representation of $\mfg$ on $W$ if there exists a Lie algebra $2$-cocycle\footnote{i.e., $c$ is bilinear, antisymmetric, and $c([X,Y],Z) + c([Z,X],Y)+c([Y,Z],X) = 0$ for any $X,Y,Z \in \mfg$.} $c: \mfg \times \mfg \to \RR$ such that for every $X,Y \in \mfg$ and $w \in W$, $$\Vert([\xi_k(X), \xi_k(Y)] - \xi_k([X,Y]) - ik c(X,Y) \Id_W)w \Vert = \mco(1).$$ \end{definition}
The identification of the spaces $\mch_k$ with subspaces $\mch_k^{\TT^n} \subset L^2(\TT^n)$ via linear isometries $U_k : \mch_k \to \mch_k^{\TT^n}$, as in Theorem \ref{str_conv_thm}, gives rise to contractions of certain semiclassical (projective) quasi-representations of certain subalgebras of $(C^\infty(M), \{\cdot, \cdot\})$. Throughout, we assume that $U_k$ maps joint eigensections of $T_k^1,...,T_k^n$ to exponentials\footnote{We complete the set $\mcb_k$ to an orthonormal basis $\hat \mcb_k$ of $\mch_k$ consisting of joint eigensections of $T_k^1, ..., T_k^n$, and define $U_k$ so that it maps the elements of $\hat \mcb_k \setminus \mcb_k$ to unused exponentials.}. Let us first mention the "trivial" contraction.
\begin{example} The invertible linear maps $$t_k = \frac 1 k \Id : C^\infty(M) \to C^\infty(M)$$ produce a contraction of $(C^\infty(M), \{\cdot, \cdot\})$ to the commutative Lie algebra with underlying vector space $C^\infty(M)$. Define $$Q_k^{\TT^n} : C^\infty(M) \to \End(L^2(\TT^n)),\ Q_k^{\TT^n}(g) = ik [T_k(g)]^{\TT^n}.$$
The norm correspondence (\ref{norm_corr}) and correspondence principle (\ref{corr_princ}) imply that $Q_k^{\TT^n}$ is a (non-projective) semiclassical quasi-representation.

Then by Theorem \ref{str_conv_thm} (identifying $\Lambda_{a_0} \simeq \TT^n$), $$\big{(}Q_k^{\TT^n} \circ t_k \big{)}(g) = i[T_k(g)]^{\TT^n}$$
converges strongly to the multiplication operator $\mcm_{ig|_{\Lambda_{a_0}}}$. Thus, in light of the correspondence principle (\ref{corr_princ}), the sequence $Q_k^{\TT^n}$ contracts to the representation $g \mapsto \mcm_{i g|_{\Lambda_{a_0}}}$ of the commutative Lie algebra $C^\infty(M)$. \end{example}
The results of Sect. \ref{main_res_sect} may be used to construct non-trivial contractions of semiclassical projective quasi-representations of certain subalgebras of $C^\infty(M)$, as follows. Let $(C^\infty_{\inv}(M), \{\cdot, \cdot\})$ be the subalgebra of $(C^\infty(M), \{\cdot,\cdot\})$ consisting of functions which commute with $f^1_{0}, ..., f^n_{0}$, that is, \begin{equation}\label{inv_subalg} C_{\inv}^\infty(M) = \{g \in C^\infty(M) \ | \ \{g, f^j_{0}\} = 0,\ j = 1,...,n\}.\end{equation}
Assume that \begin{equation}\label{decomp_subalg}\mfg = (V_{0} \oplus V_{\inv}, \{\cdot, \cdot\})\end{equation} is a subalgebra of $(C^\infty(M), \{\cdot,\cdot\})$, where $V_{0} \oplus V_{\inv}$ is a direct sum of vector spaces with $V_{\inv} \subset C^\infty_{\inv}(M)$ a Lie subalgebra, and such that for all $g \in V_{0}$, $h \in V_{\inv}$, it holds that \begin{equation}\label{comm_rel}\{g, h\} \in V_{0}.\end{equation}
For $0 \ne R \in \RR$, define linear maps $t_{k,R} : \mfg \to \mfg$ by \begin{equation}\label{contr_maps}t_{k,R} \big{|}_{V_0} = \frac R k \Id,\ t_{k,R} \big{|}_{V_{\inv}} = \Id.\end{equation}
Then $\mfg$ contracts via $t_{k,R}$ to $\mfg_\infty = (V_0 \oplus V_{\inv}, \{\cdot, \cdot\}_\infty)$, with $\{\cdot, \cdot\}_\infty$ specified by $$\{\cdot, \cdot\}_\infty \big{|}_{V_0 \times V_0} =0,\  \{\cdot, \cdot\}_\infty \big{|}_{(V_0 \oplus V_{\inv})\times V_{\inv}} = \{\cdot, \cdot\}.$$
In other words, $$\mfg_\infty = (V_{\inv}, \{\cdot, \cdot\}) \ltimes (V_0, 0),$$
where $(V_0, 0)$ is the commutative Lie algebra with underlying vector space $V_0$.

\begin{theorem}\label{contr_thm} Let $\mfg_\infty$ be the contraction of $\mfg$ (\ref{decomp_subalg}) via the maps $t_{k,R}$ (\ref{contr_maps}). Let $X_g$ denote the Hamiltonian vector field of $g \in C^\infty(M)$. For $g \in V_{\inv}$ write $$X_g|_{\Lambda_{a_0}} = \zeta_1(g) X_{f_0^1}|_{\Lambda_{a_0}} + ... + \zeta_n(g) X_{f_0^n}|_{\Lambda_{a_0}}.$$
Define a sequence $\xi_k : \mfg \to \End(C^\infty(\TT^n))$ by \begin{equation}\label{xi_k_def}\xi_k(g) =  \left\{\begin{array}{ll} i k[T_k(g)]^{\TT^n} & \text{if } g \in V_0,\\ ik \sum_{j=1}^n  \zeta_j(g) \left([T^j_k]^{\TT^n} - a^j_{k,0} \Id \right) & \text{if } g \in V_{\inv}.\end{array}\right.\end{equation}
Then $\xi_k$ is a semiclassical projective quasi-representation of $\mfg$ which contracts via $t_{k,R}$ to the representation $\xi^R_\infty : \mfg_\infty \to \End(C^\infty(\TT^n))$ specified by $$\xi^R_{\infty}(g) = \left\{\begin{array}{ll} iR \mcm_{g|_{\Lambda_{a_0}}} & \text{if } g \in V_0,\\  X_g\big{|}_{\Lambda_{a_0}} & \text{if } g \in V_{\inv}.\end{array}\right..$$
\end{theorem}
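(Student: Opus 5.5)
Three things have to be checked: (i) that $\xi_k$ is a semiclassical projective quasi-representation; (ii) the first condition of Definition \ref{contr_maps_def}, convergence of $(\xi_k\circ t_{k,R})(X)w$; (iii) the second condition, convergence of commutators. By linearity it suffices to take $X,Y$ each in $V_0$ or in $V_{\inv}$, and $w\in C^\infty(\TT^n)$. Since $\xi_k$ is applied to smooth $w$, I would further reduce to trigonometric polynomials. This reduction needs uniform bounds, and I would obtain them from the Fourier localization established in the proof of Theorem \ref{str_conv_thm}: the matrix coefficients $\langle T_k(g) s_{k,l}, s_{k,m}\rangle$ decay rapidly in $|m-l|$, uniformly in $k$. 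Throughout, the identity $[Q_1Q_2]^{\TT^n}=[Q_1]^{\TT^n}[Q_2]^{\TT^n}$ lets all commutators be computed in $\End(\mch_k)$. The key fact about $g\in V_{\inv}$ is this: $\{g,f_0^j\}=0$ forces $X_g$ to be tangent to the Liouville tori near $\Lambda_{a_0}$. There $X_g=\sum_j \zeta_j X_{f_0^j}$ with $\zeta_j$ functions of $f_0$ only, so $g=G(f_0)$ locally near $\Lambda_{a_0}$ (up to a constant) with $\partial_j G(a_0)=\zeta_j(g)$.

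\textbf{Condition (ii).} For $g\in V_0$ we have $(\xi_k\circ t_{k,R})(g)=iR[T_k(g)]^{\TT^n}$, which converges strongly to $iR\mcm_{g|_{\Lambda_{a_0}}}$ by Theorem \ref{str_conv_thm}. For $g\in V_{\inv}$ no rescaling occurs. On $e_m$ with $m$ fixed and $k$ large, Proposition \ref{spec_properties} gives
\[
k\big([T^j_k]^{\TT^n}-a^j_{k,0}\big)e_m=2\pi (\nu(0)m)_j e_m+\mco(|m|^2/k).
\]
Hence $\xi_k(g)e_m\to 2\pi i\,(\zeta(g)\cdot\nu(0)m)\,e_m$, which is exactly $X_g|_{\Lambda_{a_0}}e_m=\sum_j\zeta_j(g)\,\nu_j(0)\cdot\partial_\theta e_m$. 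For general smooth $w$, the error terms are bounded by $\sum_m|\hat w(m)|\,|m|^2/k$ over $|m|\le p_k$. Outside $\mci_k$ the operator $[T^j_k]^{\TT^n}$ acts by $\mch_k^\perp$-independent data, but those modes carry $\sum_{|m|>p_k}|\hat w(m)|$, which decays faster than any power of $p_k$. Because $k\,\Vert T^j_k-a^j_{k,0}\Vert=\mco(k)$, I would choose to compare with the rapid decay of $\hat w$. This works only if $p_k$ grows such that $k\,p_k^{-N}\to0$, and $p_k=k^{1/3}$ is admissible.

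\textbf{Conditions (i) and (iii).} Split into three cases.
\begin{itemize}
\item \emph{$V_0\times V_0$.} The correspondence principle (\ref{corr_princ}) gives $[ikT_k(g),ikT_k(h)]=ikT_k(\{g,h\})+\mco(1)$. So quasi-representation holds with $c=0$. After rescaling by $R^2/k^2$ the commutator is $\mco(k^{-1})$, matching $\{\cdot,\cdot\}_\infty=0$.
\item \emph{$V_{\inv}\times V_{\inv}$.} The operators $\xi_k(g)$ commute exactly, since they are built from the $T^j_k$. In the limit, $X_g|_{\Lambda_{a_0}}$ are constant-coefficient fields on the torus and commute as well. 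The target bracket $\{g,h\}$ lies in $V_{\inv}$ and has $X_{\{g,h\}}=[X_g,X_h]$ up to sign, which vanishes on $\Lambda_{a_0}$. So $\zeta(\{g,h\})=0$, and the required identity holds with a cocycle $c(g,h)$ absorbing the constant $\{g,h\}|_{\Lambda_{a_0}}$ if needed.
\item \emph{$V_0\times V_{\inv}$.} This is the main obstacle. I would write $\xi_k(h)=ik\,(T_k(H_k)-\text{const})$ with $H_k=\sum_j\zeta_j(h)(f^j(\cdot,k)-a^j_{k,0})$, a symbol in the Toeplitz algebra. Then
\[
[ikT_k(g),\xi_k(h)]=-k^2[T_k(g),T_k(H_k)]=ik\,T_k(\{g,\zeta\cdot f_0\})+\mco(1).
\]
But $\{g,h\}\neq\{g,\zeta(h)\cdot f_0\}$ away from $\Lambda_{a_0}$, so comparison with $\xi_k(\{g,h\})=ikT_k(\{g,h\})$ fails globally. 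I would therefore verify (i) only in the weak sense on $W=C^\infty(\TT^n)$, using localization near $\Lambda_{a_0}$, where $h-G(a_0)-\zeta(h)\cdot(f_0-a_0)=\mco(|f_0-a_0|^2)$. States in $\Sp\mcb_k$ with bounded $m$ have $|f_0-a_0|\sim k^{-1/2}$ microlocally, by Proposition \ref{spec_properties} and the concentration of eigensections. The discrepancy $ik\,T_k(\{g,\mco(|f_0-a_0|^2)\})$ then acts on them with norm $\mco(k\cdot k^{-1})=\mco(1)$.
\end{itemize}
For (iii) in the mixed case, the rescaled commutator $\frac Rk[\xi_k(g),\xi_k(h)]$ equals $iR[T_k(\{g,\zeta\cdot f_0\})]^{\TT^n}+\mco(k^{-1/2})$ on such $w$. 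By Theorem \ref{str_conv_thm} this tends to $iR\mcm_{\{g,h\}|_{\Lambda_{a_0}}}$, because $\{g,h\}=X_h g=\zeta\cdot X_{f_0}g$ on $\Lambda_{a_0}$. This agrees with $[\mcm_{iRg},X_h]$ up to the sign convention, which is the claimed representation bracket. The delicate points are making the $k^{-1/2}$ localization estimate uniform on smooth $w$ and fixing the sign conventions consistently.
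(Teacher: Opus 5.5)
Your verification of the contraction conditions matches the paper's. For $g\in V_{\inv}$, the computation on $e_m$ with truncation at $|m|\le p_k$ is Lemma \ref{inv_bdd}. In the mixed commutator case, the correspondence principle already gives $\frac Rk[\xi_k(g),\xi_k(h)]=iR[T_k(\{g,\zeta(h)\cdot f_0\})]^{\TT^n}+\mco(k^{-1})$ in operator norm, so Theorem \ref{str_conv_thm} finishes without any localization.

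The gaps are in the quasi-representation property, which is the substantive part. First, for $g,h\in V_0$ you conclude $c=0$, which tacitly assumes $\xi_k(\{g,h\})=ik[T_k(\{g,h\})]^{\TT^n}$. But $\{g,h\}$ need not lie in $V_0$. $\xi_k$ sends its component $\phi=P_{\inv}\{g,h\}$ to $ik\sum_j\zeta_j(\phi)([T^j_k]^{\TT^n}-a^j_{k,0}\Id)$, which is $\mco(1)$ on smooth $w$, whereas $ik[T_k(\phi)]^{\TT^n}w=ik\,\phi|_{\Lambda_{a_0}}w+\mco(1)$. The residual $ik\,\phi|_{\Lambda_{a_0}}w$ is therefore not $\mco(1)$. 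The cocycle must be $c(g,h)=(P_{\inv}\{g,h\})|_{\Lambda_{a_0}}$; for instance $c(u_1,u_2)=a_0\neq 0$ in the $\mathfrak{su}(2)$ example. This is where the projectivity actually lives. By contrast, the constant you offer to absorb in the $V_{\inv}\times V_{\inv}$ case is zero, since two functions of $f_0$ Poisson-commute near $\Lambda_{a_0}$.

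Second, your estimate in the mixed case does not close. Near $\Lambda_{a_0}$ write $\rho=h-h|_{\Lambda_{a_0}}-\zeta(h)\cdot(f_0-a_0)=R(f_0)$, with $R$ vanishing to second order at $a_0$. The discrepancy symbol is then $\{g,\rho\}=\sum_j\partial_jR(f_0)\{g,f_0^j\}$, which vanishes only to \emph{first} order on $\Lambda_{a_0}$, because the bracket differentiates $\rho$. Your count $\mco(k\cdot k^{-1})$ treats it as still quadratic. With the $k^{-1/2}$ transverse-width heuristic you only get $\mco(k^{1/2})$; indeed multiplication by a first-order vanishing function has size about $k^{-1/2}$ on $s_{k,m}$.

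The missing idea is Lemma \ref{inv_zero_decay}. It states that for $\phi\in C^\infty_{\inv}(M)$ vanishing on $\Lambda_{a_0}$, $\Vert T_k(\phi)s_{k,m}\Vert\le C(1+|m|)/k$ uniformly in $m\in\mci_k$. It also yields the estimate $ik[T_k(\phi-\phi|_{\Lambda_{a_0}})]^{\TT^n}w=\mco(1)$ needed in the first case. The bound comes from the eigen-structure, not from concentration:
\begin{itemize}
\item $s_{k,m}$ is a Lagrangian section on $\Lambda_{a_{k,m}}$;
\item $\phi$ is constant on nearby tori, so $T_k(\phi)$ acts on it as $\phi(a_{k,m})+\mco(k^{-1})$;
\item $a_{k,m}-a_0=\mco((1+|m|)/k)$ by Proposition \ref{spec_properties}.
\end{itemize}
Combined with a microlocal cutoff near $\Lambda_{a_0}$ and quasi-multiplicativity (Corollary \ref{prod_inv_zero_decay}), this gives $ik[T_k(\{g,\rho\})]^{\TT^n}w=\mco(1)$.

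Finally, it is this uniformity over $m\in\mci_k$, with $p_k=\lfloor k^{1/4}\rfloor$ and $\Vert w-w_k\Vert=\mco(k^{-\infty})$, that handles general smooth $w$. The uniform rapid decay of $\langle T_k(g)s_{k,l},s_{k,m}\rangle$ in $|m-l|$ that you attribute to the proof of Theorem \ref{str_conv_thm} is not proved there; only limits for fixed $(l,m)$ are. Density alone cannot control operators of norm $\mco(k)$.
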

\begin{remark} In Theorem \ref{contr_thm}, the space $C^\infty(\TT^n)$ encodes sections of $\mch_k$ which are "highly localized" about $\Lambda_{a_0}$, in the following sense. Fix $\mci_k$ as in Proposition \ref{spec_properties} with $p_k = \lfloor k^{\frac 1 4} \rfloor$. Any $u\in C^\infty(\TT^n)$ satisfies $u = u_k + \mco(k^{-\infty})$, where $u_k = \sum_{m \in \mci_k} \widehat{u}(m) e_m$ can be identified with $s_k = \sum_{m \in \mci_k} \widehat{u}(m) s_{k,m} \in \mch_k$. Here, $\widehat{u}(m) = \mco(|m|^{-\infty})$ and $s_{k,m}$ are as in Theorem \ref{str_conv_thm}.\end{remark}
\begin{remark} The semiclassical projective quasi-representation (\ref{xi_k_def}) is exact. Namely, if $P_{\inv} : V_0 \oplus V_{\inv} \to V_{\inv}$ is the projection, then the $2$-cocycle associated with $\xi_k$ is defined by $c(g, h) = l(\{g, h\})$, where $l \in \mfg^*$ is the linear functional $$V_0 \oplus V_{\inv} \ni g \mapsto l(g)=  P_{\inv}(g)\big{|}_{\Lambda_{a_0}},$$
and the latter is viewed as a constant. Hence, the sequence of "shifted" maps $ \xi'_k(g) = \xi_k(g) + ikl(g) \Id$ is a non-projective semiclassical quasi-representation. However, for any $g \in \mfg$ such that $l(g) \ne 0$, if $u \ne 0$ then $ \xi'_k \circ t_{k,R}(g) u$ diverges. Thus, in general, $\xi'_k$ cannot be expected to contract to $\xi_\infty^R$. \end{remark}
In what follows, we specify a number of examples for Lie algebras $\mfg$ of the form (\ref{decomp_subalg}), which therefore admit contractions as specified in Theorem \ref{contr_thm}. We denote the normalized Lebesgue measure on $\TT^n$ by $\mu_n$.
\begin{example}\label{comp_supp_contr} A natural example to consider is the subalgebra of compactly supported functions $$\mfg = (C^\infty_c(U), \{\cdot, \cdot\}),$$
where $U \simeq D \times \TT^n$ is the neighborhood of $\Lambda_{a_0}$ specified in (\ref{al_sympl}). Then $$\begin{aligned} &C^\infty_c(U) = V_0 \oplus V_{\inv},\\ &V_{\inv} = C_c^\infty(U) \cap C_{\inv}^\infty(M),\\ & V_0 = \Big{\{}g \in C^\infty_c(U) \ \Big{|} \ \int_{\TT^n} g(y, \theta) d\mu_n(\theta) = 0\ \text{for all }y \in D\Big{\}}.\end{aligned}$$
Note that $\ker \xi_\infty^R$ is infinite dimensional\footnote{If $g \in \mfg$ vanishes in a neighborhood of $\Lambda_{a_0}$, or if $g \in V_{\inv}$, $X_g\big{|}_{\Lambda_{a_0}} = 0$, then $g\in \ker \xi_\infty^R$.}. \end{example}
In some settings, the entire space $C^\infty(M)$ admits a decomposition as in Example \ref{comp_supp_contr}, and then a similar contraction is possible.
\begin{example} Assume that $(M, \omega, f_0)$ is a toric integrable system, i.e., the Hamiltonian flows of the components $f_0^1$,..., $f_0^n$ of $f_0$ define an effective $\TT^n$ action on $M$. Then $$C^\infty(M) =V_0 \oplus V_{\inv} = C^\infty_\perp(M) \oplus C^\infty_{\inv}(M),$$
where $C^\infty_{\inv}(M)$ is as specified in (\ref{inv_subalg}), and $$C^\infty_{\perp}(M) = \Big{\{}g \in C^\infty(M) \ \Big{|} \ \int_{\TT^n} g(\theta \cdot x) d\mu_{n}(\theta) = 0 \text{ for all }x \in M\Big{\}}.$$ \end{example}
In general, $C^\infty(M)$ does not necessarily admit a decomposition as above. However, in our context, this is to some extent only a technical difficulty, since the contraction in Theorem \ref{contr_thm} only depends on semi-local data about $\Lambda_{a_0}$ (where such a decomposition is possible). %

Next, we note that a natural setting to consider Theorem \ref{contr_thm} is when $M$ carries a Hamiltonian group action (see Sect. \ref{S2_sect}).
\begin{example}\label{cpct_grp_ex} Assume that $(M, \omega)$ admits an effective Hamiltonian $G$-action, where $G$ is a compact Lie group. Identify $\mfg = \text{Lie}(G)$ with a subalgebra of $C^\infty(M)$, and note that as a vector space, $\mfg = \mft \oplus \mfm,$
where $\mft \subset \mfg$ is a Cartan subalgebra and $[\mft, \mfm] \subset \mfm$. Assume also that $\mft \subset C^\infty_{\inv}(M)$. Then we obtain a contraction of semiclassical projective quasi-representations of $\mfg$ to a genuine representation of $\mfg_\infty = \mft \ltimes \mfm_0$, where $\mfm_0$ is the abelian Lie algebra with the same underlying vector space as $\mfm$. \end{example}
Notably, at least in the setting of Example \ref{cpct_grp_ex}, the contraction specified in Theorem \ref{contr_thm} may also be observed at the level of group representation theory, in the following sense (see Sect. \ref{S2_sect} for a detailed example).
\begin{corollary}\label{grp_matr_coef} In the setting of Theorem \ref{contr_thm}, for every $\beta \in \RR$ and $g \in \mfg$, it holds that\footnote{The sequence $-i (\xi_k \circ t_{k,R})(g)$ converges to $-i \xi_\infty^R(g)$ in the sense of strong resolvent convergence (by Theorem \ref{contr_thm} combined with \cite{oliveira}, Proposition 10.1.18). This implies that $e^{\beta (\xi_k \circ t_{k,R})(g)} \sto e^{\beta \xi_\infty^R(g)}$, see \cite{oliveira}, Proposition 10.1.8.} $$e^{\beta (\xi_k \circ t_{k,R})(g)} \sto e^{\beta \xi_\infty^R(g)}.$$ Consider now the setting of Example \ref{cpct_grp_ex} and assume that $\mfg_\infty \simeq Lie(G_\infty)$ for some finite dimensional Lie group $G_\infty$, and that $\xi_\infty^R = d\Xi_\infty^R$, where $$\Xi^R_\infty : G_\infty \to \mcu(L^2(\TT^n))$$ is a unitary representation. If $g \in \mfm_0$, then $$\Xi_\infty^R(e^{\beta g}) =e^{\beta \xi_\infty^R(g)}= \mcm_{e^{i \beta R g_0}},\ g_0 = g\big{|}_{\Lambda_{a_0}}.$$ Finally, assume that $\Xi_k : G \to \mcu(\mch_k^{\TT^n})$ is a genuine representation such that\footnote{Here we view $\xi_k$ as a map $\xi_k : \mfg \to \End(\mch_k^{\TT^n})$.} for every $g \in \mfm$, $$\Vert \xi_k(g)- d\Xi_k(g)\Vert_{\op} = \mco(1).$$ Then, for $g \in \mfm$, we obtain (using \cite{charles}, Proposition 12) $$\Xi_k(e^{\beta t_{k,R} (g)}) = [T_k(e^{i \beta R g})]^{\TT^n} + \mco(k^{-1}).$$ In terms of matrix coefficients (see Remark \ref{matr_coef_rk}), for any $l,m \in \ZZ^n$ fixed, $$\langle \Xi_k(e^{\beta t_{k,R}(g)}) e_l, e_m \rangle = \langle e^{iR \beta g} s_{k,l}, s_{k,m} \rangle + \mco(k^{-1}) =\widehat{e^{i \beta R g_0}}(m-l) +\mco(k^{-1}).$$This allows to reproduce certain known estimates involving special functions (e.g., see Sect. \ref{S2_sect}).\end{corollary}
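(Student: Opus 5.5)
The plan has three parts, matching the three assertions: strong convergence of the exponentials, the formula for $\Xi^R_\infty$ on $\mfm_0$, and the approximation of $\Xi_k$ together with its matrix coefficients.

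\emph{Strong convergence of $e^{\beta(\xi_k\circ t_{k,R})(g)}$.} I would set $A_k = -i(\xi_k\circ t_{k,R})(g)$ and $A = -i\xi^R_\infty(g)$, and show $A_k \to A$ in the strong resolvent sense; the statement then follows from \cite{oliveira}, Proposition 10.1.8, since $e^{\beta(\xi_k\circ t_{k,R})(g)} = e^{i\beta A_k}$ for real $\beta$.

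First, self-adjointness. Write $g = g_0' + g_{\inv}$ with $g_0' \in V_0$ and $g_{\inv} \in V_{\inv}$.
\begin{itemize}
\item Since $g_0'$ is real, $-i(\xi_k\circ t_{k,R})(g_0') = R[T_k(g_0')]^{\TT^n}$ is bounded and self-adjoint.
\item The term $-i\xi_k(g_{\inv}) = k\sum_j \zeta_j(g_{\inv})\big([T^j_k]^{\TT^n} - a^j_{k,0}\Id\big)$ is self-adjoint.
\end{itemize}
So $A_k$ is bounded and self-adjoint.

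On the limit side, $A = R\mcm_{g_0'|_{\Lambda_{a_0}}} - iX_{g_{\inv}}|_{\Lambda_{a_0}}$. Here $X_{g_{\inv}}|_{\Lambda_{a_0}} = \sum_j \zeta_j(g_{\inv})\,\nu_j(0)\cdot\partial_\theta$ is a constant-coefficient vector field on $\TT^n$. Hence $-iX_{g_{\inv}}|_{\Lambda_{a_0}}$ is essentially self-adjoint on $C^\infty(\TT^n)$; it is diagonal in the basis $e_m$. Adding the bounded self-adjoint multiplication operator preserves essential self-adjointness (Kato--Rellich), so $C^\infty(\TT^n)$ is a core for $A$.

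Theorem \ref{contr_thm}, condition 1 of Definition \ref{contr_maps_def}, gives $A_k u \to Au$ for every $u \in C^\infty(\TT^n)$. Convergence on a common core of the limit implies strong resolvent convergence by \cite{oliveira}, Proposition 10.1.18.

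I expect this step to be the main point needing care, because $g$ mixes a bounded and an unbounded part. If Theorem \ref{contr_thm} were only available separately on $V_0$ and $V_{\inv}$, I would use linearity of $\xi_k$ to recover pointwise convergence on $C^\infty(\TT^n)$ for the sum.

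\emph{The formula on $\mfm_0$.} For $g \in \mfm_0$ we have $\Xi^R_\infty(e^{\beta g}) = e^{\beta\, d\Xi^R_\infty(g)} = e^{\beta\xi^R_\infty(g)}$. Since $\xi^R_\infty(g) = iR\mcm_{g_0}$ is a bounded multiplication operator, the functional calculus gives $e^{\beta\xi^R_\infty(g)} = \mcm_{e^{i\beta Rg_0}}$.

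\emph{The approximation of $\Xi_k$.} For $g \in \mfm$ we have $\Xi_k(e^{\beta t_{k,R}(g)}) = \exp\big(\beta\tfrac Rk\, d\Xi_k(g)\big)$. By hypothesis,
\[
\tfrac Rk\, d\Xi_k(g) = iR[T_k(g)]^{\TT^n} + \mco(k^{-1}),
\]
and both operators are skew-adjoint. Duhamel's formula $e^{B} - e^{C} = \int_0^1 e^{sB}(B-C)e^{(1-s)C}\,ds$, with unitary factors, then gives
\[
\Xi_k(e^{\beta t_{k,R}(g)}) = e^{i\beta R[T_k(g)]^{\TT^n}} + \mco(|\beta|k^{-1}).
\]
On $\mch_k^{\TT^n}$, conjugation by the unitary $U_k$ commutes with the exponential, so
\[
e^{i\beta R[T_k(g)]^{\TT^n}} = U_k\, e^{i\beta RT_k(g)}\, U_k^{-1}.
\]
By \cite{charles}, Proposition 12, $e^{i\beta RT_k(g)} = T_k(e^{i\beta Rg}) + \mco(k^{-1})$, which yields the displayed identity for $\Xi_k$.

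\emph{Matrix coefficients.} For fixed $l,m$ and $k$ large, $e_l = U_k s_{k,l}$ and $e_m = U_k s_{k,m}$, so the matrix coefficient equals $\langle T_k(e^{i\beta Rg}) s_{k,l}, s_{k,m}\rangle + \mco(k^{-1})$. Remark \ref{matr_coef_rk}, applied to the smooth complex function $e^{i\beta Rg}$ (by $\CC$-linearity in its real and imaginary parts), gives $\widehat{e^{i\beta Rg_0}}(m-l) + \mco(k^{-1})$.
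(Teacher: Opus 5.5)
Your proposal is correct and follows the same route as the paper. The paper gives the argument only through footnotes and inline citations: strong resolvent convergence from Theorem \ref{contr_thm} and \cite{oliveira}, Propositions 10.1.18 and 10.1.8, then \cite{charles}, Proposition 12, and finally Remark \ref{matr_coef_rk}. Your checks of self-adjointness and of $C^\infty(\TT^n)$ being a core, together with the Duhamel estimate turning the $\mco(1)$ generator error into $\mco(k^{-1})$ after the $\frac R k$ scaling, are exactly the steps the paper leaves implicit.
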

\subsection{Example: The contraction \texorpdfstring{$\mathfrak{su}(2) \to \mathfrak{e}(2)$}{su(2) to e(2)}}\label{S2_sect}
In the present subsection, we demonstrate Theorem \ref{contr_thm} (or more specifically, Example \ref{cpct_grp_ex}) and Corollary \ref{grp_matr_coef} in the setting of Berezin--Toeplitz quantization of the unit sphere $$M = S^2 = \{x_1^2 + x_2^2 +x_3^2 = 1\} \subset \RR^3.$$
We identify $S^2$ with $\CC P^1$, equipped with the Fubini-Study form $\omega_{FS}$ of total area $2\pi$, and consider $$\mfg = \mathfrak{su}(2)\simeq \Sp\{u_1, u_2, u_3\},\ u_l = -\frac {x_l} 2,\ l = 1,2,3,$$
with the decomposition $$\mathfrak{su}(2) = \mft \oplus \mfm,\ \mft = \Sp\{u_3\},\ \mfm = \Sp\{u_1, u_2\}.$$
The functions $u_1, u_2, u_3$ satisfy the commutation relations $$\{u_1, u_2\} = u_3,\ \{u_2, u_3\} = u_1,\ \{u_3, u_1\} = u_2.$$ For any $R \ne 0$, the map $t_{k,R}$ specified by $$t_{k,R}(u_3) = u_3,\ t_{k,R}(u_1) = \frac R k u_1,\ t_{k,R}(u_2) = \frac R k u_2,$$ produces a Lie algebra contraction $\mathfrak{su}(2)\xrightarrow{t_{k,R}} \mathfrak{e}(2)$, where the latter is the Lie algebra of the special Euclidean group $SE(2)$.

Let $L = \mco(1)$ denote the dual of the tautological line bundle on $\CC P^1$. Then $L \to M$ is a prequantum line bundle, and the quantum spaces $\mch_k = H^0(M, L^{\otimes k})$ can be identified with $\CC_k[z_0, z_1]$, the space of complex bivariate homogeneous polynomials of degree $k$. In particular, $\dim \mch_k = k+1$.
The operators $$J_{k,l} = -(k+2) T_k(u_l),\ l = 1,2,3$$ are so-called spin operators. Namely, they satisfy the commutation relations (\cite{lefloch}, Exercise 5.2.5, Remark 5.2.6) $$[J_{k,1}, J_{k,2}] = i J_{k,3},\ [J_{k,2}, J_{k,3}] = iJ_{k,1},\ [J_{k,3}, J_{k,1}] = iJ_{k,2},$$
and give rise to an irreducible representation of the Lie algebra $\mathfrak{su}(2)$.

Write $k = 2j$, $j \in \frac 1 2 \NN$. For $m \in \{-j, -j+1, ..., j\}$ denote $$e_{k,m} = \sqrt{\frac{k+1}{2\pi} \binom {k} {j+m}} z_0^{j+m} z_1^{j-m}.$$
Then $\{e_{k,m}\ | \ m = -j, -j+1, ..., j\}$ is an orthonormal basis of $\mch_k$ such that $$J_{k,3} e_{k,m} = m e_{k,m}.$$
The spectrum of $T_k(u_3)$ is given by $$\sigma_k = \{\lambda_{k,m} \ | \ m = -j, ..., j\},\ \lambda_{k,m} = -\frac m {k+2}.$$
Fix a regular value $a_0 \in \big{(}-\frac 1 2, \frac 1 2 \big{)}$ of $u_3$, and let $m_k \in \{-j, -j+1, ..., j\}$ satisfy that $\lambda_{k, m_k}$ is an element of $\sigma_k$ nearest to $a_0$. We can identify $\mch_k$ with a subspace of $L^2(\TT)$ by the assignment $$e_{k,m_k + p} \mapsto e_p \in L^2(\TT),\ p = -j -m_k, -j -m_k +1,.., j-m_k.$$ The semiclassical projective quasi-representation of Theorem \ref{contr_thm} is given by $$\begin{aligned} &\xi_k(u_3) = ik\Big{(}[T_k(u_3)]^{\TT} +\frac{m_k}{k+2}\Id\Big{)},\\ &\xi_k(u_1) = ik [T_k(u_1)]^{\TT},\ \xi_k(u_2) = ik[T_k(u_2)]^{\TT}.\end{aligned}$$
Fix action-angle variables $y \in (-\pi, \pi)$, $\theta \pmod 1$ such that $\omega_{FS} = dy \wedge d\theta$. Then $$u_1 = -\frac{\cos(2\pi \theta)}2 \sqrt{1-4u_3^2},\ u_2 = -\frac{\sin(2\pi \theta)}2 \sqrt{1-4u_3^2},\ u_3 = -\frac y {2\pi}.$$
\begin{corollary}\label{su2e2_cor} By Theorem \ref{contr_thm}, we obtain via $t_{k,R}$ a contraction of the semiclassical projective quasi-representation $\xi_k : \mathfrak{su}(2)\to \End(C^\infty(\TT))$ to the irreducible representation of the Euclidean Lie algebra $\mathfrak{e}(2)$ which is generated by the skew-Hermitian operators $$\begin{aligned} &\mcx_{R,1} = -i \frac R 2 \sqrt{1-4a_0^2}\mcm_{\cos(2\pi \theta)},\\ &\mcx_{R,2} = -i \frac R 2 \sqrt{1-4a_0^2}\mcm_{\sin(2\pi \theta)},\\ &\mcx_{R,3} = -\frac 1 {2\pi} \partial_{\theta}.\end{aligned}$$\end{corollary}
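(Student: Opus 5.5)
The plan is to specialize Theorem \ref{contr_thm} to $M = S^2$, with $n = 1$, $T^1_k = T_k(u_3)$, $f_0 = u_3$, $V_{\inv} = \mft = \Sp\{u_3\}$ and $V_0 = \mfm = \Sp\{u_1,u_2\}$, and then compute the limiting representation $\xi^R_\infty$ explicitly.

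\textbf{Hypotheses of Theorem \ref{contr_thm}.} The relation (\ref{comm_rel}) holds because $\{u_3,u_1\} = u_2$ and $\{u_2,u_3\} = u_1$. Also $u_3 \in C^\infty_{\inv}(S^2)$ trivially. Since $a_0 \in (-\frac12,\frac12)$ is a regular value, the level set $\Lambda_{a_0} = \{u_3 = a_0\}$ is a circle, namely $\{y = -2\pi a_0\}$ in the action-angle coordinates $(y,\theta)$.

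\textbf{Matching the identification.} The joint eigenvalues are simple, $\lambda_{k,m_k} = a_{k,0}$, and $\lambda_{k,m_k+p} - a_{k,0} = -\frac{p}{k+2}$. Comparing with Proposition \ref{spec_properties} gives $\nu(0) = -\frac1{2\pi}$, so the assignment $e_{k,m_k+p} \mapsto e_p$ has the orientation required there. Each eigenspace is one dimensional, so the sections $s_{k,p}$ of Theorem \ref{str_conv_thm} agree with $e_{k,m_k+p}$ up to unimodular phases.

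I expect the \textbf{main obstacle} to be showing that this particular phase choice is admissible, i.e. that Theorem \ref{str_conv_thm} and Corollary \ref{matr_coef_cor} apply with these basis vectors. My argument would be as follows. The matrix coefficients $\langle T_k(u_1) e_{k,m}, e_{k,m\pm1}\rangle$ and the corresponding ones for $u_2$ are given by the standard real ladder-operator formulas: $J_{k,\pm} = J_{k,1} \pm iJ_{k,2}$ have real positive coefficients in this basis. The limits of these coefficients must be the Fourier coefficients $\widehat{h_0}(\pm1)$ of $h_0 = u_1|_{\Lambda_{a_0}} = -\frac{\sqrt{1-4a_0^2}}{2}\cos(2\pi\theta)$, and similarly for $u_2$. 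The phases of the $s_{k,p}$ can only change these by a factor $e^{i(\phi_{p\pm1}-\phi_p)}$. Hence any admissible choice of phases differs from the given one by at most a constant shift of $\theta$. Such a shift can be absorbed into the choice of angle origin, which is how the formulas for $u_1,u_2$ in the $(y,\theta)$ coordinates are normalized.

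\textbf{Computing $\xi^R_\infty$.} On $V_0$ we get $\xi^R_\infty(u_l) = iR\,\mcm_{u_l|_{\Lambda_{a_0}}}$. Substituting $u_3 = a_0$ into the formulas for $u_1$ and $u_2$ gives exactly $\mcx_{R,1}$ and $\mcx_{R,2}$. On $V_{\inv}$, $\zeta_1(u_3) = 1$, so $\xi^R_\infty(u_3) = X_{u_3}|_{\Lambda_{a_0}} = \nu(0)\partial_\theta = -\frac1{2\pi}\partial_\theta = \mcx_{R,3}$.

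As a consistency check on the sign convention, compute
$$[\mcx_{R,3},\mcx_{R,1}] = -\tfrac{1}{2\pi}\cdot\big(-i\tfrac R2\sqrt{1-4a_0^2}\big)\cdot(-2\pi\sin(2\pi\theta)) = \mcx_{R,2},$$
which matches $\{u_3,u_1\}_\infty = u_2$. Similarly $[\mcx_{R,2},\mcx_{R,3}] = \mcx_{R,1}$, and $[\mcx_{R,1},\mcx_{R,2}] = 0$ because $\{\cdot,\cdot\}_\infty$ vanishes on $V_0$. These are the $\mathfrak e(2)$ relations. The operators $\mcx_{R,l}$ are skew-Hermitian because $\cos$ and $\sin$ are real and $\partial_\theta$ is skew-adjoint.

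\textbf{Irreducibility.} Since $a_0$ is a regular value, $\sqrt{1-4a_0^2} \ne 0$. Hence $\mcx_{R,1} \pm i\mcx_{R,2}$ is a nonzero multiple of multiplication by $e^{\pm 2\pi i\theta}$, which shifts Fourier modes by $\pm1$. Meanwhile $\mcx_{R,3}$ acts on $e_p$ by the distinct scalars $-ip$.

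Let $W$ be a nonzero closed invariant subspace. Invariance under $\mcx_{R,3}$, together with the simplicity of its eigenvalues, forces $W$ to contain some $e_p$. Applying the shifts $e^{\pm 2\pi i\theta}$ repeatedly then produces every $e_q$, so $W$ is dense in $L^2(\TT)$. The same argument works for invariant subspaces of $C^\infty(\TT)$.
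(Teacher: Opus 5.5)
Your main line is exactly the paper's argument: you check the hypotheses of Theorem~\ref{contr_thm} with $V_{\inv}=\mft$ and $V_0=\mfm$, and confirm $\nu(0)=-\frac{1}{2\pi}$ so that $e_{k,m_k+p}\mapsto e_p$ has the right orientation. You then evaluate $\xi^R_\infty$ on $u_1,u_2,u_3$ and verify the $\mathfrak{e}(2)$ brackets. The paper does nothing beyond this, and takes irreducibility from the standard representation theory of $SE(2)$ (citing Ito).

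\textbf{The gap is in your own irreducibility proof.} $\mcx_{R,3}$ is unbounded. Suppose a closed subspace $W$ is invariant only in the sense that $\partial_\theta$ maps $W\cap\mathrm{Dom}(\partial_\theta)$ into $W$. This does not force $W$ to contain an eigenvector. Take a nondegenerate arc $I\subset\TT$ and $W_I=\{f\in L^2(\TT)\ |\ f=0 \text{ a.e. on } I\}$. Then $W_I$ is closed, nonzero, proper and invariant under $\mcx_{R,1},\mcx_{R,2}$. Moreover, $\partial_\theta$ maps the elements of $W_I$ in its domain (which are dense in $W_I$) back into $W_I$. Yet $W_I$ contains no $e_p$.

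For the same reason your last sentence is false. The set $\{f\in C^\infty(\TT)\ |\ f|_I\equiv 0\}$ is a closed, nonzero, proper subspace of $C^\infty(\TT)$ invariant under all three operators. An algebraic example is $\{P\,e^{\cos(2\pi\theta)}\ |\ P \text{ a trigonometric polynomial}\}$. So the $\mathfrak{e}(2)$-action on $C^\infty(\TT)$ is not irreducible in the naive sense.

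The relevant notion is irreducibility of the integrated unitary representation of $SE(2)$ on $L^2(\TT)$. Here $W$ must be invariant under $\mcx_{R,1},\mcx_{R,2}$ and under the translation group $e^{t\mcx_{R,3}}$, equivalently reduce $\mcx_{R,3}$. With this notion your argument goes through:
\begin{itemize}
\item For $w\in W$, $\widehat{w}(p)e_p=\int_0^1 e^{2\pi i p t}\,w(\cdot-t)\,dt$ is an average of translates of $w$, so it lies in the closed space $W$. Hence some $e_p\in W$.
\item Multiplication by $e^{\pm 2\pi i\theta}$ is a combination of $\mcx_{R,1},\mcx_{R,2}$, since $R\neq 0$ and $\sqrt{1-4a_0^2}\neq 0$. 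Applying it repeatedly produces every $e_q$.
\end{itemize}

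\textbf{On the phases.} You rightly flag a point the paper passes over; it simply declares $e_{k,m_k+p}\mapsto e_p$. However, a nontrivial constant shift cannot be ``absorbed into the angle origin''. That origin is already fixed by the displayed formulas for $u_1,u_2$, so a shift would give the stated representation only up to conjugation by a translation. What you actually need is that the shift is trivial, and your own data show this. Grant positive coefficients for $J_{k,+}$ in the basis $e_{k,m}$. Then for fixed $p$ and $m=m_k+p$,
\[ \langle T_k(u_1)e_{k,m},e_{k,m+1}\rangle\to-\tfrac14\sqrt{1-4a_0^2}=\widehat{u_1|_{\Lambda_{a_0}}}(1),\qquad \langle T_k(u_2)e_{k,m},e_{k,m+1}\rangle\to\tfrac{i}{4}\sqrt{1-4a_0^2}=\widehat{u_2|_{\Lambda_{a_0}}}(1). \]
So admissible phases satisfy $e^{i(\phi_{k,p}-\phi_{k,p+1})}\to 1$. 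Up to an irrelevant global phase, the monomial identification therefore differs from an admissible one by diagonal unitaries converging strongly to the identity. This leaves the strong limits in Theorem~\ref{contr_thm} unchanged.
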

One readily verifies that $$[\mcx_{R,1}, \mcx_{R,2}] = 0,\ [\mcx_{R,2}, \mcx_{R,3}] = \mcx_{R,1},\ [\mcx_{R,3}, \mcx_{R,1}] = \mcx_{R,2},$$
as required. Note that every irreducible unitary representation of the special Euclidean group $SE(2)$ of dimension greater than $1$ is unitarily equivalent to a representation generated by $\mcx_{R,1}$, $\mcx_{R,2}$, $\mcx_{R,3}$ for some $R > 0$ (\cite{ito}).

Next, we consider the convergence of matrix coefficients of the irreducible unitary representation of $SU(2)$ on $\mch_k$ to those of the corresponding irreducible unitary representation of $SE(2)$ on $L^2(\TT)$. The operator $$\mcr_k(\beta)= e^{-i \beta J_{k,2}},\ \beta \in \RR,$$
is a rotation operator, and the matrix coefficients $$d^j_{m', m}(\beta) = \langle \mcr_k(\beta) e_{k,m}, e_{k,m'} \rangle,\ m,m' \in \{-j, -j+1, ..., j\}$$
are known as Wigner small-d functions. The following well-known asymptotic estimate is a special case of Corollary \ref{matr_coef_cor}, for group matrix coefficients, as detailed in Corollary \ref{grp_matr_coef}. 
\begin{lemma} Fix $p, p' \in \ZZ$ and for $q \in \ZZ$ let $J_q$ denote the Bessel function of the 1st kind of order $q$. Then for $\beta \in \RR$, $$d^j_{m_k + p', m_k + p}\Big{(}\frac \beta j \Big{)} = (-1)^{p'-p} J_{p'-p}\Big{(}\beta \sqrt{1-4a_0^2}\Big{)} + \mco(k^{-1}).$$ \end{lemma}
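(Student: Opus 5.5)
The plan is to derive the estimate from Corollary \ref{grp_matr_coef}, specialized to the $S^2$ setting of Corollary \ref{su2e2_cor} with $g = u_2 \in \mfm$. The proof has three steps: rewrite the Wigner matrix coefficient as a Toeplitz matrix coefficient, apply Remark \ref{matr_coef_rk}, and evaluate the resulting Fourier coefficient.

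\textbf{Step 1: reduction to a Toeplitz operator.} Since $J_{k,2} = -(k+2)T_k(u_2)$ and $j = k/2$,
$$\mcr_k(\beta/j) = e^{-i\frac{\beta}{j} J_{k,2}} = e^{i \frac{2(k+2)}{k}\beta T_k(u_2)}.$$
The exponent differs from $2i\beta T_k(u_2)$ by an operator of norm $\mco(k^{-1})$. Hence $\mcr_k(\beta/j) = e^{2i\beta T_k(u_2)} + \mco(k^{-1})$ in operator norm, using the unitary bound $\Vert e^{iA}-e^{iB}\Vert\le \Vert A-B\Vert$ for self-adjoint $A,B$. By Remark \ref{matr_coef_rk}, with $\kappa(x)=e^{2i\beta x}$, or equivalently by \cite{charles}, Proposition 12, we get
$$e^{2i\beta T_k(u_2)} = T_k(e^{2i\beta u_2}) + \mco(k^{-1}).$$
In the notation of Corollary \ref{grp_matr_coef} this is the case $R\beta \leftrightarrow 2\beta$.

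\textbf{Step 2: identifying the basis and applying Remark \ref{matr_coef_rk}.} The basis $e_{k,m_k+p}$ plays the role of $s_{k,p}$ from Theorem \ref{str_conv_thm}. These vectors are joint eigensections of the single operator $T_k(u_3)$, with simple eigenvalues $\lambda_{k,m_k+p} = -(m_k+p)/(k+2)$. They are unique up to phases, and the phases must match the normalization used in the theorem; this is the main subtlety.

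The frequency is $\nu(0) = -1/(2\pi)\cdot 2\pi$ up to sign conventions, so increasing $m$ corresponds to decreasing eigenvalue. The index $p$ may therefore be identified with $-m$ of the Fourier mode, depending on orientation. Orientation and phase conventions only produce the sign factor $(-1)^{p'-p}$ and possibly $J_{p-p'}$ versus $J_{p'-p}$. These two are related by $J_{-q} = (-1)^q J_q$, so the final answer is stable under these ambiguities up to the stated sign.

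I would check the phase directly. The explicit $e_{k,m}$ have positive coefficients, and in the angle coordinate $\theta$ of Corollary \ref{su2e2_cor} these correspond to $e^{2\pi i m\theta}$ up to a constant phase shift. Remark \ref{matr_coef_rk} then gives
$$d^j_{m_k+p',m_k+p}(\beta/j) = \widehat{h_0}(\pm(p'-p)) + \mco(k^{-1}), \qquad h_0 = e^{2i\beta u_2}\big|_{\Lambda_{a_0}}.$$

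\textbf{Step 3: computing the Fourier coefficient.} On $\Lambda_{a_0}$ we have $u_3 = a_0$, hence
$$u_2 = -\tfrac12\sin(2\pi\theta)\sqrt{1-4a_0^2}, \qquad h_0(\theta) = e^{-i x \sin(2\pi\theta)}, \quad x = \beta\sqrt{1-4a_0^2}.$$
The Jacobi–Anger expansion $e^{ix\sin\phi} = \sum_q J_q(x)e^{iq\phi}$ gives
$$\widehat{h_0}(q) = J_{-q}(x)\cdot(\text{sign}) = (-1)^q J_q(x).$$
Substituting $q = p'-p$ yields the claim.

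\textbf{Main obstacle.} The expected difficulty is the bookkeeping of phases and orientation in Step 2, namely matching the explicit basis $e_{k,m}$ with the sections $s_{k,m}$ of Theorem \ref{str_conv_thm}, so that the sign $(-1)^{p'-p}$ comes out exactly rather than merely up to a unimodular constant. I would settle it by a direct check on the Berezin symbol: compute $\langle T_k(z_0\bar z_1\text{-type function})e_{k,m},e_{k,m+1}\rangle$, or simply $\langle T_k(u_1)e_{k,m}, e_{k,m\pm1}\rangle$ via the spin ladder relations. This pins the phase of the first Fourier mode, and multiplicativity then fixes all the others.
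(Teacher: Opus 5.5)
Your argument is correct in substance and follows the paper's route. You reduce $\mcr_k(\beta/j)$ to $T_k(e^{-i\beta x_2})+\mco(k^{-1})$ by functional calculus; you absorb the $j$ versus $j+1$ discrepancy via $\Vert e^{iA}-e^{iB}\Vert_{\op}\le\Vert A-B\Vert_{\op}$, where the paper instead rescales $\beta$. You then read off the matrix coefficient as $\widehat{h_0}(p'-p)+\mco(k^{-1})$ from Remark~\ref{matr_coef_rk} with $e_{k,m_k+p}$ in the role of $s_{k,p}$, and identify this Fourier coefficient as a Bessel function.

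The paper simply takes the identification $e_{k,m_k+p}\mapsto e_p$ of Sect.~\ref{S2_sect} as given, and you are right that this identification is what fixes the sign. Contrary to your remark that the answer is ``stable'', reversing orientation would turn $(-1)^{q}J_{q}$ into $J_{q}$, so the check is genuinely needed. The check you propose does settle it: compare $\langle T_k(x_1+ix_2)e_{k,m},e_{k,m+1}\rangle=\frac{2}{k+2}\langle J_{k,+}e_{k,m},e_{k,m+1}\rangle$, where $J_{k,+}=J_{k,1}+iJ_{k,2}$, with the symbol $(x_1+ix_2)|_{\Lambda_{a_0}}=\sqrt{1-4a_0^2}\,e_1$. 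After that, drop the $\pm$ and the stray ``(sign)'', since $\widehat{h_0}(q)=J_{-q}(x)=(-1)^qJ_q(x)$ exactly, with $x=\beta\sqrt{1-4a_0^2}$.
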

\begin{proof}
It suffices to prove for $d^j_{m_k + p', m_k + p}(\frac \beta{j+1})$. Note that $$\mcr_k\Big{(}\frac \beta {j+1}\Big{)} = e^{-i\beta T_k(x_2)} = T_k(e^{-i \beta x_2}) + \mco(k^{-1}),$$
hence for fixed $p, p' \in \ZZ$,
\begin{multline*} d^j_{m_k + p', m_k + p}\Big{(}\frac \beta{j+1}\Big{)} = \langle e^{-i \beta x_2} e_{k, m_k + p}, e_{k,m_k + p'} \rangle + \mco(k^{-1}) \\= \widehat{e^{-i \beta x_{2,0}}}(p'-p) + \mco(k^{-1}),\end{multline*}
where $$x_{2,0}(\theta) = \sqrt{1-4a_0^2} \sin(2\pi \theta).$$
Next, writing $q = p'-p$, we see that \begin{multline*}\widehat{e^{-i \beta x_{2,0}}}(q) = \int_{-\frac 1 2}^{\frac 1 2} e^{-i \beta \sqrt{1-4a_0^2} \sin(2\pi \theta) -2\pi q \theta i} d\theta \\= 2 \int_0^{\frac 1 2} \cos\Big{(}\beta \sqrt{1-4a_0^2} \sin(2\pi \theta) + 2\pi q \theta\Big{)} d\theta\\ = \frac 1 \pi \int_0^\pi \cos\Big{(}\beta \sqrt{1-4a_0^2} \sin(\tau) + q \tau\Big{)} d\tau \\= J_q\Big{(}-\beta \sqrt{1-4a_0^2} \Big{)} = (-1)^q J_q\Big{(}\beta \sqrt{1-4a_0^2} \Big{)}.\end{multline*} \end{proof}
We conclude the discussion of this subsection by noting that when $a_0 =0$, if $k\in 2\NN$ then $m_k = 0$ (otherwise, $m_k = -\frac 1 2$ or $m_k = \frac 1 2$). Consequently, $$u_l \mapsto \frac{k+2} k \xi_k(u_l) = -i[J_{k,l}]^{\TT}= \xi_k(u_l) + \mco(1),\ l=1,2,3$$
is a genuine representation of $\mathfrak{su}(2)$ which contracts to a representation of $\mathfrak{e}(2)$ as in Corollary \ref{su2e2_cor}. 
The contraction of irreducible representations of $\mathfrak{su}(2)$ to irreducible representations of $\mathfrak{e}(2)$ is a classical example of contraction of Lie algebra representations, already considered in the original work of İnönü and Wigner (\cite{iw}). By considering $M = S^2$, the fixed Liouville torus $\{x_3 = 0\}$, and the subsequence $k \in 2\NN$, Theorem \ref{contr_thm} yields a realization of this contraction via Berezin--Toeplitz quantization. This realization is closely related to the one constructed in \cite{dooley_rice}\footnote{In \cite{dooley_rice}, contractions of representations of $\mathfrak{so}(d+1)$ to those of $\mathfrak{e}(d)$ are obtained for any $d$.}.
\begin{remark}
The example of Theorem \ref{contr_thm} and Corollary \ref{grp_matr_coef} presented in this subsection extends naturally to the setting of representations of $\mathfrak{su}(d+1)$ obtained through the Berezin--Toeplitz quantization of $\mathbb{CP}^{d}$, $d > 1$.
\end{remark}
\section{Bohr-Sommerfeld conditions}\label{bohr_sommer_sect}
In this section we establish Proposition \ref{spec_properties}.
We begin by recalling that according to the Arnold-Liouville theorem, there exists $\delta > 0$, a neighborhood $U_\delta$ of $\Lambda_{a_0}$, and a symplectomorphism
\begin{equation*} \begin{aligned} &\Psi : \left(U_\delta, \omega|_{U_{\delta}}\right) \to (D_\delta\times \TT^n,\ \omega_{\std}),\ D_\delta = (-\delta, \delta)^n,\\ & \omega_{\std} = \sum_{j=1}^n dy^j \wedge d\theta^j,\end{aligned}\end{equation*}
where $y = (y^1, ..., y^n)$ are the coordinates on $D_\delta$ and $\theta = (\theta^1, ..., \theta^n) \mod \ZZ^n$ are periodic coordinates on $\TT^n = \RR^n / \ZZ^n$. Additionally, $$\Psi(\Lambda_{a_0}) = \{0\}\times \TT^n,$$ and there exists a  diffeomorphism $\varphi_0 : D_\delta \to D_{\delta,0} =  f_0(U_\delta)$ such that $$f_0 \circ \Psi^{-1}(y,\theta) = \varphi_0(y).$$
The level sets $$\Lambda_a = \{f_0 = a\},\ a \in D_{\delta, 0}$$
are so-called Lagrangian tori.
Finally, for $j=1,...,n$, the Hamiltonian vector field of $f^j_{0} \circ \Psi^{-1}$ is given by
\begin{equation*} \nu_j(y) \cdot \partial_\theta = \nu_j^1(y) \partial_{\theta^1} + ... + \nu_j^n(y) \partial_{\theta^n},\ \nu_j \in C^\infty(D_\delta, \RR^n),\end{equation*}
and it holds that
\begin{equation*}  \varphi_0'(y)= \nu(y) = \left(\begin{array}{ccc} \text{---} & \nu_1(y) & \text{---} \\   & \vdots & \\ \text{---} & \nu_n(y) & \text{---} \end{array}\right).\end{equation*}

Next, for $a \in D_{\delta,0}$ and $j =1, ...,n$, define the oriented loop $\gamma^j_a$ by $$\Psi(\gamma^j_a(t)) = (\varphi_0^{-1}(a), 0, ..., 0, t, 0, ...,0),\ t\in[0,1].$$
We rely on the following semiclassical \textit{Bohr-Sommerfeld conditions} satisfied by the joint spectrum $\sigma_k$ of $T^1_{k}, ..., T^n_{k}$.
\begin{theorem}[\cite{charles_bohr}, Theorem 3.1]\label{charles_bohr_thm} There exists a formal series $$\sum_{l \ge -1} \hbar^l \rho_l,\ \rho_l \in C^\infty(D_{\delta,0}, \RR^n),$$ such that for any open set $O \subset \RR^n$ with compact closure $\overline{O} \subset D_{\delta,0}$ and any sequences $$(a_{q}, k_q), (a'_q, k_q) \in O \times \NN,\ \lim_{q \to \infty}k_q = \infty,$$
the following holds:
\begin{itemize}
\item{$a_q \in \sigma_{k_q} +\mco(k_q^{-\infty})\Longleftrightarrow \rho(a_q,k_q) \in  k_q^{-1}\ZZ^n + \mco(k_q^{-\infty})$.}
\item{If $a_q, a'_q \in \sigma_{k_q}$ and $a_q = a'_q + \mco(k_q^{-\infty})$, then for every $k_q$ sufficiently large, the multiplicity of $a_q$ is $1$ and $a_q = a'_q$.}
\end{itemize}
Here, $\rho(a, k) \in C^\infty(D_{\delta,0}, \RR^n)$ has the expansion  $\rho(a,k)= k^{-1} \sum_{l \ge -1} k^{-l} \rho_l$, and it holds that $$\rho_{-1} = (\rho^1_{-1}, ..., \rho^n_{-1})$$ with $\rho^l_{-1}(a)$ the holonomy of $\gamma^l_{ a}$ in $L$, that is, the parallel transport along $\gamma^l_{ a}$ is given by multiplication by $e^{2\pi i \rho^l_{-1}(a)}$.\end{theorem}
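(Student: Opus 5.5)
Since Theorem \ref{charles_bohr_thm} is quoted from \cite{charles_bohr}, I would prove it by the microlocal Lagrangian-state method. The idea is to describe, for each $a$ near $a_0$, the space of \emph{microlocal joint solutions} of $(T_k^j - a^j)s = \mco(k^{-\infty})$ near the torus $\Lambda_a$. I would then show that this space is at most one-dimensional, and that it is nonzero exactly when a holonomy condition holds. The condition $\rho(a,k)\in k^{-1}\ZZ^n+\mco(k^{-\infty})$ encodes that holonomy condition, and all spectral conclusions then follow by standard arguments.

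\textbf{Step 1: Lagrangian-state ansatz.} On a neighborhood of $\Lambda_a$ I would look for sections of the form
\[ s_k = \Big(\frac k{2\pi}\Big)^{n/4} F^k \sum_{l\ge 0} k^{-l} g_l, \]
with the following ingredients:
\begin{itemize}
\item $F$ is a section of $L$ over a tubular neighborhood that is flat along $\Lambda_a$, has $|F|<1$ off $\Lambda_a$, and satisfies $\bar\partial F = 0$ to infinite order on $\Lambda_a$;
\item $g_l$ are amplitudes, with $\Pi_k$ used to correct to holomorphic sections.
\end{itemize}
Applying the symbol calculus for Toeplitz operators acting on such states gives the following expansions.
\begin{itemize}
\item The leading term of $(T^j_k-a^j)s_k$ vanishes because $f_0^j=a^j$ on $\Lambda_a$.
\item The next order is a transport equation $\mathcal L_{X_{f_0^j}} g_0 + i f^j_1 g_0 + (\text{half-form/subprincipal term}) g_0 = 0$ along $\Lambda_a$.
\item Higher orders give inhomogeneous transport equations of the same form.
\end{itemize}
Since the vector fields $X_{f_0^j}$ generate the torus action transitively on $\Lambda_a$, each equation is solvable on $\Lambda_a\simeq\TT^n$ iff a cohomological obstruction in $H^1(\TT^n)$ vanishes. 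Together with the global existence of a flat section $F$ along $\Lambda_a$, the solvability condition is that the total holonomy around each $\gamma^l_a$ is trivial. To leading order this holonomy is $e^{2\pi i k\rho^l_{-1}(a)}$, corrected by the subprincipal and half-form contributions $\rho_0$. Solving order by order and absorbing corrections into the formal series $\sum \hbar^l\rho_l$ gives the condition $\rho(a,k)\in k^{-1}\ZZ^n$ modulo $\mco(k^{-\infty})$. When it holds, the solution is unique up to a scalar, because the kernel of the transport operator on $\TT^n$ is the constants.

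\textbf{Step 2: spectral consequences.}
\begin{itemize}
\item If $\rho(a_q,k_q)\in k_q^{-1}\ZZ^n+\mco(k_q^{-\infty})$, the constructed quasimode, normalized, yields $\dist(a_q,\sigma_{k_q})=\mco(k_q^{-\infty})$ by the spectral theorem for commuting self-adjoint operators.
\item Conversely, let $s$ be a joint eigensection for $a_q\in O$. I would show that $s$ is $\mco(k^{-\infty})$ away from $\Lambda_{a_q}$, using the ellipticity of $\sum_j (T^j_k-a^j)^2$ off the level set, via a Toeplitz parametrix.
\item Hence $s$ is a microlocal joint solution on $\Lambda_{a_q}$. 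The uniqueness statement then forces it to be a multiple of a Lagrangian state, and the holonomy condition must hold.
\item For multiplicity one: two orthogonal eigensections whose eigenvalues agree up to $\mco(k^{-\infty})$ would give two independent microlocal solutions at nearby tori. After transport by the (smooth in $a$) family of Lagrangian states this contradicts one-dimensionality, so $a_q=a'_q$ and the multiplicity is $1$.
\end{itemize}

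\textbf{Main obstacle.} I expect the hardest part to be the \emph{microlocal uniqueness} step, with uniformity in $a\in\overline O$. One must show that any joint $\mco(k^{-\infty})$-solution microlocally supported near $\Lambda_a$ is a Lagrangian state with an asymptotic expansion, not merely that Lagrangian states are solutions. I would first try a quantum Arnold--Liouville normal form. This means constructing a microlocally unitary Fourier integral (Toeplitz) operator that conjugates $(T^1_k,\dots,T^n_k)$ near $\Lambda_{a_0}$ to functions of the model operators $k^{-1}D_{\theta^j}$ acting on sections over $D_\delta\times\TT^n$, twisted by the holonomy. In the model, joint spectrum, eigenspaces and uniqueness are explicit through Fourier series in $\theta$, and the conclusions transfer back. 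Keeping track of the half-form and subprincipal corrections that determine $\rho_0$ is the technical core, but it affects only the identification of $\rho_l$ for $l\ge 0$, not the structure of the argument.
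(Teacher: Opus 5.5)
The paper gives no proof of this statement. It is imported from \cite{charles_bohr} (Theorem 3.1 there), together with the companion fact \cite{charles_bohr}, Proposition 3.8, that joint eigensections are $\mco(k^{-\infty})$-close to constant multiples of Lagrangian sections; the paper uses it mainly through a first-order Taylor expansion (Proposition \ref{spec_properties}). Your outline has the same architecture as the argument in that source: Lagrangian sections built on $F^k$, the symbol calculus turning $(T^j_k-a^j)s=\mco(k^{-\infty})$ into transport equations along $\Lambda_a$, Bohr--Sommerfeld as triviality of a holonomy, eigensections microsupported on the torus by ellipticity, and multiplicity one from one-dimensionality. The quasimode direction via the joint spectral theorem is fine. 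So is the multiplicity argument (two orthonormal eigensections cannot both be $\mco(k^{-\infty})$-close to multiples, with coefficients bounded below, of one Lagrangian section), once uniqueness is available.

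As a proof, however, it stops exactly where the content is. The step you defer is this: every microlocal joint solution on a contractible piece of $\Lambda_a$ is, modulo $\mco(k^{-\infty})$, a scalar sequence times a Lagrangian section, uniformly for $a\in\overline O$. That step carries both the implication from $a_q\in\sigma_{k_q}+\mco(k_q^{-\infty})$ to the quantization condition and the multiplicity claim. The fact that the transport operator has only constants in its kernel gives uniqueness only among Lagrangian states, not among all microlocal solutions.

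A semi-global quantum action--angle normal form would supply uniqueness, but it is heavier than needed. It suffices to argue near each point of $\Lambda_a$: take a Darboux chart in which $f_0^1,\dots,f_0^n$ are coordinates, and use a local Toeplitz Fourier integral operator (Egorov) to reduce to a model whose microlocal solutions are explicitly the constants. The microlocal solutions then form a locally free rank-one sheaf over $\Lambda_a$. One reads off $\rho(a,k)$ from the holonomy $e^{2\pi i k\rho(a,k)}$ of the \v{C}ech cocycle of transition constants between local Lagrangian generators depending smoothly on $a$.

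Working only on contractible pieces also repairs two slips in your Step 1:
\begin{enumerate}
\item A global flat section $F$ of $L$ along $\Lambda_a$ exists only if the holonomy of $L$ itself is trivial.
\item When $a=a_q$ depends on $k$, you cannot impose solvability order by order, since $k\rho_{-1}(a_q)$ and $\rho_0(a_q)$ are not separately integral. Only the total holonomy of $L^k$ twisted by the flat $\CC[[\hbar]]$-bundle ($K$ in the paper's notation) is constrained.
\end{enumerate}

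Finally, the statement asserts that the $\rho_l$ are real-valued, and the paper relies on this when it inverts $\rho(\cdot,k)$ and picks the nearest point of $k^{-1}\ZZ^n$. You never address this. It requires the self-adjointness of the $T_k^j$, which makes the transport connection compatible with the formal Hermitian form on $K$, so that the holonomy is unitary. At leading order this is the invariance of the density $\eta_a$ used in Lemma \ref{c_k_abs}.
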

The matrix $\nu$ introduced above is related to $\rho_{-1}$, as follows.
\begin{lemma}\label{hol_deriv_lemma} The function $\rho_{-1}$ satisfies
\begin{equation*} \rho_{-1}'(a_0) = \frac 1 {2\pi} (\nu(0))^{-1}.\end{equation*} \end{lemma}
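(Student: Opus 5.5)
We want $\rho_{-1}'(a_0) = \frac{1}{2\pi}\nu(0)^{-1}$. The plan is to compute the derivative of the holonomy map $a \mapsto \rho^l_{-1}(a)$ by differentiating holonomy along a family of loops, and to express that derivative through the symplectic form via the curvature identity $\mathrm{curv}(\nabla) = -i\omega$. Pulling back by $\Psi$, we work on $D_\delta \times \TT^n$ with $\omega_{\std} = \sum_j dy^j\wedge d\theta^j$. We first compute $\frac{\partial}{\partial y}$ of the holonomy of the loops $\Psi(\tilde\gamma^l_y(t)) = (y, 0,\dots,t,\dots,0)$, and then compose with $\varphi_0^{-1}$ via the chain rule.

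The key step is a variation-of-holonomy formula. Fix $y$ and a direction $e_j$, and consider the annulus (cylinder) $\Sigma_s$ swept by the loops $\tilde\gamma^l_{y+re_j}$, $r\in[0,s]$. Its boundary is $\tilde\gamma^l_{y+se_j} - \tilde\gamma^l_{y}$. Since parallel transport in a Hermitian line bundle around a boundary equals $\exp$ of minus the integrated curvature, we get
\[
e^{2\pi i(\tilde\rho^l(y+se_j)-\tilde\rho^l(y))} = \exp\Big(-\int_{\Sigma_s} (-i\omega)\Big)= \exp\Big(i\int_{\Sigma_s}\omega\Big),
\]
up to an orientation sign that I will fix carefully from conventions. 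With the parametrization $(r,t)\mapsto (y+re_j, te_l)$ we have $\int_{\Sigma_s}\omega_{\std} = \pm s\,\delta_{jl}$. Hence $\partial_{y^j}\tilde\rho^l = \pm\frac{1}{2\pi}\delta_{jl}$, i.e. $\tilde\rho_{-1}'(y) = \pm\frac{1}{2\pi}\Id$. The $\ZZ$-ambiguity of $\rho$ is irrelevant for derivatives, provided we choose a continuous branch.

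Next we use $\rho_{-1} = \tilde\rho_{-1}\circ\varphi_0^{-1}$ together with $\varphi_0'(y) = \nu(y)$ and $\varphi_0(0)=a_0$. The chain rule gives
\[
\rho_{-1}'(a_0) = \tilde\rho_{-1}'(0)\,(\varphi_0'(0))^{-1} = \pm\frac{1}{2\pi}\nu(0)^{-1}.
\]

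The main obstacle is the sign, which depends on three things: the orientation induced on $\partial\Sigma$, the convention relating parallel transport to curvature, and the convention $e^{2\pi i\rho}$ for holonomy. To settle it, I would compute the transport explicitly in a local trivialization. Using $\nabla = d - i\alpha$ with $d\alpha = \omega$ (so that the curvature is $-i\,d\alpha = -i\omega$), parallel transport along $\gamma$ multiplies by $\exp(i\int_\gamma\alpha)$. Choosing $\alpha = \sum_j y^j d\theta^j$ plus a closed form, we get $2\pi\rho^l = \int_{\tilde\gamma^l_y}\alpha = y^l + \text{const}$. This yields the $+$ sign, hence $\rho_{-1}'(a_0) = \frac{1}{2\pi}\nu(0)^{-1}$. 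The one point to verify is that the bundle's trivialization over $U_\delta$ admits such a global connection form up to a flat part. This holds because $\omega$ is exact on $D_\delta\times\TT^n$, and a flat correction contributes only a constant to the holonomy.
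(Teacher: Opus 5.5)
Your proposal is correct and is essentially the paper's argument: the paper likewise identifies the holonomy difference $\rho^l_{-1}(a_0+\varepsilon v_j)-\rho^l_{-1}(a_0)$ with $\frac{1}{2\pi}$ times the $\omega_{\std}$-area of the cylinder $[0,y_{j,\varepsilon}]\times\{t v_l : t\in[0,1]\}$, obtaining $\frac{1}{2\pi}y^l_{j,\varepsilon}$, and then differentiates through $\varphi_0^{-1}$ using $\varphi_0'(0)=\nu(0)$. Your explicit sign check with the connection form $\sum_j y^j\,d\theta^j$ plus a closed correction agrees with the paper's orientation convention. One small correction: a global unitary frame over $U_\delta$ exists because exactness of $\omega$ kills the real Chern class and $H^2(\TT^n;\ZZ)$ is torsion-free, not from exactness alone. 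Working over the cylinder alone, as in your first step, avoids the issue.
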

\begin{proof}
Let $\varepsilon >0$ be small enough, and let $v_1, ..., v_n$ be the standard basis of $\mathbb R^n$. Fix $l,j\in \{1,...,n\}$ and consider the difference $$r_{j,l,\varepsilon} = \rho^l_{-1}(a_0 + \varepsilon v_j) - \rho^l_{-1}( a_0).$$
Let $[0,y] \subset D_\delta$ denote the interval connecting the points $0,y \in D_\delta$, and write $$y_{j,\varepsilon} =(y_{j,\varepsilon}^1, ..., y_{j,\varepsilon}^n) = \varphi_0^{-1}(a_0 + \varepsilon v_j).$$ Note that $\varphi_0^{-1}( a_0) = 0$, and introduce the cylinder $$C_{j,l, \varepsilon} = [0, y_{j,\varepsilon}] \times \{(0,...,0,t,0,...,0) \in \TT^n\ | \ t\in [0,1]\} \subset D_\delta \times \TT^n,$$
oriented such that $\partial C_{j,l,\varepsilon} = \Psi(\gamma^l_{a_0 + \varepsilon v_j}) - \Psi(\gamma^l_{a_0})$.

Then $$r_{j,l,\varepsilon} = \frac 1{2\pi}\int_{\Psi^{-1}(C_{j,l, \varepsilon})} \omega = \frac 1 {2\pi} \int_{C_{j, l,\varepsilon}} \omega_{\std} = \frac 1 {2\pi} y_{j,\varepsilon}^l.$$
Thus, writing $$\varphi_0^{-1} = \big{(}(\varphi_0^{-1})^1, ..., (\varphi_0^{-1})^n\big{)},$$
we see that $$\frac{\partial \rho^l_{-1}}{\partial a^j}( a_0) = \frac 1 {2\pi} \frac{\partial (\varphi_0^{-1})^l}{\partial a^j}( a_0),$$
and so $$\rho_{-1}'( a_0) = \frac 1 {2\pi}(\varphi_0^{-1})'(\varphi_0(0)) = \frac 1 {2\pi}(\varphi'_0(0))^{-1} = \frac 1 {2\pi}(\nu(0))^{-1}.$$
\end{proof}
We can now readily deduce Proposition \ref{spec_properties}.
\begin{proof}[Proof of Proposition \ref{spec_properties}]
Since $\rho'_{-1}(a_0)$ is invertible, there exists an open neighborhood $O$ of $a_0$, with compact closure $\overline{O} \subset D_{\delta, 0}$, such that $$\rho_{O, -1} = \rho_{-1}\big{|}_O : O \to \rho_{-1}(O)$$ is a diffeomorphism, and for every $k$ large enough $$\rho_O(\cdot, k) = \rho(\cdot, k)\big{|}_O : O \to \rho(O, k)$$ is a diffeomorphism. Fix an open neighborhood $O_1$ of $\rho_{-1}(a_0)$ such that for every sufficiently large $k$, it holds that $\rho(a_0, k) \in O_1$, and the closure $$\overline{O_1} \subset \rho(O, k) \cap \rho_{-1}(O)$$
is compact. Let $r(\cdot, k) = \left(\rho_O(\cdot, k)\right)^{-1}$, and $r_0 = \left(\rho_{O,-1}\right)^{-1}$. Then $$r(\cdot, k) = r_0 + \mco(k^{-1}).$$ Consider $b_{k,0} \in k^{-1} \ZZ^n$ such that $$\dist(\rho( a_0, k), k^{-1} \ZZ^n) = \lvert \rho( a_0,k)-b_{k,0}\rvert.$$ For every $k$ large enough it holds that $b_{k,0} \in O_1$ and so by Theorem \ref{charles_bohr_thm} $$r( b_{k,0},k) \in \sigma_k + \mco(k^{-\infty}),$$
and additionally, $$r(b_{k,0}, k)- a_0 = r(\rho( a_0, k) + \mco(k^{-1}), k) - a_0 = \mco(k^{-1}),$$ which implies that $\dist(a_0, \sigma_k) = \mco(k^{-1})$. That is, the joint eigenvalue $a_{k,0} \in \sigma_k$ closest to $a_0$ satisfies $$a_{k,0} = a_0 + \mco(k^{-1}).$$

Next, fix a non-decreasing sequence $p_k \ge 0$, $p_k = o\big{(}k^{\frac 1 2}\big{)}$ and set $$\mci_{k} = \{m \in \ZZ^n\ | \ |m^j| \le  p_k,\ j = 1,...,n\}.$$ If $m\in \mci_{k}$, then since (by Theorem \ref{charles_bohr_thm}) $$\rho(a_{k,0},k) \in k^{-1}\ZZ^n + \mco(k^{-\infty})$$ and $\rho(a_{k,0}, k) + \frac m k \in O_1$ for all sufficiently large $k$, we see that (by Theorem \ref{charles_bohr_thm})
\begin{equation*} r\Big{(}\rho(a_{k, 0},k) + \frac {m} k, k\Big{)} \in \sigma_k + \mco(k^{-\infty}).\end{equation*}
Also, setting $m_k = \frac {m} k$ (and noting the previous lemma),
\begin{multline*} r\big{(}\rho( a_{k, 0},k) + m_k, k\big{)} - a_{k, 0} = r'(\rho(a_{k, 0},k),k) m_k +\mco(|m_k|^2) \\= r_0'\big{(}\rho_{-1}(a_{k, 0}) + \mco\big{(}k^{-1}\big{)}\big{)}m_k + \mco(|m_k|^{2}) = r_0'(\rho_{-1}(a_{k, 0}))m_k + \mco(|m_k|^{2}) \\= r_0' (\rho_{-1}( a_0))m_k + \mco(|m_k|^{2}) = (\rho_{-1}'( a_0))^{-1}m_k + \mco(|m_k|^{2}) \\= 2\pi \nu(0)  m_k + \mco(|m_k|^{2}).\end{multline*}
Clearly, the $\mco(|m_k|^2)$ remainder can be bounded uniformly for $m \in \mci_k$.

We conclude that there exists $ a_{k, m} \in \sigma_k$ such that $$a_{k, m} + \mco(k^{-\infty}) = r\big{(}\rho( a_{k, 0}, k) + m_k, k\big{)} =  a_{k, 0} + 2\pi  \nu(0) m_k + \mco(|m_k|^{2}),$$
that is,$$a_{k, m} = a_{k, 0} + \frac{2\pi} k \nu(0)  m + \mco\Big{(}\frac{|m|^2}{k^2}\Big{)}.$$
Also, if $k$ is sufficiently large then for all $m \in \mci_k$, the multiplicity of $a_{k,m}$ is $1$ (by Theorem \ref{charles_bohr_thm}).

The equality (for all $k$ large enough) $$\sigma_{k,0} =  \{ a_{k, m} \ | \ m \in \mci_{k}\}$$ also readily follows from Theorem \ref{charles_bohr_thm}. Indeed, otherwise there would exist a sequence $k_q \to \infty$ and a sequence $a_{q} \in \sigma_{k_q, 0} \setminus \{a_{k_q,m} \ | \ m \in \mci_{k_q}\}$. Then $$a_{q} = a_{k_q,0} + \frac \pi {k_q} \nu(0) y_{q},\ y_{q} \in \{y \in D_\delta \ | \ |y^j| \le 2 \lfloor p_{k_q} \rfloor + 1,\ j = 1,...,n\}.$$
In particular (by Theorem \ref{charles_bohr_thm}) there exists $m_{q} \in \ZZ^n$, $m_{q} = o(k_q^{\frac 1 2})$ such that $$\rho(a_{q}, k_q) = \rho(a_{k_q,0}, k_q) + \frac{m_{q}} {k_q} + \mco(k_q^{-\infty}).$$
Then applying $r(\cdot, k_q)$ to both sides, we see that $$a_{k_q,0} + \frac \pi {k_q} \nu(0) y_{q}=a_{q}= a_{k_q,0} + \frac{2\pi} {k_q} \nu(0) m_{q} + o(k_q^{-1}).$$
This implies that $m_{q} \in \mci_{k_q}$ (for $k_q$ large enough), and so $$\rho(a_{q}, k_q) = \rho(a_{k_q,m_{q}}, k_q) + \mco(k_q^{-\infty}),$$ hence $$a_{q} = a_{k_q,m_{q}} + \mco(k_q^{-\infty}),$$ and (by Theorem \ref{charles_bohr_thm}) $a_{q} = a_{k_q,m_{q}}$, a contradiction.

Finally, assume by contradiction that for any $c > 0$, $k_0 \in \NN$ there exists $k > k_0$ and $ a \in \sigma_k \setminus \sigma_{k, 0}$ such that $\dist( a, \sigma_{k, 0}) <\frac c k$. Then we can obtain sequences $k_q \to \infty$ and $ a_{q} \in \sigma_{k_q} \setminus \sigma_{k_q, 0}$ such that $$\dist(a_{q}, \sigma_{k_q, 0}) = o(k_q^{-1}).$$ Let $\hat a_{q} \in \sigma_{k_q, 0}$ be such that $$\dist( a_{q}, \sigma_{k_q, 0}) = | a_{q} -  \hat a_{q}|.$$ Then $$\rho(a_{q}, k_q) \in k_q^{-1} \ZZ^n + \mco(k_q^{-\infty}),\ \rho( \hat a_{q}, k_q) \in k_q^{-1} \ZZ^n + \mco(k_q^{-\infty}),$$
hence $$\rho( a_{q}, k_q) = \rho( \hat a_{q}, k_q) + \frac{m_q}{k_q} + \mco(k_q^{-\infty})$$
for some $m_q \in \ZZ^n$. It follows that for all $q$ large enough, $$|\rho( a_{q}, k_q) - \rho( \hat a_{q}, k_q)| \ge \frac{|m_q|}{2k_q}.$$
On the other hand, $$|\rho( a_{q}, k_q) - \rho( \hat a_{q}, k_q)| = \mco(1) | a_{q} - \hat a_{q}| = o(k_q^{-1}),$$
hence $m_q = 0$ which implies that $a_q = \hat a_q + \mco(k_q^{-\infty})$, but then (by Theorem \ref{charles_bohr_thm}) $a_q = \hat a_q$, a contradiction.
\end{proof}
\section{Proof of Theorem \ref{str_conv_thm}}
In this section, we establish Theorem \ref{str_conv_thm}. We use the fact that under suitable assumptions, convergence of matrix coefficients implies strong convergence. 
\begin{lemma}[\cite{shabtai2}, Sect. 4.1]\label{str_conv_explanation1} If $\mch$ is a separable complex Hilbert space equipped with an orthonormal basis $\mcb = \{v_m \ | \ m \in \NN\}$ and $A_k \in \End(\mch)$ is a uniformly bounded sequence such that for any $m, l \in \NN$, \begin{equation}\label{wk_to_str} \lim_{k \to \infty} \langle A_k v_m, v_l \rangle = \langle A v_m, v_l \rangle,\ \lim_{k \to \infty} \Vert A_k v_m \Vert = \Vert A v_m \Vert,\end{equation}
then $A_k$ converges strongly to $A$.  \end{lemma}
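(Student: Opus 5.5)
The plan is to prove convergence on the dense span of the basis first, and then extend to all of $\mch$ using the uniform bound. Fix $v \in \mch$ and $\varepsilon > 0$, and set $C = \sup_k \Vert A_k \Vert_{\op} < \infty$. By weak lower semicontinuity of the norm, or directly from the first condition in (\ref{wk_to_str}), we also get $\Vert A \Vert_{\op} \le C$: for finitely supported $u, w$ we have $\langle A u, w\rangle = \lim_k \langle A_k u, w\rangle$, and $A$ is bounded by assumption since $A \in \End(\mch)$. Choose a finite linear combination $u = \sum_{m \le N} c_m v_m$ with $\Vert v - u\Vert < \varepsilon$. Then
\[
\Vert A_k v - A v\Vert \le \Vert A_k u - A u\Vert + 2C\varepsilon,
\]
so it suffices to show $A_k v_m \to A v_m$ in norm for each fixed $m$.

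For fixed $m$, set $w_k = A_k v_m$ and $w = A v_m$. The first condition in (\ref{wk_to_str}) gives $\langle w_k, v_l\rangle \to \langle w, v_l\rangle$ for every $l$. Since $\Vert w_k\Vert \le C$ and the $v_l$ span a dense subspace, a standard $\varepsilon/3$ argument upgrades this to weak convergence $w_k \rightharpoonup w$: for any $z \in \mch$, approximate $z$ by a finite combination of the $v_l$ and use boundedness of $w_k$ and $w$.

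Next, expand the norm of the difference:
\[
\Vert w_k - w\Vert^2 = \Vert w_k\Vert^2 - 2\,\mathrm{Re}\langle w_k, w\rangle + \Vert w\Vert^2.
\]
Weak convergence gives $\langle w_k, w\rangle \to \Vert w\Vert^2$, and the second condition in (\ref{wk_to_str}) gives $\Vert w_k\Vert^2 \to \Vert w\Vert^2$. Hence $\Vert w_k - w\Vert \to 0$.

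Combining these steps gives $A_k \sto A$. There is no real obstacle. The only point needing care is to use the uniform bound twice: once to pass from convergence of coefficients to weak convergence, and once to pass from the basis to arbitrary vectors.
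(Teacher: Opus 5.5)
Your proof is correct and complete. It gets weak convergence of $A_k v_m$ from coefficient convergence and the uniform bound, combines this with convergence of norms to get norm convergence on each basis vector, and then extends to all of $\mch$ by density using $\sup_k\Vert A_k\Vert_{\op}<\infty$. The paper does not reprove this lemma but cites \cite{shabtai2}, and your argument is the standard proof of it.
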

\begin{remark}\label{str_res_conv_exp} If $A_k$, $A$ are (possibly unbounded) self-adjoint operators on $\mch$ such that (\ref{wk_to_str}) is satisfied by the resolvents $R_i(A_k)$, $R_i(A)$, then $A_k$ converges to $A$ in the sense of strong resolvent convergence. In this way, we obtain the convergence mentioned in Remark \ref{str_conv_rk}.\end{remark}
In light of Lemma \ref{str_conv_explanation1}, we see that in order to prove Theorem \ref{str_conv_thm}, it suffices to construct the orthonormal sets $\mcb_k = \{s_{k,m} \ | \ m \in \mci_k\}$ (see Corollary \ref{orth_set}), and then for $s_{k,m}, s_{k,l} \in \mcb_k$ with $m,l \in \ZZ^n$ fixed, compute the limits (\ref{wk_to_str}) for the relevant operators (Propositions \ref{matr_coef_prop1}, \ref{matr_coef_prop2}).

Throughout, $\mu = \frac{|\omega^{\wedge n}|}{n!}$ denotes the Liouville measure.
\subsection{Microsupports of sections and operators}\label{MS_sect}
The \textit{microsupport} (\cite{charles}) of a sequence of holomorphic sections $s_k \in \mch_k$ is defined as follows. First, we call $(s_k)_{k \in \NN}$ \textit{admissible} if there exists $N > 0$ such that $\Vert s_k \Vert = \mco(k^N)$. An admissible sequence $(s_k)_{k \in \NN}$ is called \textit{negligible at $x_0 \in M$} if there exists a neighborhood $V$ of $x_0$ such that $\sup_{x \in V} |s_k(x)|_x = \mco(k^{-\infty})$, where $|\cdot|_x$ is the norm on $L^{\otimes k}_x$ induced by the Hermitian product.
\begin{definition} The microsupport of an admissible sequence $(s_k)_{k \in \NN}$ is the set $\MS(s_k) \subset M$ specified by
\begin{equation*} \MS(s_k) = M \setminus \{x \in M \ | \ (s_k)_{k \in \NN}\text{ is negligible at }x\}.\end{equation*} \end{definition}

The microsupport of an admissible sequence $s_p \in \mch_{k_p}$ is defined similarly (it is always assumed that $\lim_{p \to \infty} k_p = \infty$).

We also consider the microsupport of sequences $$Q_k : C^\infty(M, L^{\otimes k}) \to C^\infty(M, L^{\otimes k})$$ which satisfy $\Pi_k Q_k \Pi_k = Q_k$ (here $\Pi_k : L^2(M, L^{\otimes k}) \to \mch_k$ is the Bergman projection). To this end, equip $M \times M$ with the symplectic form $\pi_1^* \omega - \pi_2^* \omega$, where $\pi_1, \pi_2$ are the projections on the left and right factors. Then $$\MS(Q_k) \subset M \times M$$ is defined to be the microsupport of the sequence of Schwartz kernels of $Q_k$ (assuming that they form an admissible sequence). In this context, we identify the operators $Q_k \in \End(\mch_k)$ and $\Pi_k Q_k \Pi_k : C^\infty(M, L^{\otimes k}) \to C^\infty(M, L^{\otimes k})$. %

The microsupports of Toeplitz operators are subsets of the diagonal $$\Delta_M \subset M \times M.$$ More precisely, fix $g(\cdot, k) \in C^\infty(M)$ admitting the expansion $$g(\cdot, k) \sim \sum_{l=0}^\infty g_l k^{-l}$$ in the $C^\infty$ topology. Then $\MS(T_k(g(\cdot, k))) \subset \Delta_M$. If we identify $\Delta_M$ with $M$, then (\cite{charles})
\begin{equation}\label{ms_formula} \MS(T_k(g(\cdot, k))) = \overline{\cup_{l \ge 0} \supp(g_l)}.\end{equation}

In what follows, we note some basic facts involving the microsupports of spectral projections such as $\Pi_{k,g,b_0} = \mathbbm 1_{(b_0, \infty)}(T_k(g(\cdot, k)))$, where $b_0 \in g_0(M)$ is a regular value of $g_0$. First, we consider a "smooth approximation" of $\Pi_{k,g,b_0}$.
\begin{lemma}\label{smoothed_proj_ms}Let $\varepsilon >0$ and assume that $b_0 \in g_0(M)$ is a regular value of $g_0$. Let $\psi_\varepsilon : \RR \to [0,1]$ be a smooth function such that $\psi_\varepsilon(t) = 0$ for all $t \le b_0 -\varepsilon$, $\psi_\varepsilon(t)= 1$ for all $t \ge b_0$. Then $$\MS\left(\psi_\varepsilon\big(T_{k}(g(\cdot, k))\big)\right) \subset \{g_0 \ge b_0-\varepsilon\}.$$\end{lemma}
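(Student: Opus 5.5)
The plan is to use the functional calculus for Berezin--Toeplitz operators (Charles, \cite{charles}). For $\psi_\varepsilon \in C^\infty(\RR)$ bounded with bounded derivatives, one can modify $\psi_\varepsilon$ outside a neighborhood of the compact set containing all spectra of $T_k(g(\cdot,k))$; by (\ref{norm_corr}) these spectra lie in $[\min g_0 - 1, \max g_0 + 1]$ for large $k$. After this modification $\psi_\varepsilon$ is compactly supported, which changes nothing. The functional calculus then gives a symbol sequence $\tilde g(\cdot,k) \sim \sum_{l\ge 0} \tilde g_l k^{-l}$ with
$$\psi_\varepsilon\big(T_k(g(\cdot,k))\big) = T_k(\tilde g(\cdot,k)) + \mco(k^{-\infty}),$$
where $\tilde g_0 = \psi_\varepsilon\circ g_0$. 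Each $\tilde g_l$ is a finite sum of terms of the form $(\psi_\varepsilon^{(j)}\circ g_0)\cdot P_{l,j}$, with $P_{l,j}$ smooth functions built from derivatives of $g_0,\dots,g_l$. To establish this I would use the Helffer--Sjöstrand formula together with the symbolic calculus for resolvents $(T_k(g)-z)^{-1}$, whose symbols are $(g_0-z)^{-1}$ plus corrections that are polynomial in $(g_0-z)^{-j}$. Integrating against $\bar\partial$ of an almost-analytic extension $\tilde\psi_\varepsilon$ turns each factor $(g_0-z)^{-j-1}$ into $\psi_\varepsilon^{(j)}(g_0)/j!$.

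Given this, the microsupport statement follows from (\ref{ms_formula}). Since $\psi_\varepsilon$ and all of its derivatives vanish on $(-\infty, b_0-\varepsilon)$, every $\tilde g_l$ vanishes on the open set $\{g_0 < b_0-\varepsilon\}$. Hence $\supp \tilde g_l \subset \{g_0 \ge b_0-\varepsilon\}$, and this set is closed. Therefore
$$\MS\big(T_k(\tilde g(\cdot,k))\big) = \overline{\cup_l \supp\tilde g_l} \subset \{g_0\ge b_0-\varepsilon\}.$$
An $\mco(k^{-\infty})$ operator on $\mch_k$ has Schwartz kernel that is $\mco(k^{-\infty})$ uniformly. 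This uses the standard bound $|s(x)|\le Ck^{n/2}\Vert s\Vert$ for holomorphic sections. So the correction term does not change the microsupport, and the claim follows.

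The main obstacle is the locality of the symbol expansion. One must check that the coefficients $\tilde g_l$ really are supported where derivatives of $\psi_\varepsilon$ at $g_0$ are supported, rather than merely bounded. I would handle this by the resolvent/almost-analytic route above, tracking that the $k^{-l}$ coefficient of the resolvent symbol is a finite sum $\sum_j c_{l,j}(x)(g_0(x)-z)^{-j}$. If instead that expansion is taken as known (\cite{charles}, Proposition 12, as used in Corollary \ref{grp_matr_coef}), this step is immediate. The regularity of $b_0$ is not really needed for this inclusion; it matters only in later uses.

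As a fallback, one can argue directly: pick $\chi$ smooth with $\chi = 1$ near $\{g_0\le b_0-\varepsilon'\}$ and supported in $\{g_0<b_0-\varepsilon\}$, for some $\varepsilon' > \varepsilon$. Then show $T_k(\chi)\psi_\varepsilon(T_k(g)) = \mco(k^{-\infty})$ using $\psi_\varepsilon(T_k(g)) = \psi_\varepsilon(T_k(g))\,\phi(T_k(g))$ with $\phi$ supported in $[b_0-\varepsilon,\infty)$, and the fact that $T_k(\chi)\phi(T_k(g))$ has symbol $\chi\cdot(\phi\circ g_0)+\dots = 0$ to all orders. Finally, pass to pointwise kernel bounds.
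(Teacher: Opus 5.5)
Your proposal is correct and follows the paper's proof. The paper likewise invokes \cite{charles}, Proposition 12, to write $\psi_\varepsilon(T_k(g(\cdot,k))) = T_k(\Psi_{k,\varepsilon}) + \mco(k^{-\infty})$, uses the locality of the symbol coefficients (citing \cite{charles}, p.~27) to see that every $\psi_{l,\varepsilon}$ vanishes near any $x$ with $g_0(x) < b_0-\varepsilon$, and concludes via (\ref{ms_formula}), while your Helffer--Sj\"ostrand sketch of that locality and your remark that an $\mco(k^{-\infty})$ remainder does not change the microsupport simply make explicit what the paper leaves to the citation.
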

\begin{proof}
There exists (\cite{charles}, Proposition 12) $\Psi_{k,\varepsilon} \in C^\infty(M)$ with an expansion $$\Psi_{k,\varepsilon} \sim \sum_{l=0}^\infty k^{-l} \psi_{l,\varepsilon}$$ in the $C^\infty$ topology, such that $$\psi_\varepsilon(T_k(g(\cdot, k)))= T_k(\Psi_{k,\varepsilon})+ \mco(k^{-\infty}).$$
If $g_0(x) <  b_0- \varepsilon$, then $\psi_\varepsilon$ vanishes in a neighborhood of $g_0(x)$, hence (using \cite{charles}, p27) there exists a neighborhood $V$ of $x$ such that $\psi_{l,\varepsilon}\big{|}_V \equiv 0$ for all $l \ge 0$. Thus, in light of (\ref{ms_formula}), $x \not \in \MS\big(\psi_\varepsilon(T_k(g(\cdot,k))) \big)$.
\end{proof}
\begin{corollary}\label{ms_cor} The microsupport of $\Pi_{k,g,b_0} = \mathbbm 1_{(b_0,\infty)}(T_k(g(\cdot, k)))$ satisfies $$\MS(\Pi_{k,g,b_0}) \subset \{g_0 \ge b_0\}\times \{g_0 \ge b_0\}.$$ The same holds for $\mathbbm 1_{[b_0, \infty)}(T_{k}(g(\cdot,k)))$. \end{corollary}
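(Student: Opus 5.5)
The plan is to sandwich $\Pi_{k,g,b_0}$ between two smoothed spectral operators and apply Lemma \ref{smoothed_proj_ms} to each.

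\emph{Lower bound.} Fix $\varepsilon>0$ and let $\psi_\varepsilon$ be as in Lemma \ref{smoothed_proj_ms}. Since $\psi_\varepsilon(t)=1$ for $t\ge b_0$, we have $\psi_\varepsilon\,\mathbbm 1_{(b_0,\infty)}=\mathbbm 1_{(b_0,\infty)}$. By the functional calculus for the self-adjoint operator $T_k(g(\cdot,k))$, this gives
$$\Pi_{k,g,b_0}=\psi_\varepsilon\big(T_k(g(\cdot,k))\big)\,\Pi_{k,g,b_0}\,\psi_\varepsilon\big(T_k(g(\cdot,k))\big).$$

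\emph{Localization of the kernel.} Take $(x,y)\in M\times M$ with $g_0(x)<b_0-\varepsilon$. Lemma \ref{smoothed_proj_ms} shows the sequence $\psi_\varepsilon(T_k(g(\cdot,k)))$ is negligible at $(x,x)$. We need more: negligibility at $(x,y)$ for \emph{all} $y$, uniformly. Writing $\psi_\varepsilon(T_k(g))=T_k(\Psi_{k,\varepsilon})+\mco(k^{-\infty})$, where all $\psi_{l,\varepsilon}$ vanish near $x$, we use the standard off-diagonal decay of the Toeplitz kernel (the Bergman kernel decays like $e^{-ck\,\dist^2}$). This gives a neighborhood $V$ of $x$ with
$$\sup_{x'\in V,\ y'\in M}\big|K_{\psi_\varepsilon(T_k)}(x',y')\big|=\mco(k^{-\infty}).$$

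\emph{Composition.} The Schwartz kernel of $\Pi_{k,g,b_0}$ equals
$$K_{\Pi}(x',y')=\int\!\!\int K_{\psi}(x',z)\,K_{\Pi}(z,w)\,K_{\psi}(w,y')\,d\mu(z)\,d\mu(w).$$
Because $\Pi_{k,g,b_0}$ is a projection on $\mch_k$, the kernels $K_\Pi$ and $K_\psi$ are bounded in $L^2(M\times M)$ by $\dim\mch_k^{1/2}=\mco(k^{n/2})$. They are also polynomially bounded pointwise: holomorphic sections satisfy $\sup|s|\le Ck^{n/2}\|s\|$. Applying Cauchy--Schwarz, the triple integral is $\mco(k^{-\infty})\cdot\mco(k^N)=\mco(k^{-\infty})$ uniformly for $x'\in V$. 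So $(x,y)\notin\MS(\Pi_{k,g,b_0})$.

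\emph{Symmetry and conclusion.} Self-adjointness gives $K(y,x)=\overline{K(x,y)}$, so the same holds when $g_0(y)<b_0-\varepsilon$. Hence $\MS\subset\{g_0\ge b_0-\varepsilon\}^{\times 2}$ for every $\varepsilon>0$. Intersecting over $\varepsilon$ yields $\{g_0\ge b_0\}\times\{g_0\ge b_0\}$, because the microsupport is closed and $\bigcap_\varepsilon\{g_0\ge b_0-\varepsilon\}=\{g_0\ge b_0\}$. The identical argument applies to $\mathbbm 1_{[b_0,\infty)}$, since $\psi_\varepsilon\mathbbm 1_{[b_0,\infty)}=\mathbbm 1_{[b_0,\infty)}$ as well.

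The main obstacle is the localization step. Lemma \ref{smoothed_proj_ms} only controls the diagonal microsupport, so we must justify that a Toeplitz operator whose symbol vanishes near $x$ has kernel negligible on all of $V\times M$, not just near the diagonal. I would take this from \cite{charles}: there, $\MS(T_k(g(\cdot,k)))\subset\Delta_M$ is shown via uniform Bergman kernel off-diagonal estimates, and combined with the symbol vanishing near $x$ these give the uniform bound.
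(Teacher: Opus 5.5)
Your proof is correct and follows essentially the same route as the paper. The paper writes $\Pi_{k,g,b_0}=A_{k,\varepsilon}\Pi_{k,g,b_0}=\Pi_{k,g,b_0}A_{k,\varepsilon}$ with $A_{k,\varepsilon}=\psi_\varepsilon(T_k(g(\cdot,k)))$ and cites the composition rule for microsupports from \cite{charles} to get $\MS(\Pi_{k,g,b_0})\subset \MS(A_{k,\varepsilon})\times\MS(A_{k,\varepsilon})$, whereas you verify that composition estimate by hand with kernel bounds. Your ``main obstacle'' is in fact immediate: $\MS(A_{k,\varepsilon})\subset\{(z,z) : g_0(z)\ge b_0-\varepsilon\}$ is disjoint from the compact set $\{x\}\times M$, so a finite-cover (tube lemma) argument gives negligibility on $V\times M$ for some neighborhood $V$ of $x$.
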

\begin{proof}
We prove for $\Pi_{k,g,b_0}$, but the proof for $\mathbbm 1_{[b_0, \infty)}(T_k(g(\cdot,k)))$ is identical.
Let $\psi_\varepsilon$ be as in Lemma \ref{smoothed_proj_ms}, and write $A_{k,\varepsilon} = \psi_\varepsilon(T_{k}(g(\cdot,k)))$. Then $$\Pi_{k,g,b_0} =A_{k,\varepsilon} \Pi_{k,g,b_0} = \Pi_{k,g,b_0}A_{k,\varepsilon},$$ hence (using \cite{charles}, p24)
\begin{align*} &\MS(\Pi_{k,g,b_0}) = \MS(A_{k,\varepsilon} \Pi_{k,g,b_0})\subset\\ &\{(x_1,x_3) \ | \ \exists x_2 \in M\ \text{such that } (x_1, x_2) \in \MS(A_{k,\varepsilon})\text{ and }(x_2, x_3) \in \MS(\Pi_{k,g,b_0})\}\\ &\subset \MS(A_{k,\varepsilon}) \times M,\end{align*}
where by a slight abuse of notation, in the second line $\MS(A_{k,\varepsilon})$ is viewed as a subset of $M \times M$, and in the third line it is viewed as a subset of $M$. Repeating this argument,
\begin{equation*} \MS(\Pi_{k,g,b_0}) = \MS(\Pi_{k,g,b_0}A_{k,\varepsilon}) \subset M \times \MS(A_{k,\varepsilon}),\end{equation*}
therefore (noting Lemma \ref{smoothed_proj_ms})
\begin{align*} &\MS(\Pi_{k,g,b_0}) \subset \left(\MS(A_{k,\varepsilon}) \times M\right) \cap \left( M \times \MS(A_{k,\varepsilon}) \right)\\  &= \MS(A_{k,\varepsilon}) \times \MS(A_{k,\varepsilon}) \subset \{g_0 \ge b_0 -\varepsilon\} \times \{g_0 \ge b_0 -\varepsilon\}.\end{align*}
Since $\varepsilon>0$ is arbitrary, we obtain the required.
\end{proof}
\begin{corollary} Similarly, $$\MS(\mathbbm 1_{(-\infty, b_0)}(T_k(g(\cdot, k)))) \subset \{g_0 \le b_0\} \times \{g_0 \le b_0\}.$$
The same holds for $\mathbbm 1_{(-\infty, b_0]}(T_k(g(\cdot, k)))$. \end{corollary}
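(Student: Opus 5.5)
The plan is to reduce to Corollary \ref{ms_cor} by applying it to the function $-g(\cdot,k)$ and the regular value $-b_0$ of $-g_0$. Observe that
$$\mathbbm 1_{(-\infty, b_0)}\big(T_k(g(\cdot,k))\big) = \mathbbm 1_{(-b_0, \infty)}\big(-T_k(g(\cdot,k))\big) = \mathbbm 1_{(-b_0,\infty)}\big(T_k(-g(\cdot,k))\big),$$
where we used linearity of $T_k$, so that $-T_k(g(\cdot,k)) = T_k(-g(\cdot,k))$. The sequence $-g(\cdot,k)$ admits the expansion $\sum_l (-g_l) k^{-l}$ in the $C^\infty$ topology. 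Moreover, $-b_0 \in (-g_0)(M)$ is a regular value of $-g_0$, since $d(-g_0) = -dg_0$.

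Corollary \ref{ms_cor}, applied with $(-g, -b_0)$, then gives
$$\MS \subset \{-g_0 \ge -b_0\}\times\{-g_0 \ge -b_0\} = \{g_0 \le b_0\}\times\{g_0\le b_0\}.$$

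For the closed version the argument is identical. We have $\mathbbm 1_{(-\infty,b_0]}(T_k(g)) = \mathbbm 1_{[-b_0,\infty)}(T_k(-g))$, and the second clause of Corollary \ref{ms_cor} covers half-closed intervals $[b_0,\infty)$.

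There is no real obstacle. The only point to check is that the hypotheses of Lemma \ref{smoothed_proj_ms} and Corollary \ref{ms_cor} are invariant under the sign flip, and they are: the functional calculus identity holds exactly, not just asymptotically, and the microsupport depends only on the operator.

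Alternatively, one could redo the proof of Corollary \ref{ms_cor} directly. This would use a smooth cutoff $\psi_\varepsilon$ equal to $1$ on $(-\infty,b_0]$ and $0$ on $[b_0+\varepsilon,\infty)$, together with the analogue of Lemma \ref{smoothed_proj_ms}, which yields $\MS \subset \{g_0\le b_0+\varepsilon\}$. One would then sandwich $A_{k,\varepsilon}\Pi A_{k,\varepsilon}$ and let $\varepsilon\to0$. The reduction via the sign flip is shorter.
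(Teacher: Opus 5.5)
Your proposal is correct and matches the paper's proof. The paper also applies Corollary \ref{ms_cor} to $-g(\cdot,k)$ and $-b_0$, via the identity $\mathbbm 1_{(-b_0,\infty)}(-T_k(g(\cdot,k))) = \mathbbm 1_{(-\infty,b_0)}(T_k(g(\cdot,k)))$, which it writes somewhat indirectly as $\Pi_k - \mathbbm 1_{[b_0,\infty)}(T_k(g(\cdot,k)))$; your explicit handling of the closed endpoint through $\mathbbm 1_{[-b_0,\infty)}(T_k(-g(\cdot,k)))$ covers the case the paper leaves implicit.
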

\begin{proof}
Replacing $g(\cdot,k)$ with $-g(\cdot,k)$ and $b_0$ with $-b_0$, if $(-g_0)(x) < -b_0$ or $(-g_0)(y) < -b_0$ then there exists a neighborhood $\mcn \subset M \times M$ of $(x,y)$ such that
\begin{equation*} \sup_{\mcn}\left|\mathbbm 1_{(-b_0, \infty)}(-T_{k}(g(\cdot,k)))\right| = \mco(k^{-\infty}),\end{equation*}
where
\begin{gather*} \mathbbm 1_{(-b_0, \infty)}(-T_{k}(g(\cdot,k))) = \Pi_k - \mathbbm 1_{(-\infty, -b_0]}(-T_{k}(g(\cdot,k))) \\= \Pi_k - \mathbbm 1_{[b_0, \infty)}(T_{k}(g(\cdot,k))),\end{gather*}
that is, \begin{multline*}\sup_{\mcn} \left| \Pi_k - \mathbbm 1_{[b_0, \infty)}(T_{k}(g(\cdot,k)))\right|= \sup_{\mcn}\left|\mathbbm 1_{(-\infty, b_0)}(T_k(g(\cdot, k)))\right| = \mco(k^{-\infty}).\end{multline*}\end{proof}
The microsupport of a Toeplitz operator associated with any $h \in L^\infty(M)$ is a subset of the diagonal $\Delta_M \subset M \times M$. More precisely,
\begin{lemma}\label{L_infty_MS} Let $h \in L^\infty(M)$. Then\footnote{Recall that $\esssupp(h) = M \setminus \bigcup\{V \subset M \ | \ V\text{ is open and }h = 0 \text{ a.e. in }V\}$.} $$\MS(T_k(h)) \subset  \{(x,x) \in \Delta_M \ | \ x \in \esssupp(h)\}.$$ \end{lemma}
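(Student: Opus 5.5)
We work directly with the Schwartz kernel of $T_k(h) = \Pi_k \mcm_h \Pi_k$, namely
\begin{equation*} K_k(x,y) = \int_M \Pi_k(x,z)\, h(z)\, \Pi_k(z,y)\, d\mu(z), \end{equation*}
and combine two standard facts about the Bergman kernel (\cite{charles}): the bound $|\Pi_k(x,z)| = \mco(k^n)$ holds uniformly on $M\times M$, and for every $\delta>0$ we have $\sup_{\dist(x,z)\ge\delta}|\Pi_k(x,z)| = \mco(k^{-\infty})$. The sequence of kernels is admissible, since $\Vert h\Vert_\infty<\infty$ gives $|K_k| = \mco(k^{2n})$ pointwise. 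We must show that $K_k = \mco(k^{-\infty})$ uniformly near every point $(x_0,y_0)$ lying outside the set $\{(x,x) \ | \ x\in\esssupp(h)\}$.

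\emph{Off-diagonal case.} Suppose $x_0\ne y_0$ and set $d=\dist(x_0,y_0)$. Take $\mcn = B(x_0,d/4)\times B(y_0,d/4)$. For $(x,y)\in\mcn$ and any $z\in M$, at least one of $\dist(x,z)$, $\dist(z,y)$ is $\ge d/4$, by the triangle inequality. In the integrand, one factor is therefore $\mco(k^{-\infty})$ and the other is $\mco(k^n)$, while $h$ is bounded and $M$ has finite measure. This gives $\sup_{\mcn}|K_k| = \mco(k^{-\infty})$.

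\emph{Diagonal case.} Suppose $x_0=y_0\notin\esssupp(h)$. Choose an open $V\ni x_0$ with $h=0$ a.e. on $V$, and $r>0$ with $B(x_0,2r)\subset V$. Take $\mcn = B(x_0,r)\times B(x_0,r)$ and split the integral over $z$. On $V$ the integrand vanishes almost everywhere, so that part contributes nothing. On $M\setminus V$ we have $\dist(x,z)\ge r$ for all $x\in B(x_0,r)$, so the factor $\Pi_k(x,z)$ is $\mco(k^{-\infty})$ uniformly. As before this gives $\sup_{\mcn}|K_k|=\mco(k^{-\infty})$.

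\emph{Main obstacle.} The only nontrivial input is the uniform off-diagonal decay $\mco(k^{-\infty})$ of the Bergman kernel, together with its uniform $\mco(k^n)$ bound. Both are standard consequences of the description of $\Pi_k$ as a Fourier integral operator with complex phase (\cite{charles}; alternatively, the exponential off-diagonal decay of the Bergman kernel). With these in hand, both cases reduce to the elementary estimate above, and there is no uniformity issue because $M$ is compact. An alternative route would be to write $T_k(h)=\Pi_k\mcm_h\Pi_k$ and apply the composition rule for microsupports used in the proof of Corollary \ref{ms_cor}. This is less direct, however, since $\mcm_h$ does not preserve $\mch_k$, so the kernel computation is preferable.
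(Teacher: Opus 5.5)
Your proposal is correct and follows essentially the same route as the paper: write the kernel of $T_k(h)$ as $\int_M h(z)\Pi_k(x,z)\Pi_k(z,y)\,d\mu(z)$, use the uniform off-diagonal $\mco(k^{-\infty})$ decay and the $\mco(k^n)$ bound of the Bergman kernel, and treat the off-diagonal case and the case $x_0=y_0\notin\esssupp(h)$ separately. Your triangle-inequality split in the off-diagonal case is a slightly tidier version of the paper's split of $M$ into balls around $x_0$ and $y_0$ and their complement.
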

\begin{proof}
Recall that $\mu = \frac{|\omega^{\wedge n}|}{n!}$ is the Liouville measure. The Schwartz kernel $\mck_{k,h}$ of $T_k(h)$ is given by $$\mck_{k,h}(x,y) = \int_M h(z) \mck_k(x,z)\cdot \mck_k(z, y) d\mu(z),$$
where $\mck_k$ is the Bergman kernel and the dot stands for contraction with respect to the fiberwise Hermitian metric of $L^{\otimes k}$ (\cite{lefloch}, Proposition 6.3.5). First, the fact that $\MS(T_k(h)) \subset \Delta_M$ may be seen as follows. If $x_0 \ne y_0$, we fix balls $B_{x_0, \delta}$ and $B_{y_0, \delta}$ centered at $x_0$ and $y_0$ (respectively) with sufficiently small radius $\delta > 0$. Then for every $(x,y) \in B_{x_0, \frac 1 2 \delta} \times B_{y_0, \frac 1 2 \delta}$, using the fact that $\MS(\Pi_k) = \Delta_M$, \begin{multline*} |\mck_{k,h}(x,y)|_{(x,y)} \le \int_{B_{x_0, \delta}} |h(z)||\mck_k(x,z)|_{(x,z)} |\mck_k(z,y)|_{(z,y)}d\mu(z) \\ + \int_{B_{y_0, \delta}} |h(z)| |\mck_k(x,z)|_{(x,z)} |\mck_k(z,y)|_{(z,y)} d\mu(z) \\+ \int_{M \setminus (B_{x_0, \delta} \cup B_{y_0, \delta})} |h(z)| |\mck_k(x,z)|_{(x,z)} |\mck_k(z,y)|_{(z,y)} d\mu(z) = \mco(k^{-\infty}),\end{multline*}
where the $\mco(k^{-\infty})$ estimate is uniform over $(x,y) \in B_{x_0,\frac 1 2 \delta} \times B_{y_0, \frac 1 2 \delta}$. That is, $$\sup_{B_{x_0, \frac 1 2 \delta} \times B_{y_0, \frac 1 2 \delta}} |\mck_{k,h}(x,y)| = \mco(k^{-\infty}),$$
therefore $(x_0, y_0) \not \in \MS(T_k(h))$.

Similarly, if $x_0 \not \in \esssupp(h)$ then we fix a small ball $B_{x_0, \delta}$ centered at $x_0$ such that $h = 0$ almost everywhere in $B_{x_0, \delta}$. Then for any $x,y \in B_{x_0,\frac 1 2 \delta}$, $$|\mck_{k,h}(x,y)|_{(x,y)} = \Big{|} \int_{M \setminus B_{x_0, \delta}} h(z) \mck_k(x,z) \cdot \mck_k(z, y)d\mu(z) \Big{|} = \mco(k^{-\infty}),$$
and the $\mco(k^{-\infty})$ is uniform in $x,y \in B_{x_0, \frac 1 2 \delta}$, therefore $(x_0,x_0) \not \in \MS(T_k(h))$. 
\end{proof}
In particular, Lemma \ref{L_infty_MS} allows us to conclude the following.
\begin{corollary}\label{basic_MS_lemma} Let $\delta > 0$ be sufficiently small. Then
\begin{gather*} \Vert T_k(\mathbbm 1_{\{g_0 \ge b_0+ \delta\}}) \Pi_{k, g,b_0} - T_k(\mathbbm 1_{\{g_0 \ge b_0+ \delta\}}) \Vert_{\op} = \mco(k^{-\infty}),\\ \Vert T_k(\mathbbm 1_{\{g_0 \le b_0-\delta\}}) \Pi_{k,g, b_0} \Vert_{\op} = \mco(k^{-\infty}).\end{gather*} \end{corollary}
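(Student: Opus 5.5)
Both estimates will follow from microsupport calculus. We combine Lemma \ref{L_infty_MS} for the $L^\infty$ Toeplitz operators with Corollary \ref{ms_cor} and its companion corollary for the spectral projections. Throughout, $\delta>0$ is small enough that $b_0\pm\delta$ are regular values of $g_0$ and the sets below are nonempty. This only matters so that the indicator functions are nontrivial; the argument itself does not use it.

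\emph{Second estimate.} Set $h=\mathbbm 1_{\{g_0\le b_0-\delta\}}$. By Lemma \ref{L_infty_MS}, $\MS(T_k(h))\subset\{(x,x)\mid g_0(x)\le b_0-\delta\}$. By Corollary \ref{ms_cor}, $\MS(\Pi_{k,g,b_0})\subset\{g_0\ge b_0\}\times\{g_0\ge b_0\}$. We apply the composition rule for microsupports (\cite{charles}, p24), exactly as in the proof of Corollary \ref{ms_cor}. A point of $\MS(T_k(h)\Pi_{k,g,b_0})$ would require some $x_2$ with $g_0(x_2)\le b_0-\delta$ and $g_0(x_2)\ge b_0$, which is impossible. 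Hence the microsupport of the product is empty.

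It remains to pass from an empty microsupport to an $\mco(k^{-\infty})$ operator norm. Emptiness means every point of $M\times M$ has a neighborhood on which the Schwartz kernel is $\mco(k^{-\infty})$ in sup norm. By compactness of $M\times M$, the kernel is uniformly $\mco(k^{-\infty})$. The operator norm is bounded by the $L^2$ norm of the kernel, which is bounded by $\mu(M)$ times the sup norm. This gives the bound.

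To justify the composition rule, we need the kernels to be admissible. Both operators have norm at most $1$, and their kernels are $\Pi_k(\cdot)\Pi_k$-type kernels bounded by $\mco(k^{n})$, so this holds. Alternatively, we can argue directly rather than cite the composition rule. Write the kernel of the product as $\int_M \mck_{k,h}(x,z)\cdot K_{\Pi}(z,y)\,d\mu(z)$ and split $z$ according to whether $g_0(z)<b_0-\delta/2$. On one region the first factor is negligible; on the other, the second factor is negligible (after localizing via compactness). The $\mco(k^{n})$ bounds on the non-negligible factor keep the product negligible.

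\emph{First estimate.} Set $h'=\mathbbm 1_{\{g_0\ge b_0+\delta\}}$ and write
$$T_k(h')\Pi_{k,g,b_0}-T_k(h') = -T_k(h')\bigl(\Pi_k-\Pi_{k,g,b_0}\bigr)=-T_k(h')\,\mathbbm 1_{(-\infty,b_0]}(T_k(g(\cdot,k))).$$
By the corollary following Corollary \ref{ms_cor}, $\MS(\mathbbm 1_{(-\infty,b_0]}(T_k(g(\cdot,k))))\subset\{g_0\le b_0\}\times\{g_0\le b_0\}$. Lemma \ref{L_infty_MS} places $\MS(T_k(h'))$ over $\{g_0\ge b_0+\delta\}$. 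The same composition argument then shows the product has empty microsupport, and the compactness argument above yields the $\mco(k^{-\infty})$ operator norm bound.

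The main obstacle is the rigorous use of the composition rule when one factor is only an $L^\infty$ Toeplitz operator rather than a genuine Toeplitz operator with smooth symbol. I expect this is handled by the direct kernel-splitting argument with uniform estimates, as in the proof of Lemma \ref{L_infty_MS}. The one point needing care there is uniformity of the negligibility estimates. This comes from compactness of the closed sets $\{g_0\le b_0-\delta/2\}$ and $\{g_0\ge b_0-\delta/2\}$ together with the gap between them and the respective microsupports.
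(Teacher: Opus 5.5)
Your proposal is correct and follows essentially the same route as the paper: the microsupport composition rule combined with Lemma \ref{L_infty_MS} and Corollary \ref{ms_cor}, plus, for the first estimate, the identity $\Pi_k-\Pi_{k,g,b_0}=\mathbbm 1_{(-\infty,b_0]}(T_k(g(\cdot,k)))$ together with the companion corollary. Your additional justifications fill in details the paper leaves implicit: uniform negligibility via compactness, the Hilbert--Schmidt bound that converts an empty microsupport into an $\mco(k^{-\infty})$ operator-norm estimate, and the direct kernel-splitting alternative to the composition rule.
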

\begin{proof}
We note that
\begin{equation*} \MS(\Pi_{k, g,b_0}) \subset \{g_0 \ge b_0\} \times \{g_0 \ge b_0\}.\end{equation*}
Hence,
\begin{multline*} \MS(T_k(\mathbbm 1_{\{g_0 < b_0-\delta\}}) \Pi_{k, g,b_0} ) \subset\\ \{(x_1, x_3) \ | \ \exists x_2: \ (x_1, x_2) \in \MS(T_k(\mathbbm 1_{\{g_0 <b_0 -\delta\}})),\ (x_2, x_3) \in \MS(\Pi_{k,g, b_0})\}.\end{multline*}
Assume by contradiction that $ \MS(T_k(\mathbbm 1_{\{g_0 <b_0- \delta\}}) \Pi_{k, g,b_0} )  \ne \emptyset$. Then there exists $x_2\in M$ such that (noting the above lemma)
\begin{multline*} (x_1, x_2) \in \MS(T_k(\mathbbm 1_{\{g_0 <b_0- \delta\}})) \subset\\ \Delta_M \cap \left(\{g_0 \le b_0 -\delta\} \times \{g_0 \le b_0- \delta\}\right) = \{(x, x) \ | \ g_0 (x) \le b_0-\delta\}.\end{multline*}
Thus, $x_1 = x_2$ and $g_0(x_2) \le b_0- \delta$. At the same time,
\begin{equation*} (x_2, x_3) \in \MS(\Pi_{k,g, b_0}) \subset \{g_0 \ge b_0\} \times \{g_0 \ge b_0\},\end{equation*}
therefore $g_0(x_2) \ge b_0$, a contradiction. Similarly,
\begin{multline*} T_k(\mathbbm 1_{\{g_0 > b_0+\delta\}})-T_k(\mathbbm 1_{\{g_0 > b_0+\delta\}}) \Pi_{k, g,b_0}  \\= T_k(\mathbbm 1_{\{g_0 > b_0+\delta\}}) \mathbbm 1_{(-\infty, b_0]}(T_k(g(\cdot,k))) = \mco(k^{-\infty}).\end{multline*}
\end{proof}
\subsection{Lagrangian sections}\label{lagr_sect}
The joint eigensections of $T_k^1, ..., T_k^n$ corresponding to joint eigenvalues near $a_0$ are important examples of so-called Lagrangian sections, as introduced in \cite{charles_bohr}. The symbol calculus of Lagrangian sections is a key ingredient in the proof of Propositions \ref{matr_coef_prop1}, \ref{matr_coef_prop2}. Hence, in what follows we recall some relevant properties and features of Lagrangian sections (we refer the reader to \cite{charles_bohr} for the details). We continue using the notations of Sect. \ref{bohr_sommer_sect}. 

Fix $\delta > 0$ sufficiently small, and denote $\Gamma = \varphi_0\big{(}D_{\frac \delta 2}\big{)}$, where $D_{\frac \delta 2} = \big{(}-\frac \delta 2, \frac \delta 2\big{)}^n$. In what follows, we identify $U_\delta = f_0^{-1}(D_{\delta,0})$ with $D_{\delta,0} \times \TT^n$ as smooth manifolds, so that $f_0^{-1}(a) = \Lambda_a$ is identified with $\{a\} \times \TT^n$. A Lagrangian section $$(s_{p}, a_p, k_p) \in \mch_{k_p} \times \Gamma \times \NN,\ p \ge p_0,$$ is a sequence of triplets, with $s_{p}\in \mch_{k_p}$ an admissible sequence\footnote{We always assume that $\lim_{p \to \infty} k_p = \infty$.} associated with $\Lambda_{a_p} = f_0^{-1}(a_p)$. This means that over sets of the form $D_{\delta, 0} \times V$, where $V \subset \TT^n$ is open and contractible, $s_{p}$ may be represented in a certain prescribed way (\cite{charles_bohr}, eq. (25)). The symbol of the Lagrangian section $(s_{p}, a_p, k_p)$ is a section of a flat $\CC[[\hbar]]$-bundle $K \to \Gamma \times \TT^n$ of rank $1$, i.e., a map $u : \Gamma \times \TT^n \to K$ which is locally of the form $$u(a, \theta) = (a, \theta, u_V(a, \theta, \hbar)),\ u_V \in C^\infty(\Gamma \times V)[[\hbar]].$$
The bundle $K \to \Gamma \times \TT^n$ is equipped with a formal Hermitian form $$\langle \cdot, \cdot \rangle_{K_x} : K_x \times K_x \to \CC[[\hbar]],\ x \in \Gamma \times \TT^n,$$
so if $u,v$ are smooth sections of $K$, then $$\langle u, v \rangle_K \in C^\infty(\Gamma \times \TT^n)[[\hbar]].$$
The inner product of Lagrangian sections and the formal Hermitian product of their symbols are related, as follows.
\begin{proposition}[\cite{charles_bohr}, Proposition 2.6, and \cite{charles_bohr}, eq. (29)]\label{inner_prod_lagr} Let $(s_{p}, a_p, k_p)$ and $(s'_{p}, a_p, k_p)$ be Lagrangian sections with the same sequences $(a_p, k_p) \in \Gamma \times \NN$, and with symbols $u,u' : \Gamma \times \TT^n \to K$. Denote $$\langle u, u' \rangle_K = \langle u, u' \rangle_{K,0} + \mco(\hbar),\ \langle u, u' \rangle_{K,0} \in C^\infty(\Gamma \times \TT^n).$$ Then $$\langle s_{p}, s'_{p} \rangle = \int_{\Lambda_{a_p}} \langle u, u' \rangle_{K,0}(a_p, \theta)d \mu_{\Lambda_{a_p}}(\theta) + \mco(k_p^{-1}).$$
Here, $\mu_{\Lambda_a}$ is the measure on $\Lambda_a$ induced from the Riemannian structure of $M$.\end{proposition}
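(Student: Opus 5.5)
The plan is to reduce the global inner product to a finite sum of local Laplace-type integrals concentrated near $\Lambda_{a_p}$. We then evaluate each local integral to leading order by the Laplace method in the directions normal to the torus.

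First, cover $\TT^n$ by finitely many open contractible sets $V_1,\dots,V_N$ and fix a partition of unity $\chi_1,\dots,\chi_N$ on $\TT^n$ subordinate to this cover. Pull it back to $D_{\delta,0}\times\TT^n$ and complete it with a cutoff supported away from $\Lambda_{a_p}$. Since Lagrangian sections are admissible and microsupported on $\Lambda_{a_p}$, the pointwise product $|s_p|\,|s'_p|$ is $\mco(k_p^{-\infty})$ uniformly outside any fixed neighborhood of $\bigcup_{a\in\overline\Gamma}\Lambda_a$ (uniformly as $a_p$ ranges over the relatively compact set $\Gamma$). So up to $\mco(k_p^{-\infty})$, $\langle s_p,s'_p\rangle=\sum_j\int \chi_j\,(s_p\cdot\overline{s'_p})\,d\mu$, with the $j$-th integral taken over $D_{\delta,0}\times V_j$.

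On each $D_{\delta,0}\times V_j$ we use the prescribed local form of a Lagrangian section from \cite{charles_bohr}, eq. (25):
\[
s_p=\Big(\frac{k_p}{2\pi}\Big)^{n/4}F_{a_p}^{k_p}\,\tilde u_{V_j}(\cdot,k_p)+\mco(k_p^{-\infty}).
\]
Here $F_a$ is a section of $L$ that is flat along $\Lambda_a$, has norm $1$ on $\Lambda_a$, and satisfies $|F_a|=e^{-\phi_a}$, where $\phi_a\ge0$ vanishes exactly to second order on $\Lambda_a$ with transversally nondegenerate Hessian. The coefficient $\tilde u_{V_j}$ is an extension of $u_{V_j}$ (tensored with a half-form factor). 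The same form holds for $s'_p$ with the same $F_{a_p}$, because the two sections share $(a_p,k_p)$. Hence the integrand equals
\[
\Big(\frac{k_p}{2\pi}\Big)^{n/2}e^{-2k_p\phi_{a_p}}\,\chi_j\,\tilde u\,\overline{\tilde u'}\,d\mu .
\]
Next, choose coordinates $(t,\theta)$ near $\Lambda_{a_p}$, with $t\in\RR^n$ normal and depending smoothly on $a_p$, and apply the Laplace method in $t$. The Gaussian integral contributes
\[
\Big(\frac{\pi}{k_p}\Big)^{n/2}\det\big(\partial_t^2\phi_{a_p}\big)^{-1/2},
\]
which cancels the prefactor $(k_p/2\pi)^{n/2}$ up to a geometric density. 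The first correction term is $\mco(k_p^{-1})$: terms odd in $t$ integrate to zero, so the $k_p^{-1/2}$ term vanishes, and all estimates are uniform for $a_p$ in $\overline\Gamma$ by smooth dependence on $a$. The $\mco(\hbar)$ terms of the symbols likewise contribute only $\mco(k_p^{-1})$. Summing over $j$ gives $\int_{\Lambda_{a_p}}(\cdots)\,d\theta+\mco(k_p^{-1})$.

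The main obstacle is identifying the leftover density with $\langle u,u'\rangle_{K,0}\,d\mu_{\Lambda_{a_p}}$. The leftover density comes from three sources: the Hessian determinant, the Jacobian of $d\mu$ in the $(t,\theta)$ coordinates, and the half-form factor. I would handle this by recalling that the formal Hermitian structure on $K$ is defined in \cite{charles_bohr} precisely through the half-form bundle restricted to $\Lambda_a$. The check is then a linear algebra computation at a single point: for a Lagrangian subspace of a Hermitian vector space, the Gaussian integral of $|e^{-\phi}|^2$ times the squared norm of the half-form equals the Riemannian volume density of the Lagrangian subspace. This is exactly the normalization in \cite{charles_bohr}, eq. (29), so this is where we would invoke it rather than recompute.
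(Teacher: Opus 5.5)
\textbf{Verdict: the outline is sound, but the step you call the main obstacle is mis-described and should be done directly rather than quoted.}

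The paper gives no argument of its own here: the proposition is quoted from Charles (Proposition 2.6 and eq.~(29)). Your localization-plus-Laplace-method outline is the standard proof behind that result. Its structural steps are sound:
\begin{itemize}
\item cutting down to $D_{\delta,0}\times V_j$ using the decay of $|F_{a_p}|^{k_p}$ away from $\Lambda_{a_p}$;
\item using the common factor $F_{a_p}^{k_p}$ for both sections;
\item the vanishing of the $k_p^{-1/2}$ term by parity;
\item uniformity in $a_p\in\overline\Gamma$.
\end{itemize}

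\textbf{The half-form factor does not exist here.} In this paper $\mch_k=H^0(M,L^{\otimes k})$ has no metaplectic correction. The statement itself shows that $\langle u,u'\rangle_{K,0}$ is a function integrated against the Riemannian measure. So there is no half-form to account for. Moreover, deferring the identification to eq.~(29) leans on the very equation the proposition is quoted from.

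\textbf{The identification can be done directly.}
\begin{enumerate}
\item Let $J$ be the complex structure, $g=\omega(\cdot,J\cdot)$, and $\psi=-\log|F_a|^2=2\phi_a$.
\item Holomorphy of $F_a$ to infinite order along $\Lambda_a$, together with curvature $-i\omega$, gives $i\partial\bar\partial\psi=\omega$ there.
\item Flatness gives $\psi=d\psi=0$ on $\Lambda_a$. Hence $\mathrm{Hess}\,\psi$ vanishes on $T\Lambda_a\times TM$.
\item At a critical point, $i\partial\bar\partial\psi=\omega$ reads $\mathrm{Hess}\,\psi(v,v)+\mathrm{Hess}\,\psi(Jv,Jv)=2g(v,v)$. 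Combined with step 3, $\mathrm{Hess}\,\psi=2g$ on $JT\Lambda_a$.
\item Write the $JT\Lambda_a$-component of $\partial_{t_j}$ as $J\sum_l A_{lj}\partial_{\theta^l}$, and set $G_{il}=g(\partial_{\theta^i},\partial_{\theta^l})$. Then $\partial_t^2\phi_a=A^{\top}GA$ at $t=0$.
\item Since $\omega|_{T\Lambda_a}=0$, the Liouville density at $t=0$ is $|\det(\omega(\partial_{\theta^i},\partial_{t_j}))|\,|d\theta\,dt|=\det G\,|\det A|\,|d\theta\,dt|$.
\item Your Gaussian factor is $(\pi/k)^{n/2}(\det G)^{-1/2}|\det A|^{-1}$.
\end{enumerate}
Combining steps 6 and 7 with the prefactor $(k/2\pi)^{n/2}$, the leftover density is $2^{-n/2}\sqrt{\det G}\,|d\theta|=2^{-n/2}\,d\mu_{\Lambda_a}$. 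This is independent of the choice of transverse coordinates.

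The universal constant $2^{-n/2}$ (with your $(k/2\pi)^{n/4}$ prefactor) is a normalization convention, absorbed into $\langle\cdot,\cdot\rangle_K$ or into the local form of Lagrangian sections. That convention is the only thing you need to take from the source.
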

We also use the following fact about microsupports of Lagrangian sections.
\begin{lemma}\label{lagr_ms} Assume that $D \subset D_{\delta,0}$ is compact, and that $(s_{p}, a_p, k_p)$ is a Lagrangian section such that $a_p \in D$ for every sufficiently large $p$. Then $\MS(s_{p}) \subset f_0^{-1}(D)$. In particular, if $\lim_{p \to \infty} a_p = a_0$, then $\MS(s_{p}) \subset \Lambda_{a_0}$. \end{lemma}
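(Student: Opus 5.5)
The plan is to reduce the claim to a local statement about the representation of $s_p$ near each point of $M$. There are two regimes: points off $U_\delta$ (or off $f_0^{-1}(D)$ inside $U_\delta$), and points on $f_0^{-1}(D)$.

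\textbf{Step 1: the local form.} I will recall the structure of a Lagrangian section from \cite{charles_bohr}, eq. (25). Over each set $D_{\delta,0}\times V$, with $V\subset \TT^n$ open and contractible, one has
$$s_p(x) = \Big(\tfrac{k_p}{2\pi}\Big)^{n/2} F_{a_p}^{k_p}(x)\, u_V(x, a_p, k_p) + \mco(k_p^{-\infty}),$$
where $F_a$ is a section of $L$ with $|F_a(x)|<1$ off $\Lambda_a$ and $|F_a|=1$ on $\Lambda_a$. Its modulus should satisfy a bound of the form $|F_a(x)|\le e^{-c\,\dist(x,\Lambda_a)^2}$, locally uniformly in $a$. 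I also need that, outside $U_\delta$, $s_p$ is $\mco(k_p^{-\infty})$ uniformly. This is part of the definition: $s_p$ is concentrated on a neighborhood of the family of tori, with a remainder that is negligible outside.

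\textbf{Step 2: points outside $f_0^{-1}(D)$.} Take $x_0\notin f_0^{-1}(D)$.
\begin{itemize}
\item If $x_0\notin U_\delta$, negligibility at $x_0$ follows from the uniform $\mco(k^{-\infty})$ bound away from $U_\delta$. If that bound is only given on a smaller neighborhood, I shrink $\delta$.
\item If $x_0\in U_\delta$, then $f_0(x_0)\notin D$. Since $D$ is compact, there is a neighborhood $W$ of $x_0$ with $\overline{f_0(W)}\cap D=\emptyset$. It follows that $\dist(W,\Lambda_{a_p})\ge \eta>0$ for all large $p$, uniformly in $p$, because $a_p\in D$ and $a\mapsto\Lambda_a$ is continuous.
\end{itemize}
In the second case, on $W$ I get $|F_{a_p}|^{k_p}\le e^{-c\eta^2 k_p}$. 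The amplitude $u_V$ is bounded, being a smooth symbol evaluated at $a_p$ in a compact set, and the prefactor is polynomial in $k_p$. Hence $\sup_W|s_p|=\mco(k_p^{-\infty})$, so $x_0\notin\MS(s_p)$.

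\textbf{Step 3: the limiting case.} If $a_p\to a_0$, I apply the result with $D=\overline{B(a_0,r)}$ for every small $r>0$. This gives $\MS(s_p)\subset f_0^{-1}(\overline{B(a_0,r)})$ for each $r$. Intersecting over $r$, and using that the microsupport does not depend on finitely many terms, yields $\MS(s_p)\subset f_0^{-1}(a_0)=\Lambda_{a_0}$.

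\textbf{Main obstacle.} I expect the hardest point to be the uniformity in Step 2: the Gaussian decay of $|F_a|$ away from $\Lambda_a$ and the $\mco(k^{-\infty})$ remainder must hold uniformly for $a$ in the compact set $D$, not just for fixed $a$. I will take this from the construction in \cite{charles_bohr}, where $F_a$ depends smoothly on $a$ and $|F_a|$ attains its maximum $1$ nondegenerately along $\Lambda_a$ in the transverse directions. A Taylor expansion then gives $-\log|F_a(x)|\ge c\,\dist(x,\Lambda_a)^2$ with $c$ uniform on compacts.
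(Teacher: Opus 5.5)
Your argument is correct and is essentially the paper's own: the paper proves the lemma in one line, saying it follows directly from the local representation of Lagrangian sections (\cite{charles_bohr}, eq. (25)), and you unpack exactly that, using decay of $|F_{a}|^{k}$ away from $\Lambda_{a}$ uniformly for $a$ in the compact set $D$, then intersecting over shrinking balls for the limiting case. The paper also sketches an alternative that avoids the uniform decay estimates: for $x\notin f_0^{-1}(D)$ it uses a cutoff $T_{k_p}(\chi_x)$ elliptic at $x$, observes that $T_{k_p}(\chi_x)s_p$ is a Lagrangian section whose symbol vanishes near $D\times\TT^n$ and is therefore negligible, and concludes via \cite{charles}, Proposition 9.
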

\begin{proof}
This follows directly from the local representation of $s_{p}$ (\cite{charles_bohr}, eq. (25)).

Alternatively, if $x \not \in f_0^{-1}(D)$, then we may choose a smooth bump function $\chi_x : M \to [0,1]$ which equals $1$ near $x$, and which is compactly supported in an open ball $B_x$ such that $\overline{B_x} \cap f_0^{-1}(D) = \emptyset$. Then $T_k(\chi_x)$ is a Toeplitz operator which is elliptic at $x$, hence by \cite{charles}, Proposition 9, it suffices to show that $T_{k_p}(\chi_x) s_{p}$ is negligible in order to conclude that $x \not \in \MS(s_{p})$.
Noting \cite{charles_bohr}, Proposition 2.7 and \cite{charles_bohr}, p. 1547, we see that $(T_{k_p}(\chi_x) s_{p}, a_p, k_p)$ is a Lagrangian section whose symbol vanishes in an open neighborhood of $D \times \TT^n \subset \Gamma \times \TT^n$. This implies that $T_{k_p}(\chi_x) s_{p} = \mco(k_p^{-\infty})$, as needed.
\end{proof}
In the remainder of this subsection, we specify certain Lagrangian sections of the form $(\tilde s_{k,m}, a_{k,0}, k)$ which approximate the joint eigensections of $T_k^1, ..., T_k^n$ corresponding to $a_{k,m}\in \sigma_k$, where $m \in \ZZ^n$ is fixed. Then, in Sect. \ref{matr_coef_sect1}, \ref{matr_coef_sect2}, we will use these sections to compute the limits of the matrix coefficients specified in Proposition \ref{matr_coef_prop1}, \ref{matr_coef_prop2}.
\begin{proposition}[\cite{charles_bohr}, Proposition 3.8] Let $s_{k,0}$ be a joint unit eigensection of $T^1_{k}$, ..., $T^n_{k}$ corresponding to the joint eigenvalue $a_{k, 0}\in \sigma_k$.
Then there exists a sequence $c_k \in \CC$ such that
\begin{equation*} s_{k,0} = c_k \tilde s_{k,0} + \mco(k^{-\infty}),\end{equation*}
where $(\tilde s_{k,0}, a_{k,0}, k)$ is a Lagrangian section.\end{proposition}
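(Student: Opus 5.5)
The statement to prove is that a joint unit eigensection $s_{k,0}$ associated with $a_{k,0}$ coincides, up to a scalar factor $c_k$ and an $\mco(k^{-\infty})$ error, with a Lagrangian section $(\tilde s_{k,0}, a_{k,0}, k)$.

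The plan has three steps: build a Lagrangian section that is an approximate joint eigensection, control its norm, and then use the spectral gap of Proposition \ref{spec_properties} to compare it with $s_{k,0}$.

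\textbf{Step 1 (construction).} By Proposition \ref{spec_properties}, $a_{k,0} \to a_0$, and the Bohr--Sommerfeld condition of Theorem \ref{charles_bohr_thm} gives
$$\rho(a_{k,0},k) \in k^{-1}\ZZ^n + \mco(k^{-\infty}).$$
The holonomy condition on $\Lambda_{a_{k,0}}$ is therefore satisfied up to $\mco(k^{-\infty})$. Under this condition, the symbol calculus of \cite{charles_bohr} solves the transport equations order by order in $\hbar$, yielding a globally defined symbol $u$ on $\Gamma\times\TT^n$. I would normalize the leading term $u_0$ so that $\int_{\Lambda_{a}} \langle u_0,u_0\rangle_{K} \, d\mu_{\Lambda_a}=1$.

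A Borel summation then produces a Lagrangian section $(\tilde s_{k,0},a_{k,0},k)$. For $j=1,\dots,n$ it satisfies
$$(T^j_k - a^j_{k,0})\tilde s_{k,0} = \mco(k^{-\infty}),$$
because the action of a Toeplitz operator on a Lagrangian section is computed symbolically (\cite{charles_bohr}, Proposition 2.7). By Proposition \ref{inner_prod_lagr}, $\Vert \tilde s_{k,0}\Vert = 1 + \mco(k^{-1})$.

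\textbf{Step 2 (quasimode to eigenvector).} Let $P_k$ be the joint spectral projection onto the eigenspace of $a_{k,0}$. By Proposition \ref{spec_properties}, $a_{k,0}$ is a simple joint eigenvalue, and every other point of $\sigma_k$ lies at distance at least $c/k$ from it. This holds for points outside $\sigma_{k,0}$ by part 3, and for the points $a_{k,m}$, $m \neq 0$, by the lattice approximation in part 2.

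Expand $\tilde s_{k,0}$ in a joint eigenbasis. The quasimode estimate from Step 1 gives
$$\sum_{b \in \sigma_k} |b-a_{k,0}|^2 \, \Vert P_b \tilde s_{k,0}\Vert^2 = \mco(k^{-\infty}),$$
so the component orthogonal to $P_k\mch_k$ has norm at most $\frac{k}{c}\,\mco(k^{-\infty}) = \mco(k^{-\infty})$. Hence
$$\tilde s_{k,0} = \langle \tilde s_{k,0}, s_{k,0}\rangle s_{k,0} + \mco(k^{-\infty}),$$
and $|\langle \tilde s_{k,0}, s_{k,0}\rangle| = 1+\mco(k^{-1})$. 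Inverting this scalar gives $c_k$.

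\textbf{Main obstacle.} The difficult step is Step 1, namely producing a Lagrangian section that is a joint quasimode to infinite order. This requires the full symbolic calculus of \cite{charles_bohr}: solving transport equations at each order and checking that the subprincipal holonomy corrections are exactly absorbed by the $\rho_l$, $l \ge 0$, in the Bohr--Sommerfeld expansion. It is essentially the content of the cited proposition, and I would follow \cite{charles_bohr} for this step rather than reprove it.
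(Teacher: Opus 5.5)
Your argument is correct, but it takes a different route from the one behind the statement. The paper gives no proof here. It quotes \cite{charles_bohr}, Proposition 3.8, which comes out of Charles's microlocal theory: a joint eigensection is a microlocal joint solution concentrated on the torus, and such solutions are Lagrangian sections up to a scalar.

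You take from \cite{charles_bohr} only the \emph{existence} of an infinite-order Lagrangian quasimode at a Bohr--Sommerfeld point. You then reach the eigensection by linear algebra, using that $a_{k,0}$ is simple and lies at distance at least $c/k$ from the rest of $\sigma_k$. You extract both facts correctly from parts 2 and 3 of Proposition \ref{spec_properties}. This is the same mechanism the paper itself uses in Corollary \ref{approx_cor}, there with $\mco(k^{-2})$ quasimodes and hence only $\mco(k^{-1})$ closeness.

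Your approach is short and quantitative: a quasimode of order $k^{-N}$ lies within $\mco(k^{1-N})$ of the eigenline. But notice where the depth goes. The substance of Proposition 3.8 is not that a quasimode exists, as you suggest in your last paragraph. It is that the eigenline contains nothing but the Lagrangian section. In your proof this is supplied by the multiplicity-one clause of Theorem \ref{charles_bohr_thm}, which is the same uniqueness phenomenon. So your argument is non-circular relative to this paper, which uses that theorem as a black box, but it relocates the hard input rather than removing it. The microlocal route, by contrast, yields simplicity and Proposition 3.8 together and needs no spectral separation.

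One small repair in Step 1. The Bohr--Sommerfeld condition at $a_{k,0}$ holds only modulo $\mco(k^{-\infty})$, so the cleanest way to get a single-valued global symbol is to build the quasimode at an exact Bohr--Sommerfeld point $a'_k = a_{k,0} + \mco(k^{-\infty})$. Such a point exists because $\rho(\cdot,k)$ is a local diffeomorphism, as in the proof of Proposition \ref{spec_properties}. This changes the quasimode estimate for $T^j_k - a^j_{k,0}$ only by $\mco(k^{-\infty})$.
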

We also need the following.
\begin{lemma}\label{c_k_abs} The sequence $c_k$ satisfies $|c_k|^2 = |\det(\nu(0))| + \mco(k^{-1})$. \end{lemma}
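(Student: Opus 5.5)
The natural route is to combine the normalization $\Vert s_{k,0}\Vert = 1$ with Proposition \ref{inner_prod_lagr}, which gives the norm of a Lagrangian section in terms of its symbol. Squaring the relation $s_{k,0} = c_k\tilde s_{k,0} + \mco(k^{-\infty})$ and using $\Vert s_{k,0}\Vert = 1$ yields
$$1 = |c_k|^2 \Vert \tilde s_{k,0}\Vert^2 + \mco(k^{-\infty}),$$
where we also use that $\tilde s_{k,0}$ is admissible and that $|c_k|$ stays bounded once we know $\Vert \tilde s_{k,0}\Vert$ is bounded away from $0$. Proposition \ref{inner_prod_lagr} then gives
$$\Vert \tilde s_{k,0}\Vert^2 = \int_{\Lambda_{a_{k,0}}} \langle u,u\rangle_{K,0}(a_{k,0},\theta)\, d\mu_{\Lambda_{a_{k,0}}}(\theta) + \mco(k^{-1}).$$
So everything reduces to identifying the leading symbol $\langle u,u\rangle_{K,0}$ of the specific Lagrangian section $\tilde s_{k,0}$ furnished by \cite{charles_bohr}, Proposition 3.8.

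Next I would recall how that section is built in \cite{charles_bohr}. Its principal symbol is, up to the flat structure of $K$, a section whose pointwise norm is constant along each torus. In the local representation (\cite{charles_bohr}, eq. (25)), the leading amplitude is fixed so that $|u|^2$ is the density relating the half-form on $\Lambda_a$ to the invariant (angle) measure $d\theta$. Concretely, I expect
$$\langle u,u\rangle_{K,0}(a,\theta)\, d\mu_{\Lambda_a} = J(a)\, d\theta,$$
where $J(a)$ is the Jacobian converting the Riemannian measure $\mu_{\Lambda_a}$ into the pushforward of Lebesgue measure under the action-angle chart. Integrating then gives $\Vert\tilde s_{k,0}\Vert^2 = J(a_{k,0}) + \mco(k^{-1})$.

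The final step is to compute $J$. The Liouville measure factors in action-angle coordinates as $dy\,d\theta$. Since $a = \varphi_0(y)$ with $\varphi_0' = \nu$, we get $dy = |\det\nu(y)|^{-1}da$, so the disintegration of $\mu$ over the level sets $\Lambda_a$ has fibre measure $|\det\nu|^{-1}d\theta$. If the symbol normalization in \cite{charles_bohr} corresponds to this disintegrated Liouville measure, then $\Vert\tilde s_{k,0}\Vert^2 = |\det\nu(\varphi_0^{-1}(a_{k,0}))|^{-1} + \mco(k^{-1})$. Using $a_{k,0} = a_0 + \mco(k^{-1})$ from Proposition \ref{spec_properties}, and the smoothness of $\nu$ at $y=0$, this becomes $|\det\nu(0)|^{-1} + \mco(k^{-1})$. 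Inverting gives $|c_k|^2 = |\det\nu(0)| + \mco(k^{-1})$.

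The main obstacle is the middle step: extracting from \cite{charles_bohr} the exact normalization of the symbol of $\tilde s_{k,0}$, and checking that the factors of $2\pi$ cancel against the normalization of $\mu$. If that bookkeeping proves awkward, I would use a consistency check instead of tracking the normalization directly. Here the unit joint eigensections $s_{k,m}$, $m\in\mci_k$, are orthonormal, and since the spectral density near $a_0$ is $(k/2\pi)^n|\det\nu(0)|^{-1}$ per unit volume, a local Weyl law (trace of $T_k(\chi)$ for a bump $\chi$ near $\Lambda_{a_0}$) forces the normalization above. With this calibration $|c_k|^2$ equals $|\det\nu(0)|$ to leading order.
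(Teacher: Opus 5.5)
Your main route is essentially the paper's: the paper takes $1 = |c_k|^2\big(\int_{\Lambda_{a_{k,0}}}\eta_{a_{k,0}} + \mco(k^{-1})\big) + \mco(k^{-\infty})$ directly from \cite{charles_bohr}, eq.~(36). There $\eta_a$ is the density with $\eta_a(X_{a,1}\wedge\cdots\wedge X_{a,n}) = 1$, and since $X_{a,j} = \nu_j(\varphi_0^{-1}(a))\cdot\partial_\theta$ it equals $|\det\nu(\varphi_0^{-1}(a))|^{-1}|d\theta|$, which is exactly the disintegrated Liouville density you guessed, so the normalization you flag as the obstacle is supplied by that citation. Drop the Weyl-law fallback, however: $|c_k|$ depends only on the normalization convention for the symbol of $\tilde s_{k,0}$, and no spectral asymptotics of the unit eigensections $s_{k,m}$ can detect that convention.
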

\begin{proof}
Denote $X_{a,j} = X_j\big{|}_{\Lambda_{a}}$, where $X_j$ is the Hamiltonian vector field of $f^j_{0}$. Then (\cite{charles_bohr}, eq. (36))
\begin{equation*} 1 = |c_k|^2\Big{(} \int_{\Lambda_{a_{k,0}}} \eta_{a_{k,0}} + \mco(k^{-1})\Big{)} + \mco(k^{-\infty}),\end{equation*}where $\eta_{a} \in |\Omega|(\Lambda_{ a})$ is the density defined by $\eta_{ a}(X_{ a,1}\wedge ... \wedge X_{ a,n}) = 1$. In the coordinates induced by $\Psi$ (specified in (\ref{al_sympl})),
\begin{equation*} X_{ a,j} = \sum_{l=1}^n \nu^l_{j}(\varphi_0^{-1}( a))\partial_{\theta^l},\end{equation*}
therefore $\eta_{ a} =\frac 1 {|\det \nu(\varphi_0^{-1}( a))|} |d\theta|$ and since $ a_{k, 0} =  a_0 + \mco(k^{-1})$,
\begin{equation*} \int_{\Lambda_{ a_{k,0}}} \eta_{ a_{k,0}} = \frac 1 {|\det (\nu(0))|} + \mco(k^{-1}).\end{equation*}
This readily implies the required.
\end{proof}
The rest of the sections $(\tilde s_{k,m}, a_{k,0},k)$ are obtained from $(\tilde s_{k,0}, a_{k,0}, k)$. Let $u_0 : \Gamma \times \TT^n \to K$ denote the symbol of $(\tilde s_{k,0}, a_{k,0}, k)$. Fix $m \in \ZZ^n$. Then $$u_m = e_m u_0,\ e_m(\theta) = e^{2\pi i m \cdot \theta}$$
is also a section of $K \to \Gamma \times \TT^n$, and as shown in \cite{charles_bohr}, there exists a corresponding Lagrangian section \begin{equation}\label{lagr_qm}(\tilde s_{k,m}, a_{k,0}, k)\end{equation} whose symbol is $u_m$. In light of Proposition \ref{inner_prod_lagr}, $$\Vert \tilde s_{k,m} \Vert = \Vert \tilde s_{k,0} \Vert + \mco(k^{-1}).$$
The sections $\tilde s_{k,m}$ are approximate joint eigensections of $T_k^1, ..., T_k^n$, as follows.
\begin{lemma} Fix $m \in \ZZ^n$. Denote $a_{k, m} = (a^1_{k,m}, ..., a^n_{k,m})$. Then
\begin{equation*} T^j_{k} \tilde s_{k,m} = a^j_{k,m} \tilde s_{k,m} + \mco(k^{-2}),\ j=1,...,n.\end{equation*}\end{lemma}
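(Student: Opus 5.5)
We need to show $T^j_k \tilde s_{k,m} = a^j_{k,m}\tilde s_{k,m} + \mco(k^{-2})$. The plan is to use the symbol calculus of Lagrangian sections from \cite{charles_bohr}: the action of a Toeplitz operator on a Lagrangian section is again a Lagrangian section (same $(a_{k,0},k)$), whose symbol is computed to first subleading order by a transport equation along the torus. We then compare the resulting symbol with $a^j_{k,m}u_m$, using Proposition \ref{spec_properties} to control $a^j_{k,m}$.

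\textbf{Step 1 (symbol of $T_k^j\tilde s_{k,m}$).} By \cite{charles_bohr}, Proposition 2.7, if $(s_p,a_p,k_p)$ is Lagrangian with symbol $u$, then $T_{k}(g(\cdot,k)) s_p$ is Lagrangian with symbol
\[
g_0(a,\cdot)\,u + \hbar\big(\tfrac{1}{i}\mcl_{X_{g_0}}u + g_1 u + \ldots\big) + \mco(\hbar^2),
\]
where the $\hbar$-term is a first-order operator: $-i$ times the covariant derivative along the Hamiltonian field, plus a zeroth-order subprincipal term. Applying this to $T^j_k$ and $u_m=e_m u_0$ gives, to first order in $\hbar=k^{-1}$,
\[
\mathrm{symb}(T_k^j\tilde s_{k,m}) = \mathrm{symb}(T_k^j\tilde s_{k,0})\,e_m + \hbar\,\tfrac1i (X_{f^j_0}e_m)\,u_0 + \mco(\hbar^2).
\]
This uses the Leibniz rule; only the derivative term sees $e_m$.

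\textbf{Step 2 (using the eigen-equation for $m=0$).} Since $\tilde s_{k,0}$ is, up to $\mco(k^{-\infty})$ and the constant $c_k$, an exact eigensection with eigenvalue $a^j_{k,0}$, its symbol satisfies $\mathrm{symb}(T_k^j\tilde s_{k,0}) = a^j_{k,0}u_0$ as formal series at the point $a_{k,0}$. Combining this with
\[
X_{f^j_0}e_m = 2\pi i\,(\nu_j(y)\cdot m)\,e_m ,
\]
the symbol becomes $\big(a^j_{k,0} + \tfrac{2\pi}{k}\nu_j(y)\cdot m\big)u_m + \mco(\hbar^2)$, evaluated on $\Lambda_{a_{k,0}}$. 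There $y=\varphi_0^{-1}(a_{k,0})=\mco(k^{-1})$, so $\nu_j(y)=\nu_j(0)+\mco(k^{-1})$, and the bracket is $a^j_{k,0}+\tfrac{2\pi}{k}(\nu(0)m)^j+\mco(k^{-2})$.

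\textbf{Step 3 (conclusion).} By Proposition \ref{spec_properties}, $a^j_{k,m}=a^j_{k,0}+\tfrac{2\pi}{k}(\nu(0)m)^j+\mco(k^{-2})$ for fixed $m$. Hence $T^j_k\tilde s_{k,m} - a^j_{k,m}\tilde s_{k,m}$ is a Lagrangian section whose symbol is $\mco(\hbar^2)$. Proposition \ref{inner_prod_lagr}, applied after factoring out $\hbar^2$, then gives norm $\mco(k^{-2})$.

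\textbf{Main obstacle.} The main obstacle is Step 1–2: correctly importing the transport formula of \cite{charles_bohr}, including the convention that the symbol is taken at the fixed parameter $a_{k,0}$ with $\hbar$-dependence. In particular, the subprincipal and connection terms must cancel identically between $m$ and $0$, leaving only the derivative of $e_m$. I would verify this by writing the local representation (\cite{charles_bohr}, eq. (25)) and checking that multiplication by the $k$-independent function $e_m$ commutes with everything except the derivative term.
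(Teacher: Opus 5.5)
Your proposal is correct and matches the paper's proof essentially step for step. The paper also applies the first-order formula $T^j v = a^j v - i\hbar(\mcl_{X_j}v + F^j v) + \mco(\hbar^2)$ from \cite{charles_bohr} (Theorem 2.8 there, rather than Proposition 2.7) to $u_m = e_m u_0$, and uses $T^j u_0 = a^j u_0$ so that only the term $-i\hbar(X_j e_m)u_0 = 2\pi\hbar\,(m\cdot\nu_j)u_m$ survives; it then replaces $\nu_j(\varphi_0^{-1}(a_{k,0}))$ by $\nu_j(0)$ and invokes Proposition \ref{spec_properties}, as you do.

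One small cleanup in Step 3: first compare $T^j_k\tilde s_{k,m}$ with $\bigl(a^j_{k,0} + \tfrac{2\pi}{k}\, m\cdot\nu_j(\varphi_0^{-1}(a_{k,0}))\bigr)\tilde s_{k,m}$, which is a genuine symbol evaluated at $a_{k,0}$. Only then swap in $a^j_{k,m}$, at cost $\mco(k^{-2})$ since $\Vert\tilde s_{k,m}\Vert = \mco(1)$.
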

\begin{proof}
Recall that $u_0 : \Gamma \times \TT^n \to K$ is the symbol of $(\tilde s_{k,0}, a_{k,0}, k)$, and that $u_m = e_m u_0$ is the symbol of $(\tilde s_{k,m}, a_{k,0}, k)$.

Let $V \subset \TT^n$ be any open contractible set. On $\Gamma \times V$, the symbol $u_m$ of $(\tilde s_{k,m}, a_{k,0}, k)$ is given by $$u_m(a, \theta) = (a, \theta, u_{m,V}(a, \theta, \hbar)),\ u_{m,V} \in C^\infty(\Gamma \times V)[[\hbar]],$$
where (since $u_m = e_m u_0$) $$u_{m,V} = e_m u_{0,V}.$$
The Toeplitz operator $T^j_{k}$ induces a map (\cite{charles_bohr}, Sect. 2)$$T^j:C^\infty(\Gamma \times V)[[\hbar]]\to C^\infty(\Gamma \times V)[[\hbar]],\ j=1,...,n.$$
Specifically (\cite{charles_bohr}, Theorem 2.8, cf. \cite{charles_bohr}, eq. (30)), for $v \in C^\infty(\Gamma \times \TT^n)[[\hbar]]$, \begin{equation}\label{symbol_map}T^j v = a^j v - i \hbar (\mcl_{X_{j}}v + F^j v) + \mco(\hbar^2).\end{equation}
Here, $X_j$ is the Hamiltonian vector field of $f_0^j$, and $F^j \in C^\infty(\Gamma \times V)$ is a certain explicitly specified smooth function.

Now, since $T^j_{k} s_{k,0} = a^j_{k,0} s_{k,0},$ and $s_{k,0} = c_k \tilde s_{k,0} + \mco(k^{-\infty})$, it holds that $T^j u_{0,V} = a^j u_{0,V}$ (see \cite{charles_bohr}, Proposition 3.8 and p. 1554).
Thus, we obtain by direct computation (using eq. (\ref{symbol_map})) that $$T^j u_{m,V} = e_m T^j u_{0,V} + \hbar 2\pi m \cdot \nu_j( \varphi_0^{-1}(a)) u_{m,V} + \mco(\hbar^2).$$
That is, $$T^j u_{m,V} =  \big{(}a^j  + \hbar 2\pi m \cdot \nu_j( \varphi_0^{-1}(a))\big{)} u_{m,V} + \mco(\hbar^2).$$
Since $V \subset \TT^n$ is an arbitrary open contractible set, this implies that $$T_k^j \tilde s_{k,m} = \Big{(} a_{k,0}^j + \frac {2\pi m \cdot \nu_j(\varphi_0^{-1}(a_{k,0}))} k \Big{)} \tilde s_{k,m} + \mco(k^{-2}).$$
Finally, noting that $a_{k,0} = a_0 + \mco(k^{-1})$ and Proposition \ref{spec_properties}, we obtain the required.
\end{proof}
We will consider matrix coefficients with respect to the following specific joint eigensections.
\begin{corollary}\label{approx_cor} Fix $m \in \ZZ^n$. Then for all sufficiently large $k$, there exists a joint unit eigensection $s_{k,m} \in \mch_k$ of $T^1_{k}, ..., T^n_{k}$ such that
\begin{equation*} T^j_{k} s_{k,m} = a^j_{k,m} s_{k,m},\ s_{k,m}= \sqrt{|\det\nu(0)|} \tilde s_{k,m} + \mco(k^{-1}).\end{equation*} \end{corollary}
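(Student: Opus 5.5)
The strategy is to show that the normalized approximate eigensection $\sqrt{|\det\nu(0)|}\,\tilde s_{k,m}$ lies within $\mco(k^{-1})$ of the one-dimensional joint eigenspace of $a_{k,m}$. The eigensection is then obtained by orthogonal projection onto that eigenspace followed by normalization.

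Fix $m\in\ZZ^n$. By Proposition \ref{spec_properties}, for all sufficiently large $k$ we have $m\in\mci_k$, and $a_{k,m}$ is a joint eigenvalue of multiplicity $1$. Let $P_{k,m}$ denote the orthogonal projection onto the corresponding joint eigenspace, which is the line spanned by some unit vector $w_{k,m}$. Set $v_k=\sqrt{|\det\nu(0)|}\,\tilde s_{k,m}$.

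The first step is to normalize $v_k$. By Proposition \ref{inner_prod_lagr}, $\Vert\tilde s_{k,m}\Vert=\Vert\tilde s_{k,0}\Vert+\mco(k^{-1})$. Since $s_{k,0}=c_k\tilde s_{k,0}+\mco(k^{-\infty})$ with $\Vert s_{k,0}\Vert=1$, Lemma \ref{c_k_abs} gives $\Vert\tilde s_{k,0}\Vert^2=|\det\nu(0)|^{-1}+\mco(k^{-1})$. Hence $\Vert v_k\Vert=1+\mco(k^{-1})$.

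The second step is to bound the distance of $v_k$ to the eigenline. Let $A_k=\sum_j (T_k^j-a^j_{k,m})^2$, a nonnegative self-adjoint operator whose kernel is exactly $\operatorname{Ran}P_{k,m}$. By the preceding lemma, $(T^j_k-a^j_{k,m})v_k=\mco(k^{-2})$ for each $j$.

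Any other joint eigenvalue $a\in\sigma_k$ satisfies $|a-a_{k,m}|\ge c/k$. For $a\in\sigma_{k,0}$ this follows from the lattice description $a_{k,m'}=a_{k,0}+\frac{2\pi}{k}\nu(0)m'+\mco(|m'|^2/k^2)$: invertibility of $\nu(0)$ gives separation of order $1/k$ for $|m'|$ small, and for larger $|m'|$ the error is absorbed because $p_k=o(k^{1/2})$ and $m$ is fixed. For $a\notin\sigma_{k,0}$, part 3 of Proposition \ref{spec_properties} gives $\dist(a,\sigma_{k,0})\ge c/k$ directly.

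Diagonalizing the commuting family in a joint eigenbasis, we obtain
$$\Vert (I-P_{k,m})v_k\Vert\le \frac{k}{c}\max_j\Vert (T^j_k-a^j_{k,m})v_k\Vert\cdot\sqrt n=\mco(k^{-1}).$$
Therefore $P_{k,m}v_k=v_k+\mco(k^{-1})$, and in particular $\Vert P_{k,m}v_k\Vert=1+\mco(k^{-1})\neq 0$.

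The final step is to set $s_{k,m}=P_{k,m}v_k/\Vert P_{k,m}v_k\Vert$. This is a joint unit eigensection with eigenvalue $a_{k,m}$, and it satisfies $s_{k,m}-v_k=\mco(k^{-1})$, which is the claim.

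The main obstacle is the spectral gap of order $1/k$ around $a_{k,m}$. The $k^{-2}$ quasimode error beats this gap by exactly one power of $k$, which is what produces the $\mco(k^{-1})$ error. This requires combining part 3 of Proposition \ref{spec_properties} with the separation inside $\sigma_{k,0}$ coming from the lattice approximation and the invertibility of $\nu(0)$.
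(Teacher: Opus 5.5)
Your proposal is correct and follows essentially the same route as the paper: split $\tilde s_{k,m}$ into its component along the one-dimensional eigenline of $a_{k,m}$ plus an orthogonal remainder, bound the remainder by $\mco(k^{-2})\cdot\mco(k)=\mco(k^{-1})$ using the $c/k$ spectral gap from Proposition \ref{spec_properties}, and fix the normalization via Lemma \ref{c_k_abs}. Your projection-and-normalize step is equivalent to the paper's choice of phase for $c'_{k,m}$. Your explicit justification of the gap inside $\sigma_{k,0}$, using invertibility of $\nu(0)$ and $p_k=o(k^{1/2})$, fills in a detail the paper leaves implicit.
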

\begin{proof}
Let $s'_{k,m}$ be a joint unit eigensection of $T^1_{k}, ..., T^n_{k}$ with $$T^j_{k} s'_{k,m} = a^j_{k,m} s'_{k,m},$$ and note that the multiplicity of $a_{k,m} = (a^1_{k,m}, ..., a^n_{k,m})$ is $1$. Write
\begin{equation*} \tilde s_{k,m} = c'_{k,m} s'_{k,m} + s_{k,m}^\perp,\ s'_{k,m} \perp s_{k,m}^\perp.\end{equation*}
Then on one hand by the previous lemma, for $j = 1,...,n$, $$T^j_{k} \tilde s_{k,m} = a^j_{k,m} \tilde s_{k,m} + \mco(k^{-2}) = a^j_{k,m} c'_{k,m} s'_{k,m} + a^j_{k,m} s_{k,m}^{\perp} + \mco(k^{-2}),$$
and on the other hand $$T^j_{k} \tilde s_{k,m} = a^j_{k,m} c'_{k,m} s'_{k,m} + T^j_{k} s_{k,m}^\perp.$$
Thus,
\begin{equation*} \left(T^j_{k} - a^j_{k,m} \right) s_{k,m}^\perp = \mco(k^{-2}),\end{equation*}
hence
\begin{equation*} \Vert (T^j_{k} - a^j_{k,m})s^\perp_{k,m} \Vert^2 = \mco(k^{-4}).\end{equation*}
Now, we note that by Proposition \ref{spec_properties}, there exists some $c' > 0$ such that for every $\lambda \in \sigma_k \setminus \{a_{k,m}\}$, it holds that $|\lambda - a_{k,m}| > \frac{c'}k$. Consequently, writing $s_{k,m}^\perp$ using an orthonormal joint eigenbasis of $T^1_{k}, ..., T^n_{k}$, and summing over $j=1,...,n$, we deduce that
\begin{equation*} \Vert s^\perp_{k,m} \Vert = \mco(k^{-1}),\end{equation*}
so that
\begin{equation*} \tilde s_{k,m} = c'_{k,m} s'_{k,m} + \mco(k^{-1}).\end{equation*}
This readily implies the required, once we define $s_{k,m}$ by $s'_{k,m} = \frac{\bar c'_{k,m}}{|c'_{k,m}|}s_{k,m}$ and note that using Lemma \ref{c_k_abs}, $|c'_{k,m}| = \frac 1 {\sqrt{|\det \nu(0)|}} + \mco(k^{-1})$.
\end{proof}
\begin{corollary}\label{orth_set} Given any $\mci_k$ as in Proposition \ref{spec_properties}, for all sufficiently large $k$ we can choose (by a standard diagonal argument) a set of joint unit eigensections $$\mcb_k = \{s_{k,m} \ | \ m \in \mci_k\}$$
such that for any $N \in \NN$, for all sufficiently large $k$, if $m = (m^1, ..., m^n) \in \ZZ^n$ satisfies $|m^j| \le N$, $j =1,...,n$, then $s_{k,m}$ is the joint unit eigensection obtained via Corollary \ref{approx_cor}.\end{corollary}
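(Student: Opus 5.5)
The task is to choose, for every $k$ large, joint unit eigensections $s_{k,m}$ for all $m\in\mci_k$. These should be orthonormal, and for each fixed box $|m^j|\le N$ they should eventually coincide with the sections produced by Corollary \ref{approx_cor}.

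\emph{Step 1: orthonormality is automatic.} By Proposition \ref{spec_properties}, for $k$ large the joint eigenvalues $a_{k,m}$, $m\in\mci_k$, are pairwise distinct and each has multiplicity $1$. Joint eigensections of the commuting self-adjoint operators $T^1_k,\dots,T^n_k$ with distinct joint eigenvalues are orthogonal. So any choice of unit vectors $s_{k,m}$ in the one-dimensional joint eigenspaces $\ker\bigcap_j(T^j_k-a^j_{k,m})$ gives an orthonormal set. The only freedom is a unimodular phase for each $(k,m)$, and this phase is exactly what Corollary \ref{approx_cor} fixes.

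\emph{Step 2: thresholds.} For each $N\in\NN$, applying Corollary \ref{approx_cor} to the finitely many $m$ with $|m^j|\le N$ gives a threshold $K_N$. For $k\ge K_N$, each such $m$ lies in $\mci_k$ and Corollary \ref{approx_cor} provides a unit joint eigensection $s^{(N)}_{k,m}$ with eigenvalue $a_{k,m}$ and $s^{(N)}_{k,m}=\sqrt{|\det\nu(0)|}\,\tilde s_{k,m}+\mco(k^{-1})$. The inclusion in $\mci_k$ is possible because $p_k\to\infty$.

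These sections do not depend on $N$ except through the threshold: for a given $m$, Corollary \ref{approx_cor} produces a section determined by $k$ and $m$ alone, namely the eigenvector phase-aligned with $\tilde s_{k,m}$. If one prefers not to rely on this, take the section attached to the smallest admissible $N$. Since the approximating sections $\tilde s_{k,m}$ are fixed once and for all, the choice is consistent across $N$. Replacing $K_N$ by $\max(K_1,\dots,K_N,N)$, we may assume $K_N$ is increasing and tends to infinity.

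\emph{Step 3: diagonal definition.} For $k$ with $K_N\le k<K_{N+1}$, set $N(k)=N$. Define $s_{k,m}$ to be the section from Corollary \ref{approx_cor} when $|m^j|\le N(k)$ for all $j$. For the remaining $m\in\mci_k$, let $s_{k,m}$ be an arbitrary unit joint eigensection for $a_{k,m}$.

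Now fix $N$. Since $N(k)\to\infty$, we have $N(k)\ge N$ for all sufficiently large $k$, which is the required property.

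The only point needing care is the consistency in Step 2: the phase normalization for a given $m$ must not change with $N$. This holds because the phase is pinned to $\tilde s_{k,m}$ through $c'_{k,m}$, so the argument is routine.
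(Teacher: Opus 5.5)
Your proposal is correct and is exactly the standard diagonal argument the paper invokes without further detail. Orthonormality comes for free from multiplicity one, and the phase of each $s_{k,m}$ is pinned by $\langle \tilde s_{k,m}, s_{k,m}\rangle>0$, so choosing it per $(k,m)$ past the $m$-dependent threshold gives the required eventual agreement. The one point you assert without justification is that the $a_{k,m}$, $m\in\mci_k$, are pairwise distinct. This follows from the estimate in Proposition \ref{spec_properties}: for $m\neq m'$ and $k$ large,
$$|a_{k,m}-a_{k,m'}|\ge \frac{2\pi}{k}|\nu(0)(m-m')| - \frac{2Cn\,p_k^2}{k^2}>0,$$
using the invertibility of $\nu(0)$ and $p_k=o(k^{1/2})$.
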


Finally, we note the following "continuity" property.
\begin{lemma}\label{L_infty_limsup} Assume that $(s_k, a_{k,0}, k)$ is a Lagrangian section with symbol $v : \Gamma \times \TT^n \to K$. Write $|v|^2 = v_0 + \mco(\hbar)$, where $v_0 \in C^\infty(\Gamma \times \TT^n)$, and set $$M_{v_0, a_0} = \sup_{\mathbb T^n} |v_0(a_0, \cdot)|.$$
Fix $h \in L^\infty(M)$, and let $K_{a_0} = \esssupp(h) \cap \Lambda_{a_0}$. Then $$\limsup_k \Vert h s_k \Vert  \le \Vert h \Vert_\infty \sqrt{ M_{v_0, a_0}\mu_{\Lambda_{a_0}}(K_{a_0})}.$$ \end{lemma}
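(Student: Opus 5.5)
The plan is to localize $h$ near $\Lambda_{a_0}$ with a smooth cutoff, which lets us use the microsupport of $s_k$. Then we bound the remaining $L^2$ mass of $s_k$ over a thin neighborhood of $K_{a_0}$ using the symbol calculus. Note that $\Vert h s_k\Vert$ is the $L^2(M,L^{\otimes k})$ norm of the multiplication, not of the Toeplitz operator. First, we have the pointwise bound
$$\Vert h s_k\Vert^2 \le \Vert h\Vert_\infty^2 \int_{\esssupp(h)} |s_k|^2\, d\mu.$$
So it suffices to show that for every $\varepsilon>0$ there is a closed neighborhood $W_\varepsilon$ of $K_{a_0}$ in $M$ with
$$\limsup_k \int_{W_\varepsilon}|s_k|^2 d\mu \le M_{v_0,a_0}\,\mu_{\Lambda_{a_0}}(K_{a_0})+\varepsilon,$$
and that the part of $\esssupp(h)$ outside $W_\varepsilon$ contributes $o(1)$.

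The first step is to split $\esssupp(h)$ into two parts. Choose an open set $O\supset K_{a_0}$ in $\Lambda_{a_0}$ with $\mu_{\Lambda_{a_0}}(\overline O)\le \mu_{\Lambda_{a_0}}(K_{a_0})+\varepsilon$; this is possible by outer regularity, since $K_{a_0}$ is compact. Let $W=\overline{\Psi^{-1}(D'\times \pi(O))}$, a tubular set around $\overline O$. Since $\esssupp(h)\cap\Lambda_{a_0}=K_{a_0}\subset O$ and $\esssupp(h)$ is compact, the set $\esssupp(h)\setminus \mathrm{int}(W)$ is compact and disjoint from $\Lambda_{a_0}$, provided we shrink the neighborhood $D'$ of $a_0$ as needed. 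The inclusion $\esssupp(h)\setminus\mathrm{int}W \subset M\setminus \Lambda_{a_0}$ holds by construction. Because $a_{k,0}\to a_0$, Lemma \ref{lagr_ms} gives $\MS(s_k)\subset\Lambda_{a_0}$. Hence $\sup|s_k|=\mco(k^{-\infty})$ on that compact set, after covering it by finitely many negligibility neighborhoods. Its contribution is therefore $\mco(k^{-\infty})$.

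The second step, which I expect to be the main obstacle, is to estimate $\int_W |s_k|^2 d\mu$. I would take a smooth cutoff $\chi:M\to[0,1]$ with $\chi=1$ on $W$, supported in a slightly larger tube $\Psi^{-1}(D''\times O')$ with $\mu_{\Lambda_{a_0}}(O')\le \mu(K_{a_0})+2\varepsilon$. Then
$$\int_W|s_k|^2\le \langle \chi s_k,s_k\rangle=\langle T_k(\chi)s_k,s_k\rangle.$$
By \cite{charles_bohr}, Proposition 2.7 (as used in Lemma \ref{lagr_ms}), $T_k(\chi)s_k$ is a Lagrangian section with symbol $\chi|_{\Lambda}\,v+\mco(\hbar)$. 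Proposition \ref{inner_prod_lagr} then gives
$$\langle T_k(\chi)s_k,s_k\rangle=\int_{\Lambda_{a_{k,0}}}\chi\, v_0(a_{k,0},\cdot)\,d\mu_{\Lambda_{a_{k,0}}}+\mco(k^{-1}).$$
As $k\to\infty$, this tends to $\int_{\Lambda_{a_0}}\chi v_0(a_0,\cdot)\,d\mu_{\Lambda_{a_0}}$, by continuity of $v_0$ and of the measures in $a$. That limit is at most $M_{v_0,a_0}(\mu_{\Lambda_{a_0}}(K_{a_0})+2\varepsilon)$. The delicate points here are checking that the principal symbol of $T_k(\chi)s_k$ is indeed $\chi v$ restricted to the torus, and that the measure normalization matches $\mu_{\Lambda}$ in Proposition \ref{inner_prod_lagr}. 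If that symbol statement is unavailable, a fallback is to write $\chi s_k$ locally via the representation (\cite{charles_bohr}, eq. (25)) and apply stationary phase transversally to $\Lambda$.

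Combining the two steps gives
$$\limsup_k\Vert hs_k\Vert^2\le\Vert h\Vert_\infty^2 M_{v_0,a_0}\big(\mu_{\Lambda_{a_0}}(K_{a_0})+2\varepsilon\big).$$
Letting $\varepsilon\to0$ and taking square roots yields the claim.
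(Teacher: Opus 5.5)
Your proof is correct and follows essentially the same route as the paper. Both use a smooth cutoff equal to $1$ near $K_{a_0}$, evaluate its contribution $\langle T_k(\chi)s_k,s_k\rangle$ via \cite{charles_bohr}, Proposition 2.7 and Proposition \ref{inner_prod_lagr}, and kill the remainder using $\MS(s_k)\subset\Lambda_{a_0}$ (Lemma \ref{lagr_ms}). The difference is minor: you split the integration domain and use pointwise negligibility of $s_k$ on the compact set $\esssupp(h)\setminus\mathrm{int}\,W$, whereas the paper bounds $\Vert hs_k\Vert\le\Vert\tau hs_k\Vert+\Vert(1-\tau)hs_k\Vert$ and shows $T_k((1-\tau)^2|h|^2)s_k$ is negligible via Lemma \ref{L_infty_MS}. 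Your version is slightly more elementary, and no shrinking of $D'$ is needed, since $K_{a_0}\subset O\subset\mathrm{int}\,W$ already.
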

\begin{proof}
Fix $\varepsilon > 0$. Fix $\delta > 0$, and let $\tau : M \to [0,1]$ be a smooth function such that $\tau = 1$ in an open set $B \subset M$ such that $K_{a_0} \subset B$, and $$\int_{\Lambda_{a_0}} \tau^2|_{\Lambda_{a_0}} d\mu_{\Lambda_{a_0}} < \mu_{\Lambda_{a_0}} (K_{a_0}) + \delta.$$ Let $\mu = \frac{|\omega^{\wedge n}|}{n!}$ be the Liouville measure. Then for any  $h_0 \in L^\infty(M)$ and $s \in \mch_k$, $$\Vert h_0 s \Vert^2 = \int_M |h_0|^2 |s|^2 d\mu = \langle |h_0|^2 s, s \rangle = \langle T_k(|h_0|^2) s, s \rangle.$$ In our present case, $$\Vert h s_k \Vert \le \Vert \tau h s_k \Vert + \Vert(1-\tau) h s_k\Vert =  \Vert \tau h s_k \Vert + \sqrt{\langle T_k( (1-\tau)^2 |h |^2)s_k, s_k \rangle}.$$
We claim that $\MS(T_k((1-\tau)^2|h|^2)s_k) = \emptyset$. Indeed, recall that $\MS(s_k) \subset \Lambda_{a_0}$ by Lemma \ref{lagr_ms}.  Also, \begin{multline*}\MS(T_k((1-\tau)^2 |h|^2) s_k) \subset\\ \{x \in M \ | \ \exists y \in \MS(s_k)\text{ with }(x,y) \in \MS(T_k((1-\tau)^2 |h|^2))\}.\end{multline*}
If $y \in \MS(s_k)$ and $(x,y) \in \MS(T_k((1-\tau)^2 |h|^2))$, then noting Lemma \ref{L_infty_MS}, $$x = y,\ \text{and } y \in \Lambda_{a_0},\ \text{and } y \in \esssupp((1-\tau)^2 |h|^2).$$ However (by the general properties of the notion of essential support) \begin{multline*}\esssupp((1-\tau)^2 |h|^2) \subset \esssupp((1-\tau)^2) \cap \esssupp(|h|^2)\\ = \esssupp(1-\tau)\cap \esssupp(h),\end{multline*}
so we obtain that $$y \in \esssupp(1-\tau) \cap \esssupp(h)  \cap \Lambda_{a_0} = \emptyset,$$ which is impossible. Thus, $T_k((1-\tau)^2 |h|^2) s_k \in \mch_k$ is a negligible sequence, therefore $$\Vert T_k((1-\tau)^2 |h|^2) s_k \Vert = \mco(k^{-\infty}).$$

Next, $$\Vert \tau h s_k \Vert^2 = \int_M |h|^2 \tau^2 |s_k|^2 d\mu \le \Vert h \Vert_\infty^2 \Vert \tau s_k \Vert^2 = \Vert h \Vert_\infty^2 \langle T_k(\tau^2) s_k, s_k \rangle.$$
Since $(T_k(\tau^2) s_k, a_{k,0}, k)$ is a Lagrangian section, we apply Proposition \ref{inner_prod_lagr} to obtain \begin{multline*}\langle T_k(\tau^2) s_k, s_k \rangle = \int_{\Lambda_{a_{k,0}}} \tau^2(a_{k,0}, \theta) v_0(a_{k,0}, \theta) d\mu_{\Lambda_{a_{k,0}}}(\theta) + \mco(k^{-1})\\ = \int_{\Lambda_{a_0}} \tau^2(a_0, \theta) v_0(a_0, \theta) d\mu_{\Lambda_{a_0}}(\theta) + \mco(k^{-1})\\ \le M_{v_0,a_0} (\mu_{\Lambda_{a_0}}(K_{a_0}) + \delta) + \mco(k^{-1}).\end{multline*}
We conclude that $$ \Vert h s_k \Vert \le  \Vert h \Vert_\infty  \sqrt{M_{v_0, a_0}(\mu_{\Lambda_{a_0}}(K_{a_0}) + \delta)} + o(1).$$
Choosing $\delta > 0$ sufficiently small, we conclude that for all sufficiently large $k$, $$\Vert h s_k \Vert < \Vert h \Vert_\infty  \sqrt{M_{v_0, a_0}\mu_{\Lambda_{a_0}}(K_{a_0})} + \varepsilon,$$
as required.
\end{proof}
We conclude this subsection with a proof of Examples \ref{loc_example1}, \ref{loc_example2}.
\begin{proof}[Proof of Example \ref{loc_example1}]
Let $h : M \to \RR$ be bounded and measurable, and fix $\lambda\in \RR$, as in Example \ref{loc_example1}. Assume that for a certain $\varepsilon > 0$, $$\lambda > \sup_{\Lambda_{a_0, \varepsilon}}h.$$
Fix a smooth function $\tau : M \to [0,1]$ such that $\supp(\tau) \subset \Lambda_{a_0, \varepsilon}$ and $\tau\big{|}_{\Lambda_{a_0, \frac \varepsilon 2}} = 1$. Then $$\Vert T_{k,a_0}(1-\tau) \Vert_{\op} = \mco(k^{-\infty}).$$
Indeed, by a similar argument to Corollary \ref{ms_cor} (or otherwise using Lemma \ref{lagr_ms} and expressing $s_{k,m}$ using a single Lagrangian section $(\hat s_{k,m}, a_{k,m}, k)$, as in the proof of Lemma \ref{inv_zero_decay}), we obtain that $$\MS(\Pi_{k, a_0}) \subset \Lambda_{a_0} \times \Lambda_{a_0}.$$
Hence, in light of the behaviour of microsupports with respect to compositions of operators, $$\MS(\Pi_{k,a_0} T_k(1-\tau) \Pi_{k,a_0}) = \emptyset,$$
which implies the required. Similarly, $$\Vert T_{k,a_0}((1-\tau)(\lambda - h))\Vert_{\op} = \mco(k^{-\infty}).$$ Thus, if $s_k \in \mch_{k,a_0}$ is a unit section, then for some $\delta > 0$, \begin{multline*} \langle(\lambda - T_{k,a_0}(h))s_k, s_k \rangle = \langle T_{k,a_0}\big{(}(1-\tau)(\lambda-h)\big{)}s_k, s_k \rangle + \langle T_{k,a_0}\big{(}\tau (\lambda - h)\big{)}s_k, s_k \rangle \\ \ge \langle T_{k,a_0}(\tau(\lambda-h) )s_k, s_k \rangle - \Vert T_{k,a_0}\big{(}(1-\tau)(\lambda - h)\big{)}\Vert_{\op}  \\= \int_M \tau(\lambda - h)|s_k|^2 d\mu - \Vert T_{k,a_0}\big{(}(1-\tau)(\lambda - h)\big{)}\Vert_{\op}\\ =  \int_{\Lambda_{a_0, \varepsilon}} \tau(\lambda - h)|s_k|^2 d\mu- \Vert T_{k,a_0}\big{(}(1-\tau)(\lambda - h)\big{)}\Vert_{\op}\\ \ge  \delta \int_{\Lambda_{a_0, \varepsilon}} \tau |s_k|^2 d\mu- \Vert T_{k,a_0}\big{(}(1-\tau)(\lambda - h)\big{)}\Vert_{\op} \\= \delta \langle \tau s_k, s_k \rangle - \Vert T_{k,a_0}\big{(}(1-\tau)(\lambda - h)\big{)}\Vert_{\op}\\ \ge   \delta - \Vert T_{k,a_0}\big{(}(1-\tau)(\lambda - h)\big{)}\Vert_{\op} - \Vert T_{k,a_0}(1-\tau)\Vert_{\op} = \delta + \mco(k^{-\infty}).  \end{multline*}
In particular, for every sufficiently large $k$, for any unit section $s_k \in \mch_{k,a_0}$, it holds that $$\langle (\lambda - T_{k,a_0}(h)) s_k, s_k \rangle > \frac \delta 2,$$
which implies that $\lambda \not \in \sigma(T_{k,a_0}(h))$. The case $\lambda < \inf_{\Lambda_{a_0, \varepsilon}}h$ follows by replacing $h$, $\lambda$ with $-h$, $-\lambda$.
\end{proof}
\begin{proof}[Proof of Example \ref{loc_example2}]
We note that $[T_{k,a_0}(h)]^{\TT^n} \sto \mcm_{h|_{\Lambda_{a_0}}}$, since $$[T_{k,a_0}(h)]^{\TT^n} = \Pi_{k,a_0}^{\TT^n} [T_k(h)]^{\TT^n} \Pi_{k,a_0}^{\TT^n},$$
and clearly $\Pi_{k,a_0}^{\TT^n} \sto \Id_{L^2(\TT^n)}$. In light of Corollary \ref{str_conv_spec_imp}, we conclude that $$\lim_{k \to \infty}\sup_{\lambda \in h(\Lambda_{a_0})} \dist\big{(}\lambda, \sigma(T_{k,a_0}(h))\big{)} =0.$$
In the other direction, we note that Example \ref{loc_example1} implies that for any $\varepsilon > 0$, for any sufficiently large $k$, it holds that $$\sigma(T_{k,a_0}(h)) \subset [\min_{\Lambda_{a_0}} h - \varepsilon, \max_{\Lambda_{a_0}} h + \varepsilon].$$
\end{proof}
\subsection{Matrix coefficients of Toeplitz operators}\label{matr_coef_sect1}
In this subsection, we establish the following proposition.
\begin{proposition}\label{matr_coef_prop1} Let $A \subset M$ be an open set such that $\partial A \cap \Lambda_{a_0}$ has measure zero in $\Lambda_{a_0}$. Let $h \in C(M)$ be a continuous function, and write $$h_A = h \mathbbm 1_A,\ h_{A, 0} = h_A\big{|}_{\Lambda_{a_0}}.$$ We identify $\Lambda_{a_0}$ with $\TT^n$,  so that $h_{A,0} \in L^\infty(\TT^n)$. Fix $m, l \in \ZZ^n$ and let $s_{k,m}$, $s_{k,l}$ be the joint eigensections specified in Corollary \ref{orth_set}. Then $$\begin{aligned} &\lim_{k \to \infty} \langle h_A s_{k,l}, s_{k,m} \rangle = \widehat{h_{A,0}}(m-l),\\ &\lim_{k \to \infty} \Vert T_k(h_A) s_{k,m} \Vert = \Vert h_{A,0} \Vert_{L^2(\TT^n)}.\end{aligned}$$\end{proposition}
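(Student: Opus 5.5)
The plan is to approximate $h_A$ from inside and outside by smooth functions, and to handle smooth functions with the Lagrangian-section calculus. The second limit reduces to the first up to a squeeze: since $\Vert T_k(h_A)s_{k,m}\Vert \le \Vert h_A s_{k,m}\Vert$, and $\Vert h_A s_{k,m}\Vert^2 = \langle |h_A|^2 s_{k,m}, s_{k,m}\rangle$ is itself a diagonal matrix coefficient of a function of the same type ($|h|^2\mathbbm 1_A$), we get an upper bound from the first statement. A lower bound comes from $\Vert T_k(h_A)s_{k,m}\Vert \ge |\langle T_k(h_A)s_{k,m}, v_k\rangle|$ for suitable unit $v_k$.

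\textbf{Step 1: smooth functions.} Let $\phi\in C^\infty(M)$. By Corollary \ref{approx_cor}, $s_{k,j}=\sqrt{|\det\nu(0)|}\,\tilde s_{k,j}+\mco(k^{-1})$. Moreover $(T_k(\phi)\tilde s_{k,l},a_{k,0},k)$ is a Lagrangian section with symbol $\phi\,e_l u_0+\mco(\hbar)$ (\cite{charles_bohr}, Proposition 2.7). Applying Proposition \ref{inner_prod_lagr} gives
\[
\langle \phi s_{k,l}, s_{k,m}\rangle = |\det\nu(0)|\int_{\Lambda_{a_0}}\phi\, e_{l-m}\,|u_0|^2_{K,0}\,d\mu_{\Lambda_{a_0}}+\mco(k^{-1}).
\]
From the normalization in Lemma \ref{c_k_abs} (eq. (36) of \cite{charles_bohr}), $|u_0|^2_{K,0}\,d\mu_{\Lambda_{a_0}}=\eta_{a_0}$, which equals $|\det\nu(0)|^{-1}|d\theta|$. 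So the right side is $\widehat{\phi|_{\Lambda_{a_0}}}(m-l)+\mco(k^{-1})$. I expect a check that $|u_0|^2_{K,0}$ is the density $\eta$ (constant in $\theta$) to be needed; this is implicit in \cite{charles_bohr}.

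\textbf{Step 2: approximation of $h_A$.} Fix $\varepsilon>0$. Since $\partial A\cap\Lambda_{a_0}$ is null, choose an open $W\supset \partial A\cap\Lambda_{a_0}$ in $M$ with $\mu_{\Lambda_{a_0}}(W\cap\Lambda_{a_0})<\varepsilon$. Then pick $\phi\in C^\infty(M)$ with $|\phi|\le\Vert h\Vert_\infty+1$, $\sup_{\Lambda_{a_0,\delta}\setminus W}|\phi-h_A|<\varepsilon$ for some $\delta>0$. This is possible because $h_A$ is continuous off $\partial A$, so a smoothing of $h\cdot\chi$, with $\chi$ a cutoff equal to $\mathbbm 1_A$ away from $W$, works. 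Write $h_A-\phi=r_1+r_2+r_3$, where $r_1$ is supported in $W$ and bounded, $r_2$ has sup norm $<\varepsilon$, and $r_3$ is supported away from $\Lambda_{a_0,\delta}$. Then:
\begin{itemize}
\item $|\langle r_3 s_{k,l},s_{k,m}\rangle|\le \Vert r_3 s_{k,l}\Vert=\mco(k^{-\infty})$ by Lemma \ref{L_infty_limsup}, or more directly by microsupport (Lemma \ref{lagr_ms}).
\item $|\langle r_2 s_{k,l},s_{k,m}\rangle|\le\varepsilon$.
\item By Lemma \ref{L_infty_limsup}, applied to the Lagrangian section $\sqrt{|\det\nu(0)|}\,\tilde s_{k,l}$, whose $v_0=|\det\nu(0)|\,|u_0|^2$ gives $M_{v_0,a_0}$ a fixed constant $C$, together with the $\mco(k^{-1})$ error, we get $\limsup_k\Vert r_1 s_{k,l}\Vert\le C'\sqrt{\varepsilon}$.
\end{itemize}
On the torus side, $\Vert (h_{A,0}-\phi|_{\Lambda_{a_0}})\Vert_{L^1}\lesssim\varepsilon$. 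Combining with Step 1 and letting $\varepsilon\to0$ gives the first limit.

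\textbf{Step 3: norms.} For the upper bound, the first limit applied to $|h|^2\mathbbm 1_A$ (same $A$) gives $\limsup\Vert T_k(h_A)s_{k,m}\Vert\le\Vert h_{A,0}\Vert_{L^2}$. For the lower bound, for any fixed finite $F\subset\ZZ^n$,
\[
\Vert T_k(h_A)s_{k,m}\Vert^2\ge\sum_{j\in F}|\langle h_A s_{k,m},s_{k,j}\rangle|^2\to\sum_{j\in F}|\widehat{h_{A,0}}(j-m)|^2,
\]
using the orthonormality of $\mcb_k$. Letting $F\uparrow\ZZ^n$ and applying Parseval gives $\Vert h_{A,0}\Vert_{L^2}$.

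The main obstacle is Step 2, specifically controlling the region near $\partial A\cap\Lambda_{a_0}$ uniformly in $k$. This is what Lemma \ref{L_infty_limsup} is designed for, so I would lean on it, taking care that the relevant essential support meets $\Lambda_{a_0}$ only in $\overline W$.
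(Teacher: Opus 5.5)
Your proof is correct. For the matrix coefficients it is essentially the paper's argument: smooth symbols are handled by the Lagrangian symbol calculus (the paper's Lemma \ref{smooth_appr}), and the discontinuity is controlled by Lemma \ref{L_infty_limsup}.

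Only the bookkeeping differs. The paper first reduces to smooth $h$. It then approximates from inside by $\tau_\delta h$, with $\tau_\delta=1$ near a compact $K_\delta\subset A_0$ of almost full measure. The whole error is then a single term whose essential support meets $\Lambda_{a_0}$ in the closed set $(\overline A\cap\Lambda_{a_0})\setminus V_\delta$. You instead cover $\partial A\cap\Lambda_{a_0}$ by a thin open $W$ and split the error into three pieces.

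One detail to fix in your version: Lemma \ref{L_infty_limsup} sees $\esssupp(r_1)\cap\Lambda_{a_0}\subset\overline W\cap\Lambda_{a_0}$, not $W\cap\Lambda_{a_0}$. So you must choose $W$ with $\mu_{\Lambda_{a_0}}(\overline W\cap\Lambda_{a_0})<\varepsilon$. This is easy by compactness, e.g.\ take $W$ a finite union of small balls. You should also fix $\delta$ only after $W$, so that $\partial A\cap\Lambda_{a_0,\delta}\subset W$; this again follows from compactness.

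Your Step 3 is a genuinely different route. The paper expands $\Vert T_k(h_A)\tilde s_{k,m}\Vert^2$ around $T_k(h_\delta)$. It then needs quasi-multiplicativity (Corollary \ref{smooth_prod}) and a second use of Lemma \ref{L_infty_limsup}. You squeeze instead:
\begin{itemize}
\item the upper bound comes from $\Vert \Pi_k(h_A s_{k,m})\Vert\le\Vert h_A s_{k,m}\Vert$, together with the first limit applied to the diagonal coefficient of $h^2\mathbbm 1_A$;
\item the lower bound comes from Bessel's inequality over $\{s_{k,j}\}_{j\in F}\subset\mcb_k$, followed by Parseval.
\end{itemize}
This makes the norm statement a formal consequence of the coefficient statement, with no further symbol calculus and no product formula for Toeplitz operators. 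What it uses is the orthonormality of $\mcb_k$ and the fact that the first limit holds for every continuous $h$, both of which are available. The paper's route treats each function directly but pays for it with the extra smooth-product estimate.
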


We begin by addressing matrix coefficients of Toeplitz operators with smooth symbols. Recall for $m \in \ZZ^n$, $$\widehat{h_0}(m) = \langle h_0, e_m \rangle_{L^2(\TT^n)}$$
is the $m$-th Fourier coefficient of $h_0 \in L^2(\TT^n)$, where $\TT^n$ is equipped with the normalized Lebesgue measure. The following lemma readily follows from \cite{charles}.
\begin{lemma}\label{smooth_appr} Let $h \in C^\infty(M)$, and $m,l \in \ZZ^n$. Let $(\tilde s_{k,m}, a_{k,0}, k)$, $(\tilde s_{k,l}, a_{k,0},k)$ be the Lagrangian sections specified in (\ref{lagr_qm}). Then $$\langle h \tilde s_{k,l}, \tilde s_{k,m} \rangle = \frac 1 {|\det(\nu(0))|} \widehat{h_0}(m-l) + \mco(k^{-1}),$$
where $h_0 = h|_{\Lambda_{a_0}}$ and $\Lambda_{a_0}$ is identified with $\TT^n$. \end{lemma}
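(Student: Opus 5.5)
The plan is to reduce the inner product to the symbol calculus of Lagrangian sections via Proposition \ref{inner_prod_lagr}. First, I will observe that $\langle h \tilde s_{k,l}, \tilde s_{k,m}\rangle = \langle T_k(h)\tilde s_{k,l}, \tilde s_{k,m}\rangle$, since $\tilde s_{k,m}\in\mch_k$. By \cite{charles_bohr}, Proposition 2.7 (the same fact already used in the proof of Lemma \ref{lagr_ms}), $(T_k(h)\tilde s_{k,l}, a_{k,0}, k)$ is again a Lagrangian section. Its symbol is $h u_l + \mco(\hbar)$, with $h$ restricted to $\Gamma\times\TT^n$, and $u_l = e_l u_0$. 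Applying Proposition \ref{inner_prod_lagr} to this pair of Lagrangian sections, which have the same sequence $(a_{k,0},k)$, will give
\begin{equation*}
\langle h\tilde s_{k,l},\tilde s_{k,m}\rangle=\int_{\Lambda_{a_{k,0}}} h\, e_l\overline{e_m}\,\langle u_0,u_0\rangle_{K,0}(a_{k,0},\theta)\,d\mu_{\Lambda_{a_{k,0}}}(\theta)+\mco(k^{-1}).
\end{equation*}

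The key step is to identify the density $\langle u_0,u_0\rangle_{K,0}(a,\theta)\,d\mu_{\Lambda_a}$. By the normalization formula from \cite{charles_bohr}, eq. (36), which was used in Lemma \ref{c_k_abs}, the leading-order modulus of the symbol of $\tilde s_{k,0}$ paired with the Riemannian density is the density $\eta_a$, up to the normalization built into the construction. Concretely, I expect $\langle u_0,u_0\rangle_{K,0}\,d\mu_{\Lambda_a} = \eta_a = |\det\nu(\varphi_0^{-1}(a))|^{-1}|d\theta|$. This is the step I consider the main obstacle. I will settle it by checking consistency with Lemma \ref{c_k_abs}: taking $h=1$ and $l=m=0$, the formula must give $\Vert\tilde s_{k,0}\Vert^2 = |c_k|^{-2}+\mco(k^{-1}) = |\det\nu(0)|^{-1}+\mco(k^{-1})$. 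If the density were not constant in $\theta$, I would instead use that $u_0$ is invariant under the joint flow, because $T^j u_0 = a^j u_0$ together with (\ref{symbol_map}) forces $\mcl_{X_j}|u_0|^2 = 0$ at leading order, the $F^j$ terms being of the form that preserves the Hermitian norm. The flow is transitive on $\Lambda_a$, so the density is a constant multiple of $|d\theta|$, and that constant is fixed by the normalization just described.

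Finally, since $a_{k,0}=a_0+\mco(k^{-1})$ and all integrands are smooth in $(a,\theta)$, I can replace $a_{k,0}$ with $a_0$ at cost $\mco(k^{-1})$. The result is
\begin{equation*}
\frac{1}{|\det\nu(0)|}\int_{\TT^n} h_0(\theta)\,e^{2\pi i(l-m)\cdot\theta}\,d\theta+\mco(k^{-1})=\frac{1}{|\det\nu(0)|}\,\widehat{h_0}(m-l)+\mco(k^{-1}),
\end{equation*}
where the Lebesgue measure on $\TT^n=\RR^n/\ZZ^n$ is already normalized.
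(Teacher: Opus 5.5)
Your proposal matches the paper's proof step for step: the rewrite via $T_k(h)$, Proposition \ref{inner_prod_lagr} applied to the Lagrangian section $(T_k(h)\tilde s_{k,l},a_{k,0},k)$, the identification $\langle u_0,u_0\rangle_{K,0}\,d\mu_{\Lambda_a}=\eta_a=|\det\nu(\varphi_0^{-1}(a))|^{-1}|d\theta|$, and the replacement of $a_{k,0}$ by $a_0$. The paper takes that identification directly from \cite{charles_bohr}, eq. (36), so no consistency check is needed. If you do use your fallback, the flow-invariant object should be the density $\langle u_0,u_0\rangle_{K,0}\,d\mu_{\Lambda_a}$ rather than the function $|u_0|^2$, since the Riemannian measure $\mu_{\Lambda_a}$ is generally not invariant under the flows.
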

\begin{proof}
Recall that the symbols $u_l, u_m : \Gamma \times \TT^n \to K$ of the sections $(\tilde s_{k,l}, a_{k,0}, k)$, $(\tilde s_{k,m}, a_{k,0}, k)$ satisfy $$u_l = e_l u_0,\ u_m = e_m u_0,\ e_l(\theta) = e^{2\pi i l \cdot \theta},\ e_m(\theta) = e^{2\pi i m \cdot \theta},$$
where $u_0 : \Gamma \times \TT^n \to K$ is the symbol of the Lagrangian section $(\tilde s_{k,0}, a_{k,0}, k)$. Note that $$\langle h \tilde s_{k,l}, \tilde s_{k,m} \rangle = \langle T_k(h) \tilde s_{k,l}, \tilde s_{k,m} \rangle,$$
and that $(T_k(h) \tilde s_{k,l}, a_{k,0}, k)$ is a Lagrangian section with symbol $v : \Gamma \times \TT^n \to K$ which satisfies $$v = h|_{\Gamma \times \TT^n} u_l + \mco(\hbar).$$
Thus, in light of Proposition \ref{inner_prod_lagr}, \begin{multline*}\langle h \tilde s_{k,l}, \tilde s_{k,m} \rangle = \int_{\Lambda_{a_{k,0}}} \langle h e_l u_0, e_m u_0 \rangle_{K,0}(a_{k,0}, \theta) d\mu_{\Lambda_{a_{k,0}}}(\theta) + \mco(k^{-1})\\ = \int_{\Lambda_{a_{k,0}}} h(a_{k,0}, \theta) e_{l-m}(\theta) \langle u_0, u_0 \rangle_{K, 0}(a_{k,0}, \theta) d\mu_{\Lambda_{a_{k,0}}}(\theta) + \mco(k^{-1}).\end{multline*}
We proceed as in the proof of Lemma \ref{c_k_abs}. Namely, in light of \cite{charles_bohr}, eq. (36), it holds that $$\langle u_0, u_0 \rangle_{K,0}(a, \theta) d\mu_{\Lambda_a}(\theta) = \eta_a(\theta),$$
with $\eta_a \in |\Omega|(\Lambda_a)$ the density given by $$\eta_a = \frac 1 {|\det(\nu(\varphi_0^{-1}(a)))|} |d\theta|.$$Recall also that (by Proposition \ref{spec_properties}), $a_{k,0} = a_0 + \mco(k^{-1})$. We conclude that \begin{multline*} \int_{\Lambda_{a_{k,0}}} h(a_{k,0}, \theta) e_{l-m}(\theta) \langle u_0, u_0 \rangle_{K, 0}(a_{k,0}, \theta) d\mu_{\Lambda_{a_{k,0}}}(\theta) =\\ \int_{\Lambda_{a_{k,0}}} h(a_{k,0}, \theta) e_{l-m}(\theta) \eta_{a_{k,0}}(\theta) =\\ \frac 1{|\det(\nu(\varphi_0^{-1}(a_{k,0})))|} \int_{\Lambda_{a_{k,0}}} h(a_{k,0}, \theta) e_{l-m}(\theta) |d\theta| = \\ \frac 1 {|\det(\nu(0))|}\int_{\Lambda_{a_0}} h(a_0, \theta) e_{l-m}(\theta) |d\theta| + \mco(k^{-1}),\end{multline*}
as needed.
\end{proof}
We also note the following.
\begin{corollary}\label{smooth_prod} Let $h_1, h_2 \in C^\infty(M)$, and denote $$h_0 = h\big{|}_{\Lambda_{a_0}},\ h = h_1 h_2.$$ Then $$ \langle T_k(h_1) \tilde s_{k,l}, T_k(h_2) \tilde s_{k,m} \rangle = \frac 1 {|\det(\nu(0))|} \widehat{h_0}(m-l) + \mco(k^{-1}).$$
This follows from Lemma \ref{smooth_appr}, together with the quasi-multiplicativity property of the mapping $T_k : C^\infty(M) \to \End(\mch_k)$, namely, $$T_k(h_1) T_k(h_2) = T_k(h_1 h_2) + \mco(k^{-1}).$$\end{corollary}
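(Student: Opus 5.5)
The statement to prove is Corollary \ref{smooth_prod}. I would first move both Toeplitz operators onto one side, then use quasi-multiplicativity to reduce to a single Toeplitz operator with smooth symbol, and finally apply Lemma \ref{smooth_appr}.

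\textbf{Step 1 (adjoint).} Since $h_1$ is real-valued, $T_k(h_1)$ is self-adjoint (more generally $T_k(h)^* = T_k(\bar h)$). Hence
$$\langle T_k(h_1)\tilde s_{k,l}, T_k(h_2)\tilde s_{k,m}\rangle = \langle T_k(h_2)T_k(h_1)\tilde s_{k,l}, \tilde s_{k,m}\rangle.$$
If complex-valued $h_j$ are intended, the same computation goes through with $\bar h_2$ in place of $h_2$; I would adopt the paper's real-valued convention.

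\textbf{Step 2 (quasi-multiplicativity).} We have $T_k(h_2)T_k(h_1) = T_k(h_1h_2) + R_k$ with $\Vert R_k\Vert_{\op} = \mco(k^{-1})$. By Cauchy--Schwarz,
$$|\langle R_k \tilde s_{k,l}, \tilde s_{k,m}\rangle| \le \Vert R_k\Vert_{\op}\,\Vert\tilde s_{k,l}\Vert\,\Vert\tilde s_{k,m}\Vert.$$
Proposition \ref{inner_prod_lagr} gives $\Vert \tilde s_{k,m}\Vert = \Vert\tilde s_{k,0}\Vert + \mco(k^{-1})$. Moreover $\Vert\tilde s_{k,0}\Vert^2 = |\det\nu(0)|^{-1} + \mco(k^{-1})$, by the computation in Lemma \ref{c_k_abs} (or by Lemma \ref{smooth_appr} with $h=1$). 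So these norms are bounded, and the error term is $\mco(k^{-1})$.

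\textbf{Step 3 (conclusion).} It remains to evaluate $\langle T_k(h)\tilde s_{k,l}, \tilde s_{k,m}\rangle = \langle h\tilde s_{k,l}, \tilde s_{k,m}\rangle$ with $h = h_1h_2 \in C^\infty(M)$. Lemma \ref{smooth_appr} gives exactly $\frac{1}{|\det\nu(0)|}\widehat{h_0}(m-l) + \mco(k^{-1})$, which is the claimed formula.

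\textbf{Main obstacle.} There is no real obstacle here. The only point needing care is the uniform boundedness of $\Vert\tilde s_{k,m}\Vert$, which is what lets the $\mco(k^{-1})$ operator-norm error pass to the matrix coefficient; this is handled in Step 2.
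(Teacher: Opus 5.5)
Your proposal is correct and is the argument the paper intends: move $T_k(h_2)$ across the inner product, replace $T_k(h_2)T_k(h_1)$ by $T_k(h_1h_2)$ at an $\mco(k^{-1})$ cost (harmless because $\Vert\tilde s_{k,m}\Vert$ stays bounded), and then apply Lemma \ref{smooth_appr}. One trivial slip: the identity in Step 1 uses the self-adjointness of $T_k(h_2)$, the operator being moved, not that of $T_k(h_1)$, which makes no difference since both symbols are real.
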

Next, we use Lemma \ref{smooth_appr} and Corollary \ref{smooth_prod} to prove Proposition \ref{matr_coef_prop1} by smooth approximation.
\begin{proof}[Proof of Proposition \ref{matr_coef_prop1}] Fix an open set $A \subset M$ such that $\mu_{\Lambda_{a_0}}(\partial A \cap \Lambda_{a_0}) = 0$ (where $\mu_{\Lambda_{a_0}}$ is as in Proposition \ref{inner_prod_lagr}), and denote $$A_0 = A \cap \Lambda_{a_0}.$$
Let $h \in C(M)$, and define $$h_{A} = h \mathbbm 1_{A},\ h_{A,0} = h_A\big{|}_{\Lambda_{a_0}}.$$ 
Note that we may assume without loss of generality that $h \in C^\infty(M)$. Indeed, if $h$ is only continuous, then given $\varepsilon > 0$ we can choose $h_\varepsilon \in C^\infty(M)$ such that $$\max_M |h - h_\varepsilon| < \frac \varepsilon 4.$$
Then writing $$h_{A, \varepsilon} = h_\varepsilon \mathbbm 1_A,\ h_{A, \varepsilon, 0} = h_{A, \varepsilon} \big{|}_{\Lambda_{a_0}},\ h_{k,A}(m,l) = \langle h_A s_{k,l}, s_{k,m} \rangle,$$
where $l,m \in \ZZ^n$ are fixed, we see that \begin{multline*} | h_{k,A}(m,l) - \widehat{h_{A,0}}(m-l) | \le |\langle (h_A - h_{A, \varepsilon}) s_{k,l}, s_{k,m} \rangle| +\\ |\langle h_{A, \varepsilon} s_{k,l}, s_{k,m} \rangle - \widehat{h_{A, \varepsilon, 0}}(m-l) | + |\widehat{h_{A, \varepsilon, 0}}(m-l) - \widehat{h_{A, 0}}(m-l)|,\end{multline*}
so using $h_\varepsilon$, we conclude that for every sufficiently large $k$,$$|h_{k,A}(m,l) - \widehat{h_{A, 0}}(m-l)| < \varepsilon,$$
that is, $$\lim_{k \to \infty} h_{k,A}(m,l) = \widehat{h_{A,0}}(m-l)$$ as required. Similarly for $\lim_{k \to \infty} \Vert T_k(h_A) s_{k,m} \Vert$. Thus, we assume from now on that $h \in C^\infty(M)$.

We construct smooth approximations of $h_A$, as follows. For $\delta > 0$ sufficiently small, let $K_\delta \subset A_0$ be a compact set such that $$\mu_{\Lambda_{a_0}}(A_0 \setminus K_\delta) < \delta.$$ We choose the set $K_\delta$ such that for $\delta' < \delta$ it holds that $K_{\delta'} \supset K_\delta$.

Fix an open subset $V_\delta \subset A$ such that $K_\delta \subset V_\delta$. Let $\tau_\delta : M \to [0,1]$ be a smooth function satisfying $$\supp \tau_\delta \subset A,\ \tau_\delta\big{|}_{V_\delta} = 1.$$
Note that since $\supp \tau_\delta \subset A$, it holds that $\overline{V_\delta} \subset A$. Also, $$(\mathbbm 1_A - \tau_\delta )|_{M \setminus \overline{A}} = 0,\ (\mathbbm 1_A - \tau_\delta)|_{V_\delta} = 0,$$
therefore $$\esssupp(\mathbbm 1_A - \tau_\delta) \subset \overline{A} \cap (M \setminus V_\delta) = \overline{A} \setminus V_\delta.$$ Hence, setting $h_\delta= \tau_\delta h$, we see that $$\esssupp(h_A - h_\delta) \cap \Lambda_{a_0} \subset (\overline{A} \setminus V_\delta) \cap \Lambda_{a_0} = (\overline{A} \cap \Lambda_{a_0}) \setminus V_\delta.$$
Thus, since $\mu_{\Lambda_{a_0}}(\partial A \cap \Lambda_{a_0})=0$, we obtain that \begin{multline}\label{esssupp_delta}\mu_{\Lambda_{a_0}}(\esssupp(h_A - h_\delta) \cap \Lambda_{a_0}) \le \mu_{\Lambda_{a_0}}((\overline{A} \cap \Lambda_{a_0}) \setminus V_\delta)\\ = \mu_{\Lambda_{a_0}}(A_0 \setminus V_\delta) \le \mu_{\Lambda_{a_0}}(A_0\setminus K_\delta) < \delta.\end{multline}
Next, identifying $\Lambda_{a_0}$ with $\TT^n$ and fixing $m, l \in \ZZ^n$, we consider the Fourier coefficients $$\widehat{h_{\delta,0}}(m-l) = \int_{\TT^n} h_\delta(a_0, \theta) e_{l-m}(\theta) |d\theta|.$$
Since for $0< \delta' < \delta$ it holds that $K_{\delta'} \supset K_\delta$, we conclude using the dominant convergence theorem that \begin{equation}\label{dom_conv}\lim_{\delta \to 0^+}\widehat{h_{\delta,0}}(m-l) = \widehat{h_{A,0}}(m-l).\end{equation}

Now, we consider
\begin{equation*}  h_{k,A}(m,l) =  \langle h_{A} s_{k,l}, s_{k,m} \rangle,\end{equation*}
where $s_{k,m}, s_{k,l}$ are as in Corollary \ref{approx_cor}. Let $\varepsilon > 0$, and choose $\delta > 0$ sufficiently small such that (using Lemma \ref{L_infty_limsup} together with the bound (\ref{esssupp_delta})) \begin{equation}\label{eps_bd1}|\langle(h_A - h_\delta) \tilde s_{k,l}, \tilde s_{k,m} \rangle| < \frac \varepsilon 4.\end{equation}
Additionally, noting eq. (\ref{dom_conv}), we assume that $\delta > 0$ is sufficiently small so that \begin{equation}\label{eps_bd2} \frac 1 {|\det(\nu(0))|} \lvert \widehat{h_{A,0}}(m-l) - \widehat{h_{\delta, 0}}(m-l) \rvert < \frac \varepsilon 4.\end{equation}
In light of Corollary \ref{approx_cor}, it holds that 
$$\lim_{k \to \infty} h_{k,A}(m,l) = |\det(\nu(0))| \lim_{k \to \infty}\langle h_{A}\tilde s_{k,l}, \tilde s_{k,m} \rangle.$$
Also, in light of Lemma \ref{smooth_appr},\begin{multline*}\langle h_A \tilde s_{k,l}, \tilde s_{k,m} \rangle = \langle (h_A - h_\delta) \tilde s_{k,l}, \tilde s_{k,m} \rangle + \langle h_\delta \tilde s_{k,l}, \tilde s_{k,m} \rangle \\= \langle (h_A - h_\delta) \tilde s_{k,l}, \tilde s_{k,m} \rangle + \frac 1 {|\det(\nu(0))|} \widehat{h_{\delta, 0}}(m-l) + \mco(k^{-1}).\end{multline*}
Thus, combining eq. (\ref{eps_bd1}), (\ref{eps_bd2}), we see that for every sufficiently large $k$, $$\Big{|}\langle h_A \tilde s_{k,l}, \tilde s_{k,m} \rangle - \frac 1 {|\det(\nu(0))|} \widehat{h_{A,0}}(m-l) \Big{|} < \frac \varepsilon 2 + \mco(k^{-1}) < \varepsilon.$$
That is, $$\lim_{k \to \infty} h_{k,A}(m,l) = \widehat{h_{A,0}}(m-l),$$
as needed.

The proof that $\lim_{k \to \infty} \Vert T_k(h_A) s_{k,m} \Vert = \Vert h_{A,0} \Vert_{L^2(\TT^n)}$ is similar. Namely, $$\lim_{k \to \infty} \Vert T_k(h_A) s_{k,m} \Vert^2 = |\det(\nu(0))|\lim_{k \to \infty} \Vert T_k(h_A) \tilde s_{k,m} \Vert^2,$$
and
\begin{multline*} \Vert T_k(h_A) \tilde s_{k,m} \Vert^2 =\\ \langle (h_A - h_\delta) \tilde s_{k,m}, T_k(h_A) \tilde s_{k,m} \rangle + \langle T_k( h_\delta) \tilde s_{k,m}, (h_A - h_\delta) \tilde s_{k,m} \rangle + \Vert T_k(h_\delta) \tilde s_{k,m} \Vert^2.\end{multline*}
Then using Lemma \ref{L_infty_limsup} together with (\ref{esssupp_delta}), and using the dominant convergence theorem together with Corollary \ref{smooth_prod}, we can choose $\delta > 0$ small enough such that for any large enough $k$, $$\Big{|}\Vert T_k(h_A) \tilde s_{k,m} \Vert^2 - \frac 1 {|\det(\nu(0))|} \Vert h_{A, 0}\Vert^2_{L^2(\TT^n)}\Big{|} <\varepsilon.$$
This implies the required. \end{proof}
\subsection{Matrix coefficients of spectral projections}\label{matr_coef_sect2}
In what follows, we prove (again, via smooth approximation) the following proposition.
\begin{proposition}\label{matr_coef_prop2} Let $g(\cdot, k) \in C^\infty(M)$, $b_0 \in g_0(M)$ be as in Theorem \ref{str_conv_thm}. Denote $$\Pi_{k,g,b_0} = \mathbbm 1_{(b_0, \infty)}\big{(}T_k(g(\cdot,k))\big{)},\ A_0 = \{g_0 > b_0\} \cap \Lambda_{a_0}.$$ Identify $\Lambda_{a_0}$ with $\TT^n$, so that $\mathbbm 1_{A_0} \in L^\infty(\TT^n)$. Fix $m, l \in \ZZ^n$ and let $s_{k,m}$, $s_{k,l}$ be the joint eigensections specified in Corollary \ref{orth_set}. Then $$\lim_{k \to \infty} \langle \Pi_{k, g, b_0} s_{k,l}, s_{k,m} \rangle =  \widehat{\mathbbm 1_{A_0}}(m-l).$$\end{proposition}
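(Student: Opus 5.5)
We would prove Proposition \ref{matr_coef_prop2} by sandwiching $\Pi_{k,g,b_0}$ between Toeplitz operators with indicator-type symbols, using Corollary \ref{basic_MS_lemma}, and then applying Proposition \ref{matr_coef_prop1}. Fix $\delta>0$ small. Since $b_0$ is a regular value of $g_0$ and $\{g_0=b_0\}$ is transverse to $\Lambda_{a_0}$, the sets $\{g_0=b_0\pm\delta\}\cap\Lambda_{a_0}$ and $\{b_0-\delta<g_0<b_0+\delta\}\cap\Lambda_{a_0}$ are well behaved. In particular, $\mu_{\Lambda_{a_0}}(\{|g_0-b_0|\le\delta\}\cap\Lambda_{a_0})\to 0$ as $\delta\to0$, because the level set $\{g_0=b_0\}\cap\Lambda_{a_0}$ has measure zero: it is a hypersurface of $\Lambda_{a_0}$ where the intersection is transverse. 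Near any non-transverse points, we would use that regularity of $b_0$ alone does not guarantee this measure-zero property. We would therefore first check that $\mu_{\Lambda_{a_0}}(\{g_0=b_0\}\cap\Lambda_{a_0})=0$ follows from the hypothesis $g_0^{-1}(b_0)\pitchfork\Lambda_{a_0}$, interpreted as transversality at every intersection point. This is the genuinely geometric input.

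Decompose $\Id_{\mch_k}=T_k(1)=T_k(\mathbbm 1_{B_+})+T_k(\mathbbm 1_{B_0})+T_k(\mathbbm 1_{B_-})$, where $B_+=\{g_0>b_0+\delta\}$, $B_-=\{g_0<b_0-\delta\}$ and $B_0=\{|g_0-b_0|\le\delta\}$. Apply this to $\Pi_{k,g,b_0}s_{k,l}$ and pair with $s_{k,m}$, using self-adjointness. By Corollary \ref{basic_MS_lemma} (with $\delta$ replaced by slightly smaller values, so that open versus closed sets cause no problem, or using the same microsupport argument directly), we have $T_k(\mathbbm 1_{B_+})\Pi_{k,g,b_0}=T_k(\mathbbm 1_{B_+})+\mco(k^{-\infty})$ and $T_k(\mathbbm 1_{B_-})\Pi_{k,g,b_0}=\mco(k^{-\infty})$. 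Hence
\[
\langle \Pi_{k,g,b_0}s_{k,l},s_{k,m}\rangle=\langle \mathbbm 1_{B_+}s_{k,l},s_{k,m}\rangle+\langle \Pi_{k,g,b_0}s_{k,l},T_k(\mathbbm 1_{B_0})s_{k,m}\rangle+\mco(k^{-\infty}).
\]
The first term converges, by Proposition \ref{matr_coef_prop1} with $A=B_+$ (open) and $h=1$, to $\widehat{\mathbbm 1_{B_+\cap\Lambda_{a_0}}}(m-l)$; the boundary $\partial B_+\cap\Lambda_{a_0}\subset\{g_0=b_0+\delta\}\cap\Lambda_{a_0}$ is null for all but countably many $\delta$, so we choose $\delta$ accordingly. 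The second term is bounded by $\Vert T_k(\mathbbm 1_{B_0})s_{k,m}\Vert$, since $\Pi_{k,g,b_0}$ is a projection and $s_{k,l}$ is a unit vector. By Lemma \ref{L_infty_limsup} combined with Corollary \ref{approx_cor} (or by Proposition \ref{matr_coef_prop1} applied to a slightly larger open set $\{|g_0-b_0|<2\delta\}$), its $\limsup$ is at most $C\sqrt{\mu_{\Lambda_{a_0}}(\{|g_0-b_0|\le\delta\}\cap\Lambda_{a_0})}$, where $C$ is independent of $\delta$.

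Finally, as $\delta\to0^+$, dominated convergence gives $\widehat{\mathbbm 1_{B_+\cap\Lambda_{a_0}}}(m-l)\to\widehat{\mathbbm 1_{A_0}}(m-l)$, and the error bound tends to $0$ by the measure-zero property from the first step. An $\varepsilon/3$ argument then yields the limit. The main obstacle is justifying the two sandwich estimates with the correct open/closed sets, together with the measure-zero claim for the level set on $\Lambda_{a_0}$. The remaining steps are bookkeeping built on results already established above.
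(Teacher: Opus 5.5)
Your proposal is correct and follows essentially the same route as the paper. The paper also partitions $\Id_{\mch_k}$ by Toeplitz operators of indicators of $\{g_0\ge b_0+\delta\}$, the collar $\{|g_0-b_0|<\delta\}$ and $\{g_0\le b_0-\delta\}$, disposes of the outer pieces with Corollary \ref{basic_MS_lemma}, controls the collar with Lemma \ref{L_infty_limsup}, and concludes with Proposition \ref{matr_coef_prop1}. The differences are cosmetic: the paper works with the Lagrangian sections $\tilde s_{k,m}$ and compares directly with $\mathbbm 1_{\{g_0\ge b_0\}}$, absorbing the strip $\{b_0\le g_0\le b_0+\delta\}$ into the same $L^\infty$ bound instead of letting $\delta\to0$ by dominated convergence, and it uses the null-measure property of $\{g_0=b_0\}\cap\Lambda_{a_0}$ (which you make explicit via transversality) without stating it.
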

\begin{proof}[Proof of Proposition \ref{matr_coef_prop2}]
First, we note that in light of Corollary \ref{approx_cor} and Proposition \ref{matr_coef_prop1}, it suffices to prove that $$\lim_{k\to \infty} \langle \Pi_{k,g,b_0} \tilde s_{k,l}, \tilde s_{k,m} \rangle = \lim_{k \to \infty} \langle \mathbbm 1_{\{g_0 > b_0\}} \tilde s_{k,l}, \tilde s_{k,m} \rangle.$$

Taking any sufficiently small $\delta > 0$,
\begin{multline*} \langle \tilde s_{k,l}, \Pi_{k, g, b_0} \tilde s_{k,m} \rangle = \langle \tilde s_{k,l}, T_k(\mathbbm 1_{\{g_0 \ge b_0 + \delta\}}) \Pi_{k,g,b_0} \tilde s_{k,m} \rangle \\ + \langle \tilde s_{k,l}, T_k(\mathbbm 1_{\{|g_0-b_0| < \delta\}}) \Pi_{k,g,b_0} \tilde s_{k,m} \rangle \\+ \langle \tilde s_{k,l},T_k(\mathbbm 1_{\{g_0 \le b_0 - \delta\}}) \Pi_{k,g,b_0} \tilde s_{k,m} \rangle.\end{multline*}
First, in light of Lemma \ref{basic_MS_lemma}, $$|\langle \tilde s_{k,l} ,T_k(\mathbbm 1_{\{g_0 \le b_0 - \delta\}}) \Pi_{k,g,b_0} \tilde s_{k,m} \rangle| = \mco(k^{-\infty})$$
and also \begin{multline*}\langle \tilde s_{k,l}, T_k(\mathbbm 1_{\{g_0 \ge b_0 + \delta\}}) \Pi_{k,g,b_0} \tilde s_{k,m} \rangle= \\ \langle \tilde s_{k,l}, \mathbbm 1_{\{g_0 \ge b_0 + \delta\}} \tilde s_{k,m} \rangle + \mco(k^{-\infty})\\ = \langle \tilde s_{k,l}, \mathbbm 1_{\{g_0 \ge b_0\}} \tilde s_{k,m} \rangle\\ - \langle \tilde s_{k,l}, \mathbbm 1_{\{b_0 \le g_0 \le b_0 + \delta\}}\tilde s_{k,m} \rangle + \mco(k^{-\infty}).\end{multline*}
Thus, \begin{multline*} | \langle \tilde s_{k,l}, \Pi_{k, g,b_0} \tilde s_{k,m} \rangle - \langle \tilde s_{k,l}, \mathbbm 1_{\{g_0 \ge b_0\}} \tilde s_{k,m} \rangle | \le \\ |\langle \tilde s_{k,l}, T_k(\mathbbm 1_{\{|g_0-b_0| < \delta\}}) \Pi_{k,g,b_0} \tilde s_{k,m} \rangle|\\ + | \langle \tilde s_{k,l}, \mathbbm 1_{\{b_0 \le g_0 \le b_0 + \delta\}} \tilde s_{k,m} \rangle| + \mco(k^{-\infty}).\end{multline*}
Additionally,
\begin{multline*}|\langle \tilde s_{k,l}, T_k(\mathbbm 1_{\{|g_0-b_0| < \delta\}}) \Pi_{k,g,b_0} \tilde s_{k,m} \rangle| = |\langle \mathbbm 1_{\{|g_0-b_0| < \delta\}} \tilde s_{k,l}, \Pi_{k,g,b_0}\tilde s_{k,m} \rangle|\\ \le \Vert \tilde s_{k,m} \Vert \Vert \mathbbm 1_{\{|g_0-b_0| < \delta\}}\tilde s_{k,l} \Vert,\end{multline*}
and
\begin{multline*} |\langle \tilde s_{k,l}, \mathbbm 1_{\{b_0 \le g_0 \le b_0 + \delta\}} \tilde s_{k,m} \rangle| = |\langle \mathbbm 1_{\{b_0 < g_0 < b_0 + \delta\}} \tilde s_{k,l}, \tilde s_{k,m} \rangle| \\ \le \Vert \tilde s_{k,m} \Vert \Vert \mathbbm 1_{\{|g_0 - b_0| < \delta\}} \tilde s_{k,l} \Vert.\end{multline*}
Putting the above together, we see that \begin{multline*} | \langle \tilde s_{k,l}, \Pi_{k, g,b_0} \tilde s_{k,m} \rangle - \langle \tilde s_{k,l}, \mathbbm 1_{\{g_0 \ge b_0\}} \tilde s_{k,m} \rangle |\\ \le 2 \Vert \tilde s_{k,m} \Vert \Vert \mathbbm 1_{\{|g_0 - b_0| < \delta\}} \tilde s_{k,l} \Vert + \mco(k^{-\infty}).\end{multline*}
Let $\varepsilon > 0$. In light of Lemma \ref{L_infty_limsup}, if $\delta > 0$ is sufficiently small, then for every sufficiently large $k$ it holds that $$| \langle \tilde s_{k,l}, \Pi_{k, g,b_0} \tilde s_{k,m} \rangle - \langle \tilde s_{k,l}, \mathbbm 1_{\{g_0 \ge b_0\}}\tilde s_{k,m} \rangle | < \frac \varepsilon 2 + \mco(k^{-\infty}) < \varepsilon.$$  
This is as required.
\end{proof}
\section{Proofs of the results in Sect. \ref{pairs_sect}}\label{proj_proof_sect}
\subsection{Proof of Theorem \ref{main_proj_thm} and Corollary \ref{max_cor}}
Recall that for $c= (c^1, .., c^n) \in \RR^n\setminus \{0\}$, we set $$S_k^c = c^1 T_k^1 + ... + c^n T_k^n,\ \Pi_k^c = \mathbbm 1_{(c \cdot a_0,\infty)}(S_k^c).$$
The proof of Theorem \ref{main_proj_thm} uses the embeddings of $\mch_k$ into $L^2(\TT^n)$, as specified in Theorem \ref{str_conv_thm}. First, through a convenient choice of the coordinates (\ref{al_sympl}), we obtain the following.
\begin{lemma}\label{coor_lemma} It suffices to prove Theorem \ref{main_proj_thm} for $c \in \RR^n \setminus\{0\}$ such that in the coordinates (\ref{al_sympl}), $$X_{c \cdot f_0}\big{|}_{\Lambda_{a_0}} = \partial_{\theta^n}.$$ \end{lemma}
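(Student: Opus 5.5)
The plan is to reduce to a normal form for the flow of $X_{c\cdot f_0}$ on $\Lambda_{a_0}$ by changing the action-angle chart (\ref{al_sympl}) through an integral unimodular transformation, and then to rescale $c$.

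Since the flow of $X_{c\cdot f_0}|_{\Lambda_{a_0}}$ is periodic, in the coordinates (\ref{al_sympl}) this vector field is $c\cdot f_0$'s frequency vector $w = \nu(0)^T c \in \RR^n$, a constant field $w\cdot\partial_\theta$ (since $X_{f_0^j}|_{\Lambda_{a_0}} = \nu_j(0)\cdot \partial_\theta$). Periodicity of a linear flow on $\TT^n$ means $w = \lambda w'$ with $\lambda \ne 0$ and $w' \in \ZZ^n$ primitive. Replacing $c$ by $c/|\lambda|$ does not change the hypotheses or the conclusion: $\Pi_k^{c/|\lambda|} = \mathbbm 1_{(c\cdot a_0/|\lambda|,\infty)}(S_k^c/|\lambda|) = \Pi_k^c$, and $\{c\cdot f_0, g_0\}(x)\neq 0$ is scale invariant. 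If $\lambda<0$ I would like to flip signs, but $\Pi_k^{-c}$ is not $\Pi_k^c$. Instead, I will absorb the sign into the coordinate change, since $-w'$ is also primitive. So I may assume $w = w'$ is a primitive integer vector.

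Next, I complete $w'$ to a matrix $B \in SL_n(\ZZ)$ (or $GL_n(\ZZ)$) whose last column is $w'$. This is possible precisely because $w'$ is primitive. I then define new coordinates $\theta = B\theta'$, $y' = B^T y$. The map $(y,\theta)\mapsto (y',\theta')$ is a well-defined diffeomorphism of $D\times\TT^n$ onto $B^T D\times \TT^n$, because $B^{-1}$ preserves $\ZZ^n$. It is also symplectic: $dy\wedge d\theta = dy^T B\, d\theta' = dy'\wedge d\theta'$. Composing $\Psi$ with it gives a new Arnold--Liouville chart of the same type, still sending $\Lambda_{a_0}$ to $\{0\}\times\TT^n$. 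In this chart, $w\cdot\partial_\theta = (B^{-1}w)\cdot\partial_{\theta'} = \partial_{\theta'^n}$, because $Be_n = w'$.

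The main point to check is that nothing in the statement of Theorem \ref{main_proj_thm} depends on the choice of chart. The operators $\Pi_k^c$ and $\Pi_{k,g,b_0}$, and the counts $N_k^c$, are intrinsic. The chart enters only through the auxiliary constructions used in the proof: $\nu(0)$, the identification $\Lambda_{a_0}\simeq\TT^n$, and the embeddings $U_k$ of Theorem \ref{str_conv_thm}. All of those were built for an arbitrary chart of the form (\ref{al_sympl}).

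One caveat is that the new $D$ is $B^T D$ rather than a cube. I would handle this by shrinking to a cube inside it, since only a neighborhood of $0$ matters. I expect this bookkeeping, together with the sign normalization, to be the only delicate step. The rest is the linear algebra above.
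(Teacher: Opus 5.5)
Your proposal is correct and follows essentially the same route as the paper: rescale $c$ by a positive factor, which leaves $\Pi_k^c$ and hence $N_k^c$ unchanged, and change action-angle coordinates by $(y,\theta)\mapsto(B^{\top}y, B^{-1}\theta)$ with $B\in GL(n,\ZZ)$, which sends the primitive frequency vector to $e_n$. The paper handles your sign issue slightly more directly: it takes $p=\tau_0\eta$ with $\tau_0>0$ the minimal period, which is automatically a primitive integer vector with the same orientation as $\eta$, so no separate sign step is needed.
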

\begin{proof}
Let $c \in \RR^n \setminus\{0\}$ be an arbitrary vector such that the flow of $X_{c \cdot f_0} \big{|}_{\Lambda_{a_0}}$ is periodic. Assume that in the coordinates (\ref{al_sympl}), $$X_{c \cdot f_0} \big{|}_{\Lambda_{a_0}} = \eta \cdot \partial_{\theta} = \eta^1 \partial_{\theta^1} + ... + \eta^n \partial_{\theta^n},\ \eta \in \RR^n \setminus \{0\}.$$ Then there exists a minimal period $\tau_0 > 0$ such that $$p := \tau_0 \eta \in \ZZ^n.$$  Let $v_n \in \RR^n$ denote the $n$-th standard basis vector. Since $p$ is a primitive vector\footnote{i.e., if $p = (p^1, ..., p^n)$, then $\gcd(p^1, ..., p^n) = 1$.}, there exists $A \in GL(n, \ZZ)$ such that $$A p^{\top} = v_n.$$
Define the symplectomorphism $$ \Psi_A = \Phi_A \circ \Psi : (U, \omega|_U) \to (D_A \times \TT^n, \omega_{\std}),$$
where $ \Phi_A : D \times \TT^n \to D_A \times \TT^n$ is the symplectomorphism given by $$\Phi_A(y, \theta) = (A^{-\top} y, A \theta).$$
Note that $\Psi_A(\Lambda_{a_0}) = \{0\}\times \TT^n$. In the coordinates $(\tilde y, \tilde \theta)$ given by $\Psi_A$, it holds that $$X_{c \cdot f_0}\big{|}_{\Lambda_{a_0}} = \frac 1 {\tau_0} \partial_{\tilde \theta^n}.$$Hence, in the coordinates given by $\Psi_A$, $$X_{\tau_0 c \cdot f_0} \big{|}_{\Lambda_{a_0}} =\partial_{\tilde \theta^n}.$$
Finally, note that $$\mathbbm 1_{(\tau_0 c \cdot a_0, \infty)}(S_k^{\tau_0 c}) = \mathbbm 1_{(c \cdot a_0, \infty)}(S_k^c),$$ hence $N_k^{\tau_0 c}([t, t']) = N_k^c([t, t'])$  (using the notation of Theorem \ref{main_proj_thm}). 
\end{proof}
In light of Lemma \ref{coor_lemma}, we assume throughout this subsection that in the coordinates (\ref{al_sympl}), $$X_{c \cdot f_0}\big{|}_{\Lambda_{a_0}} = \partial_{\theta^n}.$$
Equivalently, we assume that $c$ is the $n$-th row of $\nu(0)^{-1}$, where $\nu(0)$ is the matrix (\ref{freq_matr}).
Consider $g(\cdot, k)$, $b_0$ and $\Pi_{k,g, b_0}$ as in Theorem \ref{str_conv_thm}, and assume that (in the coordinates (\ref{al_sympl})) \begin{equation}\label{boundary_deriv}\partial_{\theta^n} g_0(x) \ne 0\end{equation}
for some $x \in \{g_0 = b_0\} \cap \Lambda_{a_0}$. 
In what follows, we establish a slightly stronger result than Theorem \ref{main_proj_thm}. Namely, we fix an arbitrary sequence of real numbers $$\alpha_k = c \cdot a_0 + \mco(k^{-1}),$$ and consider the spectral projection \begin{equation}\label{more_general_proj}\Pi_{k, \alpha_k} = \mathbbm 1_{(\alpha_k, \infty)}(S_k^c).\end{equation}
Fix a sequence $p_k$ as in Proposition \ref{spec_properties}, and recall the notation $$\mci_{k} = \{m\in \ZZ^n \ | \ |m^j| \le p_k,\ \text{for all }j=1,...,n\}.$$

We begin with the proof of Theorem \ref{main_proj_thm} in the case $n=1$, which is genuinely simpler. Let $\Pi : L^2(\TT) \to H^2(\TT)$ denote the Cauchy-Szeg\H{o} projection on the classical Hardy space of the circle, and assume that $v \in L^\infty(\TT)$ is real-valued. Then $$T_v := \Pi \mcm_v \Pi$$ is a classical Toeplitz operator, and according to the \textit{Hartman-Wintner Theorem} (\cite{douglas}, Theorem 7.20), \begin{equation}\label{hartman_wintner}\sigma(T_{v}) = [\essinf v, \esssup v].\end{equation}
In particular, if $v = \mathbbm 1_{A}$ for a measurable set $A \subset \TT$ such that $0 < \mu_1(A) < 1$, where $\mu_1$ is the normalized Lebesgue measure of $\TT$, then $$\sigma(T_v) = \sigma(T_{\mathbbm 1_A}) = [0,1].$$
\begin{lemma}\label{n=1conv} Assume that $n=1$. There exists a bounded sequence $N_k \in \ZZ$ such that $$\mcm_{k}^* [\Pi_{k, \alpha_k}]^{\TT} \mcm_k \sto \Pi,$$
where $\mcm_k$ is the operator of multiplication by $e_{N_k}(\theta) = e^{2\pi i N_k \theta}$. Consequently, \begin{equation}\label{n=1conveq}\mcm_k^* [\Pi_{k, \alpha_k}]^{\TT} \Pi_{k, g, b_0}^{\TT} [\Pi_{k, \alpha_k}]^{\TT} \mcm_k \sto T_{\mathbbm 1_{A_0}}\end{equation}
using Theorem \ref{str_conv_thm}, where $A_0 = \{g_0|_{\Lambda_{a_0}} > b_0\}$ satisfies $0 < \mu_1(A_0) < 1$.\end{lemma}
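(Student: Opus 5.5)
Since $n=1$, $T_k^1 = S_k^c$ up to the scalar $c$, and we assumed $X_{c\cdot f_0}|_{\Lambda_{a_0}} = \partial_\theta$, i.e.\ $c\,\nu(0)=1$. The plan is to show that $[\Pi_{k,\alpha_k}]^{\TT}$ acts on the exponentials $e_m$, $m\in\mci_k$, as the projection onto $\{m \ge -N_k\}$ for a bounded integer $N_k$, up to sign/orientation conventions. Conjugating by $\mcm_k$ then yields exactly the Szeg\H{o} projection on those exponentials.

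\textbf{Step 1: diagonal form.} By Proposition \ref{spec_properties}, the eigenvalues of $S_k^c$ in $c\cdot\mca_k$ are $c\cdot a_{k,m} = c\cdot a_{k,0} + \frac{2\pi}{k}m + \mco(|m|^2k^{-2})$ for $m\in\mci_k$, each simple. Since $U_k s_{k,m}=e_m$, the operator $[\Pi_{k,\alpha_k}]^{\TT}$ is diagonal in the basis $\{e_m\}$: it equals $1$ on $e_m$ iff $c\cdot a_{k,m}>\alpha_k$, and it maps the remaining exponentials (those coming from $\hat\mcb_k\setminus\mcb_k$) to some subset of themselves.

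\textbf{Step 2: the threshold $N_k$.} Write $\alpha_k - c\cdot a_{k,0} = \frac{2\pi}{k}\beta_k$. Then $\beta_k$ is bounded, because both $\alpha_k$ and $c\cdot a_{k,0}$ equal $c\cdot a_0 + \mco(k^{-1})$. For fixed $m$ the condition $c\cdot a_{k,m}>\alpha_k$ becomes $m + \mco(k^{-1}) > \beta_k$.

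This is the main obstacle: a borderline $m$ could have $m-\beta_k$ tending to $0$ with an oscillating sign. I would handle it as follows. Because the points $c\cdot a_{k,m}$ are exact eigenvalues and are strictly increasing in $m$ for $|m|\le p_k$, the set $\{m\in\mci_k : c\cdot a_{k,m}>\alpha_k\}$ is exactly $\{m\in\mci_k : m\ge N'_k\}$ for some integer $N'_k$. The error is $\mco(|m|^2k^{-2})$, which is $o(k^{-1})$ uniformly on $\mci_k$ since $p_k=o(k^{1/2})$, while consecutive gaps are $\sim 2\pi/k$; this gives monotonicity. Moreover $|N'_k-\beta_k|\le 1$ for large $k$, so $N'_k$ is bounded. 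Set $N_k=N'_k$ (after fixing the sign convention). Then $\mcm_k^*[\Pi_{k,\alpha_k}]^{\TT}\mcm_k\, e_j = \mathbbm 1_{j\ge 0}\, e_j$ whenever $j+N_k\in\mci_k$.

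\textbf{Step 3: strong convergence.} The sequence $\mcm_k^*[\Pi_{k,\alpha_k}]^{\TT}\mcm_k$ is uniformly bounded by $1$. For each fixed $j$ it agrees with $\Pi$ on $e_j$ once $p_k > |j| + \sup_k|N_k|$. Hence the hypotheses of Lemma \ref{str_conv_explanation1} hold, and the operators converge strongly to $\Pi$.

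\textbf{Step 4: the compressed sandwich (\ref{n=1conveq}).} Write the operator in (\ref{n=1conveq}) as a product of three factors, using $\mcm_k\mcm_k^*=\Id$ and the multiplicativity of $[\cdot]^{\TT}$:
\[
\big(\mcm_k^*[\Pi_{k,\alpha_k}]^{\TT}\mcm_k\big)\big(\mcm_k^*[\Pi_{k,g,b_0}]^{\TT}\mcm_k\big)\big(\mcm_k^*[\Pi_{k,\alpha_k}]^{\TT}\mcm_k\big).
\]
By Theorem \ref{str_conv_thm}, $[\Pi_{k,g,b_0}]^{\TT}\sto \mcm_{\mathbbm 1_{A_0}}$. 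Multiplication by $e_{N_k}$ commutes with $\mcm_{\mathbbm 1_{A_0}}$, and $N_k$ is bounded. Passing to subsequences along which $N_k$ is constant, the middle factor converges strongly to $\mcm_{\mathbbm 1_{A_0}}$ along every such subsequence, hence along the full sequence. Products of strongly convergent, uniformly bounded sequences converge strongly, which gives the limit $T_{\mathbbm 1_{A_0}}$.

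\textbf{Step 5: $0<\mu_1(A_0)<1$.} Since $b_0$ is a regular value and $\partial_\theta g_0(x)\ne 0$ at some $x\in\{g_0=b_0\}\cap\Lambda_{a_0}$, the function $g_0|_{\Lambda_{a_0}}$ takes values strictly above and strictly below $b_0$ near $x$. Both $A_0$ and its complement therefore contain open arcs.
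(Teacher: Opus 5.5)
Your proposal is correct and follows essentially the same route as the paper. You identify $[\Pi_{k,\alpha_k}]^{\TT}$ on $\{e_m\}_{m\in\mci_k}$ as a cut-off at a bounded integer threshold $N_k$, and you supply the monotonicity argument via $p_k=o(k^{1/2})$ that the paper leaves implicit. You then conjugate by $\mcm_k$ to match the Szeg\H{o} projection on each fixed $e_j$, and use that conjugation by the finitely many multipliers $e_{N_k}$ preserves the strong convergence $[\Pi_{k,g,b_0}]^{\TT}\sto\mcm_{\mathbbm 1_{A_0}}$.

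Only cosmetic slips remain. The threshold in your opening plan should be $\{m\ge N_k\}$, as you actually use in Step 2. Also, off $\mci_k$ the operator merely preserves the span of the remaining exponentials rather than being diagonal there, which your argument never needs.
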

\begin{proof}
In light of Proposition \ref{spec_properties}, noting that $\alpha_k = c a_0 + \mco(k^{-1})$, there exists a bounded sequence $N_k \in \ZZ$ such that for any sufficiently large $k$, for any $m \in \mci_k$, $$[\Pi_{k,\alpha_k}]^{\TT}e_m = \left\{\begin{array}{cc} e_m & \text{if }m \ge N_k,\\ 0 & \text{if } m < N_k.\end{array}\right. $$
Consequently, for any sufficiently large $k$, for any $m \in \mci_k$, $$\mcm_k^* [\Pi_{k,\alpha_k}]^{\TT} \mcm_k e_m = \left\{\begin{array}{cc} e_m & \text{if }m \ge 0,\\ 0 & \text{if } m < 0.\end{array}\right.$$
Hence, $\mcm_k^* [\Pi_{k, \alpha_k}]^{\TT} \mcm_k \sto \Pi$. Also, $\Pi_{k,g, b_0}^{\TT} \sto \mcm_{\mathbbm 1_{A_0}}$ by Theorem \ref{str_conv_thm}, which readily implies that $$\mcm_k^* \Pi_{k, g, b_0}^{\TT} \mcm_k \sto \mcm_{\mathbbm 1_{A_0}}.$$
Thus, we obtain (\ref{n=1conveq}). We note that $0 < \mu_1(A_0) < 1$ due to (\ref{boundary_deriv}). \end{proof}
We obtain Theorem \ref{main_proj_thm} in the case $n=1$.
\begin{corollary} In light of Lemma \ref{n=1conv}, the Hartman-Wintner theorem (\ref{hartman_wintner}) and Corollary \ref{str_conv_spec_imp}, we conclude that $$\lim_{k \to \infty} N_k^c([t, t']) = \infty$$ for all $0 \le t < t' \le 1$. \end{corollary}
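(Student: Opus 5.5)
The plan is to convert the strong convergence (\ref{n=1conveq}) into eigenvalue counts via Corollary \ref{str_conv_spec_imp}, with one extra step to get infinitely many eigenvalues rather than just one. Write
$$B_k = \mcm_k^* [\Pi_{k,\alpha_k}]^{\TT}\, [\Pi_{k,g,b_0}]^{\TT}\, [\Pi_{k,\alpha_k}]^{\TT} \mcm_k .$$
Since $\mcm_k$ is unitary and $[\,\cdot\,]^{\TT}$ is multiplicative, $B_k$ is unitarily equivalent to $[\Pi_{k,\alpha_k}\Pi_{k,g,b_0}\Pi_{k,\alpha_k}]^{\TT}$. Its spectrum is therefore the spectrum of $\Pi_{k,\alpha_k}\Pi_{k,g,b_0}\Pi_{k,\alpha_k}$ on $\Pi_{k,\alpha_k}\mch_k$, together with eigenvalues $0$ coming from the complementary subspaces, which carry no counting information. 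Taking $\alpha_k = c\cdot a_0$ gives $\Pi_{k,\alpha_k} = \Pi_k^c$, so this is the operator whose eigenvalues $N_k^c$ counts.

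By Lemma \ref{n=1conv}, $B_k \sto T_{\mathbbm 1_{A_0}}$ with $0<\mu_1(A_0)<1$, and the Hartman--Wintner theorem (\ref{hartman_wintner}) gives $\sigma(T_{\mathbbm 1_{A_0}}) = [0,1]$. Corollary \ref{str_conv_spec_imp} then shows that every $\lambda\in[0,1]$ is a limit of points $\lambda_k\in\sigma(B_k)$.

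To get $N_k^c([t,t'])\to\infty$, fix $L\in\NN$ and choose $L$ disjoint closed subintervals $J_1,\dots,J_L$ in the interior of $(t,t')\cap(0,1)$, each of positive length, with midpoints $\lambda_1,\dots,\lambda_L$. For each $i$, Corollary \ref{str_conv_spec_imp} gives, for all large $k$, a point of $\sigma(B_k)$ inside $J_i$. These points are nonzero because each $J_i$ is separated from $0$. The nonzero spectrum of $B_k$ comes only from $\Pi_k^c\Pi_{k,g,b_0}\Pi_k^c$ on $\Pi_k^c\mch_k$, so each such point is an eigenvalue of that operator lying in $J_i\subset[t,t']$. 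Since the $J_i$ are disjoint, $N_k^c([t,t'])\ge L$ for all large $k$, and letting $L$ be arbitrary proves the claim.

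The subtle step is this bookkeeping with the spurious eigenvalue $0$ and the choice of intervals away from $0$. It works even when $t=0$, because $t'>0$ leaves room inside $(0,t')$ for the $J_i$. The remaining analytic input, $0<\mu_1(A_0)<1$ from (\ref{boundary_deriv}), is already established in Lemma \ref{n=1conv}.
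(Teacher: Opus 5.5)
Your argument is correct and follows the paper's route. You combine the strong convergence of Lemma \ref{n=1conv} with $\sigma(T_{\mathbbm 1_{A_0}})=[0,1]$ from Hartman--Wintner, then apply Corollary \ref{str_conv_spec_imp} at points of $L$ disjoint subintervals of $[t,t']$, which is the same counting device the paper spells out in its footnote for $n>1$. Keeping the subintervals away from $0$, so the spurious zero eigenvalues from $(\Pi_k^c\mch_k)^\perp$ and $(\mch_k^{\TT})^\perp$ are never counted, is a detail the paper leaves implicit, and you handle it correctly, including the case $t=0$.
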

From now on, we assume that $n > 1$. The proof of Theorem \ref{main_proj_thm} relies on the study of certain operators on $L^2(\mathbb T^n)$. We begin with a few preliminary results. Recall that $$\Pi^n : L^2(\TT^n) \to H_n^2(\TT^n)$$ is the orthogonal projection on $$H^2_n(\TT^n) = \{g \in L^2(\TT^n) \ | \ \widehat{g}(m) = 0 \text{ for all }m \in \ZZ^n\text{ such that } m^n < 0\}.$$  Let $v \in L^\infty(\TT^n)$ be real-valued. The spectrum of operators of the form $$T^n_{v} = \Pi^n \mcm_v \Pi^n\in \End(H^2_n(\TT^n))$$
may be determined as follows.  Let $\mu_{n-1}$ denote the normalized Lebesgue measure on $\TT^{n-1}$, and write $$\theta' = (\theta^1, ..., \theta^{n-1}) \in \TT^{n-1}.$$
\begin{lemma}\label{direct_int_spec} Let $n > 1$ and assume that $v \in L^\infty(\TT^n)$ is real-valued. Then $\lambda \in \sigma(T^n_{v})$ if and only if for every $\varepsilon >0$, $$\mu_{n-1}(\{ \theta' \in \TT^{n-1}\ | \ [\essinf v_{\theta'}, \esssup v_{\theta'}] \cap (\lambda - \varepsilon, \lambda + \varepsilon) \ne \emptyset\}) > 0.$$
Here, for $\mu_{n-1}$-almost every $\theta'$, $v_{\theta'} \in L^\infty(\TT)$ is determined by $v_{\theta'}(\theta^n) = v(\theta)$. \end{lemma}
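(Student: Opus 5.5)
The plan is to decompose $L^2(\TT^n)$ as a direct integral over $\theta'\in\TT^{n-1}$, apply the Hartman--Wintner theorem (\ref{hartman_wintner}) fiberwise, and then use the standard description of the spectrum of a decomposable operator.

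\textbf{Step 1 (fiber decomposition).} Identify $L^2(\TT^n)\simeq L^2(\TT^{n-1}, L^2(\TT))$ via $u\mapsto(\theta'\mapsto u(\theta',\cdot))$. Since $\Pi^n$ only restricts the frequency in the $\theta^n$ variable, it acts fiberwise as the Cauchy--Szeg\H{o} projection $\Pi$ on each fiber $L^2(\TT)$. Multiplication by $v$ acts fiberwise as $\mcm_{v_{\theta'}}$, with $v_{\theta'}\in L^\infty(\TT)$ defined for a.e.\ $\theta'$ by Fubini. Hence, on $H^2_n(\TT^n)\simeq L^2(\TT^{n-1}, H^2(\TT))$,
$$T^n_v=\int^\oplus_{\TT^{n-1}} T_{v_{\theta'}}\,d\mu_{n-1}(\theta'),$$
a decomposable self-adjoint operator. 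The fibers are measurable, and they are uniformly bounded by $\Vert v\Vert_\infty$.

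\textbf{Step 2 (spectrum of a direct integral).} For a measurable, uniformly bounded field of self-adjoint operators $A_{\theta'}$, $\lambda\in\sigma(\int^\oplus A_{\theta'})$ if and only if for every $\varepsilon>0$
$$\mu_{n-1}\big(\{\theta' : \sigma(A_{\theta'})\cap(\lambda-\varepsilon,\lambda+\varepsilon)\neq\emptyset\}\big)>0.$$
I would prove this directly rather than cite it.

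For ``$\Leftarrow$'': let $E$ be the positive-measure set above. For $\theta'\in E$, pick a unit vector in the range of $\mathbbm 1_{(\lambda-\varepsilon,\lambda+\varepsilon)}(A_{\theta'})$. To make this choice measurable, work with the spectral projection field $\theta'\mapsto \mathbbm 1_{(\lambda-\varepsilon,\lambda+\varepsilon)}(A_{\theta'})$. This field is measurable, since functional calculus preserves measurability, and it is nonzero on $E$. The direct integral of these projections is then nonzero and equals $\mathbbm 1_{(\lambda-\varepsilon,\lambda+\varepsilon)}(\int^\oplus A_{\theta'})$, so $\operatorname{dist}(\lambda,\sigma)<\varepsilon$.

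For ``$\Rightarrow$'': if the set has measure zero for some $\varepsilon$, then $\Vert (A_{\theta'}-\lambda)^{-1}\Vert\le \varepsilon^{-1}$ for a.e.\ $\theta'$. The decomposable operator $\int^\oplus (A_{\theta'}-\lambda)^{-1}$ is therefore bounded and inverts $T^n_v-\lambda$.

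\textbf{Step 3.} By Hartman--Wintner, $\sigma(T_{v_{\theta'}})=[\essinf v_{\theta'},\esssup v_{\theta'}]$ for a.e.\ $\theta'$. Substituting this into Step 2 gives exactly the stated criterion.

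\textbf{Main obstacle.} The delicate point is measurability: of $\theta'\mapsto v_{\theta'}$, of the functions $\essinf v_{\theta'}$ and $\esssup v_{\theta'}$, and hence of the set in the statement. I would handle this by noting that $\esssup v_{\theta'}=\lim_{p\to\infty}\Vert (v_{\theta'}-\inf v)\Vert_{L^p(\TT)}+\inf v$. The right-hand side is measurable in $\theta'$ by Fubini, and similarly for $\essinf$.
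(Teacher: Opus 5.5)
Your proposal is correct and follows the paper's proof: the same fiberwise identification $T^n_v=\int^\oplus_{\TT^{n-1}} T_{v_{\theta'}}\,d\mu_{n-1}(\theta')$, Hartman--Wintner applied on each fiber, and the spectral criterion for decomposable self-adjoint operators. The only difference is that the paper cites Reed--Simon (Theorem XIII.85) for that criterion, whereas you reprove it and also check the measurability of $\theta'\mapsto\essinf v_{\theta'}$ and $\theta'\mapsto\esssup v_{\theta'}$, which the paper leaves implicit.
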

\begin{proof}
We note that $$L^2(\TT^n) \equiv \int_{\TT^{n-1}}^{\oplus} L^2(\TT) d\mu_{n-1},\ H^2_n(\TT^n) \equiv \int_{\TT^{n-1}}^{\oplus} H^2(\TT) d\mu_{n-1},$$
via the unitary mapping specified by $$e_m \mapsto e_{m',m^n} : \TT^{n-1} \to C^\infty(\TT),\ m = (m', m^n) \in \ZZ^n,$$
where $e_{m',m^n}(\theta')(\theta^n) = e_m(\theta)$. If we identify $H^2_n(\TT^n)$ with the above direct integral, then $$T^n_{v} = \int_{\TT^{n-1}}^{\oplus} T_{v_{\theta'}} d\mu_{n-1}(\theta').$$
Thus, in light of standard results (\cite{reed_simon}, Theorem XIII.85) about the spectrum of self-adjoint decomposable operators, and in light of the case $n=1$ (\ref{hartman_wintner}), we conclude the required. \end{proof}
We would like to apply Lemma \ref{direct_int_spec} to operators with symbols given by indicator functions, and also allow for finite "shifts" of $H^2_n(\TT^n)$, as follows. For $N \in \ZZ$, define $$H^2_{n, N}(\TT^n) = \{ g \in L^2(\TT^n) \ | \ \widehat{g}(m) = 0 \text{ for all }m \in \ZZ^n \text{ such that } m^n < N\},$$
and let $$\Pi^n_N : L^2(\TT^n) \to H^2_{n,N}(\TT^n)$$
be the orthogonal projection. Denote $$T^n_{v,N} = \Pi^n_N \mcm_v \Pi^n_N \in \End(H^2_{n,N}(\TT^n)).$$
\begin{corollary}\label{indicator_spec} Let $n > 1$ and fix $N \in \ZZ$. Assume that $v \in C^\infty(\TT^n)$ is a smooth real-valued function such that $\partial_{\theta^n} v(\theta_0) \ne 0$ for some $\theta_0 \in v^{-1}(b)$. Then $$\sigma(T^n_{\mathbbm 1_{\{v > b\}}, N}) = [0,1].$$ \end{corollary}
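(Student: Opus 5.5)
The plan is to reduce to the unshifted case $N=0$ and then apply Lemma \ref{direct_int_spec} with $v$ replaced by $\mathbbm 1_{\{v > b\}}$.

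\emph{Reduction to $N=0$.} Multiplication by $e_{(0,\dots,0,N)}$ is a unitary operator on $L^2(\TT^n)$ that maps $H^2_{n,0}(\TT^n)=H^2_n(\TT^n)$ onto $H^2_{n,N}(\TT^n)$ and commutes with $\mcm_{\mathbbm 1_{\{v>b\}}}$. Hence $T^n_{\mathbbm 1_{\{v>b\}},N}$ is unitarily equivalent to $T^n_{\mathbbm 1_{\{v>b\}}}$, and we may take $N=0$.

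\emph{Reduction to a fibrewise condition.} Since $0\le \mathbbm 1_{\{v>b\}}\le 1$, the operator $T^n_{\mathbbm 1_{\{v>b\}}}$ is a compression of a multiplication operator with values in $[0,1]$, so its spectrum is contained in $[0,1]$. For the reverse inclusion we use Lemma \ref{direct_int_spec}. Write $w=\mathbbm 1_{\{v>b\}}$. For each $\theta'$ the fibre is $w_{\theta'}=\mathbbm 1_{\{v_{\theta'}>b\}}$, and $[\essinf w_{\theta'},\esssup w_{\theta'}]=[0,1]$ exactly when both sets $\{\theta^n \mid v(\theta',\theta^n)>b\}$ and $\{\theta^n \mid v(\theta',\theta^n)\le b\}$ have positive measure in $\TT$. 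It therefore suffices to show that the set
$$E=\{\theta'\in\TT^{n-1} \mid 0<\mu_1(\{v_{\theta'}>b\})<1\}$$
has $\mu_{n-1}(E)>0$. Given this, every $\lambda\in[0,1]$ satisfies the criterion of Lemma \ref{direct_int_spec} for every $\varepsilon>0$, because $E$ is a positive-measure set of $\theta'$ whose fibre intervals contain $\lambda$.

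\emph{Proving $\mu_{n-1}(E)>0$.} Write $\theta_0=(\theta_0',\theta_0^n)$ with $v(\theta_0)=b$ and $\partial_{\theta^n}v(\theta_0)\neq 0$; without loss of generality $\partial_{\theta^n}v(\theta_0)>0$. By continuity there are a neighbourhood $W$ of $\theta_0'$ in $\TT^{n-1}$ and $\eta>0$ such that $\partial_{\theta^n}v>0$ on $W\times[\theta_0^n-\eta,\theta_0^n+\eta]$, and also $v(\theta',\theta_0^n-\eta)<b<v(\theta',\theta_0^n+\eta)$ for all $\theta'\in W$. This uses $v(\theta_0',\theta_0^n\pm\eta)\gtrless b$, which follows from strict monotonicity in $\theta^n$, together with continuity in $\theta'$ after shrinking $W$. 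For each $\theta'\in W$, the function $v(\theta',\cdot)$ is then strictly increasing on this interval and crosses $b$ at a unique point $t(\theta')$. Consequently $\{v_{\theta'}>b\}$ contains $(t(\theta'),\theta_0^n+\eta]$, which has positive length, and $\{v_{\theta'}\le b\}$ contains $[\theta_0^n-\eta,t(\theta'))$, which also has positive length. Thus $W\subset E$, and since $W$ is a nonempty open set, $\mu_{n-1}(E)>0$.

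\emph{Main obstacle.} The only delicate point is applying Lemma \ref{direct_int_spec} correctly: its criterion must hold for \emph{every} $\lambda\in[0,1]$, including the endpoints $0$ and $1$, and not merely for $\mu_{n-1}$-almost every fibre. This is handled by the fact that each fibre over $W$ has spectrum equal to all of $[0,1]$, so the criterion is satisfied on the positive-measure set $W$ for every $\lambda\in[0,1]$ simultaneously. The other ingredients, namely measurability of the fibre functions and the unitary shift, are routine.
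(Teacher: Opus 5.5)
Your proof is correct and follows the paper's argument. Both reduce to $N=0$ by conjugating with multiplication by $e^{2\pi i N\theta^n}$, find an open neighbourhood of $\theta_0'$ on which each fibre set $\{v_{\theta'}>b\}$ has measure strictly between $0$ and $1$, and conclude via Lemma \ref{direct_int_spec}, which applies Hartman--Wintner fibrewise; you additionally prove the existence of that neighbourhood by the monotonicity argument, which the paper only asserts.
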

\begin{proof}
First, assume that $N = 0$, so that (using the notations introduced above) $$T^n_{\mathbbm 1_{\{v > b\}}, N} = T^n_{\mathbbm 1_{\{v > b\}}}.$$ Recall that $\mu_1$ is the normalized Lebesgue measure on $\TT$. Write $\theta_0 = (\theta_0', \theta^n_{0})$. Then there exists an open neighborhood $D' \subset \TT^{n-1}$ of $\theta_0'$ such that for every $\theta' \in D'$, \begin{equation}\label{positive_measure_spec}\mu_1(\{\theta^n \in \TT \ | \ v(\theta', \theta^n) > b\}) \in (0,1).\end{equation}
Thus, for every $\theta' \in D'$ it holds that $$\sigma(T_{(\mathbbm 1_{\{v > b\}})_{\theta'}}) = [0,1].$$
In light of Lemma \ref{direct_int_spec}, this readily implies the required. Next, if $N \ne 0$, consider the unitary multiplication operator $$\mcm^n_N : L^2(\TT^n) \to L^2(\TT^n),\ u \mapsto e^{2\pi i N \theta^n} u.$$ Then \begin{equation}\label{N_conj}\Pi^n_N = \mcm^n_{N} \Pi^n (\mcm^n_{N})^*,\ T^n_{\mathbbm 1_{\{v > b\}}, N} = \mcm^n_{N} T^n_{\mathbbm 1_{\{v > b\}}} (\mcm^n_{N})^*,\end{equation}
and so $$\sigma(T^n_{\mathbbm 1_{\{v > b\}}, N}) = \sigma(T^n_{\mathbbm 1_{\{v > b\}}}) = [0,1].$$
\end{proof}
The sequences $[\Pi_k^c]^{\TT^n}$ do not seem to necessarily converge strongly (even after conjugation as in Lemma \ref{n=1conv}), due to potential fluctuations of eigenvalues near $c \cdot a_0$. However, such fluctuations are necessarily constrained (if they exist). The next result will allow us to choose approximate eigenvectors which are not very sensitive to such potential fluctuations.
\begin{lemma}\label{Rlambda} Let $n > 1$. Fix $\lambda \in (0,1)$, and let $v$, $b$ be as in Corollary \ref{indicator_spec}. Assume that $N \in \ZZ$, $N > 0$, and consider the operator $R^{n,\lambda}_{N} \in \End(H^2_{n,N}(\TT^n))$ specified by $$\begin{aligned} &R^{n,\lambda}_{N} = (T^{n,\lambda}_N)^2 + (E^n_N)^* E^n_N,\\ &T^{n, \lambda}_N = T^n_{\mathbbm 1_{\{v > b\}}, N} - \lambda \Id_{H^2_{n,N}(\TT^n)},\ E^n_N = (\Pi^n - \Pi^n_N) \mcm_{\mathbbm 1_{\{v > b\}}} \Pi^n_N.\end{aligned}$$
Then there exists $\varepsilon_\lambda > 0$ such that $[0, \varepsilon_\lambda] \subset \sigma(R^{n,\lambda}_{ N})$.\end{lemma}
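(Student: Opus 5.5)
The plan is to decompose $R^{n,\lambda}_N$ fibrewise over $\theta'$, as in the proof of Lemma \ref{direct_int_spec}, and to show that the fibre operators are compact perturbations of $(T_{w}-\lambda)^2$ with $w$ an indicator symbol. Stability of the essential spectrum under compact perturbations then yields an interval near $0$.

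\emph{Step 1: fibre decomposition.} Use the identifications $L^2(\TT^n) \equiv \int^{\oplus}_{\TT^{n-1}} L^2(\TT)\, d\mu_{n-1}$ and $H^2_{n,N}(\TT^n) \equiv \int^{\oplus}_{\TT^{n-1}} H^2_N(\TT)\, d\mu_{n-1}$, where $H^2_N(\TT)$ is spanned by $e_{m^n}$ with $m^n \ge N$.
\begin{itemize}
\item The projections $\Pi^n$ and $\Pi^n_N$ act only in the variable $\theta^n$, so they are decomposable.
\item $\mcm_{\mathbbm 1_{\{v>b\}}}$ is decomposable, with fibres $\mcm_{w_{\theta'}}$, where $w_{\theta'} = (\mathbbm 1_{\{v>b\}})_{\theta'}$.
\end{itemize}
Hence $R^{n,\lambda}_N = \int^\oplus R_{\theta'}\, d\mu_{n-1}(\theta')$, with
$$R_{\theta'} = (\Pi_N \mcm_{w_{\theta'}} \Pi_N - \lambda)^2 + E_{\theta'}^* E_{\theta'},\qquad E_{\theta'} = (\Pi-\Pi_N)\mcm_{w_{\theta'}}\Pi_N .$$
Here $\Pi$ and $\Pi_N$ are the one-variable projections. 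By \cite{reed_simon}, Theorem XIII.85, as used in Lemma \ref{direct_int_spec}, it suffices to produce a set $D'\subset \TT^{n-1}$ of positive measure such that $[0,\varepsilon_\lambda] \subset \sigma(R_{\theta'})$ for every $\theta' \in D'$. I take $D'$ to be the open neighbourhood from the proof of Corollary \ref{indicator_spec}, on which $0<\mu_1(\{w_{\theta'}=1\})<1$ by (\ref{positive_measure_spec}).

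\emph{Step 2: the fibre operators.} Fix $\theta'\in D'$. The range of $\Pi-\Pi_N$ is $\Sp\{e_0,\dots,e_{N-1}\}$, which is finite dimensional because $N>0$. So $E_{\theta'}$ has finite rank, and $E^*_{\theta'}E_{\theta'}$ is compact. Conjugating by $\mcm_N$ as in (\ref{N_conj}) shows that $\Pi_N \mcm_{w_{\theta'}} \Pi_N$ is unitarily equivalent to the classical Toeplitz operator $T_{w_{\theta'}}$. By Hartman–Wintner (\ref{hartman_wintner}), its spectrum is $[0,1]$.

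The key claim is that the \emph{essential} spectrum is also all of $[0,1]$. The argument is as follows. A Toeplitz operator with non-constant real symbol has no eigenvalues (Coburn's lemma, or the F. and M. Riesz theorem). A point of $\sigma\setminus\sigma_{\ess}$ would be an isolated eigenvalue of finite multiplicity, so $\sigma_{\ess}(T_{w_{\theta'}})=[0,1]$. By the spectral mapping theorem,
$$\sigma_{\ess}\big((T_{w_{\theta'}}-\lambda)^2\big) = \{(\mu-\lambda)^2 : \mu\in[0,1]\} \supset [0,\varepsilon_\lambda],\qquad \varepsilon_\lambda = \min(\lambda,1-\lambda)^2>0 .$$
Weyl's theorem says the essential spectrum is invariant under compact perturbations. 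Therefore $[0,\varepsilon_\lambda]\subset \sigma_{\ess}(R_{\theta'}) \subset \sigma(R_{\theta'})$ for all $\theta'\in D'$.

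\emph{Step 3: conclusion.} For every $\mu\in[0,\varepsilon_\lambda]$ and every $\varepsilon>0$, the set of $\theta'$ for which $\sigma(R_{\theta'})$ meets $(\mu-\varepsilon,\mu+\varepsilon)$ contains $D'$, which has positive measure. Theorem XIII.85 then gives $\mu\in\sigma(R^{n,\lambda}_N)$. The fibrewise argument is needed because $E^n_N$ itself is not compact on $H^2_{n,N}(\TT^n)$: its range is infinite dimensional in the $\theta'$ directions.

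The main obstacle I expect is the essential-spectrum claim in Step 2. The relevant facts are the absence of eigenvalues for Toeplitz operators with non-constant real symbols, together with connectedness of the spectrum. I would cite these from \cite{douglas}. A secondary, routine point is checking measurability of the fibre decomposition, which is identical to the setup in Lemma \ref{direct_int_spec}.
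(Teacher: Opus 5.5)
Your proposal is correct and follows the paper's proof essentially step by step. Both arguments use the fibrewise decomposition over $\TT^{n-1}$, the finite rank of $E_{N,\theta'}$ (since $N>0$), invariance of the essential spectrum under this perturbation, $\sigma_{\ess}$ of the fibre Toeplitz operator equal to $[0,1]$ on the open set $D'$, the spectral mapping theorem, and Reed--Simon Theorem XIII.85. The differences are minor: you justify $\sigma_{\ess}(T_{w_{\theta'}})=[0,1]$ via Coburn's lemma (a nonconstant real symbol gives no eigenvalues) where the paper just cites \cite{douglas}, and you take $\varepsilon_\lambda=\min(\lambda,1-\lambda)^2$ rather than identifying the full interval $[0,\max\{\lambda^2,(1-\lambda)^2\}]$.
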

\begin{proof}
For brevity, write $$R^\lambda_N = R^{n, \lambda}_N,\ E_N = E^n_N.$$Note that $R^\lambda_{N}$ is a decomposable operator with respect to the direct integral decomposition specified in the proof of Lemma \ref{direct_int_spec} (the decomposition was specified for $L^2(\TT^n)$ and $H^2_n(\TT^n)$, but it is applicable to $H^2_{n,N}(\TT^n)$ as well). Specifically, $$R^\lambda_{N} = \int_{\mathbb T^{n-1}} R^\lambda_{N, \theta'} d\mu_{n-1}(\theta'),$$
where $$\begin{aligned} &R^\lambda_{N, \theta'} = (T^{n, \lambda}_{N, \theta'})^2 + E_{N, \theta'}^* E_{N, \theta'},\\ &T^{n, \lambda}_{N, \theta'} = T^n_{(\mathbbm 1_{\{v >b\}})_{\theta'}, N} - \lambda \Id_{H^2_N(\TT)},\ E_{N, \theta'} = (\Pi - \Pi_N) \mcm_{(\mathbbm 1_{\{v > b\}})_{\theta'}} \Pi_N.\end{aligned}$$
Here, $\Pi_N : L^2(\TT) \to H^2_N(\TT)$ is the orthogonal projection on $$H^2_N(\TT) = \{g \in L^2(\TT) \ | \ \hat g(m) = 0 \text{ for all } m < N\}$$
and $\Pi: L^2(\TT) \to H^2(\TT)$ is the orthogonal projection on the usual Hardy space. Thus, $\Pi - \Pi_N$ is a finite rank operator, therefore the same is true for $E^*_{N, \theta'} E_{N, \theta'}$. Since the essential spectrum is not affected by finite rank perturbations (or more generally, compact perturbations), it holds that $$\sigma_{\ess}(R^\lambda_{N,\theta'}) = \sigma_{\ess}((T^{n, \lambda}_{N, \theta'})^2).$$
Note that, as in the proof of Corollary \ref{indicator_spec}, there exists an open set $D' \subset \mathbb T^{n-1}$ such that for every $\theta' \in D'$, it holds that $\sigma(T^{n, \lambda}_{N, \theta'}) = [-\lambda, 1-\lambda]$. In fact, it is straightforward to see (for $N =0$, this follows using \cite{douglas}, Corollary 7.14 or Proposition 7.24; the case $N > 0$ follows via unitary conjugation, as in eq. (\ref{N_conj})) that for all such $\theta' \in D'$, $$\sigma_{\ess}(T^{n, \lambda}_{N, \theta'}) = [-\lambda, 1-\lambda].$$
Thus, by the spectral mapping theorem for essential spectra, $$\sigma_{\ess}((T^{n, \lambda}_{N, \theta'})^2) = [0, \max\{\lambda^2, (1-\lambda)^2\}],$$
and so in light of standard results about the spectra of decomposable self-adjoint operators (\cite{reed_simon}, Theorem XIII.85), there exists $\varepsilon_{\lambda} > 0$ as needed.
\end{proof}
\begin{proposition}\label{main_proj_prop} Let $n > 1$. Fix $\lambda \in (0,1)$ and let $v$, $b$ be as in Corollary \ref{indicator_spec}. Assume that $B_k \in \End(L^2(\TT^n))$ is a sequence of self-adjoint operators such that $$B_k \sto \mcm_{\mathbbm 1_{\{v > b\}}}.$$ Then for every $\delta >0$, for every sufficiently large $k$, $$\sigma([\Pi_{k, \alpha_k}]^{\TT^n} B_k [\Pi_{k, \alpha_k}]^{\TT^n}) \cap (\lambda - \delta, \lambda + \delta) \ne \emptyset.$$ \end{proposition}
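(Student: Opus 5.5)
Here is how I would prove it. Write $P_k = [\Pi_{k,\alpha_k}]^{\TT^n}$ and $A_k = P_k B_k P_k$. Since $A_k$ is self-adjoint, it suffices to produce, for each $\delta>0$ and every large $k$, a unit vector $w_k$ with $\Vert (A_k - \lambda)w_k\Vert < \delta$. I would build $w_k$ from a single fixed trigonometric polynomial, chosen using Lemma \ref{Rlambda}.

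First I would pin down $P_k$ on low modes. By Proposition \ref{spec_properties}, with $c$ the $n$-th row of $\nu(0)^{-1}$ we have $c\cdot a_{k,m} = c\cdot a_{k,0} + \frac{2\pi}{k} m^n + \mco(|m|^2k^{-2})$. Since $\alpha_k = c\cdot a_0+\mco(k^{-1})$, there is a bounded integer sequence $N_k$ such that, for every fixed finite set of $m$ and all large $k$, $P_k e_m = e_m$ if $m^n \ge N_k+1$ and $P_k e_m = 0$ if $m^n \le N_k-1$. The only ambiguity is the row $m^n = N_k$, where $P_k e_m \in \{0, e_m\}$ can depend on $m'$ and on $k$ (the "fluctuations"). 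I would conjugate by $\mcm^n_{N_k}$ and pass to subsequences; since $N_k$ is bounded and the target statement is about all large $k$, it suffices to argue along a subsequence on which $N_k \equiv N_*$ is constant, after shifting. Moreover $(\mcm^n_{N})^* B_k \mcm^n_N \sto \mcm_{\mathbbm 1_{\{v>b\}}}$ for each fixed $N$. So I may assume $N_k = 0$ and that $P_k$ agrees with $\Pi^n_{1}$ on modes with $m^n \ge 1$ and kills modes with $m^n<0$, with an unknown action on the row $m^n=0$.

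Next I would choose the test vector. I use Lemma \ref{Rlambda} with $N=1$: since $0\in \sigma(R^{n,\lambda}_1)$, for any $\eta>0$ there is a unit $w\in H^2_{n,1}(\TT^n)$ with $\langle R^{n,\lambda}_1 w,w\rangle<\eta^2$. Density lets me take $w$ to be a trigonometric polynomial, because $R^{n,\lambda}_1$ is bounded. This gives $\Vert (T^n_{\mathbbm 1_{\{v>b\}},1} - \lambda) w\Vert<\eta$ and $\Vert E^n_1 w\Vert<\eta$. The second bound says that the component of $\mathbbm 1_{\{v>b\}} w$ in the row $m^n=0$ is small, and this is exactly what makes $w$ insensitive to the fluctuating row. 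Because $w$ has finitely many modes, all with $m^n \ge 1$, we get $P_k w = w$ for large $k$. Then
\[
A_k w - \lambda w = P_k B_k w - \lambda w, \qquad B_k w \to \mathbbm 1_{\{v>b\}} w \text{ in } L^2 .
\]
Write $u = \mathbbm 1_{\{v>b\}} w$ and decompose $u = \Pi^n_1 u + (\Pi^n-\Pi^n_1)u + (\Id-\Pi^n)u$. The middle term is $E^n_1 w$, of norm $<\eta$. For the first and last terms, truncate $u$ to finitely many Fourier modes with error $<\eta$; on those modes $P_k$ acts as the identity for $m^n\ge1$ and as zero for $m^n<0$, for large $k$. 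Since $\Vert P_k\Vert\le 1$, this yields
\[
\limsup_k \Vert (A_k-\lambda) w\Vert \le \Vert (T^n_{\mathbbm 1_{\{v>b\}},1}-\lambda)w\Vert + \Vert E^n_1 w\Vert + C\eta < (C+2)\eta .
\]
Choosing $\eta$ small relative to $\delta$ gives $\operatorname{dist}(\lambda,\sigma(A_k))<\delta$ along the subsequence.

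The main obstacle is the uncontrolled row $m^n=N_k$: $P_k$ neither converges nor has a known limit there, so a naive appeal to Corollary \ref{str_conv_spec_imp} fails. This is precisely why I take $w$ from Lemma \ref{Rlambda} rather than just from $\sigma(T^n_{\mathbbm 1_{\{v>b\}},1})$. The term $(E^n_1)^*E^n_1$ forces the leakage of $\mathbbm 1_{\{v>b\}}w$ into that row to be small, so the action of $P_k$ there is irrelevant. The remaining care is the subsequence argument. I would argue by contradiction: if the conclusion failed along some subsequence of $k$, extract a further subsequence on which $N_k$ is constant, apply the construction above with a fixed shift, and contradict the failure.
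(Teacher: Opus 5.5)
Your proof is correct and takes essentially the same route as the paper. Both arguments conjugate by $\mcm^n_{N}$ to normalize where the unstable rows $m^n$ sit, take the test vector from Lemma \ref{Rlambda} so that the $(E^n_N)^*E^n_N$ term suppresses the leakage of $\mathbbm 1_{\{v>b\}}w$ into those rows, and estimate $\Vert [\Pi_{k,\alpha_k}]^{\TT^n}B_k[\Pi_{k,\alpha_k}]^{\TT^n}w-\lambda w\Vert$ through the splitting $\Pi^n_N+(\Pi^n-\Pi^n_N)+(\Id-\Pi^n)$.

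The differences are cosmetic. The paper fixes a $k$-independent band $N_0\le m^n<N_1$, so it needs no subsequence extraction and applies Lemma \ref{Rlambda} with $N=N_1-N_0$. It also uses the strong limits of $[\Pi_{k,\alpha_k}]^{\TT^n}$ on $H^2_{n,N}(\TT^n)$ and $H^2_n(\TT^n)^\perp$ together with $\sup_k\Vert B_k\Vert_{\op}<\infty$. Your trigonometric-polynomial test vector gives $[\Pi_{k,\alpha_k}]^{\TT^n}w=w$ exactly for large $k$, which is why you do not need that bound on $\Vert B_k\Vert_{\op}$.
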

\begin{proof}
Consider the joint eigenvalues $\{a_{k,m}\ | \ m \in \mci_k\}$ of $T_k^1$, ..., $T_k^n$ specified in Proposition \ref{spec_properties}. Then the corresponding eigenvalues of $S_k^c$ are $$c \cdot a_{k,m},\ m \in \mci_k.$$
There exists $C > 0$ such that for every sufficiently large $k$, for every $m \in \mci_k$, $$c \cdot a_{k,m} = c \cdot a_{k,0} + \frac{2\pi} k m^n + q_{k,m},$$
where $$|q_{k,m}| < C \frac{|m|^2}{k^2},\ a_{k,0} = a_0 + \mco(k^{-1}).$$
Since $$\alpha_k = c \cdot a_0 + \mco(k^{-1}),$$ we conclude that there exist $N_0, N_1 \in \ZZ$ with $N_0 \le N_1$ such that for every sufficiently large $k$, for every $m \in \mci_k$, if $m^n \ge N_1$ then $c \cdot a_{k,m} > \alpha_k$, and if $m^n < N_0$ then $c \cdot a_{k,m} \le \alpha_k$. That is, for every sufficiently large $k$, if $m^n \ge N_1$ then $$[\Pi_{k,\alpha_k}]^{\TT^n} e_m = e_m,$$
and if $m^n < N_0$ then $$[\Pi_{k, \alpha_k}]^{\TT^n} e_m = 0.$$
Let $\mcm_{N_0}^n$ denote the unitary multiplication operator specified by $$e_m \mapsto e^{2\pi i N_0 \theta^n} e_m.$$Then for every sufficiently large $k$, for every $$m \in \mcj_k= \big{\{}m \in \ZZ^n \ | \ |m^j| \le \frac {p_k} 2,\ j=1,...,n\big{\}},$$ if $m^n <0$ then $$(\mcm_{N_0}^n)^* [\Pi_{k,\alpha_k}]^{\TT^n} \mcm_{N_0}^n e_m = 0,$$
and if $m^n \ge N_1 - N_0$ then $$(\mcm_{N_0}^n)^* [\Pi_{k,\alpha_k}]^{\TT^n} \mcm_{N_0}^n e_m = e_m.$$
Additionally, it holds that $$(\mcm_{N_0}^n)^* B_k \mcm_{N_0}^n \sto \mcm_{\mathbbm 1_{\{v > b\}}}.$$
Since $$\sigma((\mcm_{N_0}^n)^* [\Pi_{k,\alpha_k}]^{\TT^n}B_k[\Pi_{k,\alpha_k}]^{\TT^n} \mcm_{N_0}^n) = \sigma([\Pi_{k, \alpha_k}]^{\TT^n} B_k [\Pi_{k,\alpha_k}]^{\TT^n}),$$
this means that without loss of generality we may assume that $[\Pi_{k,\alpha_k}]^{\TT^n}$ satisfies that there exists $N > 0$ such that for every sufficiently large $k$, for every $m \in \mcj_k$, if $m^n < 0$ then $$[\Pi_{k,\alpha_k}]^{\TT^n}e_m = 0,$$
and if $m^n \ge N$ then $$[\Pi_{k, \alpha_k}]^{\TT^n} e_m = e_m.$$
In light of this, we see that for every fixed $m \in \ZZ^n$ such that $m^n \ge N$, it holds that for every sufficiently large $k$, $$[\Pi_{k, \alpha_k}]^{\TT^n} e_m = e_m.$$
This readily implies that for every $u \in H^2_{n, N}(\TT^n)$, it holds that \begin{equation}\label{nN_str_conv}\lim_{k \to \infty} [\Pi_{k,\alpha_k}]^{\TT^n}u = u.\end{equation}
In the same way, for every $u \in H^2_n(\TT^n)^{\perp}$, it holds that \begin{equation}\label{nperp_str_conv}\lim_{k \to \infty} [\Pi_{k, \alpha_k}]^{\TT^n} u = 0.\end{equation}

Now, fix $\varepsilon > 0$ sufficiently small, and note that by Lemma \ref{Rlambda}, there exists a unit vector $u_\varepsilon \in H^2_{n,N}(\TT^n)$ such that $$\langle R_{N}^{n, \lambda} u_\varepsilon, u_\varepsilon \rangle < \varepsilon^2.$$
That is, \begin{equation}\label{Rlambda_est}\Vert \Pi_N^n \mcm_{\mathbbm 1_{\{v > b\}}} u_{\varepsilon} - \lambda u_{\varepsilon} \Vert < \varepsilon,\ \Vert(\Pi^n - \Pi^n_N) \mcm_{\mathbbm 1_{\{v > b\}}} u_\varepsilon \Vert < \varepsilon.\end{equation}
In what follows, we estimate $$\Vert [\Pi_{k,\alpha_k}]^{\TT^n} B_k [\Pi_{k, \alpha_k}]^{\TT^n} u_\varepsilon - \lambda u_\varepsilon \Vert.$$
Note that since $B_k \sto \mcm_{\mathbbm 1_{\{v > b\}}}$, there exists some $B > 0$ such that $\Vert B_k \Vert_{\op} < B$ for all $k \in \NN$. First,
\begin{multline*}\Vert [\Pi_{k,\alpha_k}]^{\TT^n} B_k [\Pi_{k, \alpha_k}]^{\TT^n} u_\varepsilon - \lambda u_\varepsilon \Vert \\ \le \Vert[\Pi_{k, \alpha_k}]^{\TT^n} B_k([\Pi_{k, \alpha_k}]^{\TT^n} u_\varepsilon - u_\varepsilon) \Vert + \Vert[\Pi_{k,\alpha_k}]^{\TT^n} B_k u_\varepsilon - \lambda u_\varepsilon \Vert \\ \le B \Vert[\Pi_{k, \alpha_k}]^{\TT^n} u_\varepsilon - u_\varepsilon \Vert + \Vert[\Pi_{k,\alpha_k}]^{\TT^n} B_k u_\varepsilon - \lambda u_{\varepsilon} \Vert.\end{multline*}
Next, \begin{multline*} \Vert[\Pi_{k,\alpha_k}]^{\TT^n} B_k u_\varepsilon - \lambda u_{\varepsilon} \Vert \\ \le \Vert[\Pi_{k,\alpha_k}]^{\TT^n}(B_k - \mcm_{\mathbbm 1_{\{v > b\}}})u_\varepsilon \Vert + \Vert[\Pi_{k,\alpha_k}]^{\TT^n} \mcm_{\mathbbm 1_{\{v > b\}}} u_\varepsilon - \lambda u_\varepsilon \Vert \le\\ \Vert(B_k - \mcm_{\mathbbm 1_{\{v >b \}}}) u_\varepsilon \Vert + \Vert \Pi^n_N \mcm_{\mathbbm 1_{\{v > b\}}} u_\varepsilon - \lambda u_\varepsilon \Vert + \Vert([\Pi_{k,\alpha_k}]^{\TT^n} - \Pi^n_N) \mcm_{\mathbbm 1_{\{v > b\}}} u_\varepsilon \Vert.\end{multline*}
Also, using the decomposition $$\mcm_{\mathbbm 1_{\{v > b\}}} u_\varepsilon = (\Id_{L^2(\TT^n)} - \Pi^n + (\Pi^n - \Pi^n_N) + \Pi^n_N) \mcm_{\mathbbm 1_{\{v > b\}}} u_\varepsilon,$$
and noting that $\Pi^n_N(\Id_{L^2(\TT^n)} - \Pi^n) = 0 = \Pi^n_N(\Pi^n - \Pi^n_N)$, we see that 
\begin{multline*} \Vert([\Pi_{k,\alpha_k}]^{\TT^n} - \Pi^n_N) \mcm_{\mathbbm 1_{\{v > b\}}} u_\varepsilon \Vert \le \\ \Vert [\Pi_{k,\alpha_k}]^{\TT^n}(\Id_{L^2(\TT^n)} - \Pi^n) \mcm_{\mathbbm 1_{\{v >b\}}} u_\varepsilon \Vert +\\ \Vert [\Pi_{k,\alpha_k}]^{\TT^n}(\Pi^n - \Pi^n_N) \mcm_{\mathbbm 1_{\{v > b\}}} u_\varepsilon\Vert +\\ \Vert([\Pi_{k,\alpha_k}]^{\TT^n} - \Pi^n_N) \Pi^n_N \mcm_{\mathbbm 1_{\{v >b\}}} u_\varepsilon \Vert\\ \le \Vert [\Pi_{k,\alpha_k}]^{\TT^n}(\Id_{L^2(\TT^n)} - \Pi^n) \mcm_{\mathbbm 1_{\{v >b\}}} u_\varepsilon \Vert +\\ \Vert (\Pi^n - \Pi^n_N) \mcm_{\mathbbm 1_{\{v > b\}}} u_\varepsilon\Vert +\\ \Vert([\Pi_{k,\alpha_k}]^{\TT^n} - \Pi^n_N) \Pi^n_N \mcm_{\mathbbm 1_{\{v >b\}}} u_\varepsilon \Vert.\end{multline*}
Putting everything together,
\begin{multline*} \Vert [\Pi_{k,\alpha_k}]^{\TT^n} B_k [\Pi_{k, \alpha_k}]^{\TT^n} u_\varepsilon - \lambda u_\varepsilon \Vert \\ \le B \Vert[\Pi_{k, \alpha_k}]^{\TT^n} u_\varepsilon - u_\varepsilon \Vert +\\ \Vert(B_k - \mcm_{\mathbbm 1_{\{v >b \}}}) u_\varepsilon \Vert + \Vert \Pi^n_N \mcm_{\mathbbm 1_{\{v > b\}}} u_\varepsilon - \lambda u_\varepsilon \Vert +\\\Vert [\Pi_{k,\alpha_k}]^{\TT^n}(\Id_{L^2(\TT^n)} - \Pi^n) \mcm_{\mathbbm 1_{\{v >b\}}} u_\varepsilon \Vert +\\ \Vert (\Pi^n - \Pi^n_N) \mcm_{\mathbbm 1_{\{v > b\}}} u_\varepsilon\Vert +\\ \Vert([\Pi_{k,\alpha_k}]^{\TT^n} - \Pi^n_N) \Pi^n_N \mcm_{\mathbbm 1_{\{v >b\}}} u_\varepsilon \Vert.\end{multline*}
Now, using (\ref{Rlambda_est}),
\begin{multline*} \Vert [\Pi_{k,\alpha_k}]^{\TT^n} B_k [\Pi_{k, \alpha_k}]^{\TT^n} u_\varepsilon - \lambda u_\varepsilon \Vert \\ < 2\varepsilon +  B \Vert[\Pi_{k, \alpha_k}]^{\TT^n} u_\varepsilon - u_\varepsilon \Vert +\\ \Vert(B_k - \mcm_{\mathbbm 1_{\{v >b \}}}) u_\varepsilon \Vert  +\\ \Vert [\Pi_{k,\alpha_k}]^{\TT^n}(\Id_{L^2(\TT^n)} - \Pi^n) \mcm_{\mathbbm 1_{\{v >b\}}} u_\varepsilon \Vert +\\ \Vert([\Pi_{k,\alpha_k}]^{\TT^n} - \Pi^n_N) \Pi^n_N \mcm_{\mathbbm 1_{\{v >b\}}} u_\varepsilon \Vert.\end{multline*}
Finally, $$\lim_{k\to \infty}\Vert(B_k - \mcm_{\mathbbm 1_{\{v >b \}}}) u_\varepsilon \Vert = 0$$
because $B_k \sto \mcm_{\mathbbm 1_{\{v > b\}}}$, and $$\lim_{k\to \infty} \Vert[\Pi_{k, \alpha_k}]^{\TT^n} u_\varepsilon - u_\varepsilon \Vert = \lim_{k\to \infty} \Vert([\Pi_{k,\alpha_k}]^{\TT^n} - \Pi^n_N) \Pi^n_N \mcm_{\mathbbm 1_{\{v >b\}}} u_\varepsilon \Vert=0$$
in light of (\ref{nN_str_conv}), and $$\lim_{k \to \infty}  \Vert [\Pi_{k,\alpha_k}]^{\TT^n}(\Id_{L^2(\TT^n)} - \Pi^n) \mcm_{\mathbbm 1_{\{v >b\}}} u_\varepsilon \Vert =0$$
in light of (\ref{nperp_str_conv}), so we conclude that for every sufficiently large $k$, $$ \Vert [\Pi_{k,\alpha_k}]^{\TT^n} B_k [\Pi_{k, \alpha_k}]^{\TT^n} u_\varepsilon - \lambda u_\varepsilon \Vert < 3\varepsilon.$$
This means that for every sufficiently large $k$, $$\dist(\lambda, \sigma([\Pi_{k,\alpha_k}]^{\TT^n} B_k [\Pi_{k,\alpha_k}]^{\TT^n})) < 3\varepsilon.$$
Thus, there exists $\lambda_k \in \sigma([\Pi_{k,\alpha_k}]^{\TT^n} B_k [\Pi_{k,\alpha_k}]^{\TT^n})$ such that $|\lambda_k - \lambda| < 3\varepsilon$, and assuming that $3\varepsilon < \delta$, we obtain the required.

\end{proof}
\begin{corollary} For $n >1$, Theorem \ref{main_proj_thm} is a quick consequence\footnote{For any $N \in \ZZ$, $N > 0$, we can choose $N$ disjoint open sub-intervals of $[t, t']$, and for any sufficiently large $k$, all the sub-intervals will contain elements of $\sigma(\Pi_{k,\alpha_k} \Pi_{k, g, b_0} \Pi_{k,\alpha_k})$.} of Proposition \ref{main_proj_prop}, with $v = g_0\big{|}_{\Lambda_{a_0}}$, $b = b_0$ (and $\alpha_k = c \cdot a_0$). \end{corollary}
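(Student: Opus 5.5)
The plan is to apply Proposition \ref{main_proj_prop} to the specific sequence $B_k = [\Pi_{k,g,b_0}]^{\TT^n}$, and then to translate spectral information about the operators on $L^2(\TT^n)$ back into eigenvalue counts for $\Pi^c_k\Pi_{k,g,b_0}\Pi^c_k$ on $\Pi^c_k\mch_k$.

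\textbf{Reduction and hypotheses.} By Lemma \ref{coor_lemma} we may assume that $X_{c\cdot f_0}|_{\Lambda_{a_0}} = \partial_{\theta^n}$ in the coordinates (\ref{al_sympl}). Under this normalization the hypothesis $\{c\cdot f_0, g_0\}(x)\neq 0$ for some $x\in\{g_0=b_0\}\cap\Lambda_{a_0}$ becomes (\ref{boundary_deriv}). Set $v = g_0|_{\Lambda_{a_0}}\in C^\infty(\TT^n)$ and $b=b_0$; then $\partial_{\theta^n}v(\theta_0)\neq 0$ for some $\theta_0\in v^{-1}(b)$, as required in Corollary \ref{indicator_spec}. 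Take $\alpha_k = c\cdot a_0$, so that $\Pi_{k,\alpha_k}=\Pi^c_k$. The operators $B_k$ are self-adjoint, since $\Pi_{k,g,b_0}$ is an orthogonal projection and $U_k$ is unitary. By Theorem \ref{str_conv_thm}(2), $B_k\sto \mcm_{\mathbbm 1_{A_0}}$ with $A_0=\{v>b\}$. Hence Proposition \ref{main_proj_prop} applies: for each $\lambda\in(0,1)$ and $\delta>0$, the spectrum of $[\Pi^c_k]^{\TT^n}B_k[\Pi^c_k]^{\TT^n}$ meets $(\lambda-\delta,\lambda+\delta)$ for all large $k$.

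\textbf{Transfer back to $\mch_k$.} By multiplicativity of $[\,\cdot\,]^{\TT^n}$, this operator equals $[\Pi^c_k\Pi_{k,g,b_0}\Pi^c_k]^{\TT^n}$. Its spectrum is therefore $\sigma(\Pi^c_k\Pi_{k,g,b_0}\Pi^c_k)\cup\{0\}$, where the operator is taken on $\mch_k$. On $\mch_k$ it splits as the compression to $\Pi^c_k\mch_k$ plus the zero operator on $(\Pi^c_k\mch_k)^\perp$. Consequently, any nonzero spectral point of the $\TT^n$ operator is an eigenvalue of the compressed operator $\Pi^c_k\Pi_{k,g,b_0}\Pi^c_k\in\End(\Pi^c_k\mch_k)$. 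Restricting to windows $(\lambda-\delta,\lambda+\delta)\subset(0,1)$ excludes $0$, so each window produces a genuine eigenvalue counted by $N^c_k$.

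\textbf{Counting.} Fix $0\le t<t'\le 1$ and $N\in\NN$. Choose $N$ pairwise disjoint open intervals $(\lambda_i-\delta_i,\lambda_i+\delta_i)$ contained in $(t,t')\cap(0,1)$. This interior choice handles the endpoint cases $t=0$ and $t'=1$ and keeps every window away from $0$. Applying the previous two steps to each $i$, for all sufficiently large $k$ every window contains an eigenvalue of $\Pi^c_k\Pi_{k,g,b_0}\Pi^c_k$. The windows are disjoint, so these eigenvalues are distinct, and therefore $N^c_k([t,t'])\ge N$. Since $N$ is arbitrary, $\lim_k N^c_k([t,t'])=\infty$.

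\textbf{Main obstacle.} The only delicate point is the spurious eigenvalue $0$ introduced by extending by zero on $(\mch_k^{\TT^n})^\perp$ and on $(\Pi^c_k\mch_k)^\perp$. This is why the windows must be taken strictly inside $(0,1)$, and why we need a version of Proposition \ref{main_proj_prop} with fixed $\lambda\in(0,1)$ rather than the full interval $[0,1]$. Everything else is a direct verification of the hypotheses above.
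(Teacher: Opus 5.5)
Your proposal is correct and follows the paper's own argument. You apply Proposition \ref{main_proj_prop} with $B_k=[\Pi_{k,g,b_0}]^{\TT^n}$, which converges strongly by Theorem \ref{str_conv_thm}, and with $\alpha_k=c\cdot a_0$; choosing $N$ disjoint windows inside $(t,t')$ then forces $N^c_k([t,t'])\ge N$, exactly as in the paper's footnote. You also spell out a detail the paper leaves implicit: the spurious eigenvalue $0$ is avoided by keeping the windows inside $(0,1)$, and nonzero spectral points of $[\Pi^c_k\Pi_{k,g,b_0}\Pi^c_k]^{\TT^n}$ are eigenvalues of the compression to $\Pi^c_k\mch_k$.
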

Corollary \ref{max_cor} follows from Theorem \ref{main_proj_thm} using general results from the theory of two projections (\cite{bottspitk, shabtai2}).
\begin{proof}[Proof of Corollary \ref{max_cor}] Fix $0 \ne \phi \in \mca$. Then there exists a constant $B_\phi>0$ such that if $\mch$ is any complex separable Hilbert space and $P_1, P_2 : \mch \to \mch$ are orthogonal projections, then $$\Vert \phi(P_1, P_2) \Vert_{\op} \le B_\phi.$$
Moreover, $\phi \in \mca$ determines a continuous function $\beta_\phi :[0,1] \to [0,\infty)$ such that $$B_\phi = \max_{[0,1]} \beta_\phi,$$
and $$\Vert \phi(P_1, P_2) \Vert_{\op} \ge \max_{\sigma(P_1 P_2 P_1)} \beta_\phi.$$
Finally, by the hypothesis of Corollary \ref{max_cor} (or otherwise, in the setting of Proposition \ref{main_proj_prop}), it holds that the spectrum of $\Pi^c_{k} \Pi_{k,g,b_0} \Pi^c_{k}$ becomes dense in the interval $[0,1]$, hence $$\lim_{k \to\infty} \max_{\sigma(\Pi^c_{k} \Pi_{k,g,b_0} \Pi^c_{k})}\beta_\phi = B_\phi.$$
Noting that $$B_\phi \ge \Vert \phi(\Pi^c_{k}, \Pi_{k,g,b_0})\Vert_{\op} \ge \max_{\sigma(\Pi^c_{k} \Pi_{k,g,b_0} \Pi^c_{k})}\beta_\phi,$$
we obtain the required. \end{proof}
\subsection{Proof of Proposition \ref{two_proj_prop}}
Berezin-Toeplitz quantization satisfies the following trace correspondence property (see, e.g., \cite{charpol1}). If $h(\cdot, k) \in C^\infty(M)$ admits an asymptotic expansion $h(\cdot, k) \sim \sum_{l \ge 0} h_l k^{-l}$ in the $C^\infty$ topology, then $$\tr\big{(}T_k(h(\cdot,k))\big{)}= \Big{(}\frac k {2\pi} \Big{)}^n \int_M h_0 \mu + \mco(k^{n-1}).$$
In particular (taking $h= 1$), $$d_k = \dim(\mch_k) = \Big{(} \frac k {2\pi} \Big{)}^n\mu(M) + \mco(k^{n-1}).$$
The trace correspondence is also valid for a fixed, real-valued $h \in L^\infty(M)$. Namely, $$\tr(T_k(h)) = \Big{(}\frac k {2\pi} \Big{)}^n \int_M h \mu + \mco(k^{n-1}).$$
Recall also that if $h(\cdot, k) \in C^\infty(M)$ has the expansion $h(\cdot, k) \sim \sum_{l \ge 0} h_l k^{-l}$ and $\phi \in C^\infty(\RR)$, then (\cite{charles}, Proposition 12) $$\Vert \phi(T_k(h(\cdot, k))) - T_k(\phi \circ h_0) \Vert_{\op} = \mco(k^{-1}).$$

First, we show that for any $0 < t < \frac 1 2$, $$N_k\big{(}(t, 1-t)\big{)} = o(k^n).$$
Denote $$R_k = \Pi_{k,1} \Pi_{k,2} \Pi_{k,1},$$
and for an interval $I \subset [0,1]$ denote $\sigma_I(R_k) = \sigma(R_k) \cap I$.
We will see that $\tr(R_k - R_k^2) = o(k^n),$ which implies that $N_k\big{(}(t, 1-t)\big{)} = o(k^n)$. Indeed, if $m_k(\lambda)$ is the multiplicity of $\lambda \in \sigma(R_k)$, then \begin{multline*}\tr(R_k - R_k^2) = \sum_{\lambda \in \sigma(R_k)} m_k(\lambda) \lambda(1-\lambda) \ge\\ \sum_{\lambda \in  \sigma_{(t,1-t)}(R_k)} m_k(\lambda) t(1-t) = t(1-t)N_k\big{(}(t, 1-t)\big{)}.\end{multline*}
In what follows, we use the Schatten norms $$\Vert \cdot \Vert_q : \End(\mch_k) \to [0,\infty),\ \Vert Q \Vert_q = (\tr(|Q|^q))^{\frac 1 q},\ q = 1,2.$$
We note the inequalities \begin{equation}\label{schatten_ineq}\begin{aligned} & \Vert Q Q' \Vert_1 \le \Vert Q \Vert_{\op} \Vert Q' \Vert_1,\ \Vert Q Q' \Vert_1 \le \Vert Q \Vert_{1} \Vert Q' \Vert_{\op}\\ &\Vert Q\Vert_2^2 = \Vert |Q|^2 \Vert_1 \le \Vert Q \Vert_{\op} \Vert Q \Vert_1,\\ &\Vert Q \Vert_q \le (d_k)^{\frac 1 q} \Vert Q \Vert_{\op}.\end{aligned}\end{equation}
Note that for two orthogonal projections $P$, $Q$, it holds that $$|[P,Q]|^2 = ([P,Q])^*[P,Q] = PQP + QPQ - PQPQ-QPQP,$$
and so $$\Vert[P,Q]\Vert_2^2 = 2\tr(PQP-PQPQP).$$
Thus, in our present context, it suffices to prove that $$\Vert[\Pi_{k,1}, \Pi_{k,2}]\Vert_2^2 = o(k^n).$$

Fix $\delta > 0$, and choose smooth functions $\tau_{\alpha_j, \delta} : \RR \to [0,1]$, $j = 1,2$, such that $$\tau_{\alpha_j, \delta}\big{|}_{(\alpha_j, \infty)} = 1,\ \supp \tau_{\alpha_j, \delta} \subset (\alpha_j - \delta, \infty).$$
Write $$\Pi_{k,j} = A_{k,j}+ B_{k,j},\ A_{k,j} = \tau_{\alpha_j, \delta}(T_k(h_j(\cdot, k))).$$
Then $$\Vert[\Pi_{k,1}, \Pi_{k,2}]\Vert_2 \le \Vert[\Pi_{k,1},\Pi_{k,2}]- [A_{k,1}, A_{k,2}]\Vert_2 + \Vert[A_{k,1}, A_{k,2}]\Vert_2,$$
where (by (\ref{schatten_ineq})) $$\Vert[A_{k,1}, A_{k,2}]\Vert_2 \le d_k^{\frac 1 2} \Vert[A_{k,1}, A_{k,2}]\Vert_{\op} = \mco(k^{\frac n 2 - 1}).$$
Also, $$[\Pi_{k,1},\Pi_{k,2}]- [A_{k,1}, A_{k,2}]= [B_{k,1}, \Pi_{k,2}] + [A_{k,1}, B_{k,2}],$$
therefore $$\Vert[\Pi_{k,1}, \Pi_{k,2}] - [A_{k,1}, A_{k,2}]\Vert_2 \le \Vert [B_{k,1}, \Pi_{k,2}]\Vert_2 + \Vert[A_{k,1}, B_{k,2}]\Vert_2.$$
Generally, in light of (\ref{schatten_ineq}), we see that $$\Vert QR \Vert_2^2 = \Vert R^* Q^* Q R \Vert_1 \le \Vert R \Vert_{\op}^2 \Vert Q \Vert^2_2,\ \Vert QR \Vert_2^2 \le \Vert R \Vert^2_2 \Vert Q \Vert_{\op}^2,$$
hence in our case, $$\Vert [B_{k,1}, \Pi_{k,2}]\Vert_2 \le 2\Vert B_{k,1} \Vert_2,\ \Vert[A_{k,1}, B_{k,2}]\Vert_2 \le 2 \Vert B_{k,2} \Vert_2.$$
Also, noting that $\Vert B_{k,1} \Vert_{\op} \le 1$, $\Vert B_{k,2} \Vert_{\op} \le 1$, $$\Vert B_{k,1} \Vert_2 = \sqrt{\Vert |B_{k,1}|^2 \Vert_1} \le \sqrt{\Vert B_{k,1}\Vert_1},\ \Vert B_{k,2} \Vert_2 \le \sqrt{\Vert B_{k,2} \Vert_1}.$$
Thus, we obtain the bound \begin{equation}\label{comm_2_bd}\Vert[\Pi_{k,1}, \Pi_{k,2}]\Vert_2 \le  d_k^{\frac 1 2} \Vert[A_{k,1}, A_{k,2}]\Vert_{\op} + 2\sqrt{\Vert B_{k,1} \Vert_1} + 2 \sqrt{\Vert B_{k,2} \Vert_1}.\end{equation}

Fix $\chi_{\alpha_j,\delta} : \RR \to [0,1]$, $j =1,2$, such that $$\chi_{\alpha_j, \delta}\big{|}_{[\alpha_j -\delta, \alpha_j + \delta]} = 1,\ \supp \chi_{\alpha_j,\delta} \subset (\alpha_j - 2\delta, \alpha_j + 2 \delta).$$
Then $$0 \le \tau_{\alpha_j,\delta} - \mathbbm 1_{(\alpha_j, \infty)} \le \chi_{\alpha_j,\delta},$$
therefore \begin{multline*}\Vert B_{k,1} \Vert_1 \le \Vert \chi_{\alpha_1, \delta}(T_k(h_1(\cdot,k))) \Vert_1 = \Vert T_k(\chi_{\alpha_1, \delta}\circ h_{1,0}) \Vert_1 + \mco(k^{n-1})\\ \le \mu(\{|h_{1,0} - \alpha_1| < 2\delta\})\Big{(}\frac k {2\pi} \Big{)}^n + \mco(k^{n-1}).\end{multline*}
Similarly, $$\Vert B_{k,2} \Vert_1 \le \mu(\{|h_{2,0} - \alpha_2| < 2\delta\})\Big{(} \frac k {2\pi} \Big{)}^n + \mco(k^{n-1}).$$
Finally, recall that for $r_1, r_2, r_3 \in \RR$ it holds that $$(r_1 + r_2 + r_3)^2 \le 3(r_1^2 + r_2^2 + r_3^2).$$
Hence, applying the above to (\ref{comm_2_bd}), we obtain\begin{multline*}\Vert [\Pi_{k,1}, \Pi_{k,2}]\Vert_2^2 \le 3 d_k \Vert[A_{k,1}, A_{k,2}]\Vert_{\op}^2 +12 \Vert B_{k,1} \Vert_1 + 12 \Vert B_{k,2} \Vert_1\\ \le 12 \Big{(} \frac k {2\pi} \Big{)}^n \Big{(} \mu\big{(}\{|h_{1,0} - \alpha_1| < 2\delta\}\big{)} + \mu\big{(}\{|h_{2,0} - \alpha_2| < 2\delta\} \big{)} \Big{)} + \mco(k^{n-1}).\end{multline*}
Thus, continuity from above of $\mu$, for any $\varepsilon > 0$, there exists $\delta > 0$ such that for every sufficiently large $k$, $$\frac{\Vert[\Pi_{k,1}, \Pi_{k,2}]\Vert_2^2}{k^n} < \frac \varepsilon 2 + \mco(k^{-1}) < \varepsilon,$$
as required.

The estimate for $N_k([1-t, 1])$, $0 < t < \frac 1 2$, follows similarly. First, note that $$\Tr(R_k) = \Tr(\Pi_{k,1} \Pi_{k,2}),$$
and then note that \begin{multline*} \Big{|}\Tr(\Pi_{k,1} \Pi_{k,2}) -\Tr(A_{k,1} A_{k,2})\Big{|} = \Big{|}\Tr\Big{(}(\Pi_{k,1} - A_{k,1})\Pi_{k,2} + A_{k,1}(\Pi_{k,2} - A_{k,2}) \Big{)} \Big{|}\\ \le \Vert \Pi_{k,1} - A_{k,1} \Vert_1 + \Vert \Pi_{k,2} - A_{k,2} \Vert_1,\end{multline*}
and so for any $\varepsilon > 0$ we may choose $\delta > 0$ sufficiently small such that $$\frac 1 {k^n}\Big{|} \Tr(\Pi_{k,1} \Pi_{k,2}) - \Tr(A_{k,1} A_{k,2}) \Big{|} < \frac \varepsilon 4 + \mco(k^{-1}).$$
Also, $$\Tr(A_{k,1} A_{k,2}) = \Big{(} \frac k {2\pi} \Big{)}^n \int_M (\tau_{\alpha_1, \delta} \circ h_{1,0}) (\tau_{\alpha_2, \delta} \circ h_{2,0}) d\mu + \mco(k^{n-1}).$$
Using the regularity of $\mu$ and the fact that $\mu(\{h_{1,0} = \alpha_1\}) = 0 = \mu(\{h_{2,0} = \alpha_2\})$, we see that if $\delta > 0$ is small enough then $$\frac 1 {k^n} \Big{|}\Tr(A_{k,1} A_{k,2})- \Big{(}\frac k {2\pi} \Big{)}^n\mu(\{h_{1,0} > \alpha_1\} \cap \{h_{2,0} > \alpha_2\})\Big{|} < \frac \varepsilon 4 + \mco(k^{-1}).$$
Thus, we conclude that $$\Tr(R_k) = \Big{(} \frac k {2\pi} \Big{)}^n \mu(\{h_{1,0} > \alpha_1\}\cap \{h_{2,0} > \alpha_2\}) + o(k^n).$$
Also, $$\Tr(R_k) = \sum_{\lambda \in \sigma_{[0, 1-t)}(R_k)} m_k(\lambda) \lambda + \sum_{\lambda \in \sigma_{[1-t,1]}(R_k) } m_k(\lambda) \lambda,$$
where $$\sum_{\lambda \in \sigma_{[0,1-t)}(R_k) } m_k(\lambda) \lambda \le \frac 1 t \sum_{\lambda \in \sigma(R_k)} m_k(\lambda) \lambda(1-\lambda) = \frac 1 t \Tr(R_k - R_k^2) = o(k^n).$$
Thus, $$\sum_{\lambda \in \sigma_{[1-t,1]}(R_k)} m_k(\lambda) \lambda = \Tr(R_k) + o(k^n).$$
Also, \begin{multline*}0 \le N_k([1-t, 1]) - \sum_{\lambda \in \sigma_{[1-t,1]}(R_k)} m_k(\lambda) \lambda\\ = \sum_{\lambda \in \sigma_{[1-t,1]}(R_k)} m_k(\lambda)(1-\lambda)\le \frac 1 {1-t} \Tr(R_k - R_k^2) = o(k^n).\end{multline*}
Finally, we conclude that $$N_k([1-t, 1]) = \Big{(} \frac k {2\pi} \Big{)}^n \mu(\{h_{1,0} > \alpha_1\} \cap \{h_{2,0} > \alpha_2\}) + o(k^n),$$
and the estimate for $N_k([0,t])$ now follows almost immediately. Namely, by comparing $\Pi_{k,1}$ and $A_{k,1}$, we see that $$\dim(\Pi_{k,1} \mch_k) = \Tr(\Pi_{k,1}) = \Big{(} \frac k {2\pi} \Big{)}^n \mu(\{h_{1,0} > \alpha_1\}) + o(k^n).$$
Since \begin{multline*}\dim(\Pi_{k,1} \mch_k) = N_k([0,t]) + N_k\big{(}(t, 1-t)\big{)}+ N_k([1-t, 1]) \\= N_k([0,t]) + \Big{(}\frac k {2\pi} \Big{)}^n \mu(\{h_{1,0} > \alpha_1\} \cap \{h_{2,0} > \alpha_2\}) + o(k^n),\end{multline*}
we obtain the required. \hfill $\qedsymbol$
\section{Proof of Theorem \ref{contr_thm}}
Recall that we denote $$C^\infty_{\inv}(M) = \{g \in C^\infty(M) \ | \ \{g, f^j_{0}\} = 0,\ j = 1,...,n\},$$
where $$\{\cdot, \cdot\} : C^\infty(M) \times C^\infty(M) \to C^\infty(M)$$
is the Poisson bracket. Consider a Lie subalgebra $$\mfg = (V_{0} \oplus V_{\inv}, \{\cdot, \cdot\}) \subset (C^\infty(M), \{\cdot,\cdot\}),$$ where $V_{0} \oplus V_{\inv}$ is a direct sum of vector spaces, $$(V_{\inv}, \{\cdot, \cdot\}) \subset (C^\infty_{\inv}(M), \{\cdot, \cdot\})$$ is a Lie subalgebra, and for all $g \in V_{0}$, $h \in V_{\inv}$, it holds that $\{g, h\} \in V_{0}$.

Fix $0 \ne R \in \RR$, and define the linear maps $$t_{k,R} : \mfg \to \mfg,\ t_{k,R} \big{|}_{V_0} = \frac R k \Id,\ t_{k,R} \big{|}_{V_{\inv}} = \Id.$$
A straightforward computation shows that $\mfg$ contracts via $t_{k,R}$ to $$\mfg_\infty = (V_{\inv}, \{\cdot, \cdot\}) \ltimes (V_0, 0),$$
where $(V_0, 0)$ is the commutative Lie algebra with underlying vector space $V_0$.

Let $\xi_k,\xi_\infty^R : \mfg \to \End(C^\infty(\TT^n))$ be as specified in Theorem \ref{contr_thm}. It is straightforward to verify that $\xi_\infty^R$ is a genuine Lie algebra representation of $\mfg_\infty$. The fact that the sequence $\xi_k$ contracts to $\xi_\infty^R$ via the maps $t_{k,R}$ is a quick corollary of Theorem \ref{str_conv_thm}, since $t_{k,R}$ "improves" the asymptotic behaviour of $\xi_k$. The more demanding task in the proof of Theorem \ref{contr_thm} is to establish that $\xi_k$ is a semiclassical projective quasi-representation as in Definition \ref{spqr}. We begin with some preliminary results.
\begin{lemma}\label{inv_bdd} Let $u \in C^\infty(\TT^n)$. Let $X_g$ denote the Hamiltonian vector field of $g\in C^\infty_{\inv}(M)$, and write $$X_g\big{|}_{\Lambda_{a_0}} = \sum_{j=1}^n \zeta_j(g) X_{f_0^j}\big{|}_{\Lambda_{a_0}}.$$
Recall that $$\xi_k(g) = ik \sum_{j=1}^n \zeta_j(g)\big{(}[T_k^j]^{\TT^n} - a_{k,0}^j \Id_{L^2(\TT^n)}\big{)}.$$
Then  $$\lim_{k \to \infty} \xi_k(g)u =  X_g\big{|}_{\Lambda_{a_0}}u.$$\end{lemma}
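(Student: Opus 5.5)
Since $\xi_k(g)$ only involves the operators $[T_k^j]^{\TT^n}$, which are diagonal in the exponential basis on $U_k\mcb_k$, the plan is to compute $\xi_k(g)u$ explicitly in Fourier modes. Write $u = \sum_{m \in \ZZ^n} \widehat u(m) e_m$ with $\widehat u(m) = \mco(|m|^{-\infty})$. For $m \in \mci_k$ we have $U_k s_{k,m} = e_m$ and $T_k^j s_{k,m} = a^j_{k,m} s_{k,m}$, so
\[
[T_k^j]^{\TT^n} e_m = a^j_{k,m} e_m, \qquad m \in \mci_k .
\]
Hence, for $m\in\mci_k$,
\[
\xi_k(g) e_m = ik \sum_{j} \zeta_j(g)\,(a^j_{k,m} - a^j_{k,0})\, e_m .
\]
By Proposition \ref{spec_properties}, $k(a_{k,m}-a_{k,0}) = 2\pi \nu(0) m + r_{k,m}$, where $|r_{k,m}| \le C|m|^2/k$ uniformly in $m \in \mci_k$. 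Writing $\zeta(g)=(\zeta_1(g),\dots,\zeta_n(g))$, this gives, for $m\in\mci_k$,
\[
\xi_k(g)e_m = 2\pi i\, \zeta(g)^{\top}\nu(0) m\, e_m + \mco(|m|^2/k)\, e_m .
\]

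Next I identify the limit operator. In the coordinates (\ref{al_sympl}), $X_{f_0^j}|_{\Lambda_{a_0}} = \nu_j(0)\cdot\partial_\theta$, so $X_g|_{\Lambda_{a_0}} = \sum_j \zeta_j(g)\,\nu_j(0)\cdot \partial_\theta$. Since $g \in C^\infty_{\inv}(M)$ Poisson-commutes with all $f_0^j$, $X_g$ is tangent to $\Lambda_{a_0}$ and invariant under the torus flows, so the $\zeta_j(g)$ are constants on $\Lambda_{a_0}$. I will justify this by writing $g=G(y)$ locally, which follows from the $\{g,f_0^j\}=0$ condition together with nondegeneracy of $\nu$. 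Applying this vector field to $e_m$ gives $2\pi i\,\zeta(g)^\top\nu(0)m\, e_m$, which matches the leading term above exactly.

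Finally I split $u = u_k + u_k^{\perp}$, where $u_k = \sum_{m\in\mci_k}\widehat u(m)e_m$. On $u_k$, the difference $\xi_k(g)u_k - X_g|_{\Lambda_{a_0}}u_k$ has norm squared at most $\sum_m |\widehat u(m)|^2 C^2|m|^4/k^2 \to 0$, by the rapid decay of $\widehat u$. Also $X_g u_k^\perp \to 0$ in $L^2$ because $u$ is smooth and $p_k\to\infty$.

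The main obstacle is the tail $\xi_k(g)u_k^\perp$. The modes $e_m$ with $m\notin\mci_k$ are either mapped to eigensections whose eigenvalues are arbitrary, or are orthogonal to $\mch_k^{\TT^n}$, where the operator is $0$ apart from the shift $-a_{k,0}$. Here I would use the trivial bound $\Vert \xi_k(g)\Vert_{\op} = \mco(k)$, which holds because $\Vert T_k^j\Vert$ and $a_{k,0}$ are bounded. Combining this with $\Vert u_k^\perp\Vert = \mco(p_k^{-\infty})$ requires $p_k$ to grow polynomially in $k$. The theorem's statement does not depend on the choice of $p_k$, so I will fix $p_k=\lfloor k^{1/4}\rfloor$, as in the remark after Theorem \ref{contr_thm}. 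Then $k\Vert u_k^\perp\Vert = \mco(k^{-\infty})\to 0$, which completes the argument.
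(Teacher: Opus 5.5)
Your proposal is correct and is essentially the paper's proof. Like the paper, you split $u = u_k + u_k^\perp$ over the box $\mci_k$ with $p_k = \lfloor k^{1/4}\rfloor$, use the diagonal action of $[T_k^j]^{\TT^n}$ and the quadratic remainder of Proposition~\ref{spec_properties} on $u_k$, and handle the tail via $\Vert \xi_k(g)\Vert_{\op} = \mco(k)$ and $\Vert u_k^\perp\Vert = \mco(k^{-\infty})$. The only difference is cosmetic: the paper first normalizes coordinates so that $X_{f_0^j}|_{\Lambda_{a_0}} = \partial_{\theta^j}$, while you carry $\zeta(g)^{\top}\nu(0)$ along.

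One correction to your justification. Unlike Theorem~\ref{str_conv_thm}, this lemma does depend on $p_k$: the factor $k$ in $\xi_k$ exposes the arbitrary assignment of $U_k$ outside $\mci_k$ unless $\Vert u_k^\perp\Vert = o(k^{-1})$. So the choice $p_k = \lfloor k^{1/4}\rfloor$ is part of the construction of $U_k$, as the paper takes it, not a harmless normalization.
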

\begin{proof}
We assume without loss of generality that in the coordinates (\ref{al_sympl}), the restriction of the Hamiltonian vector field of $f_0^j$ to $\Lambda_{a_0}$ equals $\partial_{\theta^j}$, $j = 1,...,n$ (otherwise, we replace $f_0^j$ and $[T_k^j]^{\TT^n} - a_{k,0}^j\Id_{L^2(\TT^n)}$ by matching linear combinations determined by the rows of $\nu(0)^{-1}$), and prove that $$\lim_{k \to \infty} ik([T_k^j]^{\TT^n} - a_{k,0}^j \Id_{L^2(\TT^n)})u = \partial_{\theta^j} u.$$
The general case follows by linearity.

In what follows, we use Proposition \ref{spec_properties} with the specific choice $$\mci_k = \{m \in \ZZ^n \ | \ |m^j| \le \lfloor k^{\frac 1 4} \rfloor,\ j=1,...,n\},$$
and denote $$V_k = \Sp\{e_m \ | \ m \in \mci_k\} \subset L^2(\mathbb T^n).$$
Write $$u = u_k + u_k^\perp,\ u_k \in V_k,\ u_k^\perp \in V_k^\perp.$$
Then $\Vert u_k^\perp\Vert = \mco(k^{-\infty})$, since $u \in C^\infty(\TT^n)$. Denote $$A_k^j = [T_k^j]^{\TT^n} - a_{k,0}^j \Id_{L^2(\TT^n)}.$$
Then $\Vert A^j_{k} \Vert_{\op} = \mco(1)$,
hence it suffices to show that $$\lim_{k \to \infty} ik A_k^j u = \partial_{\theta^j}u.$$
In light of Proposition \ref{spec_properties}, and the definition of the assignment $$[\ \cdot \ ]^{\TT^n} : \End(\mch_k) \to \End(L^2(\TT^n)),$$ there exist $q_{k,m}^j \in \RR$, $m \in \mci_k$, such that $$A_k^j e_m = \Big{(}\frac{2\pi m^j} k + q_{k,m}^j \Big{)} e_m,$$
and moreover there exists $C> 0$ such that $$|q_{k,m}^j| \le C \frac{|m|^2}{k^2}.$$
Thus,
\begin{multline*} ik A_k^j u_k = ik \sum_{m \in \mci_k} \widehat u(m) \Big{(}\frac{2\pi m^j} k + q_{k,m}^j\Big{)} e_m \\ = \sum_{m \in \mci_k} 2\pi i m^j \widehat{u}(m) e_m + ik \sum_{m \in \mci_k} \widehat{u}(m) q_{k,m}^j e_m,\end{multline*}
where $$\lim_{k \to \infty} \sum_{m \in \mci_k} 2\pi i m^j \widehat{u}(m) e_m = \partial_{\theta^j} u,$$
and $$\Vert ik \sum_{m \in \mci_k} \widehat{u}(m) q_{k,m}^j e_m \Vert \le \frac 1 k \sum_{m \in \mci_k} |\widehat{u}(m)| |m|^2 = O(k^{-1}),$$
as required.
\end{proof}
We also require the following.
\begin{lemma}\label{inv_zero_decay} Let $u \in C^\infty(\TT^n)$. Assume that $g \in C^\infty_{\inv}(M)$ satisfies $g\big{|}_{\Lambda_{a_0}} = 0$. Then $$[T_k(g)]^{\TT^n} u = O(k^{-1}).$$\end{lemma}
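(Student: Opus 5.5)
The plan is to show that near $\Lambda_{a_0}$ the function $g$ factors through $f_0$ and vanishes to first order in $f_0 - a_0$. The operator $T_k(g)$ will then, up to errors of size $\mco(k^{-1})$ that are uniform in $m\in\mci_k$, act on $s_{k,m}$ with a factor $a_{k,m}-a_0 = \mco((1+|m|)/k)$.

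\textbf{Step 1 (factorization).} Since $\{g,f_0^j\}=0$ for all $j$, $g$ is invariant under the joint Hamiltonian flow of $f_0$. On $U_\delta$ this flow acts transitively on each connected torus $\Lambda_a$. Hence $g\circ\Psi^{-1}(y,\theta)$ depends only on $y$, so $g = G\circ f_0$ on $U_\delta$ for some smooth $G$ on $D_{\delta,0}$ with $G(a_0)=0$. By Hadamard's lemma, $G(a)=\sum_j (a^j-a_0^j)H_j(a)$ with $H_j$ smooth. Fix a cutoff $\chi\in C^\infty_c(U_\delta)$ with $\chi=1$ on a neighborhood $W$ of $\Lambda_{a_0}$, and write
\[ g = \sum_{j=1}^n (f_0^j - a_0^j)\,\tilde H_j + r,\qquad \tilde H_j = \chi\,(H_j\circ f_0), \]
where $r\in C^\infty(M)$ vanishes on $W$.

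\textbf{Step 2 (the main term).} By quasi-multiplicativity and $T_k(f_0^j)=T_k^j+\mco(k^{-1})$ (principal symbol $f_0^j$), we have in operator norm
\[ T_k\big((f_0^j-a_0^j)\tilde H_j\big) = T_k(\tilde H_j)\,(T_k^j - a_0^j) + \mco(k^{-1}). \]
Applied to $s_{k,m}$, the right side equals $(a^j_{k,m}-a^j_0)\,T_k(\tilde H_j)s_{k,m} + \mco(k^{-1})$. Both the $\mco(k^{-1})$ term and the bound $|a_{k,m}-a_0|\le C(1+|m|)/k$ (Proposition \ref{spec_properties}) are uniform in $m\in\mci_k$. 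Hence $\Vert T_k(g-r)s_{k,m}\Vert\le C'(1+|m|)/k$ uniformly in $m\in\mci_k$.

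\textbf{Step 3 (the remainder $r$).} This is where I expect the main obstacle: one needs a bound uniform over the growing index set $\mci_k$, not just for fixed $m$. I would use the same argument as in the proof of Example \ref{loc_example1}, namely $\MS(\Pi_{k,a_0})\subset\Lambda_{a_0}\times\Lambda_{a_0}$. Then $\Vert T_k(|r|^2)\Pi_{k,a_0}\Vert_{\op}=\mco(k^{-\infty})$, since $|r|^2$ vanishes on $W$. This gives $\Vert T_k(r)s\Vert^2\le\langle T_k(|r|^2)s,s\rangle=\mco(k^{-\infty})$ uniformly over unit $s\in\mch_{k,a_0}$. If the microsupport statement needs more care for the growing family (the paper indicates it via single Lagrangian sections $(\hat s_{k,m},a_{k,m},k)$), I would take $p_k=\lfloor k^{1/4}\rfloor$ as in Lemma \ref{inv_bdd}. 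Then all relevant $a_{k,m}$ lie in a compact set shrinking to $a_0$, and Lemma \ref{lagr_ms} applies with locally uniform constants.

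\textbf{Step 4 (summation).} As in Lemma \ref{inv_bdd}, write $u=u_k+u_k^\perp$ with $u_k=\sum_{m\in\mci_k}\widehat u(m)e_m$ and $\Vert u_k^\perp\Vert=\mco(k^{-\infty})$. Since $\Vert [T_k(g)]^{\TT^n}\Vert_{\op}=\mco(1)$, it suffices to bound $[T_k(g)]^{\TT^n}u_k=U_kT_k(g)\sum_{m\in\mci_k}\widehat u(m)s_{k,m}$. Steps 2 and 3 give the bound
\[ \sum_{m\in\mci_k}|\widehat u(m)|\Big(\frac{C'(1+|m|)}{k}+\mco(k^{-\infty})\Big). \]
The first sum is $\mco(k^{-1})$ because $\widehat u(m)=\mco(|m|^{-\infty})$. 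In the second, $|\mci_k|=\mco(k^{n/4})$ is absorbed by the $\mco(k^{-\infty})$ factor. Hence $[T_k(g)]^{\TT^n}u=\mco(k^{-1})$.
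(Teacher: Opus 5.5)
Your argument is correct, and it takes a different route from the paper.

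\textbf{The paper's route.} The paper works inside the Lagrangian-section calculus. It writes $s_{k,m}=c_{k,m}\hat s_{k,m}+\mco(k^{-\infty})$, with $(\hat s_{k,m},a_{k,m},k)$ a Lagrangian section on $\Lambda_{a_{k,m}}$. It groups all pairs $(k,m)$, $m\in\mci_k$, into a single sequence so that $c_{k,m}$ and the remainders are uniformly controlled. The symbol of $T_k(g)\hat s_{k,m}$ is $gv+\mco(\hbar)$ and $g$ is constant on nearby tori, so $T_k(g)\hat s_{k,m}=g(a_{k,m})\hat s_{k,m}+\mco(k^{-1})$ uniformly. It then bounds $|g(a_{k,m})|\le C(1+|m|)/k$ via Proposition \ref{spec_properties}.

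\textbf{Your route.} You use invariance only once, to factor $g=\sum_j(f_0^j-a_0^j)\tilde H_j+r$. After that, quasi-multiplicativity and the exact identity $(T_k^j-a_0^j)s_{k,m}=(a^j_{k,m}-a^j_0)s_{k,m}$ give the main term with operator-norm errors. Uniformity in $m$ is therefore automatic, and you avoid the delicate ``single Lagrangian section over a growing family'' step.

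\textbf{Trade-off.} Your price is the localization remainder $r$. The paper's route is shorter given the machinery. It also yields the sharper statement that $s_{k,m}$ is an $\mco(k^{-1})$-approximate eigenvector of $T_k(g)$ with eigenvalue $g(a_{k,m})$.

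\textbf{On Step 3.} The paper's own justification of $\MS(\Pi_{k,a_0})\subset\Lambda_{a_0}\times\Lambda_{a_0}$ (in Example \ref{loc_example1}) partly points back to the proof of this very lemma. Use instead the spectral-cutoff argument of Corollary \ref{ms_cor}.

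Alternatively, you can avoid microsupports entirely, since $\mco(k^{-1})$ suffices for the remainder.
\begin{itemize}
\item By compactness, pick $\varepsilon>0$ with $\{|f_0-a_0|\le\varepsilon\}\subset W$.
\item Take $\phi\in C^\infty(\RR,[0,1])$ equal to $0$ on $(-\infty,\varepsilon^2/4]$ and to $1$ on $[\varepsilon^2,\infty)$.
\item Then $r=r\,(\phi\circ|f_0-a_0|^2)$, so $T_k(r)=T_k(r)\,\phi(Q_k)+\mco(k^{-1})$, where $Q_k=\sum_j(T_k^j-a_0^j)^2=T_k(|f_0-a_0|^2)+\mco(k^{-1})$. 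This uses the functional-calculus estimate quoted in the proof of Proposition \ref{two_proj_prop}.
\item Since $\sup_{m\in\mci_k}|a_{k,m}-a_0|\to 0$, we have $\phi(Q_k)s_{k,m}=0$ for all $m\in\mci_k$ once $k$ is large.
\end{itemize}
This makes your proof purely operator-theoretic.
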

\begin{proof}
Let $\mci_k = \{m \in \ZZ^n \ | \ |m^j| \le \lfloor k^{\frac 1 4} \rfloor,\ j = 1,...,n\}$. Then there exists $C>0$ such that for every sufficiently large $k$, for every $m \in \mci_k$, it holds that \begin{equation}\label{uniform_m_est}\Vert [T_k(g)]^{\TT^n}e_m \Vert < \frac C k (1+|m|).\end{equation}
Indeed, let $s_{k,m}$ be the joint eigensection of $T_k^1, ..., T_k^n$ corresponding to $a_{k,m}$, as in Theorem \ref{str_conv_thm}. Then by \cite{charles_bohr}, Proposition 3.8, there exist $c_{k,m} \in \CC$ and a Lagrangian section $(\hat s_{k,m}, a_{k,m}, k)$, where $\hat s_{k,m}$ is associated with $\Lambda_{a_{k,m}}$, such that $$s_{k,m} = c_{k,m} \hat s_{k,m} + \mco(k^{-\infty}).$$
Here we group the pairs $k, m \in \mci_k$ together to form a single sequence $p\in \NN$, and view $(\hat s_{k,m}, a_{k,m}, k)$ as a single Lagrangian section with symbol as specified in \cite{charles_bohr}, Proposition 3.8. Then $c_{k,m}$ is uniformly bounded with respect to $k$, $m\in \mci_k$ (this may be shown as in Lemma \ref{c_k_abs}, noting that $a_{k,m}, m \in \mci_k$ belong to shrinking neighborhoods of $a_0$), and the estimate $\mco(k^{-\infty})$ is similarly uniform over $m \in \mci_k$. Now (\cite{charles_bohr}, Proposition 2.7), $$(T_k(g) \hat s_{k,m}, a_{k,m}, k)$$ is a Lagrangian section with symbol (see Sect. \ref{lagr_sect}) $$g v + \mco(\hbar),$$ where $v$ is the symbol of $(\hat s_{k,m}, a_{k,m}, k)$. Note that $g(a) = g|_{\Lambda_a}$ is constant for all $a$ in a small open neighborhood of $a_0$ (because $g \in C^\infty_{\inv}(M)$).  This implies that $$ T_k(g) \hat s_{k,m} = g( a_{k,m}) \hat s_{k,m} + \mco(k^{-1}),$$
and the $\mco(k^{-1})$ estimate is uniform over $m \in \mci_k$. Now, by Proposition \ref{spec_properties}, there exist $r_k, q_{k,m} \in \RR^n$ such that $$a_{k,m} = a_0 + r_k +\frac{2\pi} k\nu(0) m +  q_{k,m},$$
where $r_k = \mco(k^{-1})$ and there exists $\tilde C > 0$ such that for all sufficiently large $k$, for every $m \in \mci_k$, $$|q_{k,m}| \le \tilde C \frac{|m|^2}{k^2}.$$
Thus, there exists $\tilde C_1 > 0$ such that for every sufficiently large $k$, for every $m \in \mci_k$, $$|g(a_{k,m})| \le \tilde C_1 \left(\frac{1+|m|}k\right).$$
This establishes (\ref{uniform_m_est}).
Finally, set $$V_k = \Sp\{e_m \ | \ m \in \mci_k\},$$
and write $$u = u_k + u_k^\perp,\ u_k \in V_k,\ u_k^\perp \in V_k^\perp.$$
Then $\Vert u_k^\perp \Vert = \mco(k^{-\infty})$, and $$ \Vert [T_k(g)]^{\TT^n} u_k\Vert \le  \sum_{m \in \mci_k} |\widehat{u}(m)| \Vert [T_k(g)]^{\TT^n} e_m \Vert  \le \frac C k \sum_{m \in \mci_k} |\widehat{u}(m)|(1+|m|),$$
as required.
\end{proof}
The quasi-multiplicativity of the maps $T_k : C^\infty(M) \to \End(\mch_k)$ readily implies the following.
\begin{corollary}\label{prod_inv_zero_decay} Let $g \in C^\infty(M)$ and $u \in C^\infty(\TT^n)$. Assume that there exists an open neighborhood $U_0$ of $\Lambda_{a_0}$ such that $$g|_{U_0} = (g_1 g_2)\big{|}_{U_0},\ g_1 \in C^\infty_{\inv}(M),\ g_2 \in C^\infty(M),$$
and $g_1\big{|}_{\Lambda_{a_0}} = 0$. Then $[T_k(g)]^{\TT^n} u = \mco(k^{-1})$. \end{corollary}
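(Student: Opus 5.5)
The plan is to reduce to Lemma \ref{inv_zero_decay} via quasi-multiplicativity of $T_k$, after first localizing $g$ near $\Lambda_{a_0}$.

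\emph{Localization.} Choose a cutoff $\chi\in C^\infty(M)$ with $\chi=1$ on a smaller neighborhood $U_1$ of $\Lambda_{a_0}$ and $\supp\chi\subset U_0$. Write
$$g=\chi g_1 g_2+(1-\chi)g,$$
which holds on all of $M$ because $g=g_1g_2$ on $U_0\supset\supp\chi$.

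\emph{The term $(1-\chi)g$.} This function vanishes on $U_1$. By the microsupport formula (\ref{ms_formula}), $\MS(T_k((1-\chi)g))$ is disjoint from $\Lambda_{a_0}$. By Lemma \ref{lagr_ms}, each $s_{k,m}$ with $m\in\mci_k$ (taking $\mci_k$ defined with $p_k=\lfloor k^{1/4}\rfloor$, as in the proof of Lemma \ref{inv_zero_decay}) is, up to $\mco(k^{-\infty})$, a Lagrangian section microsupported on $\Lambda_{a_0}$. The argument should be uniform in $m\in\mci_k$, using the same grouping into a single sequence as in Lemma \ref{inv_zero_decay}. This gives $\Vert T_k((1-\chi)g)s_{k,m}\Vert=\mco(k^{-\infty})$ uniformly in $m\in\mci_k$.

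Decompose $u=u_k+u_k^\perp$ as before. Since $\Vert u_k^\perp\Vert=\mco(k^{-\infty})$ and $\sum|\widehat u(m)|<\infty$, we get $[T_k((1-\chi)g)]^{\TT^n}u=\mco(k^{-\infty})$.

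\emph{The main term.} Quasi-multiplicativity gives
$$T_k(\chi g_1 g_2)=T_k(\chi g_2)\,T_k(g_1)+\mco(k^{-1})$$
in operator norm. Since $[\cdot]^{\TT^n}$ is multiplicative and norm-nonincreasing, this yields
$$[T_k(\chi g_1g_2)]^{\TT^n}u=[T_k(\chi g_2)]^{\TT^n}[T_k(g_1)]^{\TT^n}u+\mco(k^{-1}).$$
The operator $[T_k(\chi g_2)]^{\TT^n}$ is uniformly bounded, with norm at most $\max|\chi g_2|$. By Lemma \ref{inv_zero_decay} applied to $g_1\in C^\infty_{\inv}(M)$ with $g_1|_{\Lambda_{a_0}}=0$, we have $[T_k(g_1)]^{\TT^n}u=\mco(k^{-1})$. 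Combining the two terms gives the claim.

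\emph{Main obstacle.} The only slightly delicate step is the uniform negligibility in the second paragraph. If needed, I would instead bypass it by writing $1-\chi=\tilde\chi$ and noting that $T_k(\tilde\chi g)\hat s_{k,m}$ is a Lagrangian section whose symbol vanishes near $\Lambda_{a_{k,m}}$ for all $m\in\mci_k$. Alternatively, the simpler route is to use only $\mco(k^{-1})$ bounds: the leading symbol there is $\tilde\chi g(a_{k,m},\cdot)=0$, since $a_{k,m}$ lies in shrinking neighborhoods of $a_0$. Arguing exactly as in the derivation of (\ref{uniform_m_est}) then gives the needed $\mco(k^{-1})$ bound directly.
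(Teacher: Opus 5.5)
Your proposal is correct and follows the paper's proof essentially step by step. The paper also splits off a cutoff $\tau$ (your $\chi$) near $\Lambda_{a_0}$. It removes the far part, showing $T_k((1-\tau)g)s_{k,m}=\mco(k^{-\infty})$ uniformly in $m\in\mci_k$ via $\MS(s_{k,m})\subset\Lambda_{a_0}$. It then handles $\tau g_1g_2$ by quasi-multiplicativity, Lemma \ref{inv_zero_decay} applied to $g_1$, and the uniform bound on $[T_k(\tau g_2)]^{\TT^n}$. The only cosmetic difference is that the paper takes its cutoff in $C^\infty_{\inv}(M)$, which your version shows is not actually needed.
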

\begin{proof}
Let $$\tau\in C_{\inv}^\infty(M),\ \tau : M \to [0,1]$$ be compactly supported in $U_0$, and assume that $\tau = 1$ in an open neighborhood of $\Lambda_{a_0}$. Then $$[T_k(g)]^{\TT^n} u = [T_k((1-\tau) g)]^{\TT^n} u + [T_k(\tau g)]^{\TT^n} u.$$
Let $\mci_k = \{m \in \ZZ^n \ | \ |m^j| \le \lfloor k^{\frac 1 4} \rfloor, j =1,...,n\}$, and write $$V_k = \Sp\{e_m \ | \ m \in \mci_k\}.$$ Then $$u = u_k + u_k^\perp,\ u_k \in V_k,\ u_k^\perp \in V_k^{\perp},$$
and note that $\Vert u_k^\perp \Vert = \mco(k^{-\infty})$ since $u \in C^\infty(\TT^n)$, and that $$\Vert [T_k((1-\tau) g)]^{\TT^n} u_k \Vert \le \sum_{m \in \mci_k} |\widehat{u}(m)| \Vert T_k((1-\tau) g) s_{k,m} \Vert.$$
However, since $s_{k,m}$ is a joint eigensection of $T_k^1,...,T_k^n$ associated with joint eigenvalue $a_{k,m} = a_0 + o(1)$, it holds that (see Lemma \ref{lagr_ms})\footnote{The microsupport of a sequence $s_p \in \mch_{k_p}$ is defined in the same way as in Sect. \ref{MS_sect}, replacing $s_k, k$ with $s_p, k_p$.} $$\MS(s_{k,m}) \subset \Lambda_{a_0},$$
and so $$\MS(T_k((1-\tau) g) s_{k,m}) = \emptyset,$$
that is, $$\Vert T_k((1-\tau) g) s_{k,m} \Vert = \mco(k^{-\infty})$$
uniformly for $m \in \mci_k$ (this can also be verified by direct computation). Thus, $$[T_k(g)]^{\TT^n} u = [T_k(\tau g)]^{\TT^n} u + O(k^{-\infty}).$$
Finally, noting that $\tau g = \tau g_1 g_2$ and using the quasi-multiplicativity property of $T_k : C^\infty(M) \to \End(\mch_k)$, Lemma \ref{inv_zero_decay}, and the fact that $\Vert[T_k(\tau g_2)]^{\TT^n} \Vert_{\op} = \mco(1)$ (by norm correspondence (\ref{norm_corr})), we conclude that  $$[T_k(\tau g)]^{\TT^n} u = \big{(}[T_k(\tau g_2)]^{\TT^n} [T_k(g_1)]^{\TT^n} + \mco(k^{-1})\big{)} u = \mco(k^{-1}),$$
as needed.
\end{proof}
We can now verify (separately) that $\xi_k$ is a semiclassical projective quasi-representation which contracts (via $t_{k,R}$) to $\xi_\infty^R$.
\begin{lemma} Let $P_{\inv} : V_0 \oplus V_{\inv} \to V_{\inv}$ denote the projection, and define the functional $$l : V_0 \oplus V_{\inv} \to \RR,\ l(g) =(P_{\inv} g)|_{\Lambda_{a_0}},$$
where the latter is viewed as a constant. Set $c(g,h) = l(\{g, h\})$. Then for every $g, h \in V_0 \oplus V_{\inv}$ and $u \in C^\infty(\TT^n)$, $$\Vert ([\xi_k(g), \xi_k(h)] - \xi_k(\{g, h\}) - ik c(g,h) \Id_{L^2(\TT^n)}) u \Vert = \mco(1).$$ \end{lemma}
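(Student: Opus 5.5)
By bilinearity of both sides in $(g,h)$ and antisymmetry, I split into three cases: $g,h \in V_0$; $g \in V_0$, $h\in V_{\inv}$; and $g,h \in V_{\inv}$. Throughout, $u \in C^\infty(\TT^n)$ is fixed. I decompose $u = u_k + u_k^\perp$ with $u_k \in V_k = \Sp\{e_m \mid m \in \mci_k\}$, where $\mci_k$ is defined using $p_k=\lfloor k^{1/4}\rfloor$ as in Lemma \ref{inv_bdd}. Since $\Vert u_k^\perp\Vert = \mco(k^{-\infty})$ and every operator involved has norm $\mco(k^2)$, it suffices to work on $u_k$. I also use that $[\,\cdot\,]^{\TT^n}$ is multiplicative, so commutators of bracketed operators are brackets of commutators.

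\emph{Case $g,h\in V_0$.} Here $c(g,h)=l(\{g,h\})$ and $[\xi_k(g),\xi_k(h)] = -k^2[[T_k(g),T_k(h)]]^{\TT^n}$. By the correspondence principle (\ref{corr_princ}) this equals $ik[T_k(\{g,h\})]^{\TT^n}+\mco(1)$ in operator norm. Write $\{g,h\} = w_0 + w_{\inv}$ with $w_0\in V_0$ and $w_{\inv}\in V_{\inv}$. The $V_0$ part $ik[T_k(w_0)]^{\TT^n}$ is exactly $\xi_k(w_0)$. For the invariant part, I write $w_{\inv} = l(\{g,h\}) + (w_{\inv} - w_{\inv}|_{\Lambda_{a_0}})$. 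The constant contributes $ik\,c(g,h)\Id$. It remains to show that $ik[T_k(w_{\inv}-w_{\inv}|_{\Lambda_{a_0}})]^{\TT^n}u - \xi_k(w_{\inv})u = \mco(1)$. Since $w' := w_{\inv} - w_{\inv}|_{\Lambda_{a_0}} \in C^\infty_{\inv}(M)$ vanishes on $\Lambda_{a_0}$, I use Taylor expansion in the action variables: near $\Lambda_{a_0}$, $w' = \sum_j \zeta_j(w_{\inv})(f_0^j - a_0^j) + \sum_{j,l}(f_0^j-a_0^j)(f_0^l-a_0^l)\phi_{jl}$, with $\phi_{jl}$ invariant. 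Here the linear coefficients are identified with $\zeta_j$ because $X_{w'}|_{\Lambda_{a_0}} = \sum_j \partial_{a^j}w'(a_0) X_{f_0^j}$. The quadratic remainder gives $k\cdot\mco(k^{-2})\cdot$ (polynomial in $|m|$) by iterating the estimate of Lemma \ref{inv_zero_decay} (eigenvalues $|g(a_{k,m})|\lesssim (1+|m|)^2/k^2$), hence $\mco(1)$ on $u_k$. For the linear term I replace $T_k(f_0^j)$ by $T_k^j$; the difference is $\mco(k^{-1})$ in norm because $T_k^j$ has principal symbol $f_0^j$, so multiplying by $k$ leaves an $\mco(1)$ error. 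I also replace $a_0^j$ by $a_{k,0}^j$, with error $\mco(k^{-1})$, again giving $\mco(1)$ after multiplying by $k$. What remains is exactly $\xi_k(w_{\inv})$.

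\emph{Case $g\in V_0$, $h\in V_{\inv}$.} Here $\{g,h\}\in V_0$, so $c(g,h) = 0$. The commutator is $[\xi_k(g),\xi_k(h)] = -k^2\sum_j\zeta_j(h)[[T_k(g),T_k^j]]^{\TT^n} = ik\sum_j\zeta_j(h)[T_k(\{g,f_0^j\})]^{\TT^n} + \mco(1)$. I then need $\sum_j \zeta_j(h)\{g,f_0^j\}$ to agree with $\{g,h\}$ to first order. Their difference is $\{g, h - \sum_j\zeta_j(h) f_0^j\}$. The function $h-\sum_j\zeta_j(h)f_0^j$ is invariant and has vanishing Hamiltonian field on $\Lambda_{a_0}$, so near $\Lambda_{a_0}$ it equals a constant plus a quadratic expression $\sum(f_0^j-a_0^j)(f_0^l-a_0^l)\phi_{jl}$. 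Its bracket with $g$ is therefore of the form $\sum_j (f_0^j-a_0^j)\,\psi_j$ with $\psi_j$ smooth near $\Lambda_{a_0}$. Applying Corollary \ref{prod_inv_zero_decay} with $g_1 = f_0^j-a_0^j$ gives $[T_k(\cdot)]^{\TT^n}u = \mco(k^{-1})$ for this term, so $k$ times it is $\mco(1)$.

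\emph{Case $g,h\in V_{\inv}$.} Both $\xi_k(g)$ and $\xi_k(h)$ are combinations of the commuting operators $[T_k^j]^{\TT^n}$, so their commutator vanishes. I must therefore show $\xi_k(\{g,h\}) + ik\,l(\{g,h\})u = \mco(1)$. Since $X_g$ and $X_h$ are tangent to the torus along the commuting flows, $X_{\{g,h\}}|_{\Lambda_{a_0}} = [X_g,X_h]$. This vector field is a combination of $\partial_\theta$ with coefficients depending only on the action variables, hence it vanishes, so $\zeta_j(\{g,h\})=0$ and $\xi_k(\{g,h\})=0$. The leftover constant $ik\,l(\{g,h\})$ seems problematic, but in fact $\{g,h\}$ vanishes on $\Lambda_{a_0}$ as well: $\{g,h\}=\omega(X_g,X_h)$, and both fields lie in the span of the $X_{f_0^j}$, which is Lagrangian. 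So this case reduces to $0=0$.

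The main obstacle is the first case. One has to match the Taylor data of $w_{\inv}$ with the definition of $\zeta_j$, and, crucially, obtain uniform-in-$m$ control of the quadratic remainder in order to get $\mco(1)$ after multiplying by $k$. I would handle this by extending the proof of (\ref{uniform_m_est}) to products of two vanishing invariant factors.
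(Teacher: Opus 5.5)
Your proof is correct in substance, and two of your three cases are essentially the paper's. For $g\in V_0$, $h\in V_{\inv}$, you write $\{g,h-\sum_j\zeta_j(h)f_0^j\}$ near $\Lambda_{a_0}$ as $\sum_j(f_0^j-a_0^j)\psi_j$ and apply Corollary \ref{prod_inv_zero_decay}. The paper writes the same function as $\sum_j(\zeta_j(h)(y)-\zeta_j(h)(0))X_j(g)$ and applies the same corollary. For $g,h\in V_{\inv}$, you check separately that $\zeta_j(\{g,h\})=0$ and $\{g,h\}|_{\Lambda_{a_0}}=0$. The paper gets both at once from $\{g,h\}$ vanishing on a neighbourhood of $\Lambda_{a_0}$, since invariant functions depend only on the action variables there.

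The real difference is the case $g,h\in V_0$, and there the ``main obstacle'' you identify does not exist. You need not match $ik[T_k(w')]^{\TT^n}u$ against $\xi_k(w_{\inv})u$ at all, because each is separately $\mco(1)$:
\begin{itemize}
\item the first by Lemma \ref{inv_zero_decay} applied directly to $w'\in C^\infty_{\inv}(M)$, which vanishes on $\Lambda_{a_0}$;
\item the second because $\xi_k(w_{\inv})u$ converges by Lemma \ref{inv_bdd}.
\end{itemize}
This is exactly how the paper finishes. Your Taylor matching of the linear terms with $\zeta_j(w_{\inv})$ buys nothing. The errors it introduces, from replacing $T_k(f_0^j)$ by $T_k^j$ and $a_0^j$ by $a_{k,0}^j$, are already $\mco(1)$ after multiplying by $k$. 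That is the same size as the terms you are trying to cancel.

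One claim in that detour is also false, though harmlessly so. For $Q\in C^\infty_{\inv}(M)$ vanishing to second order on $\Lambda_{a_0}$, iterating Lemma \ref{inv_zero_decay} does \emph{not} give $T_k(Q)s_{k,m}=\mco(k^{-2})$. The proof of (\ref{uniform_m_est}) gives $T_k(g)\hat s_{k,m}=g(a_{k,m})\hat s_{k,m}+\mco(k^{-1})$. Only the leading term $g(a_{k,m})$ improves with the order of vanishing; the $\mco(k^{-1})$ remainder, like the $k^{-1}$ error in quasi-multiplicativity, does not.

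For example, with $Q=(f_0^1-a_0^1)^2$ one has $T_k(Q)=T_k(f_0^1-a_0^1)^2+k^{-1}T_k(q)+\mco(k^{-2})$. Here $q$ is a nonzero multiple of $|\partial f_0^1|^2$, which does not vanish on $\Lambda_{a_0}$. Hence $\Vert T_k(Q)s_{k,m}\Vert$ is genuinely of order $k^{-1}$. This does not break your argument: you only need $\mco(1)$ after multiplying by $k$, and Lemma \ref{inv_zero_decay} gives the required $\mco(k^{-1})$ for $Q$ directly. The extension to $k^{-2}$ proposed in your last paragraph is therefore both unnecessary and not achievable as stated.
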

\begin{proof}
We may prove the cases $g, h \in V_0$, $g \in V_0, h \in V_{\inv}$, $g, h \in V_{\inv}$ separately (by linearity).

If $g, h \in V_0$ then $$\xi_k(g) = ik [T_k(g)]^{\TT^n},\ \xi_k(h) = ik [T_k(h)]^{\TT^n}.$$Then by the correspondence principle (\ref{corr_princ}), and using the multiplicativity of the map $$[\ \cdot \ ]^{\TT^n} : \End(\mch_k) \to \End(L^2(\TT^n)),$$
we see that $$[\xi_k(g), \xi_k(h)] = -k^2 \big{[}[T_k(g)]^{\TT^n}, [T_k(h)]^{\TT^n}\big{]} = ik[T_k(\{g, h\})]^{\TT^n} +\mco(1).$$
Writing $$\{g,h\}_{\inv} = P_{\inv} \{g, h\},\ \{g, h\}_0 = \{g, h\} - \{g, h\}_{\inv},$$
it holds that $$\xi_k(\{g, h\}) = ik [T_k(\{g, h\}_0)]^{\TT^n} + \xi_k( \{g,h\}_{\inv}),$$
therefore \begin{multline*}[\xi_k(g), \xi_k(h)] - \xi_k(\{g, h\}) - i k (\{g, h\}_{\inv})|_{\Lambda_{a_0}} \Id_{L^2(\TT^n)}\\ = ik [T_k( \{g, h\}_0)]^{\TT^n} + ik [T_k(\{g, h\}_{\inv})]^{\TT^n} -ik [T_k( \{g, h\}_0)]^{\TT^n}\\ - \xi_k(\{g, h\}_{\inv}) - ik(\{g, h\}_{\inv})|_{\Lambda_{a_0}} \Id_{L^2(\TT^n)} + \mco(1)\\ = ik [T_k(\{g, h\}_{\inv})]^{\TT^n} -\xi_k(\{g, h\}_{\inv}) -ik (\{g, h\}_{\inv})|_{\Lambda_{a_0}} \Id_{L^2(\TT^n)} + \mco(1).\end{multline*}
In light of Lemma \ref{inv_bdd}, $\xi_k(\{g, h\}_{\inv}) u$ converges, and in particular $$\Vert \xi_k(\{g, h\}_{\inv}) u \Vert = \mco(1).$$
Now, writing $$u = u_k + u_k^{\perp},\ u_k \in \mch_k^{\TT^n},\ u_k^\perp \in (\mch_k^{\TT^n})^{\perp},$$
we note that $\Vert u_k^\perp \Vert = \mco(k^{-\infty})$. Thus, it only remains to show that $$\Vert ik\big{(} [T_k(\{g, h\}_{\inv})]^{\TT^n} - (\{g, h\}_{\inv})|_{\Lambda_{a_0}} \Id_{L^2(\TT^n)} \big{)} u_k \Vert = \mco(1).$$
Indeed, \begin{multline*}  ik\big{(} [T_k(\{g, h\}_{\inv})]^{\TT^n} - (\{g, h\}_{\inv})|_{\Lambda_{a_0}} \Id_{L^2(\TT^n)} \big{)} u_k\\ = ik \big{[}T_k\big{(}\{g, h\}_{\inv} - (\{g, h\}_{\inv})\big{|}_{\Lambda_{a_0}} \big{)}\big{]}^{\TT^n} u_k \\= ik \big{[}T_k\big{(}\{g, h\}_{\inv} - (\{g, h\}_{\inv})\big{|}_{\Lambda_{a_0}} \big{)}\big{]}^{\TT^n} u = \mco(1),\end{multline*}
where the last estimate follows from Lemma \ref{inv_zero_decay}. 

Next, assume $g \in V_0$, $h \in V_{\inv}$. Then $\{g, h\} \in V_0$, and so $\{g,h\}_{\inv} = 0$. Also, \begin{multline*}[\xi_k(g), \xi_k(h)] = \big{[}ik [T_k(g)]^{\TT^n}, ik\sum_{j=1}^n \zeta_j(h)([T_k^j]^{\TT^n} - a_{k,0}^j \Id_{L^2(\TT^n)})\big{]} \\ = -k^2 \sum_{j=1}^n \zeta_j(h) \big{[}[T_k(g)]^{\TT^n}, [T_k^j]^{\TT^n} \big{]} = ik\Big{[} T_k\Big{(} \{g, \sum_{j=1}^n \zeta_j(h) f_0^j\} \Big{)}\Big{]}^{\TT^n} + \mco(1).\end{multline*}
Thus, it suffices to show that $$ik \Big{[} T_k\Big{(}\{g, \sum_{j=1}^n \zeta_j(h) f_0^j - h\}\Big{)} \Big{]}^{\TT^n} u = \mco(1).$$
This holds by Corollary \ref{prod_inv_zero_decay}. Indeed, in the coordinates of (\ref{al_sympl}),\ $$X_h\big{|}_U = \sum_{j=1}^n \zeta_j(h)(y) X_j\big{|}_U,$$
where $X_j$ is the Hamiltonian vector field of $f_0^j$. Thus, $$\{g, \sum_{j=1}^n \zeta_j(h)f_0^j -h\}\big{|}_U(y,\theta) = \sum_{j=1}^n(\zeta_j(h)(y) - \zeta_j(h)(0)) X_j(g)\big{|}_U(y,\theta),$$
and we see that there exists an open neighborhood $U_0$ of $\Lambda_{a_0}$ such that the function $$H = \{g, \sum_{j=1}^n \zeta_j(h)f_0^j - h\}$$
satisfies $$H\big{|}_{U_0} =\sum_{j=1}^n (H_{1,j} H_{2,j})\big{|}_{U_0},\ H_{1,j} \in C_{\inv}^\infty(M),\ H_{1,j}\big{|}_{\Lambda_{a_0}} = 0,\ H_{2,j} \in C^\infty(M).$$
Finally, if $g, h \in V_{\inv}$, then $$[\xi_k(g),\xi_k(h)] = 0,$$
and there exists an open neighborhood $U_0$ of $\Lambda_{a_0}$ such that $$\{g, h\}\big{|}_{U_0} = 0,$$
which means that $\xi_k(\{g, h\}) = 0$. 
\end{proof}
Finally, we show that $\xi_k$ contracts to $\xi_\infty^R$ via $t_{k,R}$.
\begin{lemma} Let $g,h \in \mfg$ and $u \in C^\infty(\TT^n)$. Then $$\begin{aligned} &\lim_{k\to \infty} \xi_k\circ t_{k,R} (g) u = \xi_\infty^R(g) u,\\ &\lim_{k \to \infty} [\xi_k\circ t_{k,R}(g), \xi_k\circ t_{k,R}(h)] u = \xi_\infty^R(\{g, h\}_\infty)u.\end{aligned}$$ \end{lemma}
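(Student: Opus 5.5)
By linearity of $\xi_k$, $t_{k,R}$ and $\xi_\infty^R$, and bilinearity of the commutator and of $\{\cdot,\cdot\}_\infty$, we treat $g,h$ separately in $V_0$ and in $V_{\inv}$.

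\emph{First limit.} If $g\in V_0$, then $(\xi_k\circ t_{k,R})(g) = ik\cdot\frac Rk[T_k(g)]^{\TT^n} = iR[T_k(g)]^{\TT^n}$. Part 1 of Theorem \ref{str_conv_thm} with $A=M$ (so $\partial A=\emptyset$) and $h=g$ gives $iR[T_k(g)]^{\TT^n}u\to iR\,g|_{\Lambda_{a_0}}u=\xi_\infty^R(g)u$. If $g\in V_{\inv}$, then $t_{k,R}(g)=g$, and Lemma \ref{inv_bdd} gives $\xi_k(g)u\to X_g|_{\Lambda_{a_0}}u=\xi_\infty^R(g)u$.

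\emph{Commutators.} We use the preceding lemma, namely the quasi-representation property with cocycle $c=l(\{\cdot,\cdot\})$, and exploit the scaling by $t_{k,R}$.

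For $g,h\in V_0$ we have $\{g,h\}_\infty=0$, so the target is $0$. Write
\[
[\xi_k t_{k,R} g,\xi_k t_{k,R} h]=\frac{R^2}{k^2}[\xi_k(g),\xi_k(h)] = \frac{R^2}{k^2}\bigl(\xi_k(\{g,h\})+ik\,c(g,h)\bigr)u+\frac{R^2}{k^2}\mco(1).
\]
We need $\xi_k(\{g,h\})u=o(k^2)$. Splitting $\{g,h\}$ into its $V_0$ and $V_{\inv}$ parts, the $V_0$ part contributes $ik[T_k(\cdot)]^{\TT^n}u=\mco(k)$ by norm correspondence. The $V_{\inv}$ part converges by Lemma \ref{inv_bdd}, hence is $\mco(1)$. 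The whole expression is therefore $\mco(k^{-1})\to 0$. Alternatively, one can argue directly: $-R^2\bigl[[T_k(g)]^{\TT^n},[T_k(h)]^{\TT^n}\bigr]$ has operator norm $\mco(k^{-1})$ by the correspondence principle (\ref{corr_princ}) and multiplicativity of $[\cdot]^{\TT^n}$.

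For $g\in V_0$ and $h\in V_{\inv}$, the commutator is $\frac Rk[\xi_k(g),\xi_k(h)]$ and $\{g,h\}\in V_0$, so $c(g,h)=0$. The quasi-representation estimate gives
\[
\frac Rk[\xi_k(g),\xi_k(h)]u=\frac Rk\,\xi_k(\{g,h\})u+\mco(k^{-1}) = iR[T_k(\{g,h\})]^{\TT^n}u+o(1).
\]
By Theorem \ref{str_conv_thm} this tends to $iR\{g,h\}|_{\Lambda_{a_0}}u=\xi_\infty^R(\{g,h\}_\infty)u$, since $\{g,h\}_\infty=\{g,h\}$ here.

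For $g,h\in V_{\inv}$, the proof of the preceding lemma shows $[\xi_k(g),\xi_k(h)]=0$: both are functions of the commuting $T_k^j$. Also $\{g,h\}$ vanishes near $\Lambda_{a_0}$, so $\xi_\infty^R(\{g,h\})=X_{\{g,h\}}|_{\Lambda_{a_0}}=0$, and both sides vanish.

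The only delicate point is this bookkeeping of orders. Note that the error term in Definition \ref{spqr} is pointwise on $u\in C^\infty(\TT^n)$, so all estimates should be applied to a fixed smooth $u$, using Lemma \ref{inv_bdd} and Theorem \ref{str_conv_thm} as stated.
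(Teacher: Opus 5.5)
Your proof is correct. The case split, the first limit, the direct version of the $V_0\times V_0$ case, and the $V_{\inv}\times V_{\inv}$ case all coincide with the paper's.

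The real difference is the mixed case $g\in V_0$, $h\in V_{\inv}$. There you invoke the preceding quasi-representation lemma (with $c(g,h)=0$) and, after the factor $\frac Rk$, reduce to the first limit applied to $\{g,h\}\in V_0$. The paper instead expands $\frac Rk[\xi_k(g),\xi_k(h)]$ directly via the correspondence principle (\ref{corr_princ}) into $iR[T_k(\{g,\sum_j\zeta_j(h)f_0^j\})]^{\TT^n}+\mco(k^{-1})$. It then uses only that $\{g,\sum_j\zeta_j(h)f_0^j\}$ and $\{g,h\}$ coincide on $\Lambda_{a_0}$, which is all the strong limit in Theorem \ref{str_conv_thm} sees.

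Your route is shorter on the page, but it makes the contraction statement depend on the $\mco(1)$ bound for $ik[T_k(\{g,\sum_j\zeta_j(h)f_0^j-h\})]^{\TT^n}u$. That bound comes from Lemma \ref{inv_zero_decay} and Corollary \ref{prod_inv_zero_decay}, and it is more than this lemma needs. The paper's route keeps the contraction property a direct consequence of Theorem \ref{str_conv_thm} and (\ref{corr_princ}), independent of the harder quasi-representation estimate. This is the sense of the paper's remark that $t_{k,R}$ ``improves'' the asymptotics.

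Your version does have a merit: it shows that, once the quasi-representation estimate is available, the commutator condition of Definition \ref{contr_maps_def} for brackets involving $V_0$ follows formally from the scaling and the first limit.
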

\begin{proof}
Write $\xi_k^R = \xi_k \circ t_{k,R}$, and identify $\Lambda_{a_0} \simeq \TT^n$. If $g \in V_0$ then $$\xi_k^R(g) = iR [T_k(g)]^{\TT^n},$$
hence $$\lim_{k \to \infty} \xi_k^R(g) u = iR \mcm_{g|_{\Lambda_{a_0}}} u =\xi_\infty^R(g) u$$
by Theorem \ref{str_conv_thm}. If $g \in V_{\inv}$ then $t_{k,R}(g) = g$. Hence, by Lemma \ref{inv_bdd}, $$\lim_{k \to \infty} \xi_k^R(g) u = X_g\big{|}_{\Lambda_{a_0}} u.$$
Next, if $g, h \in V_0$ then $$[\xi_k^R(g), \xi_k^R(h)] = -R^2\big{[}[T_k(g)]^{\TT^n}, [T_k(h)]^{\TT^n} \big{]} = \frac{iR^2} k [T_k(\{g, h\})]^{\TT^n} + \mco(k^{-2}),$$
therefore, noting that $\{g, h\}_\infty = 0$, $$\lim_{k \to \infty} [\xi_k^R(g), \xi_k^R(h)] u = 0 =\xi_\infty^R(\{g, h\}_\infty)u.$$
If $g \in V_0$, $h \in V_{\inv}$ then \begin{multline*} [\xi_k^R(g), \xi_k^R(h)] = \frac R k [\xi_k(g), \xi_k(h)]\\ = -kR \big{[}[T_k(g)]^{\TT^n}, \sum_{j=1}^n \zeta_j(h)\big{(}[T_k^j]^{\TT^n} - a_{k,0}^j \Id_{L^2(\TT^n)}\big{)}\big{]}\\ = -kR \sum_{j=1}^n \zeta_j(h)\big{[}[T_k(g)]^{\TT^n}, [T_k^j]^{\TT^n}\big{]}\\ = -kR \sum_{j=1}^n \zeta_j(h) \big{(}-\frac i k [T_k(\{g, f_0^j\})]^{\TT^n} + \mco(k^{-2})\big{)}\\ = i R\Big{[} T_k\Big{(}\{g, \sum_{j=1}^n \zeta_j(h) f_0^j\}\Big{)}\Big{]}^{\TT^n} + \mco(k^{-1}).\end{multline*}
Thus, noting that $$\{g, \sum_{j=1}^n \zeta_j(h) f_0^j\}\big{|}_{\Lambda_{a_0}} = \{g, h\}\big{|}_{\Lambda_{a_0}},$$
we see (using Theorem \ref{str_conv_thm}) that $$\lim_{k\to \infty} [\xi_k^R(g), \xi_k^R(h)] u = i R \mcm_{\{g, h\}\big{|}_{\Lambda_{a_0}}} u = \xi_\infty^R(\{g, h\}_\infty) u.$$
Finally, if $g, h \in V_{\inv}$ then $$[\xi_k^R(g), \xi_k^R(h)] = 0 = \xi_\infty^R(\{g, h\}_\infty).$$
The general case $g, h \in \mfg$ follows by linearity.
\end{proof} 

\textbf{Acknowledgements.} 
The work presented here traces its origins to my graduate research on pairs of spectral projections of spin operators, begun as an M.Sc. project under the supervision of Leonid Polterovich and Yohann Le Floch and continued during my Ph.D. studies under the supervision of Leonid Polterovich and Lev Buhovsky. I am especially grateful to Leonid Polterovich for suggesting this direction of research, and to all three of my supervisors for their guidance during my graduate studies. I also thank David Kazhdan and Laurent Charles for important suggestions, and Alejandro Uribe and Yael Karshon for valuable discussions.

\end{document}